\DeclareMathAlphabet\mathbfcal{OMS}{cmsy}{b}{n}
\renewcommand{\d}{{\mathrm d}}
\newcommand{\im}{\mathrm{i}}
\newcommand{\e}{\mathrm{e}}
\def\res{\mathop{\mathrm{res}}\limits}
\def\tr{\mathop{\mathrm{tr}}\limits}
\newtheorem{theo}{Theorem}[section]
\newtheorem{lem}[theo]{Lemma}
\newtheorem{rem}[theo]{Remark}
\newtheorem{problem}[theo]{Riemann-Hilbert Problem}
\newtheorem{prop}[theo]{Proposition} 
\newtheorem{cor}[theo]{Corollary}
\newtheorem{assum}[theo]{Assumption}
\begin{document}
\title[Painlev\'e-II with parameter]{Universality for random matrices with an edge spectrum singularity}

\author{Thomas Bothner}
\address{School of Mathematics, University of Bristol, Fry Building, Woodland Road, Bristol, BS8 1UG, United Kingdom}
\email[Corresponding author]{thomas.bothner@bristol.ac.uk}

\author{Toby Shepherd}
\address{School of Mathematics, University of Bristol, Fry Building, Woodland Road, Bristol, BS8 1UG, United Kingdom}
\email{toby.shepherd@bristol.ac.uk}

\date{\today}

\keywords{Random matrix averages, Fredholm determinants, Hankel determinants, orthogonal polynomials, Fisher-Hartwig singularities, integrable integral operators, asymptotic analysis, Riemann-Hilbert problems, Painlev\'e equations.}


\subjclass[2010]{Primary 47B35; Secondary 45B05, 30E25, 34E05}


\begin{abstract}
We study invariant random matrix ensembles
\begin{equation*}
	\mathbb{P}_n(\d M)=Z_n^{-1}\exp(-n\,\textnormal{tr}\,V(M))\,\d M
\end{equation*}
defined on complex Hermitian matrices $M$ of size $n\times n$, where $V$ is real analytic such that the underlying density of states is one-cut regular. Considering the average 
\begin{equation*}
	E_n[\phi;\lambda,\alpha,\beta]:=\mathbb{E}_n\bigg(\prod_{\ell=1}^n\big(1-\phi(\lambda_{\ell}(M))\big)\omega_{\alpha\beta}(\lambda_{\ell}(M)-\lambda)\bigg),\ \ \ \ \ \omega_{\alpha\beta}(x):=|x|^{\alpha}\begin{cases}1,&x<0\\ \beta,&x\geq 0\end{cases},
\end{equation*}
taken with respect to the above law and where $\phi$ is a suitable test function, we evaluate its large-$n$ asymptotic assuming that $\lambda$ lies within the soft edge boundary layer, and $(\alpha,\beta)\in\mathbb{R}\times\mathbb{C}$ satisfy $\alpha>-1,\beta\notin(-\infty,0)$. Our results are obtained by using Riemann-Hilbert problems for orthogonal polynomials and integrable operators and they extend previous results of Forrester and Witte \cite{FW} that were obtained by an application of Okamoto's $\tau$-function theory. A key role throughout is played by distinguished solutions to the Painlev\'e-XXXIV equation.

\end{abstract}

\maketitle
\section{Introduction and statement of results} 
\subsection{The initial setting} For $n\in\mathbb{N}$, we consider the invariant ensemble
\begin{equation}\label{e:1}
	\mathbb{P}_n(\d M)=Z_n^{-1}\e^{-n\,\textnormal{tr}\,V(M)}\d M,\ \ \ \ \d M:=\prod_{j=1}^n\d M_{jj}\prod_{1\leq j<k\leq n}\d\Re M_{jk}\,\d\Im M_{jk},
\end{equation}
on the space of $n\times n$ complex Hermitian matrices $M=[M_{jk}]_{j,k=1}^n$, where $V:\mathbb{R}\rightarrow\mathbb{R}_{\geq 0}$ is a real analytic function such that
\begin{equation*}
	V(x)\geq(2+\epsilon)\ln(1+|x|),\ \ \ \ \ x\in\mathbb{R},
\end{equation*}
for some $\epsilon>0$ to guarantee the integrability of the measure \eqref{e:1} with normalization $Z_n>0$. It is well-known, cf. \cite{PS}, that the eigenvalues 
\begin{equation*}
	-\infty<\lambda_1(M)<\ldots<\lambda_n(M)<\infty,
\end{equation*}
 of $M$ are distributed according to a determinantal point process and as such key statistical quantities of the model are computable in terms of orthogonal polynomial data. For instance, the \textit{generating functional} of \eqref{e:1}, i.e. the quantity 
\begin{equation}\label{e:2}
	E_n[\phi]:=\mathbb{E}_n\bigg(\prod_{\ell=1}^n\Big(1-\phi\big(\lambda_{\ell}(M)\big)\Big)\bigg),
\end{equation}
defined for a suitable test function $\phi:\mathbb{R}\rightarrow\mathbb{C}$, admits a Fredholm determinant representation in terms of the reproducing kernel $K_n(x,y)$ of the family of polynomials that are orthonormal on $\mathbb{R}$ with respect to the weight $\e^{-nV(x)}$. Remarkably, scaling limits of \eqref{e:2} are determined to a large extent by the limiting mean density $\rho_V$ of eigenvalues,
\begin{equation}\label{e:3}
	\rho_V(x):=\lim_{n\rightarrow\infty}\frac{1}{n}K_n(x,x).
\end{equation}
Indeed, see \cite{DKMVZ}, near endpoints $x_0$ of the limiting spectrum where $\rho_V$ vanishes like a square-root, the appropriately centered and rescaled reproducing kernel $K_n$ converges pointwise to the ubiquitous Airy kernel $K_{\textnormal{Ai}}$. In turn, the soft-edge scaled functional \eqref{e:2} admits the limiting form
\begin{equation*}
	\lim_{n\rightarrow\infty}E_n[\phi_n]=1+\sum_{\ell=1}^{\infty}\frac{(-1)^{\ell}}{\ell!}\int_{J^{\ell}}\det\big[K_{\textnormal{Ai}}(x_j,x_k)\big]_{j,k=1}^{\ell}\prod_{m=1}^{\ell}\phi(x_m)\d x_m,\ \ \ \phi_n(x):=\phi\big((x-x_0)cn^{\frac{2}{3}}\big),
\end{equation*}
with some $c>0$ and for any bounded $\phi$ with $J=\textnormal{supp}(\phi)\subset (a,b)$, $-\infty<a<b\leq\infty$. Henceforth the limiting distribution function of the maximum eigenvalue $\lambda_n(M)$ in the model \eqref{e:1} becomes famously computable in terms of Painlev\'e functions. For instance, we have the Tracy-Widom formula \cite{TW},
\begin{equation*}
	F_2(s):=\lim_{n\rightarrow\infty}\mathbb{P}_n\bigg(\lambda_n(M)\leq x_0+\frac{s}{cn^{\frac{2}{3}}}\bigg)=\exp\left(-\int_s^{\infty}\sigma_0(x)\d x\right),\ \ \ s\in\mathbb{R},
\end{equation*}
where $\sigma_0(x)$ satisfies the Jimbo-Miwa-Okamoto $\sigma$-form of the Painlev\'e-II equation,
\begin{equation}\label{e:4}
	(\sigma_{\alpha}'')^2+4\sigma_{\alpha}'\Big((\sigma_{\alpha}')^2-t\sigma_{\alpha}'+\sigma_{\alpha}\Big)=\alpha^2;\hspace{1.5cm} \sigma_{\alpha}=\sigma_{\alpha}(t),
\end{equation}
with parameter $\alpha=0$ and boundary constraint $\sigma_0(t)\sim(\textnormal{Ai}'(t))^2-t(\textnormal{Ai}(t))^2$ as $t\rightarrow+\infty$. Another representation of $F_2(s)$, the Forrester-Witte formula \cite{FW}, is
\begin{equation*}
	F_2(s)=\exp\left(\int_s^{\infty}(x-s)q_0(x)\d x\right),\ \ \ s\in\mathbb{R},
\end{equation*}
in terms of $q_0(x)$ that solves the Painlev\'e-XXXIV equation,
\begin{equation}\label{e:5}
	q_{\alpha}''=\frac{(q_{\alpha}')^2}{2q_{\alpha}}+2tq_{\alpha}-4q_{\alpha}^2-\frac{\alpha^2}{2q_{\alpha}};\hspace{1.5cm} q_{\alpha}=q_{\alpha}(t),
\end{equation}
with parameter $\alpha=0$ and boundary condition $q_0(t)\sim-(\textnormal{Ai}(t))^2$ as $t\rightarrow+\infty$. Lastly, and likely the most well-known representation of $F_2(s)$, found also by Tracy and Widom \cite{TW},
\begin{equation*}
	F_2(s)=\exp\left(-\int_s^{\infty}(x-s)u_0^2(x)\d x\right),\ \ \ s\in\mathbb{R},
\end{equation*}
which is expressed in terms of $u_0(x)$ that solves the Painlev\'e-II equation,
\begin{equation}\label{e:6}
	u_{\alpha}''=tu_{\alpha}+2u_{\alpha}^3-\alpha;\hspace{1.5cm} u_{\alpha}=u_{\alpha}(t),
\end{equation}
with boundary behavior $u_0(t)\sim\textnormal{Ai}(t)$ as $t\rightarrow+\infty$. Throughout, $\textnormal{Ai}(x)$ denotes the Airy function \cite[$9.2.2$]{NIST} and $(')$ differentiation with respect to the independent variable. 
\begin{rem}
The above Painlev\'e representations for $F_2(s)$ are consistent, for $\sigma_{\alpha}'(x)=q_{\alpha}(x)$ and because of the curious identities
\begin{equation*}
	q_{\alpha}(x)=-2^{-\frac{1}{3}}\Big(\frac{y}{2}+u_{\alpha+\frac{1}{2}}^2(y)+u_{\alpha+\frac{1}{2}}'(y)\Big)\bigg|_{y=-2^{\frac{1}{3}}x},\ \ \ \ \ \ u_0^2(-2^{-\frac{1}{3}}x)=2^{-\frac{1}{3}}\Big(\frac{x}{2}+u_{\frac{1}{2}}^2(x)+u_{\frac{1}{2}}'(x)\Big),
\end{equation*}
that are known from the works of Okamoto \cite{Ok} and Gambier \cite{G}. 
\end{rem}
In this paper we introduce a singularity in the right hand side of \eqref{e:2} and focus on the more general quantity
\begin{equation}\label{e:7}
	E_n[\phi;\lambda,\alpha,\beta;V(x)]:=\mathbb{E}_n\bigg(\prod_{j=1}^n\Big(1-\phi\big(\lambda_j(M)\big)\Big)\omega_{\alpha\beta}\big(\lambda_j(M)-\lambda\big)\bigg),\ \ \ \ \ \textnormal{supp}(\phi)=J\subset\mathbb{R},\ \ \ \lambda\in\mathbb{R},
\end{equation}
which depends on $(\alpha,\beta)\in\mathbb{R}\times\mathbb{C}$, with $\alpha>-1,\beta\notin(-\infty,0)$, through the Fisher-Hartwig factor
\begin{equation*}
	\omega_{\alpha\beta}(x):=|x|^{\alpha}\begin{cases}1,&x<0\\ \beta,&x\geq 0\end{cases}.
\end{equation*}
The average in \eqref{e:7} is once more taken with respect to \eqref{e:1} and it is our goal to study \eqref{e:7} in the limit for large $n$ when $\lambda$ lies in the boundary layer of the limiting mean density $\rho_V$ of eigenvalues, for a suitable class of potentials $V$ and test functions $\phi$. Earlier studies by Forrester and Witte \cite{FW} considered a similar problem in the Gaussian setting $V(x)=x^2$ for $\beta=1$, using Okamoto's $\tau$-function theory techniques \cite{Ok} for Painlev\'e-IV and Painlev\'e-II. Here we proceed in a purely analytic fashion, utilizing Riemann-Hilbert problems (RHPs).
\subsection{Precise assumptions} Orthogonal polynomial techniques allow us to write \eqref{e:7} in the factorized form\footnote{We refer the interested reader to Appendix \ref{facderiv} for a short derivation of the same formula.}
\begin{equation}\label{e:8}
	E_n\big[\phi;\lambda,\alpha,\beta;V(x)\big]=\frac{D_n(\lambda,\alpha,\beta;V(x))}{D_n(\lambda,0,1;V(x))}F_n\big[\phi;\lambda,\alpha,\beta;V(x)\big],
\end{equation}
where $F_n$ is the Fredholm determinant on $L^2(J)$ of identity minus $K_{n,\alpha,\beta}^{\lambda}M_{\phi}$, i.e.
\begin{equation*}
	F_n\big[\phi;\lambda,\alpha,\beta;V(x)\big]:=1+\sum_{\ell=1}^n\frac{(-1)^{\ell}}{\ell!}\int_{J^{\ell}}\det\Big[K_{n,\alpha,\beta}^{\lambda}(x_j,x_k)\Big]_{j,k=1}^{\ell}\prod_{m=1}^{\ell}\phi(x_m)\d x_m,
\end{equation*}
and $D_n$ the Hankel determinant
\begin{equation*}
	D_n\big(\lambda,\alpha,\beta;V(x)\big):=\frac{1}{n!}\int_{\mathbb{R}^n}\prod_{1\leq j<k\leq n}|x_k-x_j|^2\prod_{\ell=1}^n\omega_{\alpha\beta}(x_{\ell}-\lambda)\e^{-nV(x_{\ell})}\d x_{\ell}.
\end{equation*}
Note that $M_{\phi}$ multiplies by $\phi$ and $K_{n,\alpha,\beta}^{\lambda}$ has kernel, with the principal branch for the square-root,
\begin{equation}\label{e:9}
	K_{n,\alpha,\beta}^{\lambda}(x,y):=\omega_{\alpha\beta}^{\frac{1}{2}}(x-\lambda)\e^{-\frac{n}{2}V(x)}\omega_{\alpha\beta}^{\frac{1}{2}}(y-\lambda)\e^{-\frac{n}{2}V(y)}\sum_{\ell=0}^{n-1}\pi_{\ell,n}(x)\pi_{\ell,n}(y),
\end{equation}
defined in terms of the orthonormal polynomials $\{\pi_{j,n}\}_{j=0}^{\infty}\subset\mathbb{C}[x]$ that obey
\begin{equation*}
	\int_{-\infty}^{\infty}\pi_{j,n}(x)\pi_{k,n}(x)\omega_{\alpha\beta}(x-\lambda)\e^{-nV(x)}\d x=\begin{cases}1,&j=k\\ 0,&j\neq k\end{cases},\ \ \ \ \ \ \ \ \ \textnormal{deg}(\pi_{j,n})=j.
\end{equation*}
\begin{rem} While $\pi_{j,n}$ might not exist for all $j\in\mathbb{Z}_{\geq 0}$ since $\omega_{\alpha\beta}$ is in general complex-valued, $\pi_{n,n}$ and $\pi_{n-1,n}$ do exist for large $n$ and $\lambda=\lambda_n$ as in \eqref{e:12} below. In turn, identity \eqref{e:8} at $\lambda=\lambda_n$ is bona fide for sufficiently large $n$ and we will be able to calculate edge asymptotics for $E_n$ from it.
\end{rem}
Although $K_{n,\alpha,\beta}^{\lambda}(\lambda,\lambda)=0$ if $\alpha>0$ and $K_{n,\alpha,\beta}^{\lambda}(\lambda,\lambda)=+\infty$ if $\alpha<0$, the quantity $\frac{1}{n}K_{n,\alpha,\beta}^{\lambda}(x,x)$ does not feel the presence of $(\lambda,\alpha,\beta)$ away from $x=\lambda$ for large $n$. In fact, we have the pointwise limit
\begin{equation*}
	\rho_V(x)\stackrel{\eqref{e:3}}{=}\lim_{n\rightarrow\infty}\frac{1}{n}K_n(x,x)=\lim_{n\rightarrow\infty}\frac{1}{n}K_{n,\alpha,\beta}^{\lambda}(x,x),\ \ \ \ x\in\mathbb{R}\setminus\{\lambda\}.
\end{equation*}
However, the presence of $(\lambda,\alpha,\beta)$ is certainly felt on the level of the local scaling limits of $K_{n,\alpha,\beta}^{\lambda}$, in particular when $\lambda$ is chosen near endpoints $x_0$ of the support of $\rho_V$, as done in a few moments. To make precise that particular boundary approach, we shall first recall a few standard facts, see \cite{ST}, about $\rho_V$ and then put in place our working assumptions on $V$. Namely, $\rho_V$ is the density of the unique minimizer $\mu_V$ of the energy functional
\begin{equation*}
	I_V[\mu]:=\iint\ln\frac{1}{|x-y|}\d\mu(x)\d\mu(y)+\int V(x)\d\mu(x),
\end{equation*}
when optimized over all Borel probability measures $\mu$ on $\mathbb{R}$. For real-analytic $V$, the same minimizer is supported on a finite union of disjoint intervals, it has density $\rho_V$ and the minimizer is completely determined by the following Euler-Lagrange conditions: there exists $\ell_V\in\mathbb{R}$ such that
\begin{align}
	2\int\ln|x-y|\rho_V(y)\d y-V(x)+\ell_V=&\,\,0,\ \ \ x\in E;\nonumber\\
	2\int\ln|x-y|\rho_V(y)\d y-V(x)+\ell_V\leq&\,\,0,\ \ \ x\in\mathbb{R}\setminus E;\label{e:10}
\end{align}
with $E:=\textnormal{supp}(\mu_V)$. We will work with real analytic potentials whose equilibrium measure shares properties of the equilibrium measure of the quadratic $V(x)=x^2$. Specifically, we put in place the following:
\begin{assum}\label{1-cut} Let $V:\mathbb{R}\rightarrow\mathbb{R}_{\geq 0}$ be real analytic with $V(x)\geq (2+\epsilon)\ln(1+|x|)$ on $\mathbb{R}$ for some $\epsilon>0$. We assume that $V$ is one-cut regular, i.e.
\begin{enumerate}
	\item[(i)] that the support of its equilibrium measure $\mu_V$ is $E=[-\sqrt{2},\sqrt{2}\,]\subset\mathbb{R}$.
	\item[(ii)] that the density of $\mu_V$ is strictly positive in the interior of $E$ and vanishes like a square root at $\pm\sqrt{2}$.
	\item[(iii)] that the inequality \eqref{e:10} is strict.
\end{enumerate}
\end{assum}
Behaviours (i),(ii) and (iii), modulo the deliberate choice of the support interval endpoints, are generic for an equilibrium measure $\mu_V$ of a real analytic $V$, cf. \cite{KM}, and they allow us to express the density $\rho_V$ of $\mu_V$, for such a one-cut regular potential $V$, as
\begin{equation}\label{e:11}
	\rho_V(x)=h_V(x)\sqrt{(2-x^2)_+},\ \ \ \ \ x_+:=\max\{x,0\},
\end{equation}
with $h_V$ strictly positive on $E$, and real analytic on $\mathbb{R}$, compare \cite{DKMVZ}. 
\begin{rem} The quadratic $V(x)=x^2$ is one-cut regular with density
\begin{equation*}
	\rho_V(x)=\frac{1}{\pi}\sqrt{(2-x^2)_+},
\end{equation*}
given by the Wigner semicircle density and with $\ell_V=1+\ln 2$. See for instance, \cite[$(7.83),(7.90)$]{D}.
\end{rem}
With Assumption \ref{1-cut} in place we choose $\lambda$ in \eqref{e:7} in the right edge boundary layer by making it $n$-dependent through the rule
\begin{equation}\label{e:12}
	\lambda\rightarrow\lambda_n:=\sqrt{2}+\frac{s}{(n\tau)^{\frac{2}{3}}},\ \  s\in\mathbb{R};\ \ \ \ \ \ \ \ \ \tau:=\pi h_V(\sqrt{2})2^{\frac{3}{4}}>0.
\end{equation}
What results for \eqref{e:8} is summarized in the following subsection.
\subsection{List of main results} We begin with a limit theorem for the edge-scaled Fredholm determinant $F_n$ in \eqref{e:8} and its connection to \eqref{e:4}, resp. \eqref{e:5}.
\begin{theo}\label{BS:1} Let $\phi:\mathbb{R}\rightarrow\mathbb{C}$ be bounded with support $J=\textnormal{supp}(\phi)\subset (a,b)$ where $-\infty<a<b\leq\infty$. Set
\begin{equation}\label{e:13}
	\phi_n(x):=\phi\Big((x-\sqrt{2})(n\tau)^{\frac{2}{3}}\Big),\ \ \ n\in\mathbb{N}, 
\end{equation}
with $\tau$ as in \eqref{e:12} and recall $F_n,\lambda_n$ from \eqref{e:8},\eqref{e:12}. Then, for arbitrary $V$ as in Assumption \ref{1-cut}, we obtain the limiting Fredholm determinant formula
\begin{equation}\label{e:14}
	\lim_{n\rightarrow\infty}F_n\big[\phi_n;\lambda_n,\alpha,\beta;V(x)\big]=1+\sum_{\ell=1}^{\infty}\frac{(-1)^{\ell}}{\ell!}\int_{J^{\ell}}\det\Big[A_s^{\alpha\beta}(x_j-s,x_k-s)\Big]_{j,k=1}^{\ell}\prod_{m=1}^{\ell}\phi(x_m)\d x_m,
\end{equation}
uniformly in $s\in\mathbb{R},\alpha>-1,\beta\in\mathbb{C}\setminus(-\infty,0)$ chosen on compact sets, with kernel
\begin{equation*}
	A_s^{\alpha\beta}(x,y):=\frac{\psi_2(x;s,\alpha,\beta)\psi_1(y;s,\alpha,\beta)-\psi_1(x;s,\alpha,\beta)\psi_2(y;s,\alpha,\beta)}{2\pi\im(x-y)},\ \ \ x,y\in\mathbb{R}\setminus\{0\},
\end{equation*}
defined in terms of the unique solution $Q(\zeta;x,\alpha,\beta)$ of the Painlev\'e-XXXIV RHP \ref{YattRHP} via the formula
\begin{equation}\label{e:15}
	\begin{bmatrix}\psi_1(z;x,\alpha,\beta)\smallskip\\ \psi_2(z;x,\alpha,\beta)\end{bmatrix}:=\begin{cases}Q_+(z;x,\alpha,\beta)\begin{bmatrix}1\\ 0\end{bmatrix},&z>0\\
	Q_+(z;x,\alpha,\beta)\e^{-\im\frac{\pi}{2}\alpha\sigma_3}\begin{bmatrix}1\\ 1\end{bmatrix},&z<0\end{cases}.
\end{equation}
Moreover, choosing $\phi(x)=1$ for $x>s$ and $\phi(x)=0$ for $x<s$, and denoting the right hand side of \eqref{e:14}, with that particular choice of $\phi$, as $F(s;\alpha,\beta)$, we have
\begin{equation}\label{e:16}
	F(s;\alpha,\beta)=\exp\bigg(-\int_s^{\infty}\big(\sigma_{\alpha}(x,\beta-1)-\sigma_{\alpha}(x,\beta)\big)\d x\bigg),\ \ \ \ s\in\mathbb{R},\ \ \alpha>-1,\ \ \beta\in\mathbb{C}\setminus(-\infty,1),
\end{equation}
where $\sigma_{\alpha}=\sigma_{\alpha}(t,\gamma)$ solves \eqref{e:4} with parameter $\alpha>-1$ and boundary constraint at $t=+\infty$ equal to
\begin{equation}\label{e:17}
	\sigma_{\alpha}(t,\gamma)=-\alpha\sqrt{t}\bigg[1+\frac{\alpha}{4t^{\frac{3}{2}}}+\mathcal{O}\big(t^{-3}\big)\bigg]+\big(\e^{\im\pi\alpha}-\gamma\big)\frac{\Gamma(1+\alpha)}{2^{3(1+\alpha)}\pi}t^{-1-\frac{3}{2}\alpha}\e^{-\frac{4}{3}t^{\frac{3}{2}}}\Big(1+\mathcal{O}\big(t^{-\frac{1}{4}}\big)\Big),\ \ \gamma\notin(-\infty,0).
\end{equation}
\end{theo}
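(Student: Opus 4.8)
The plan is to extract all three assertions from a Deift--Zhou steepest-descent analysis of the $2\times2$ Riemann--Hilbert problem $Y(z)$ for the polynomials orthonormal with respect to $w_n(x)=\omega_{\alpha\beta}(x-\lambda_n)\e^{-nV(x)}$, with the model problem \ref{YattRHP} serving as edge parametrix. First I would write the kernel \eqref{e:9} in Christoffel--Darboux/RHP form via that $Y(z)$ (which exists for large $n$, as noted after \eqref{e:9}) and then run the usual chain $Y\mapsto T\mapsto S$: the $g$-function built from $\rho_V$ and $\ell_V$ normalizes at infinity, and opening lenses on $(-\sqrt2,\sqrt2)$ turns the oscillatory jump on $E$ into jumps exponentially close to the identity away from $E$ and the lens boundaries. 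The outer parametrix $P^{(\infty)}$ is the standard one, its Szeg\H{o} factor now additionally carrying $\omega_{\alpha\beta}^{1/2}(\cdot-\lambda_n)$; near $-\sqrt2$ one uses the classical Airy parametrix. The genuinely new ingredient is the local parametrix at $+\sqrt2$: since $\lambda_n=\sqrt2+s(n\tau)^{-2/3}$ lies \emph{inside} the $\mathcal O(n^{-2/3})$ soft-edge window, the Airy variable there is $\mathcal O(1)$ and the Fisher--Hartwig jump of $w_n$ at $x=\lambda_n$ cannot be decoupled from the edge. One therefore builds the local parametrix out of $Q(\zeta;s,\alpha,\beta)$ from RHP \ref{YattRHP}, with $\zeta$ the rescaled and $s$-shifted local coordinate, and verifies the matching $P(z)\bigl(P^{(\infty)}(z)\bigr)^{-1}=I+\mathcal O(n^{-c})$ on a fixed circle $\partial U_{\sqrt2}$, uniformly for $(s,\alpha,\beta)$ on compacts, using the $\zeta\to\infty$ behaviour of $Q$ provided by RHP \ref{YattRHP}.

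The resulting error $R$ then solves a small-norm RHP with jump $I+\mathcal O(n^{-c})$ on a fixed contour, so $R=I+\mathcal O(n^{-c})$ together with its derivatives, uniformly in the parameters. Unwinding $Y\mapsto T\mapsto S\mapsto$ parametrices yields, for $x,y$ in compact subsets of $\mathbb R\setminus\{0\}$,
\[
\frac{1}{(n\tau)^{2/3}}\,K_{n,\alpha,\beta}^{\lambda_n}\!\Bigl(\sqrt2+\tfrac{x}{(n\tau)^{2/3}},\,\sqrt2+\tfrac{y}{(n\tau)^{2/3}}\Bigr)\ \longrightarrow\ A_s^{\alpha\beta}(x-s,y-s),
\]
with $\psi_1,\psi_2$ exactly as in \eqref{e:15}; the shift by $s$ appears because the coordinate centred at $\lambda_n$ is $\zeta=(x-\sqrt2)(n\tau)^{2/3}-s$, and the split into the $z>0$ and $z<0$ branches of \eqref{e:15} merely records on which side of $\lambda_n$ the point lies. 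Locally uniform bounds on the rescaled kernel -- an integrable $|x|^{\alpha}$-type singularity at $0$ since $\alpha>-1$, and rapid decay at infinity -- let me pass to the limit termwise in the Fredholm series, with the tail controlled by Hadamard's inequality; since every estimate above is uniform for $(s,\alpha,\beta)$ in compacts, this gives \eqref{e:14} with the stated uniformity.

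For \eqref{e:16} I would take $\phi=\mathbf 1_{(s,\infty)}$, so that the right-hand side of \eqref{e:14} becomes $F(s;\alpha,\beta)=\det\bigl(I-A_s^{\alpha\beta}\bigr)$ on $L^2(s,\infty)$, which after translation is the Fredholm determinant on $L^2(0,\infty)$ of the integrable operator with data $\psi_1(\cdot;s,\alpha,\beta),\psi_2(\cdot;s,\alpha,\beta)$ taken on $z>0$. By the Its--Izergin--Korepin--Slavnov correspondence this determinant is governed by an auxiliary $2\times2$ RHP whose jump data come from $Q(\cdot;s,\alpha,\beta)$, and the standard differentiation identity expresses $\partial_s\log F(s;\alpha,\beta)$ through the sub-leading coefficients of $Q(\zeta;s,\alpha,\beta)$ at $\zeta=\infty$, i.e.\ through a Hamiltonian of the Painlev\'e-XXXIV system attached to RHP \ref{YattRHP}. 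Using the Lax pair of that RHP and its reduction to \eqref{e:5}/\eqref{e:4} (from which \eqref{e:4} itself follows), I would identify this Hamiltonian with $\sigma_\alpha(s,\beta-1)-\sigma_\alpha(s,\beta)$, the appearance of \emph{two} $\sigma$-functions reflecting that inserting the projector $\mathbf 1_{(s,\infty)}$ amounts, in the model RHP, to a one-step Schlesinger transformation shifting $\beta\mapsto\beta-1$, so that $F(s;\alpha,\beta)$ is a ratio $\tau_\alpha(s;\beta-1)/\tau_\alpha(s;\beta)$ of $\tau$-functions with $\partial_s\log\tau_\alpha(\cdot;\gamma)=-\sigma_\alpha(\cdot,\gamma)$ -- consistent with the Hankel reading $F_n[\mathbf 1_{(\lambda_n,\infty)}]=D_n(\lambda_n,\alpha,0)/D_n(\lambda_n,\alpha,\beta)$ coming from \eqref{e:8}. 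Here $\sigma_\alpha(\cdot,\gamma)$ is the solution of \eqref{e:4} that, by the $\zeta\to\infty$ asymptotics of $Q$ fed through the Lax pair, obeys \eqref{e:17}; the leading $-\alpha\sqrt t$ terms cancel in the difference, leaving an integrable tail, so with $F(+\infty;\alpha,\beta)=1$ the relation $\partial_s\log F=\sigma_\alpha(s,\beta-1)-\sigma_\alpha(s,\beta)$ integrates to \eqref{e:16}. I would pin the exact shift $\beta\mapsto\beta-1$ and the $\sigma$-normalisation by checking $\alpha=0,\beta=1$, where $\sigma_0(x,0)-\sigma_0(x,1)$ must collapse to the Tracy--Widom $\sigma_0$, i.e.\ to $\int_x^\infty\textnormal{Ai}^2$.

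Finally, \eqref{e:17} is the connection problem for RHP \ref{YattRHP} as $t=x\to+\infty$: a separate steepest-descent analysis of $Q(\zeta;t,\alpha,\beta)$ with cubic phase $\tfrac43(\cdot)^{3/2}$ and a local model at the turning point built from confluent-hypergeometric functions carrying the $\alpha$-exponent. It produces the purely algebraic power-series solution $\sigma_\alpha(t,\gamma)=-\alpha\sqrt t\,[1+\tfrac{\alpha}{4}t^{-3/2}+\mathcal O(t^{-3})]$, which is $\gamma$-independent, plus an exponentially small correction whose amplitude $(\e^{\im\pi\alpha}-\gamma)\tfrac{\Gamma(1+\alpha)}{2^{3(1+\alpha)}\pi}\,t^{-1-\frac32\alpha}\e^{-\frac43 t^{3/2}}\bigl(1+\mathcal O(t^{-1/4})\bigr)$ is the subdominant Stokes contribution carried by the $\gamma$-dependent jump entry of RHP \ref{YattRHP}, the constant $\tfrac{\Gamma(1+\alpha)}{2^{3(1+\alpha)}\pi}$ emerging from the turning-point parametrix and the bookkeeping of normalisations; at $\gamma=\e^{\im\pi\alpha}$ this term vanishes, singling out the distinguished algebraic solution, and uniformity in $\gamma\notin(-\infty,0)$ on compacts is automatic from the construction. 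The main obstacle I anticipate is the construction of the $+\sqrt2$ edge parametrix from RHP \ref{YattRHP} together with the uniform matching estimate -- precisely the point at which the soft edge and the Fisher--Hartwig singularity must be resolved by one model problem -- closely followed by the identification in \eqref{e:16} of the logarithmic-derivative Hamiltonian with the exact combination $\sigma_\alpha(s,\beta-1)-\sigma_\alpha(s,\beta)$ and by the determination of the constant in the exponentially small term of \eqref{e:17}.
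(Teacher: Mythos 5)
Your proposal tracks the paper's proof quite faithfully: the chain of transformations $Y\mapsto T\mapsto S\mapsto R$ with the Painlev\'e-XXXIV model at $+\sqrt2$ and Airy at $-\sqrt2-s_{n\theta}$, pointwise kernel convergence lifted to Fredholm convergence via Hadamard's bound, and the IIKS/Jacobi-identity route for \eqref{e:16} with $\partial_s\ln F=\sigma_\alpha(\cdot,\beta-1)-\sigma_\alpha(\cdot,\beta)$ integrated using $F(+\infty)=1$. Two items deserve a warning. First, for \eqref{e:14} the Hadamard argument requires the finite-$n$ kernel to be (close to) positive definite; the paper runs the dominated-convergence proof for $\beta\geq0$ and then propagates the statement to all $\beta\in\mathbb{C}\setminus(-\infty,0)$ by Vitali's theorem and the identity theorem, a step your sketch omits and which does not follow automatically from uniform-on-compacta estimates alone. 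Second, and more substantively, your identification of the ``Schlesinger transformation $\beta\mapsto\beta-1$'' is the right intuition but it is precisely where the rigor has to be supplied: the paper makes it precise via the undressing $U(z)=W(z)Q(z;s,\alpha,\beta)$, shows by direct comparison of jumps and local behaviour that $U(z)Q(z;s,\alpha,\beta-1)^{-1}$ is entire and tends to $I$, and deduces both the unique solvability of the $W$-problem (hence invertibility of $I-A_s^{\alpha\beta}$ on $L^2(0,\infty)$, which is what licenses the Jacobi formula at all and explains the restriction $\beta\notin(-\infty,1)$) and the exact formula $W_1=Q_1(\cdot,\beta-1)-Q_1(\cdot,\beta)$. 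Your phrase ``expresses $\partial_s\log F$ through the sub-leading coefficients of $Q(\zeta;s,\alpha,\beta)$ at $\zeta=\infty$'' is imprecise as written --- the correct object is the \emph{difference} of such coefficients at two distinct values of $\beta$, and this is exactly what the entire-function argument delivers. As for \eqref{e:17}, your proposal to re-derive the connection formula by steepest descent on $Q$ is more than the paper does: it simply cites the asymptotics (Lemma \ref{ArnoAsy}, Appendix \ref{appPain}) from the literature on closely related model problems, so you would be proving something the paper treats as input. None of these are wrong, but the invertibility/undressing step in \eqref{e:16} is the one genuine gap in your outline that must be closed before the argument is complete.
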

Noticing the large argument asymptotics of the difference in the integrand of \eqref{e:16}, an application of Fubini's theorem together with Lemma \ref{ArnoLax} yields an equivalent formula for $F(s;\alpha,\beta)$. Namely,
\begin{equation}\label{e:18}
	F(s;\alpha,\beta)=\exp\bigg(\int_s^{\infty}(x-s)\big(q_{\alpha}(x,\beta-1)-q_{\alpha}(x,\beta)\big)\d x\bigg),\ \ s\in\mathbb{R},\ \ \beta\in\mathbb{C}\setminus(-\infty,1),
\end{equation}
in terms of $q_{\alpha}=q_{\alpha}(t,\gamma)$ that solves Painlev\'e-XXXIV \eqref{e:5} with parameter $\alpha>-1$ and constraint at $t=+\infty$
\begin{equation}\label{e:19}
	q_{\alpha}(t,\gamma)=-\frac{\alpha}{2\sqrt{t}}\bigg[1-\frac{\alpha}{2t^{\frac{3}{2}}}+\mathcal{O}\big(t^{-3}\big)\bigg]-\big(\e^{\im\pi\alpha}-\gamma\big)\frac{\Gamma(1+\alpha)}{2^{2+3\alpha}\pi}t^{-\frac{1}{2}(1+3\alpha)}\e^{-\frac{4}{3}t^{\frac{3}{2}}}\Big(1+\mathcal{O}\big(t^{-\frac{3}{2}}\big)\Big),\ \ \gamma\notin(-\infty,0).
\end{equation}
In general, the appearance of the difference of two \textit{different} solutions to the \textit{same} nonlinear Painlev\'e equation in the right hand sides of \eqref{e:16} and \eqref{e:18} strikes us as peculiar. We are not aware of any other Fredholm determinants in integrable systems theory that display a similar feature. Still, since $\alpha$ is in general non-zero, it is precisely the difference of two Painlev\'e transcendents that makes the integrands in \eqref{e:16} and \eqref{e:18} integrable at $x=+\infty$, compare \eqref{e:17} and \eqref{e:19}. 
\begin{rem} When $(\alpha,\beta)=(0,1)$, then $q_{\alpha}(x,\beta)=\sigma_{\alpha}(x,\beta)\equiv 0$ by Remark \ref{speccon} and Lemma \ref{ArnoLax}. So \eqref{e:16},\eqref{e:18} yield
\begin{equation}\label{e:20}
	F(s;0,1)=\exp\bigg(-\int_s^{\infty}\sigma_0(x,0)\d x\bigg)=\exp\bigg(\int_s^{\infty}(x-s)q_0(x,0)\d x\bigg),\ \ s\in\mathbb{R},
\end{equation}
where $\sigma_0(x,0)$ solves \eqref{e:4} with $\alpha=0$ and $q_0(x,0)$ solves \eqref{e:5}, also with $\alpha=0$. This, see also \eqref{e:17},\eqref{e:18}, makes \eqref{e:20} consistent with the Tracy-Widom and Forrester-Witte formula for $F_2(s)$, by \eqref{e:2},\eqref{e:7} and \eqref{e:8}.
\end{rem}
Our result about the limiting Fredholm determinant \eqref{e:14} is a modest improvement over \cite[Theorem $1.2$]{IKO} and \cite[Theorem $5$]{WXZ}. In \cite{IKO} the limiting kernel $A_s^{\alpha 1}(x,y)$ was identified in the context of the model
\begin{equation*}
	Z_{n,N}^{-1}|\det M|^{2\alpha}\e^{-N\textnormal{tr}\,V(M)}\d M,\ \ \ \alpha>-\frac{1}{2},
\end{equation*}
defined on $n\times n$ Hermitian matrices $M$, assuming $V$ is one-cut regular and assuming the origin is a right endpoint of the limiting mean eigenvalue density. In turn, $A_s^{\alpha1}(x,y)$ appeared in a particular double scaling limit $n,N\rightarrow\infty$. For general $\beta\in\mathbb{C}\setminus(-\infty,0)$, \cite{WXZ} identified $A_s^{\alpha\beta}(x,y)$ as limit for the reproducing kernel of the orthonormal polynomials associated with the Gaussian Fisher-Hartwig weight
\begin{equation*}
	w(x;\alpha,\mu,\omega)=\e^{-x^2}|x-\mu|^{2\alpha}\begin{cases}1,&x\leq\mu\\ \omega,&x>\mu\end{cases},
\end{equation*}
once $\mu$ is suitably edge-scaled. In \eqref{e:14} we lift the pointwise convergence of the reproducing kernel to the Fredholm convergence \eqref{e:14}, for general one-cut regular $V$ and $\alpha>-1,\beta\notin(-\infty,0)$. The connection to Painlev\'e functions in \eqref{e:16}, resp. \eqref{e:18}, is seemingly new and it shows that, although $A_s^{\alpha\beta}$ is constructed in terms of Painlev\'e-XXXIV data, its Fredholm determinant on $L^2(0,\infty)$ is not more complicated and also expressed in terms of solutions to Painlev\'e-XXXIV. Note that \eqref{e:16} uses the additional constraint $\beta\in\mathbb{C}\setminus(-\infty,1)$ which links to the smoothness of the underlying Painlev\'e functions.\bigskip

Next, we move to the first factor in \eqref{e:8}, the ratio of Hankel determinants. We address the same for quadratic $V(x)=x^2$ at the beginning.
\begin{theo}\label{BS:2} Let $\alpha>-1,\beta\in\mathbb{C}\setminus(-\infty,0)$ and $s\in\mathbb{R}$ be fixed. Then, as $n\rightarrow\infty$,
\begin{equation}\label{e:21}
	\ln\Bigg[\frac{D_n\big(\lambda_n,\alpha,\beta;x^2\big)}{D_n\big(\lambda_n,0,\beta;x^2\big)}\Bigg]=\frac{n\alpha}{2}(1-\ln 2)+\alpha s\sqrt[3]{n}+\frac{\alpha^2}{6}\ln n+\eta_{\alpha}(s,\beta)+\mathcal{O}\bigg(\frac{\ln n}{\sqrt[3]{n}}\bigg),
\end{equation}
with $\lambda_n$ as in \eqref{e:12} and where $s\mapsto\eta_{\alpha}(s,\beta)$ is such that
\begin{equation}\label{e:22}
	\frac{\d\eta_{\alpha}}{\d s}(s,\beta)=\sigma_{\alpha}(s,\beta)-\sigma_0(s,\beta)
\end{equation}
is the difference of two solutions of \eqref{e:4}, both with boundary constraint \eqref{e:17}. Moreover, as $n\rightarrow\infty$,
\begin{equation}\label{e:23}
	\ln\Bigg[\frac{D_n(\lambda_n,0,\beta;x^2)}{D_n(\lambda_n,0,1;x^2)}\Bigg]=-\int_s^{\infty}\sigma_0(x,\beta)\d x+\mathcal{O}\big(n^{-\frac{2}{3}}\big),
\end{equation}
in terms of $\sigma_0(x,\beta)$ that solves \eqref{e:4} with parameter $\alpha=0$ and boundary constraint \eqref{e:17}.
\end{theo}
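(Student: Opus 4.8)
The plan is to realize the ratio of Hankel determinants as an integral over the parameter $\alpha$ of a quantity that the Riemann--Hilbert analysis produces for free. Writing $D_n(\lambda_n,\alpha,\beta;x^2)$ in terms of the orthogonal polynomials for the weight $\omega_{\alpha\beta}(x-\lambda_n)\e^{-nx^2}$, one has the standard identity $\partial_\alpha \ln D_n = \sum_{\ell=0}^{n-1} \int \pi_{\ell,n}(x)^2\,(\partial_\alpha \ln\omega_{\alpha\beta})(x-\lambda_n)\,\omega_{\alpha\beta}(x-\lambda_n)\e^{-nx^2}\d x$, i.e. an integral of $\ln|x-\lambda_n|$ against the one-point function $K_{n,\alpha,\beta}^{\lambda_n}(x,x)$. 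A cleaner route, which I would prefer, is to use the differential identity in $\lambda$: $\partial_\lambda \ln D_n(\lambda,\alpha,\beta;x^2)$ is expressible through the RHP solution $Y$ (equivalently through the local parametrix data at $\lambda_n$) evaluated at the singular point, since the $\lambda$-derivative of the weight brings down $-\alpha/(x-\lambda)$ plus a contribution from the jump discontinuity of $\omega_{\alpha\beta}$. In either formulation, the large-$n$ asymptotics of the integrand are governed by the Deift--Kriecherbauer--McLaughlin--Venakides--Zhou steepest-descent analysis for one-cut regular $V$, with the novelty that near $\lambda_n=\sqrt2+s(n\tau)^{-2/3}$ one must use the \emph{confluent} Painlev\'e-XXXIV/Airy parametrix built from the RHP \ref{YattRHP} rather than the pure Airy parametrix.

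The key steps, in order. First, set up the RHP for the $2\times2$ matrix $Y$ of orthogonal polynomials with the Fisher--Hartwig weight and perform the usual transformations $Y\mapsto T\mapsto S$ using the $g$-function and $\phi$-function attached to the equilibrium measure of $V(x)=x^2$; this is verbatim \cite{DKMVZ} in the bulk and away from $\sqrt2$. Second, construct the local parametrix in a fixed-size disc around $\sqrt2$: because $\lambda_n\to\sqrt2$ at the edge scale, this disc contains \emph{both} the soft edge and the Fisher--Hartwig point, so the local model is precisely the $\Psi$-function of the Painlev\'e-XXXIV RHP \ref{YattRHP} with spectral variable $\zeta\sim (x-\sqrt2)(n\tau)^{2/3}$ and "time" variable $s$; the matching of this parametrix to the outer parametrix on the boundary circle, to leading order, is where the $\e^{\im\pi\alpha}-\gamma$ corrections and the explicit constants in \eqref{e:17} enter. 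Third, define $R:=(\text{global parametrix})^{-1}$ composed with $S$; then $R$ solves a small-norm RHP with jump $I+\mathcal O(n^{-1/3}\ln n)$ on the circle (the $\ln n$ from the Fisher--Hartwig logarithm) and $I+\mathcal O(n^{-1})$ elsewhere, so $R=I+\mathcal O(n^{-1/3}\ln n)$ uniformly, with a full asymptotic expansion whose subleading terms are computable. Fourth, feed this back into the differential identity: the $n\alpha(1-\ln2)/2$ term is the contribution of the $g$-function/Euler--Lagrange constant $\ell_V=1+\ln2$, the $\alpha s\,n^{1/3}$ term comes from evaluating $\ln|x-\sqrt2|$ at the shifted edge point against the leading density, the $(\alpha^2/6)\ln n$ term is the Fisher--Hartwig contribution (the exponent $\alpha^2/6 = \alpha^2/4 \cdot 2/3$ reflecting the $n^{2/3}$ edge scaling of a $|x|^{2\alpha}$-type singularity, matching the general Fisher--Hartwig formulas of Its--Krasovsky and Charlier), and the constant $\eta_\alpha(s,\beta)$ is what survives from the $\alpha$- (or $s$-) integration of the Painlev\'e-XXXIV parametrix data. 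Fifth, to pin down \eqref{e:22}, differentiate $\eta_\alpha$ in $s$ and recognize, via the RHP \ref{YattRHP} and the logarithmic-derivative computation already producing $\sigma_\alpha$ in Theorem \ref{BS:1}, that $\partial_s\ln[D_n(\lambda_n,\alpha,\beta;x^2)/D_n(\lambda_n,0,1;x^2)]$ converges to $-(\sigma_\alpha(s,\beta)-\sigma_0(s,\beta)) + (\text{the explicit polynomial/log terms})$ after subtracting the $\alpha$-dependent secular part; subtracting the $\alpha=0$ version \eqref{e:23}, which is the known Tracy--Widom-type statement for a pure Airy parametrix deformed by $\beta$, isolates $\sigma_\alpha(s,\beta)-\sigma_0(s,\beta)$ as claimed. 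Finally, \eqref{e:23} itself follows from the same machinery with $\alpha=0$: there the Fisher--Hartwig point degenerates to a jump-only singularity, the local model reduces to a $\beta$-deformed Airy parametrix, and the standard argument (differentiate in $s$, identify the logarithmic derivative with $\sigma_0(\cdot,\beta)$, integrate using the decay \eqref{e:17} at $+\infty$) gives the stated formula with $\mathcal O(n^{-2/3})$ error — here no $\ln n$ appears because there is no power singularity.

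The main obstacle I anticipate is the \emph{matching condition} at step two and the extraction of the constant term $\eta_\alpha$ at step four. The outer parametrix near $\sqrt2$ is a Bessel-free Airy-type object, while the inner model is the Painlev\'e-XXXIV $\Psi$-function, and one must verify that their ratio on the boundary circle is $I + O(n^{-1/3})$ with a controlled expansion — this requires the precise large-$\zeta$ asymptotics of the solution $Q(\zeta;s,\alpha,\beta)$ of RHP \ref{YattRHP}, including the subleading terms that encode $\sigma_\alpha$; getting the constant (as opposed to the growing) part of $\ln D_n$ correct means tracking these subleading terms through the differential identity rather than just the leading matching. A secondary technical point is justifying that the differential identity can be integrated in $\alpha$ (or $s$) with uniform control of error terms on compacts, so that the $\mathcal O(n^{-1/3}\ln n)$ remainder in \eqref{e:21} survives integration; this is where the uniformity statement in Theorem \ref{BS:1} and the analyticity of $Q$ in its parameters are essential. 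Everything else — the transformations, the small-norm estimate, the identification of the polynomial and logarithmic secular terms with $\ell_V$, the edge shift, and the Fisher--Hartwig exponent — is by now routine steepest descent, modulo bookkeeping.
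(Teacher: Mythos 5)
Your outline captures the right machinery (the Fokas--Its--Kitaev RHP, the $g$-function transformations, the Painlev\'e-XXXIV local parametrix, the small-norm ratio problem), but the strategy for extracting \eqref{e:21} has a real gap. You say you would "prefer" the differential identity in $\lambda$, i.e. $\partial_\lambda\ln D_n$. That identity, integrated in $\lambda$ (equivalently $s$), gives the $s$-dependence of $\ln D_n(\lambda_n,\alpha,\beta;x^2)$ at \emph{fixed} $\alpha$, and it is indeed how the paper proves \eqref{e:23} (where the $\alpha=0$ integration constant is pinned down at $s\to+\infty$ via the uniformity from \cite{BCI}, Remark \ref{unif}). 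But for \eqref{e:21} the $s$-independent pieces $\frac{n\alpha}{2}(1-\ln 2)$ and $\frac{\alpha^2}{6}\ln n$, together with the constant of integration hiding inside $\eta_\alpha$, cannot be recovered from a $\lambda$-derivative: they are precisely the quantities that the $s$-integration leaves undetermined, and for $\alpha\neq 0$ no analogue of the $s\to+\infty$ uniformity is available to fix them. The paper therefore takes the $\alpha$-differential identity as the primary tool: Krasovsky's formula \eqref{a:10}, which expresses $\partial_\alpha\ln D_n$ purely in terms of RHP coefficients $\kappa_{n,n},\kappa_{n-1,n},\delta_{n,n},\gamma_{n,n}$ \emph{and} the regularized boundary values $\mathring{X}^{j2}(\lambda_{n0})$ of $X$ at the Fisher--Hartwig point. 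You mention the naive form $\partial_\alpha\ln D_n=\int\ln|x-\lambda_n|\,K_n(x,x)\,\d x$, but you do not anticipate that the paper's route replaces this with the Krasovsky identity specifically to avoid controlling the kernel across all regimes, and this requires the whole of Proposition~\ref{weird} to compute the regularized entries $\mathring{X}^{j2}(\lambda_{n0})$ from the local parametrix.

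Two smaller points. First, the small-norm estimate is $\|G_R-I\|=\mathcal O(n^{-1/3})$, not $\mathcal O(n^{-1/3}\ln n)$; the $\ln n$ in the error of \eqref{e:21} comes from $\partial_\alpha$ hitting the factor $n^{\alpha/3}$ that appears in the exact expression for $X^{j1}(\lambda_{n0})$ and $\mathring{X}^{j2}(\lambda_{n0})$ (Proposition~\ref{weird}). Second, because the small-norm estimate only decays like $n^{-1/3}$ while \eqref{a:10} carries explicit factors of $n$, the Neumann iteration of \eqref{a:34} has to be carried out to several orders (this is what Section~\ref{sec5} is mostly occupied with); your sketch treats the small-norm bound as if a single pass suffices. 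Once those points are in place, your step five — differentiating the result in $s$, comparing with the $s$-derivative identity \eqref{a:9} expanded via the parametrix, and reading off $\partial_s\eta_\alpha=\sigma_\alpha-\sigma_0$ — is exactly the paper's closing argument.
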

\begin{rem} The large $n$-asymptotic of the Selberg integral $D_n(\lambda_n,0,1;x^2)$ is known, see for instance \cite{W}. Likewise the ratio asymptotics \eqref{e:23} is known from \cite[Theorem $2$]{BCI} where the same is written in terms of the Ablowitz-Segur solution to \eqref{e:6}. Here, we rederive \eqref{e:23} en route while proving \eqref{e:21}.
\end{rem}
Our proof workings for \eqref{e:21} produce an explicit formula for $\eta_{\alpha}(s,\beta)$ in terms of $\alpha$-quadratures over quantities from the Painlev\'e-XXXIV RHP \ref{YattRHP}, see \eqref{g:17} below. While lengthy, and hence our presentation of \eqref{e:22} in Theorem \ref{BS:2}, that formula gives $\eta_0(s,\beta)\equiv 0$. Moreover, identities \eqref{e:16} and \eqref{e:21} yield the following $s$-differentiable asymptotic:
\begin{cor} Let $\phi=\chi_{(s,\infty)}$ denote the indicator function on $(s,\infty)\subset\mathbb{R}$, $\phi_n$ be as in \eqref{e:13} and $\lambda_n$ as in \eqref{e:12}. Then, as $n\rightarrow\infty$,
\begin{equation}\label{e:24}
	E_n\big[\phi_n;\lambda_n,\alpha,\beta;x^2\big]=\exp\left(\frac{n\alpha}{2}(1-\ln 2)+\alpha s\sqrt[3]{n}+\frac{\alpha^2}{6}\ln n+\Xi_{\alpha}(s,\beta)+o(1)\right),
\end{equation}
with
\begin{equation*}
	\Xi_{\alpha}(s,\beta):=\eta_{\alpha}(s,\beta)-\int_s^{\infty}\big(\sigma_{\alpha}(x,\beta-1)-\sigma_{\alpha}(x,\beta)+\sigma_0(x,\beta)\big)\d x,
\end{equation*}
written in terms of \eqref{e:22} and \eqref{e:4},\eqref{e:17}. The asymptotic \eqref{e:24} is differentiable with respect to $(s,\alpha)$ and it holds uniformly in $(s,\alpha,\beta)\in\mathbb{R}\times(-1,\infty)\times(\mathbb{C}\setminus(-\infty,1))$ on compact sets.
\end{cor}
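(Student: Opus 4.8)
The plan is to read \eqref{e:24} off the exact factorization \eqref{e:8} together with the three large-$n$ inputs recorded above, and then to upgrade the resulting pointwise identity to a locally uniform and $(s,\alpha)$-differentiable one. First I would specialise \eqref{e:8} to $V(x)=x^2$, $\lambda=\lambda_n$, $\phi=\phi_n$ and insert the intermediate Hankel determinant $D_n(\lambda_n,0,\beta;x^2)$,
\begin{equation*}
	E_n\big[\phi_n;\lambda_n,\alpha,\beta;x^2\big]=\frac{D_n(\lambda_n,\alpha,\beta;x^2)}{D_n(\lambda_n,0,\beta;x^2)}\cdot\frac{D_n(\lambda_n,0,\beta;x^2)}{D_n(\lambda_n,0,1;x^2)}\cdot F_n\big[\phi_n;\lambda_n,\alpha,\beta;x^2\big],
\end{equation*}
and take logarithms. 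By \eqref{e:21} the first factor contributes $\tfrac{n\alpha}{2}(1-\ln2)+\alpha s\sqrt[3]{n}+\tfrac{\alpha^2}{6}\ln n+\eta_\alpha(s,\beta)+\mathcal{O}(n^{-1/3}\ln n)$, by \eqref{e:23} the second contributes $-\int_s^\infty\sigma_0(x,\beta)\,\d x+\mathcal{O}(n^{-2/3})$, and since $\phi=\chi_{(s,\infty)}$ is exactly the test function singled out in Theorem \ref{BS:1}, \eqref{e:14} gives $F_n\to F(s;\alpha,\beta)$ with the nonvanishing limit identified by \eqref{e:16}, so $\ln F_n=-\int_s^\infty\big(\sigma_\alpha(x,\beta-1)-\sigma_\alpha(x,\beta)\big)\,\d x+o(1)$ for $\beta\in\mathbb{C}\setminus(-\infty,1)$. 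Adding the three expansions, the total error is $o(1)$ and the $n$-independent terms assemble precisely into the quantity $\Xi_\alpha(s,\beta)$ of the statement; exponentiating yields \eqref{e:24} for each fixed $(s,\alpha,\beta)$, with the restriction $\beta\notin(-\infty,1)$ inherited from \eqref{e:16}.

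It remains to see that the $o(1)$ in \eqref{e:24} is locally uniform in $(s,\alpha,\beta)$. For the Fredholm factor this is part of the assertion in Theorem \ref{BS:1}. For the two Hankel ratios one returns to the proof of Theorem \ref{BS:2}, which rests on a nonlinear (Deift--Zhou) steepest descent analysis: its $g$-function, global parametrix, Airy parametrix at the regular endpoint $-\sqrt{2}$, and local Painlev\'e-XXXIV parametrix built from RHP \ref{YattRHP} all depend analytically on $(s,\alpha,\beta)$, and the final small-norm estimate on the error matrix therefore holds with constants uniform for $s$ in compacts of $\mathbb{R}$ and $(\alpha,\beta)$ in compacts of $(-1,\infty)\times(\mathbb{C}\setminus(-\infty,0))$; hence the $\mathcal{O}(\cdot)$ terms in \eqref{e:21},\eqref{e:23}, and so the full $o(1)$ in \eqref{e:24}, are locally uniform. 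Tracking this parameter dependence cleanly through the $g$-function and every parametrix is where most of the work sits, and I expect it to be the main obstacle of the proof.

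For differentiability in $(s,\alpha)$ I would differentiate the factorization and the three inputs. This is legitimate because in each case the error is itself differentiable in $(s,\alpha)$ with derivative still $o(1)$: in the steepest descent analyses behind \eqref{e:21},\eqref{e:23} the error matrix and its $\partial_s$- and $\partial_\alpha$-derivatives are simultaneously small, as jumps and parametrices depend smoothly on $(s,\alpha)$; the Fredholm relation \eqref{e:16} is already an $s$-primitive, and $\partial_\alpha$ there is controlled through the explicit $\alpha$-quadrature representation of $\eta_\alpha$ produced in the proof of Theorem \ref{BS:2} together with the smooth $\alpha$-dependence of RHP \ref{YattRHP}. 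Matching the differentiated expansions term by term gives $\partial_s$ and $\partial_\alpha$ of \eqref{e:24} with $o(1)$ errors, locally uniformly; alternatively one may deduce this from the locally uniform convergence of the previous step via Cauchy estimates on a complex $(s,\alpha)$-neighbourhood shrinking mildly with $n$, which neutralises the $\tfrac{\alpha^2}{6}\ln n$ term. As a consistency check, at $(\alpha,\beta)=(0,1)$ one has $\eta_0\equiv0$ and $\sigma_0(\cdot,1)\equiv0$ (Remark \ref{speccon}, Lemma \ref{ArnoLax}), so $\Xi_0(s,1)=-\int_s^\infty\sigma_0(x,0)\,\d x$ and \eqref{e:24} reduces to $E_n[\phi_n]\to F_2(s)$, in agreement with \eqref{e:20}.
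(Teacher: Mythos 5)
Your decomposition $E_n=\frac{D_n(\lambda_n,\alpha,\beta;x^2)}{D_n(\lambda_n,0,\beta;x^2)}\cdot\frac{D_n(\lambda_n,0,\beta;x^2)}{D_n(\lambda_n,0,1;x^2)}\cdot F_n$ followed by the applications of \eqref{e:21}, \eqref{e:23} and \eqref{e:14}/\eqref{e:16} is exactly the route indicated in the paper, which states the Corollary as an immediate consequence of \eqref{e:16} and \eqref{e:21} without writing out the proof. The bookkeeping checks out (the total $o(1)$ error, the nonvanishing of $F(s;\alpha,\beta)$ forcing $\beta\notin(-\infty,1)$, the assembly into $\Xi_\alpha$, and the $(\alpha,\beta)=(0,1)$ sanity check), so the proposal is correct and essentially coincides with the paper's intended argument.
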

Result \eqref{e:24} aligns nicely with the Forrester-Witte result \cite[Proposition $26$]{FW}. Indeed, in \cite{FW} we find for $\beta=1$ and $\phi=\chi_{(s,\infty)}$,
\begin{equation}\label{e:25}
	\lim_{n\rightarrow\infty}\frac{\d}{\d s}\ln\Bigg[\e^{-\alpha s\sqrt[3]{n}}E_n\big[\phi_n;\lambda_n,\alpha,1;x^2\big]\Bigg]=\sigma_{\alpha}(s,0),
\end{equation}
just as we obtain it from \eqref{e:24}. Still, \eqref{e:24} is much stronger than \eqref{e:25} as we capture the full leading order large $n$-asymptotic of $E_n$, for general $\beta\in\mathbb{C}\setminus(-\infty,1)$. Next, we move from the Gaussian $V(x)=x^2$ in Theorem \ref{BS:2} to arbitrary one-cut regular $V$.
\begin{theo}\label{BS:3} Suppose $V$ is as in Assumption \ref{1-cut}. Let $\alpha>-1,\beta\in\mathbb{C}\setminus(-\infty,0)$ and $s\in\mathbb{R}$. Then, as $n\rightarrow\infty$, using the shorthand $W(x):=V(x)-x^2$,
\begin{align}
	\ln\Bigg[&\frac{D_n\big(\lambda_n,\alpha,\beta;V(x)\big)}{D_n(\lambda_n,\alpha,\beta;x^2)}\Bigg]=-n^2\int_E\frac{1}{2}\bigg(h_V(x)+\frac{1}{\pi}\bigg)\sqrt{2-x^2}\, W(x)\d x-\frac{n\alpha}{2\pi}\int_E\frac{W(x)}{\sqrt{2-x^2}}\d x+\frac{n\alpha}{2}W(\sqrt{2})\nonumber\\
	&+\frac{3\alpha s\sqrt[3]{n}}{2\sqrt{2}}\frac{(\pi h_V(\sqrt{2}))^{\frac{1}{3}}-1}{\pi h_V(\sqrt{2})-1}W'(\sqrt{2})+\Theta_{\alpha}(s,\beta)\ln\big(\pi h_V(\sqrt{2})\big)-\frac{1}{24}\ln\big(\pi h_V(-\sqrt{2})\big)+\mathcal{O}\big(n^{-\frac{1}{3}}\big),\label{e:26}
\end{align}
uniformly in the parameters $(s,\alpha,\beta)$ on compact sets in their admissible domain and where
\begin{equation*}
	\Theta_{\alpha}(s,\beta)=\frac{s^3}{6}-sa-\frac{2}{3}\big(Q_1^{21}+2Q_2^{12}\big),\hspace{1.5cm} \Theta_0(s,1)=-\frac{1}{24},
\end{equation*}
depends on $a=a(s,\alpha,\beta), Q_1^{21}=Q_1^{21}(s,\alpha,\beta)$ and $Q_2^{12}=Q_2^{12}(s,\alpha,\beta)$ that are defined in Lemma \ref{ArnoLax}.
\end{theo}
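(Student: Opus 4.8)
The plan is to reduce the Hankel determinant ratio $D_n(\lambda_n,\alpha,\beta;V(x))/D_n(\lambda_n,\alpha,\beta;x^2)$ to an integral over a one-parameter deformation of the potential, and then to evaluate that integral via the known asymptotics of a certain Riemann--Hilbert problem. Concretely, for $t\in[0,1]$ set $V_t(x):=x^2+tW(x)=x^2+t(V(x)-x^2)$, which interpolates between the Gaussian and $V$; one checks that each $V_t$ remains one-cut regular with the same support $E=[-\sqrt2,\sqrt2\,]$ (up to an $O(t)$-perturbation of $h_V$, uniform in $t$), so that Assumption \ref{1-cut} holds along the whole path. The standard differentiation identity for Hankel determinants gives
\begin{equation*}
	\frac{\d}{\d t}\ln D_n\big(\lambda_n,\alpha,\beta;V_t(x)\big)=-n\int_{\mathbb{R}}W(x)K_{n,\alpha,\beta}^{\lambda_n,t}(x,x)\,\d x,
\end{equation*}
where $K_{n,\alpha,\beta}^{\lambda_n,t}$ is the kernel \eqref{e:9} built from the orthonormal polynomials for the weight $\omega_{\alpha\beta}(x-\lambda_n)\e^{-nV_t(x)}$. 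Integrating in $t$ from $0$ to $1$ yields
\begin{equation*}
	\ln\Bigg[\frac{D_n(\lambda_n,\alpha,\beta;V(x))}{D_n(\lambda_n,\alpha,\beta;x^2)}\Bigg]=-n\int_0^1\!\!\int_{\mathbb{R}}W(x)K_{n,\alpha,\beta}^{\lambda_n,t}(x,x)\,\d x\,\d t,
\end{equation*}
and the whole problem becomes extracting the large-$n$ expansion of $\frac1n K_{n,\alpha,\beta}^{\lambda_n,t}(x,x)$ to sufficiently high order, uniformly in $t\in[0,1]$ and in $x\in\mathbb{R}$.

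The key step is the Deift--Zhou steepest-descent analysis of the RHP for the orthonormal polynomials $\pi_{\ell,n}$ attached to $\omega_{\alpha\beta}(\cdot-\lambda_n)\e^{-nV_t}$. This is exactly the machinery already developed in the proof of Theorem \ref{BS:1}, now carried out with the deformed potential $V_t$ and tracked to one more order. I would use the usual sequence of transformations: the $g$-function normalization associated with $\mu_{V_t}$, opening of lenses, the outer (global) parametrix built from the Szeg\H{o} function of $\omega_{\alpha\beta}(\cdot-\lambda_n)$ along $E$, the Airy parametrix at the left endpoint $-\sqrt2$, and at the right endpoint $\sqrt2$ — because $\lambda_n$ sits in that boundary layer — a local parametrix built out of the Painlev\'e-XXXIV model RHP \ref{YattRHP}. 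The contributions to $\int_{\mathbb{R}}W(x)\,\frac1n K^{\lambda_n,t}_{n,\alpha,\beta}(x,x)\,\d x$ then split by region: the bulk of $E$ produces, via the leading density $\rho_{V_t}(x)=h_{V_t}(x)\sqrt{(2-x^2)_+}$ and the $1/n$ correction coming from the $R$-matrix, the $n^2$- and $n^0$-type terms; the $O(n^{-1})$ pieces from the Airy parametrix at $-\sqrt2$ give the $\ln(\pi h_V(-\sqrt2))$ contribution; and the Painlev\'e region at $\sqrt2$ — of width $n^{-2/3}$ — produces the $n^{1/3}$ term proportional to $W'(\sqrt2)$ together with the parameter-dependent constant $\Theta_\alpha(s,\beta)$ through the quantities $a$, $Q_1^{21}$, $Q_2^{12}$ from Lemma \ref{ArnoLax}. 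Fubini is used to swap the $t$- and $x$-integrals, and one must be careful that the $x$-integral of $W(x)K^{\lambda_n,t}(x,x)$ over the whole line converges and that the $O(\cdot)$ error is uniform in $t$; the exponential smallness of the polynomials off $E$ handles the tails.

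The hardest part, in my estimation, is bookkeeping the $t$-integral of the endpoint contributions at $\sqrt2$ with enough precision to see that the correct combination $\Theta_\alpha(s,\beta)$ emerges, and in particular that the $\ln n$ terms present in the Gaussian case \eqref{e:21} are exactly cancelled in the \emph{ratio} $D_n(\lambda_n,\alpha,\beta;V)/D_n(\lambda_n,\alpha,\beta;x^2)$, leaving a finite $\ln(\pi h_V(\sqrt2))$ coefficient. This requires that the local Painlev\'e parametrix for $V_t$ differs from that for the Gaussian only through the explicit scaling factor $(n\tau_t)^{2/3}$ with $\tau_t=\pi h_{V_t}(\sqrt2)2^{3/4}$ and an $\e^{-\frac n2 V_t}$-conjugation, so that the $t$-dependence enters the endpoint contribution only through $h_{V_t}(\sqrt2)$ and its $x$-derivative structure near $\sqrt2$; expanding $h_{V_t}(\sqrt2)$ and $W(x)$ to first order in $(x-\sqrt2)$ around the endpoint and integrating $t$ from $0$ to $1$ then produces the stated coefficients $\tfrac{3\alpha s\sqrt[3]{n}}{2\sqrt2}\cdot\tfrac{(\pi h_V(\sqrt2))^{1/3}-1}{\pi h_V(\sqrt2)-1}W'(\sqrt2)$ and $\Theta_\alpha(s,\beta)\ln(\pi h_V(\sqrt2))$. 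A secondary technical point is verifying one-cut regularity of $V_t$ uniformly in $t$ so that all parametrix constructions are valid along the path; this follows from continuity of the equilibrium-measure map under analytic perturbation, but it must be stated. The final line $\Theta_0(s,1)=-\tfrac1{24}$ is then a consistency check: at $(\alpha,\beta)=(0,1)$ the Painlev\'e parametrix degenerates to the Airy one, $a=Q_1^{21}=Q_2^{12}=0$, and the coefficient of $\ln(\pi h_V(\sqrt2))$ must match the universal $-\tfrac1{24}$ also appearing at the left endpoint, recovering the known Gaussian-to-general-$V$ asymptotics of the partition function.
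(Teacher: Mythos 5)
Your overall plan — interpolate $V_t = x^2 + tW$, differentiate $\ln D_n$ in $t$ via the kernel/RHP identity, and evaluate the resulting $t$-integral asymptotically through Deift--Zhou steepest descent with a Painlev\'e-XXXIV local parametrix at $\sqrt{2}$ — is precisely the strategy the paper takes in Section \ref{sec6}, where your $V_t$ appears as $V_\theta = (1-\theta)x^2 + \theta V$ and your kernel identity is the RHP form \eqref{a:11}, which after Cauchy's theorem becomes \eqref{h:1}. The contour integrals in \eqref{h:5}--\eqref{h:21} are exactly the region-by-region contributions you describe.

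However, there is one technical point where you depart from the paper in a way that creates a genuine gap. You hold the singularity location $\lambda_n$ \emph{fixed} throughout the deformation, whereas the paper works at the $\theta$-dependent point $\lambda_{n\theta}=\sqrt{2}+s/(n\tau_\theta)^{2/3}$ with $\tau_\theta = \pi h_{V_\theta}(\sqrt{2})\,2^{3/4}$, see \eqref{a:3}. This matters: in the local coordinate $\zeta^b$ at $\sqrt{2}$ the argument of the Painlev\'e-XXXIV model function is $n^{2/3}\zeta^b(\sqrt{2})$, which with the paper's choice stays equal to $s+o(1)$ for \emph{all} $\theta\in[0,1]$, so that $a,Q_1^{21},Q_2^{12}$ are always evaluated at the same point $s$ and $\Theta_\alpha(s,\beta)$ factors out of the $\theta$-integral as a constant multiplying $\partial_\theta\ln h_{V_\theta}(\sqrt{2})$, cf.\ \eqref{h:21}. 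With $\lambda_n$ fixed instead, the effective argument becomes $s(t) = s\,\big(\pi h_{V_t}(\sqrt 2)/\pi h_V(\sqrt 2)\big)^{2/3}$, which varies from $s/(\pi h_V(\sqrt 2))^{2/3}$ at $t=0$ to $s$ at $t=1$; the Painlev\'e data are then evaluated at the moving point $s(t)$ and the integrand is $t$-dependent in a nontrivial way. Your claim that "expanding $h_{V_t}(\sqrt 2)$ and $W(x)$ to first order\ldots and integrating $t$ from $0$ to $1$ then produces the stated coefficients" silently skips this: you would have to show that
\begin{equation*}
\int_0^1\Big[\tfrac{1}{6}s(t)^3 - s(t)\,a(s(t),\alpha,\beta) - \tfrac{2}{3}\big(Q_1^{21}(s(t),\alpha,\beta)+2Q_2^{12}(s(t),\alpha,\beta)\big)\Big]\,\partial_t\ln h_{V_t}(\sqrt 2)\,\d t
\end{equation*}
collapses to $\Theta_\alpha(s,\beta)\ln(\pi h_V(\sqrt 2))$, which is not immediate and would require additional identities involving the $x$-derivatives of the Painlev\'e data (or a separate argument that the two setups give the same answer after correcting the Gaussian endpoint from $\lambda_{n1}$ to $\lambda_{n0}$). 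The coefficient of the $n^{1/3}$ term has the same issue: the factor $3\big((\pi h_V(\sqrt 2))^{1/3}-1\big)/\big(\pi h_V(\sqrt 2)-1\big)$ arises in the paper precisely as $\int_0^1(\pi h_{V_\theta}(\sqrt 2))^{-2/3}\d\theta$ coming from the expansion of $W(\sqrt 2 + s_{n\theta})$, which again relies on the $\theta$-dependence of $s_{n\theta}$. You should either adopt the paper's $\lambda_{n\theta}$-scaling (and then explicitly control the residual $D_n(\lambda_{n1};x^2)/D_n(\lambda_{n0};x^2)$ correction, which is $O(n^{1/3})$ for $\alpha\neq 0$), or supply the missing argument that the fixed-$\lambda_n$ bookkeeping produces the identical answer. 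As written, the proposal asserts the conclusion without establishing the part that is actually hard.
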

The first three terms in the right hand side of \eqref{e:26} are universal for ratio asymptotics of Hankel determinants with a single Fisher-Hartwig singularity and one-cut regular $V$, even when the singularity does not approach the soft edge, compare \cite[Proposition $5.5$]{BWW}. Only the fourth and fifth term in \eqref{e:26} feel the edge-scaling of $\lambda_n$: we observe fractional exponents of $n$ and Painlev\'e functions appear. Still, term four and five in \eqref{e:26} are consistent with the known results in \cite[Proposition $5.5$]{BWW} once $(\alpha,\beta)=(0,1)$ and necessarily the sixth term in \eqref{e:26} is equal to the corresponding left endpoint contribution in \cite[Proposition $5.5$]{BWW}. We provide a more detailed discussion and contextual placement of \eqref{e:26} after the next subsection.

\subsection{A probabilistic corollary} Ratio asymptotics of the type \eqref{e:21} or \eqref{e:26} are oftentimes useful in the derivation of limit theorems for the eigenvalue point process that underwrites \eqref{e:1}. For instance, see \cite[Section $1.1$]{BCI} or \cite[page $7520-7522$]{C}, the moment generating function of the number of eigenvalues greater than $\lambda_n$ relates to the ratio
\begin{equation*}
	\frac{D_n(\lambda_n,0,\beta;V(x))}{D_n(\lambda_n,0,1;V(x))}
\end{equation*}
and so all moments of the same random variable can be computed in terms of Painlev\'e-XXXIV data for large $n$ through \eqref{e:23} and \eqref{e:26}. Likewise, extreme value statistics in the thinned variant \cite{BP} of \eqref{e:1} are expressible as ratios of Hankel determinants with a single Fisher-Hartwig factor and so are averaged characteristic polynomials for the thinned process with a spectral gap. We won't focus on these quantities but instead state a limit theorem for the logarithm of the absolute value of the soft-edge scaled characteristic polynomial in the ensemble \eqref{e:1}.
\begin{cor} Suppose $M$ is drawn from \eqref{e:1} with $V$ as in Assumptoin \ref{1-cut} and let $\lambda_n$ be as in \eqref{e:12}. Then for any fixed $s\in\mathbb{R}$,
\begin{equation}\label{e:27}
	\sqrt{\frac{3}{\ln n}}\bigg[\ln\big|\det(M-\lambda_n I )\big|-n\rho_1-s\rho_2\sqrt[3]{n}\bigg]\Rightarrow N(0,1),
\end{equation}
as $n\rightarrow\infty$. Here, $N(0,1)$ is the standard normal random variable and $\rho_k$ abbreviate
\begin{align*}
	\rho_1:=\frac{1}{2}(1-\ln 2)-\frac{1}{2\pi}\int_E\frac{W(x)}{\sqrt{2-x^2}}\d x+\frac{1}{2}W(\sqrt{2}),\ \ \ 
	\rho_2:=1+\frac{3}{2\sqrt{2}}\frac{(\pi h_V(\sqrt{2}))^{\frac{1}{3}}-1}{\pi h_V(\sqrt{2})-1}W'(\sqrt{2}),
\end{align*}
with $W(x)=V(x)-x^2$.
\end{cor}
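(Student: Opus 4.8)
The plan is to read \eqref{e:27} as a central limit theorem governed by the moment generating function of
\begin{equation*}
	X_n:=\ln\big|\det(M-\lambda_nI)\big|=\sum_{\ell=1}^n\ln\big|\lambda_\ell(M)-\lambda_n\big|,
\end{equation*}
and then to invoke Curtiss' theorem, namely that convergence of moment generating functions on an interval about the origin forces convergence in distribution. First I would specialise \eqref{e:7}--\eqref{e:8} to $\phi\equiv 0$ and $\beta=1$, so that $F_n\equiv 1$, $\omega_{\alpha1}(x)=|x|^{\alpha}$, and (the eigenvalues of the Hermitian matrix $M$ being real) one obtains, for every real $\alpha>-1$,
\begin{equation*}
	\mathbb{E}_n\big(\e^{\alpha X_n}\big)=\mathbb{E}_n\bigg(\prod_{\ell=1}^n\big|\lambda_\ell(M)-\lambda_n\big|^{\alpha}\bigg)=\frac{D_n\big(\lambda_n,\alpha,1;V(x)\big)}{D_n\big(\lambda_n,0,1;V(x)\big)}.
\end{equation*}

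Next I would extract the large-$n$ behaviour of this ratio by telescoping through the Gaussian reference potential,
\begin{equation*}
	\frac{D_n(\lambda_n,\alpha,1;V)}{D_n(\lambda_n,0,1;V)}=\frac{D_n(\lambda_n,\alpha,1;V)}{D_n(\lambda_n,\alpha,1;x^2)}\cdot\frac{D_n(\lambda_n,\alpha,1;x^2)}{D_n(\lambda_n,0,1;x^2)}\cdot\frac{D_n(\lambda_n,0,1;x^2)}{D_n(\lambda_n,0,1;V)},
\end{equation*}
feeding the first and third factors into Theorem \ref{BS:3} (evaluated at $(\alpha,1)$ and at $(0,1)$ respectively) and the middle factor into the $\beta=1$ instance of \eqref{e:21} in Theorem \ref{BS:2}. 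In that combination the $n^2$-terms of Theorem \ref{BS:3} are $\alpha$-independent and drop out, as does the left-endpoint contribution $-\frac{1}{24}\ln(\pi h_V(-\sqrt2))$, and the surviving $n$- and $\sqrt[3]{n}$-coefficients regroup precisely into $\alpha\rho_1$ and $s\rho_2\alpha$. One is left with
\begin{equation*}
	\ln\mathbb{E}_n\big(\e^{\alpha X_n}\big)=n\alpha\rho_1+s\rho_2\,\alpha\sqrt[3]{n}+\frac{\alpha^2}{6}\ln n+G_{\alpha}(s)+o(1),\qquad G_{\alpha}(s):=\eta_{\alpha}(s,1)+\big(\Theta_{\alpha}(s,1)-\Theta_0(s,1)\big)\ln\big(\pi h_V(\sqrt2)\big),
\end{equation*}
uniformly for $\alpha$ in a neighbourhood of $0$. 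Since $\eta_0\equiv 0$ and $\Theta_0(s,1)-\Theta_0(s,1)=0$, and since $\alpha\mapsto G_{\alpha}(s)$ is continuous at $\alpha=0$ (it is built from the data of the Painlev\'e-XXXIV RHP \ref{YattRHP} via Lemma \ref{ArnoLax}, which depend analytically on $\alpha$), we have $G_0(s)=0$.

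The endgame is then soft. Put $Y_n:=\sqrt{3/\ln n}\,\big(X_n-n\rho_1-s\rho_2\sqrt[3]{n}\big)$, fix $t\in\mathbb{R}$ and set $\alpha_n:=t\sqrt{3/\ln n}\to 0$. Then
\begin{equation*}
	\mathbb{E}_n\big(\e^{tY_n}\big)=\e^{-\alpha_n(n\rho_1+s\rho_2\sqrt[3]{n})}\,\mathbb{E}_n\big(\e^{\alpha_n X_n}\big)=\exp\Big(\frac{\alpha_n^2}{6}\ln n+G_{\alpha_n}(s)+o(1)\Big)=\exp\Big(\frac{t^2}{2}+o(1)\Big)\longrightarrow\e^{t^2/2},
\end{equation*}
because $\frac{\alpha_n^2}{6}\ln n=\frac{t^2}{2}$ identically and $G_{\alpha_n}(s)\to G_0(s)=0$. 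As $\e^{t^2/2}$ is the moment generating function of $N(0,1)$ and the convergence holds for every $t\in\mathbb{R}$, Curtiss' theorem yields $Y_n\Rightarrow N(0,1)$, which is exactly \eqref{e:27}.

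The substantive step is the middle one: assembling the three expansions of Theorems \ref{BS:2} and \ref{BS:3}, checking that the $\alpha$-free $n^2$- and left-endpoint terms cancel while the remaining $n$- and $\sqrt[3]{n}$-coefficients collapse onto $\rho_1$ and $\rho_2$, and verifying that the residual constant $G_{\alpha}(s)$ is continuous in $\alpha$ with $G_0(s)=0$. A secondary but genuine point is the uniformity in $\alpha$ near $0$ that legitimises substituting the vanishing sequence $\alpha_n$; this follows from the uniform forms of the cited theorems, or alternatively from a Vitali--Montel argument exploiting that $\alpha\mapsto\mathbb{E}_n(\e^{\alpha X_n})$ extends holomorphically to $\{\Re\alpha>-1\}$.
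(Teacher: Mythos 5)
Your proof is correct and takes essentially the same route as the paper: both identify $\mathbb{E}_n(\e^{\alpha X_n})$ with the Hankel ratio $D_n(\lambda_n,\alpha,1;V(x))/D_n(\lambda_n,0,1;V(x))$, combine the expansions \eqref{e:21} and \eqref{e:26} at $\beta=1$, set $\alpha=t\sqrt{3/\ln n}$ so the $\frac{\alpha^2}{6}\ln n$ term becomes $\frac{t^2}{2}$, and conclude by moment generating function convergence, invoking uniformity in $\alpha$ near $0$ to legitimise the vanishing sequence. You are merely more explicit than the paper about the telescoping through the Gaussian reference potential, the cancellation of the $\alpha$-independent $n^2$- and left-endpoint terms, and the fact that the residual constant $G_{\alpha}(s)$ tends to $0$ as $\alpha\to 0$.
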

\begin{proof} Introduce the random variable
\begin{equation*}
	X_n(s):=\sqrt{\frac{3}{\ln n}}\bigg[\ln\big|\det(M-\lambda_n I )\big|-n\rho_1-s\rho_2\sqrt[3]{n}\bigg],\ \ \ n\in\mathbb{Z}_{\geq 2},\ \ s\in\mathbb{R}.
\end{equation*}
Expansion \eqref{e:21} and \eqref{e:26} yield for its moment generating function the following large $n$-asymptotic,
\begin{equation*}
	\mathbb{E}_n\Big(e^{tX_n(s)}\Big)=\big(1+o(1)\big)\exp\Big\{\big(\Theta_{\alpha}(s,\beta)-\Theta_0(s,\beta)\big)\ln(\pi h_V(\sqrt{2}))+\eta_{\alpha}(s,1)\Big\}\e^{\frac{1}{2}t^2},\ \ \alpha=t\sqrt{\frac{3}{\ln n}},
\end{equation*}
that holds pointwise in $t\in\mathbb{R}$. This, by uniformity in $\alpha>-1$ on compact sets, yields the claim \eqref{e:27}.
\end{proof}
Results of the form \eqref{e:27} are rather typical in many random matrix models. In fact, for the model \eqref{e:1} with one-cut regular $V$, \cite[Theorem $1.1$]{BWW} establishes convergence of
\begin{equation}\label{e:28}
	\frac{|\det(M-\lambda I)|^{\alpha}}{\mathbb{E}_n|\det(M-\lambda I)|^{\alpha}}\d\lambda
\end{equation}
to a Gaussian multiplicative chaos measure in the interior of $E\ni\lambda$, for suitable $\alpha\geq 0$. This is achieved by deriving asymptotics of the type \eqref{e:21},\eqref{e:26} but for $\lambda$ fixed in the interior of $E$. The limit for \eqref{e:28} in \cite{BWW} sheds light on the set of points $\lambda$ where the centered log-characteristic polynomial 
\begin{equation*}
	\ln|\det(M-\lambda I)|-\mathbb{E}_n\ln|\det(M-\lambda I)|
\end{equation*}
is exceptionally large, and this is an area of modern probability theory that has attracted substantial interest over the past decade, cf. \cite{FHK}. Our \eqref{e:27} does not offer new insights into logarithmically correlated fields, or chaos measures. It simply constitutes a CLT for the edge-scaled log-characteristic polynomial and showcases the appearance of Painlev\'e-XXXIV in the error control of \eqref{e:27}.
\subsection{Discussion and outline}
We are foremost motivated by the beautiful algebraic investigations of Forrester-Witte in \cite{FW} on \eqref{e:7} for $(\beta=1,V(x)=x^2)$ and the generalization of \eqref{e:25} to $\beta\neq 1$ and general one-cut regular $V$. While Okamoto's $\tau$-function theory might not be the ideal tool in attacking \eqref{e:7} in such a broad setting, the Riemann-Hilbert method employed here does the trick. En route, thanks to \eqref{e:8}, our analytic workings touch on two topics in integrable systems theory that have a long history. Namely, the asymptotic analysis of Hankel determinants with Fisher-Hartwig weights is classical, albeit mostly for bulk singularities. We refer the interested reader to the recent works \cite{C} and \cite{CFWW} that contain many pointers to the relevant literature. Edge-scaled Fisher-Hartwig singularities are somewhat under-appreciated, as evidenced by the short series of papers \cite{BCI,IK,WXZ}, none of which go beyond the Gaussian weight. Likewise, the expression of the limiting Fredholm determinant $F(s;\alpha,\beta)$ in \eqref{e:16} in terms of Painlev\'e-XXXIV is challenging to obtain within Okamoto's framework but it follows rather easily from the Riemann-Hilbert method by using dressing transformation techniques. These techniques are also a classical subject in modern integrable systems theory, compare for instance the monograph \cite{FT}.\smallskip

We now conclude our Introduction by a short overview over the paper's remainder. First, in Section \ref{sec2} we briefly review the Riemann-Hilbert approach of Fokas-Its-Kitaev \cite{FIK} to the orthonormal polynomials that underwrite all three factors in \eqref{e:8}, once $\lambda$ is edge-scaled as in \eqref{e:12}. This includes the central differential identities \eqref{a:9},\eqref{a:10} and \eqref{a:11}. Then, in Section \ref{sec3}, we carry out the necessary steps in the Deift-Zhou \cite{DZ,DKMVZ} asymptotic roadmap that culminate in Theorem \ref{Arnotheo1}. En route a central role is played by the Painlev\'e-XXXIV model problem \ref{YattRHP} that replaces the more commonly used Airy model problem \ref{Air} near $\lambda_n$. After that, in Section \ref{sec4}, we proof Theorem \ref{BS:1} by first computing \eqref{e:14} and after that by establishing \eqref{e:16} through an application of the Its-Izergin-Korepin-Slavnov \cite{IIKS} framework for integrable integral operators. The proofs of Theorems \ref{BS:2} and \ref{BS:3} are tedious, although the overall logic is clear: one derives asymptotics for the right hand sides in \eqref{a:9},\eqref{a:10},\eqref{a:11} and integrates them in parameter space. However, given the slow decay of the approximation error \eqref{a:35}, the execution of the same strategy requires us to iterate the integral representation \eqref{a:34} several times. This makes our workings in Section \ref{sec5}, leading to Theorem \ref{BS:2}, and in Section \ref{sec6}, leading to Theorem \ref{BS:3}, very lenghty. Lastly, Appendix \ref{tedious} and Appendix \ref{appPain} list certain technical aspects of our nonlinear steepest descent analysis and of the Painlev\'e-XXXIV model problem.


\section{The Riemann-Hilbert method for orthogonal polynomials}\label{sec2} For any one-cut regular potential $V$, with data $(h_V,\ell_V)$, see \eqref{e:10},\eqref{e:11}, we consider the convex combination
\begin{equation}\label{a:1}
	V_{\theta}(x):=(1-\theta)x^2+\theta V(x),\ \ \ x\in\mathbb{R},\ \ \ \theta\in[0,1].
\end{equation}
It interpolates between the quadratic at $\theta=0$ and $V$ at $\theta=1$ and it does not leave the one-cut regular class:
\begin{lem}\label{convex} Suppose $V:\mathbb{R}\rightarrow\mathbb{R}_{\geq 0}$ is one-cut regular with data $(h_V,\ell_V)$ and $V_{\theta}:\mathbb{R}\rightarrow\mathbb{R}_{\geq 0}$ given by \eqref{a:1} for any $\theta\in[0,1]$. Then $V_{\theta}$ is one-cut regular for all $\theta\in[0,1]$ with data $(h_{V_\theta},\ell_{V_\theta})$ given by
\begin{equation}\label{a:2}
	h_{V_\theta}(x)=(1-\theta)\frac{1}{\pi}+\theta h_V(x),\ \ x\in\mathbb{R};\ \ \ \ \ \ \ \ \ell_{V_\theta}=(1-\theta)(1+\ln 2)+\theta\ell_V,
\end{equation}
for all $\theta\in[0,1]$.
\end{lem}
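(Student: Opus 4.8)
The plan is to write the candidate equilibrium measure of $V_\theta$ explicitly and verify the Euler--Lagrange characterization \eqref{e:10}, exploiting that the left-hand side of \eqref{e:10} is \emph{linear} in the pair (equilibrium density, external potential) and that the quadratic potential $x^2$ and $V$ share the \emph{same} support $E=[-\sqrt2,\sqrt2]$. Concretely, set
\[
	\rho_{V_\theta}(x):=h_{V_\theta}(x)\sqrt{(2-x^2)_+},\qquad h_{V_\theta}(x):=(1-\theta)\tfrac1\pi+\theta h_V(x),
\]
let $\mu_{V_\theta}$ be the Borel measure with this density, and put $\ell_{V_\theta}:=(1-\theta)(1+\ln2)+\theta\ell_V$; these are exactly the quantities in \eqref{a:2}, so it suffices to show that $\mu_{V_\theta}$ is the equilibrium measure of $V_\theta$ and $\ell_{V_\theta}$ the associated constant.

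First I would record that $V_\theta$ is admissible: it is real analytic, nonnegative, and has superlogarithmic growth at infinity — dominated by $(1-\theta)x^2$ when $\theta<1$ and equal to $V$ when $\theta=1$ — so $I_{V_\theta}$ has a unique minimizer, completely determined by \eqref{e:10}. Next, $\mu_{V_\theta}$ is a probability measure: both the density of $\mu_V$, namely $h_V\sqrt{(2-x^2)_+}$, and the semicircle density $\tfrac1\pi\sqrt{(2-x^2)_+}$ (cf.\ the Remark after Assumption \ref{1-cut}) integrate to one over $E$, hence so does their convex combination $\rho_{V_\theta}$; moreover $h_{V_\theta}$ is strictly positive on $E$ and real analytic on all of $\mathbb{R}$, being a convex combination of functions with these properties.

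The heart of the argument is the variational identity. Writing $U_\mu(x):=2\int\ln|x-y|\,\d\mu(y)$, I would add $(1-\theta)$ times the Euler--Lagrange pair for the semicircle law (with constant $1+\ln2$) to $\theta$ times the Euler--Lagrange pair for $\mu_V$ (with constant $\ell_V$): on $E$ both pairs hold with equality, so $U_{\mu_{V_\theta}}(x)-V_\theta(x)+\ell_{V_\theta}=0$ there; off $E$, the $\mu_V$-term is strictly negative by Assumption \ref{1-cut}(iii) and the semicircle term is $\leq 0$, so for $\theta\in(0,1]$ the combination is strictly negative, while for $\theta=0$ it equals the effective potential of the pure semicircle, which one checks is strictly negative for $|x|>\sqrt2$ by differentiating it and using $\int\frac{\rho_{\mathrm{sc}}(y)}{x-y}\,\d y=x-\sqrt{x^2-2}$ for $x>\sqrt2$ (so its derivative is $-2\sqrt{x^2-2}<0$, with value $0$ at $x=\pm\sqrt2$), and symmetrically for $x<-\sqrt2$. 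By the uniqueness in the characterization \eqref{e:10}, $\mu_{V_\theta}$ is the equilibrium measure of $V_\theta$ with equilibrium constant $\ell_{V_\theta}$.

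It then remains to read off one-cut regularity of $V_\theta$ from $\mu_{V_\theta}$: its support is $\overline{\{x:h_{V_\theta}(x)\sqrt{(2-x^2)_+}>0\}}=[-\sqrt2,\sqrt2]=E$ since $h_{V_\theta}>0$ on the interior of $E$, which is condition (i); the density is strictly positive in the interior of $E$ and vanishes like a square root at $\pm\sqrt2$ because $h_{V_\theta}(\pm\sqrt2)>0$, which is (ii); and the inequality in \eqref{e:10} is strict off $E$ as just shown, which is (iii). This gives the data $(h_{V_\theta},\ell_{V_\theta})$ as claimed. I expect the only non-mechanical point to be the strictness of the effective-potential inequality for the pure semicircle at $\theta=0$ (needed for (iii) at that endpoint); everything else is linearity of \eqref{e:10} together with the uniqueness part of the equilibrium-measure characterization.
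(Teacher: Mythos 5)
Your proposal is correct and takes essentially the same route as the paper: exhibit the candidate density $\rho_{V_\theta}$, verify it satisfies the strict Euler--Lagrange conditions with the candidate constant $\ell_{V_\theta}$, and invoke uniqueness of the equilibrium measure for real-analytic potentials. The paper compresses all of this into one sentence (``is seen to satisfy the strict version of \eqref{e:10}''); your write-up merely fills in the underlying mechanism — linearity of the Euler--Lagrange pair, strictness for each constituent (including the explicit check for the pure semicircle at $\theta=0$ via the Stieltjes transform) — which the authors leave implicit.
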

\begin{proof} The density function 
\begin{equation*}
	\rho_{V_{\theta}}(x):=\bigg[(1-\theta)\frac{1}{\pi}+\theta h_{V}(x)\bigg]\sqrt{(2-x^2)_+},\ \ \ x\in\mathbb{R},
\end{equation*}
is seen to satisfy the strict version of \eqref{e:10} for all $\theta\in[0,1]$ with $E=[-\sqrt{2},\sqrt{2}\,]$ and $\ell_{V_{\theta}}$ as in \eqref{a:2}. Since $x\mapsto V_{\theta}(x)$ is real analytic for all $\theta\in[0,1]$, the density of its equilibrium measure thus coincides with the above $\rho_{V_{\theta}}$ and $V_{\theta}$ is one-cut regular.
\end{proof}
Setting for all $\theta\in[0,1]$, compare \eqref{e:12} with the convention $\lambda_{n1}\equiv\lambda_n$,
\begin{equation}\label{a:3}
	\lambda_{n\theta}:=\sqrt{2}+\frac{s}{(n\tau_{\theta})^{\frac{2}{3}}},\ \ \ s\in\mathbb{R};\ \ \ \ \ \ \ \ \ \tau_{\theta}:=\pi h_{V_{\theta}}(\sqrt{2})2^{\frac{3}{4}}>0,
\end{equation}
we now consider the following well-known Riemann-Hilbert problem (RHP).
\begin{problem}\label{FIKbeast} Let $s\in\mathbb{R},\alpha\in\mathbb{R}_{>-1},\beta\in\mathbb{C}\setminus(-\infty,0),n\in\mathbb{N}$ and $\theta\in[0,1]$. Find a function $X(z)=X(z;s,\alpha,\beta,n,\theta)\in\mathbb{C}^{2\times 2}$ such that
\begin{enumerate}
	\item[(1)] $X(z)$ is analytic for $z\in\mathbb{C}\setminus\mathbb{R}$ and extends continuously to the closer upper and lower half-planes away from a neighborhood of $z=\lambda_{n\theta}$.
	\item[(2)] The continuous limiting values $X_{\pm}(z)=\lim_{\epsilon\downarrow 0}X(z\pm\im\epsilon)$ at $z\in\mathbb{R}\setminus\{\lambda_{n\theta}\}$ satisfy the condition
	\begin{equation}\label{a:4}
		X_+(z)=X_-(z)\begin{bmatrix}1 & w(z)\\ 0&1\end{bmatrix},\ \ \ \ \ \ w(z)=w(z;s,\alpha,\beta,n,\theta):=\omega_{\alpha\beta}(z-\lambda_{n\theta})\e^{-nV_{\theta}(z)}.
	\end{equation}
	\item[(3)] As $z\rightarrow\lambda_{n\theta},z\notin\mathbb{R}$, the first column of $X(z)$ remains bounded but its second column satisfies
	\begin{equation*}
		X^{j2}(z)=\mathcal{O}(|z-\lambda_{n\theta}|^{\alpha}),\ \alpha\in(-1,0);\ \ \ X^{j2}(z)=\mathcal{O}(\ln|z-\lambda_{n\theta}|),\ \alpha=0;\ \ \ X^{j2}(z)=\mathcal{O}(1),\ \alpha>0.
	\end{equation*}
	\item[(4)] As $z\rightarrow\infty$ and $z\notin\mathbb{R}$, $X(z)$ satisfies the asymptotic normalization
	\begin{equation*}
		X(z)=\Big\{I+\mathcal{O}\big(z^{-1}\big)\Big\}z^{n\sigma_3},\ \ \ \sigma_3:=\begin{bmatrix}1&0\\ 0&-1\end{bmatrix}.
	\end{equation*}
\end{enumerate}
\end{problem}
General theory, cf. \cite{FIK}, underwriting RHP \ref{FIKbeast} asserts unique solvability of the problem, for given parameters $(s,\alpha,\beta,n,\theta)$, precisely when the orthonormal polynomials $p_{n,n}(x)$ and $p_{n-1,n}(x)$ exist, where 
\begin{equation}\label{a:5}
	p_{j,n}(x)=\kappa_{j,n}\Big(x^j+\delta_{j,n}x^{j-1}+\gamma_{j,n}x^{j-2}+\mathcal{O}\big(x^{j-3}\big)\Big),\ \ \ \ \ \kappa_{j,n}\neq 0,\ \ \ j\in\mathbb{Z}_{\geq 0},
\end{equation}
and where we require, dropping $(s,\alpha,\beta,\theta)$ from our notation unless absolutely necessary,
\begin{equation*}
	\int_{-\infty}^{\infty}p_{j,n}(x)p_{k,n}(x)w(x)\d x=\begin{cases}1,&j=k\\ 0,&j\neq k\end{cases}.
\end{equation*}
In fact, if RHP \ref{FIKbeast} is solvable then its solution is of the form
\begin{equation*}
	X(z)=\begin{bmatrix}\kappa_{n,n}^{-1}\,p_{n,n}(z) & \frac{\kappa_{n,n}^{-1}}{2\pi\im}\int_{-\infty}^{\infty}p_{n,n}(x)\frac{w(x)}{x-z}\d x\smallskip\\ -2\pi\im\kappa_{n-1,n}\,p_{n-1,n}(z)&-\kappa_{n-1,n}\int_{-\infty}^{\infty}p_{n-1,n}(x)\frac{w(x)}{x-z}\d x\end{bmatrix},\ \ \ z\in\mathbb{C}\setminus\mathbb{R},
\end{equation*}
and we have, cf. \cite[Chapter $3.2$]{D},
\begin{align}
	\kappa_{n-1,n}^2=&\,\frac{\im}{2\pi}\lim_{z\rightarrow\infty}\big(X^{21}(z)z^{-n+1}\big),\hspace{3cm}\  \delta_{n,n}=\lim_{z\rightarrow\infty}\Big(\big(X^{11}(z)-z^n\big)z^{-n+1}\Big),\label{a:6}\\
	\gamma_{n,n}=&\,\lim_{z\rightarrow\infty}\Big(\big(X^{11}(z)-z^n-\delta_{n,n}z^{n-1}\big)z^{-n+2}\Big),\ \ \ \ \ \kappa_{n,n}^{-2}=-2\pi\im\lim_{z\rightarrow\infty}\big(X^{12}(z)z^{n+1}\big).\label{a:7}
\end{align}
The upcoming workings will establish, for fixed $s\in\mathbb{R},\alpha\in\mathbb{R}_{>-1},\beta\in\mathbb{C}\setminus(-\infty,0)$ and any $\theta\in[0,1]$, existence of $n_0=n_0(s,\alpha,\beta,\theta)\in\mathbb{N}$ so that RHP \ref{FIKbeast} is solvable for all $n\geq n_0$. We will be able to calculate the large $n$-asymptotic of its solution $X(z)$ and from it the asymptotic of \eqref{e:8} will follow,  to a certain extent, from an interplay of the below differential identities. These identities are commonly used in the analysis of moment determinants. Recall the Hankel determinant, with $w$ in \eqref{a:4},
\begin{equation*}
	D_n\big(\lambda_{n\theta},\alpha,\beta;V_{\theta}(x)\big)=\frac{1}{n!}\int_{\mathbb{R}^n}\prod_{1\leq j<k\leq n}|x_k-x_j|^2\prod_{m=1}^nw(x_m)\d x_m.
\end{equation*}
\begin{lem}[{\cite[Lemma $1$]{WXZ}}] Let $s\in\mathbb{R},\alpha\in\mathbb{R}_{>-1},\beta\in\mathbb{C}\setminus(-\infty,0),n\in\mathbb{N}$ be such that $D_k(\lambda_{n0},\alpha,\beta;x^2)\neq 0$ for all $k=1,\ldots,n+1$.
Then, 
\begin{equation}\label{a:9}
	\frac{\partial}{\partial\lambda_{n0}}\ln D_n\big(\lambda_{n0},\alpha,\beta;x^2\big)=2n\delta_{n,n},
\end{equation}
using a fixed branch for the logarithm and with $\delta_{n,n}$ as in \eqref{a:5},\eqref{a:6} at $\theta=0$.
\end{lem}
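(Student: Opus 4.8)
The plan is to reduce \eqref{a:9} to a one-line telescoping identity for the subleading coefficients of the monic orthogonal polynomials, after first removing the singular $\lambda$-dependence from the weight. Throughout I write $\lambda$ for $\lambda_{n0}$, regarded as a free real parameter. First I would substitute $x_m\mapsto x_m+\lambda$ in the $n$-fold integral defining $D_n(\lambda,\alpha,\beta;x^2)$; since the Vandermonde factor is translation invariant this gives
\[
D_n(\lambda,\alpha,\beta;x^2)=\frac{1}{n!}\int_{\mathbb{R}^n}\prod_{1\le j<k\le n}|x_k-x_j|^2\prod_{m=1}^n\widetilde w(x_m;\lambda)\,\d x_m,\qquad \widetilde w(x;\lambda):=\omega_{\alpha\beta}(x)\,\e^{-n(x+\lambda)^2}.
\]
The same substitution applied to each $D_k$ shows it equals the analogous Hankel determinant for $\widetilde w$, so the hypothesis $D_k\neq0$, $1\le k\le n+1$, transfers and guarantees that the monic (formal, i.e. unconjugated-bilinear, since $\beta$ may be complex) orthogonal polynomials $\widetilde P_j(\cdot\,;\lambda)$ for $\widetilde w$, their norms $\widetilde h_j$, and their three-term recurrence coefficients are all well defined for $0\le j\le n$. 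The point of the substitution is that the only $\lambda$-dependence now sits in the smooth Gaussian factor, with $\partial_\lambda\ln\widetilde w(x;\lambda)=-2n(x+\lambda)$, so differentiation under the integral sign is legitimate and $\lambda\mapsto\widetilde h_j$ is real-analytic.

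Next I would differentiate Heine's formula $D_n=\prod_{j=0}^{n-1}\widetilde h_j$, with $\widetilde h_j=\int_{\mathbb{R}}\widetilde P_j(x;\lambda)^2\,\widetilde w(x;\lambda)\,\d x$. The cross term $2\int\widetilde P_j(\partial_\lambda\widetilde P_j)\widetilde w\,\d x$ vanishes because $\partial_\lambda\widetilde P_j$ has degree $\le j-1$ while $\widetilde P_j$ is orthogonal to all lower-degree polynomials; hence, writing $\partial_\lambda\widetilde w=(\partial_\lambda\ln\widetilde w)\widetilde w$ and using the recurrence $x\widetilde P_j=\widetilde P_{j+1}+\widetilde b_j\widetilde P_j+\widetilde c_j\widetilde P_{j-1}$ (so that $\int x\widetilde P_j^2\widetilde w\,\d x=\widetilde b_j\widetilde h_j$),
\[
\partial_\lambda\ln\widetilde h_j=\frac{1}{\widetilde h_j}\int_{\mathbb{R}}\widetilde P_j^2\,\big(\partial_\lambda\ln\widetilde w\big)\,\widetilde w\,\d x=-\frac{2n}{\widetilde h_j}\int_{\mathbb{R}}(x+\lambda)\,\widetilde P_j^2\,\widetilde w\,\d x=-2n\big(\widetilde b_j+\lambda\big).
\]

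Finally I would sum over $0\le j\le n-1$, giving $\partial_\lambda\ln D_n=-2n\sum_{j=0}^{n-1}(\widetilde b_j+\lambda)$. Comparing coefficients of $x^j$ in the monic recurrence yields $\widetilde b_j=\widetilde\delta_j-\widetilde\delta_{j+1}$, where $\widetilde\delta_j$ denotes the coefficient of $x^{j-1}$ in $\widetilde P_j$ and $\widetilde\delta_0:=0$, so the sum telescopes to $-\widetilde\delta_n$. Undoing the translation, uniqueness forces $\widetilde P_n(x;\lambda)=P_{n,n}(x+\lambda;\lambda)$, and expanding $(x+\lambda)^n+\delta_{n,n}(x+\lambda)^{n-1}+\cdots$ gives $\widetilde\delta_n=\delta_{n,n}+n\lambda$. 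Therefore
\[
\partial_\lambda\ln D_n=-2n\big(-\widetilde\delta_n+n\lambda\big)=-2n\big(-\delta_{n,n}-n\lambda+n\lambda\big)=2n\,\delta_{n,n},
\]
which is \eqref{a:9}, with $\delta_{n,n}$ at $\theta=0$ as in \eqref{a:6}.

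The calculation is essentially routine bookkeeping; the one step I would be careful about is the legitimacy of the $\lambda$-differentiation in the first place — pointwise in $x$ the factor $|x-\lambda|^{\alpha}$ is not differentiable in $\lambda$, and for $\alpha\le0$ its naive $\lambda$-derivative is not even locally integrable, so the translation of the integration variable is genuinely needed and not merely cosmetic. A secondary point is to keep in mind that for complex $\beta$ the weight is complex-valued, so "orthogonality" must be read throughout as the bilinear (unconjugated) pairing; positivity is never used, while Heine's formula and the three-term recurrence, being purely algebraic, remain valid under the stated non-vanishing hypotheses on the $D_k$.
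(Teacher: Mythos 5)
Your proof is correct, and it uses the standard route for identities of this type: translate the integration variables to move the singular $\lambda$-dependence into the smooth Gaussian factor, apply Heine's factorization $D_n=\prod_{j=0}^{n-1}\widetilde h_j$, differentiate each norm using the fact that $\partial_\lambda\widetilde P_j$ drops degree (hence is orthogonal to $\widetilde P_j$), and telescope the resulting sum of recurrence coefficients $\widetilde b_j=\widetilde\delta_j-\widetilde\delta_{j+1}$; undoing the shift converts $\widetilde\delta_n$ back into $\delta_{n,n}+n\lambda$ and the $n\lambda$ terms cancel. The paper itself does not prove this lemma but simply cites \cite[Lemma 1]{WXZ}, whose argument is of the same direct, elementary type; your additional remark --- that for $\alpha\le 0$ the pointwise $\lambda$-derivative of $|x-\lambda|^{\alpha}$ fails to be locally integrable, so the change of variables is not a cosmetic convenience but the step that makes the differentiation under the integral legitimate --- is a genuinely useful observation that is often left implicit in the literature.
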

\begin{lem}[{\cite[Proposition $2$]{Kra}}] Pick $s\in\mathbb{R},\alpha\in\mathbb{R}_{>-1},\beta\in\mathbb{C}\setminus(-\infty,0),n\in\mathbb{N}$ so $D_k(\lambda_{n0},\alpha,\beta;x^2)\neq 0$ for all $k=1,\ldots,n+1$. Then, 
\begin{align}
	&\frac{\partial}{\partial\alpha}\ln D_n\big(\lambda_{n0},\alpha,\beta;x^2\big)=-\frac{n}{2}\frac{\partial}{\partial\alpha}\ln\kappa_{n-1,n}^2+\frac{1}{2}(n+\alpha)\frac{\partial}{\partial\alpha}\ln\kappa_{n,n}^{-2}-n\bigg(\frac{\kappa_{n-1,n}}{\kappa_{n,n}}\bigg)^2\frac{\partial}{\partial\alpha}\ln\bigg(\frac{\kappa_{n-1,n}^2}{\kappa_{n,n}^2}\bigg)\nonumber\\
	&+2n\frac{\partial}{\partial\alpha}\bigg(\gamma_{n,n}-\frac{1}{2}\delta_{n,n}^2\bigg)+\alpha\kappa_{n,n}^{-1}\frac{\partial}{\partial\alpha}\big(p_{n,n}(\lambda_{n0})\big)\mathring{X}^{22}(\lambda_{n0})+2\pi\im\alpha\kappa_{n-1,n}\frac{\partial}{\partial\alpha}\big(p_{n-1,n}(\lambda_{n0})\big)\mathring{X}^{12}(\lambda_{n0}),\label{a:10}
\end{align}
using a fixed branch for the logarithms. In \eqref{a:10}, $\mathring{X}^{j2}(\lambda_{n0}),j=1,2$ denote the regularized entries
\begin{equation*}
	\mathring{X}^{j2}(\lambda_{n0}):=\lim_{\substack{z\rightarrow\lambda_{n0}\\ \Im z>0}}\big(X^{j2}(z)-S_j(z)\big),\ \ \ \ S_j(z):=\begin{cases}\displaystyle\frac{\im\pi\e^{-\im\frac{\pi}{2}\alpha}}{\cos(\frac{\pi}{2}\alpha)}\,f_j(\lambda_n)(z-\lambda_n)^{\alpha},&\alpha\in(-1,0)\smallskip\\ \displaystyle\im\pi f_j(\lambda_n),&\alpha=0\bigskip\\ \displaystyle 0,&\alpha\in(0,\infty)\end{cases},
\end{equation*}
with the entire functions $f_1(z):=\frac{\kappa_{n,n}^{-1}}{2\pi\im}p_{n,n}(z)\e^{-nz^2},f_2(z):=-\kappa_{n-1,n}p_{n-1,n}(z)\e^{-nz^2}$
and we use $\kappa_{n-1,n}^2,\delta_{n,n}$, $\gamma_{n,n},\kappa_{n,n}^{-2}$ as in \eqref{a:5},\eqref{a:6},\eqref{a:7}, all at $\theta=0$.
\end{lem}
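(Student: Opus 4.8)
The plan is to follow the standard route that converts the $\alpha$-derivative of a Hankel determinant into a computation localized on the solution $X$ of RHP~\ref{FIKbeast}; the full computation is \cite[Proposition $2$]{Kra}, so I only sketch it. First I would use the classical reduction to an integral against the Christoffel--Darboux kernel. The hypothesis $D_k(\lambda_{n0},\alpha,\beta;x^2)\neq0$ for $k=1,\dots,n+1$ guarantees that the monic orthogonal polynomials $\pi_{0,n},\dots,\pi_{n,n}$ for the weight $w$ exist, with squared norms $h_{j,n}=\kappa_{j,n}^{-2}$, so that $\ln D_n=\sum_{j=0}^{n-1}\ln h_{j,n}$ for a fixed branch. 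Differentiating in $\alpha$, using that $\partial_\alpha\pi_{j,n}$ has degree $<j$ and is therefore $w$-orthogonal to $\pi_{j,n}$, together with $\partial_\alpha\ln w(x)=\ln|x-\lambda_{n0}|$, one gets
\[
	\frac{\partial}{\partial\alpha}\ln D_n\big(\lambda_{n0},\alpha,\beta;x^2\big)=\int_{\mathbb R}\ln|x-\lambda_{n0}|\,\mathcal{K}_n(x)\,\d x,\qquad \mathcal{K}_n(x):=w(x)\sum_{j=0}^{n-1}p_{j,n}^2(x),
\]
the integral converging absolutely thanks to the $\e^{-nx^2}$ factor in $w$.

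Second, I would re-express $\mathcal{K}_n$ through $X$. By Christoffel--Darboux and the explicit entries of $X$ (first column $p_{n,n},p_{n-1,n}$ up to constants, second column their Cauchy transforms) one checks that $\mathcal{K}_n$ is the jump across $\mathbb R$ of the sectionally analytic $\Phi(z):=X^{12}(z)\,\partial_z X^{21}(z)-X^{22}(z)\,\partial_z X^{11}(z)$, i.e.\ $\mathcal{K}_n(x)=\frac{1}{2\pi\im}(\Phi_+(x)-\Phi_-(x))$, while the normalization $X(z)=(I+X_1z^{-1}+X_2z^{-2}+\cdots)z^{n\sigma_3}$ gives $\Phi(z)=-nz^{-1}+\mathcal{O}(z^{-2})$ at infinity with coefficients polynomial in the entries of $X_1,X_2$, hence --- via \eqref{a:5},\eqref{a:6},\eqref{a:7} at $\theta=0$ --- in $\kappa_{n,n},\kappa_{n-1,n},\delta_{n,n},\gamma_{n,n}$. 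Inserting this and deforming the real-line integral into the two half-planes --- legitimate because $\Phi$ is analytic off $\mathbb R\cup\{\lambda_{n0}\}$, because the decay of $w$ controls the tails, and because $\alpha>-1$ makes small arcs around $\lambda_{n0}$ contribute $\mathcal{O}(\epsilon^{\alpha+1}\ln\epsilon)\to0$, the weight $\ln|x-\lambda_{n0}|$ being represented by the principal branch of $\ln(z-\lambda_{n0})$ from each side with the divergent pieces at $\pm\infty$ cancelling pairwise --- reduces the integral to a contribution at $z=\infty$ and one at $z=\lambda_{n0}$. Differentiating the $\infty$-contribution in $\alpha$ produces the four terms $-\tfrac n2\partial_\alpha\ln\kappa_{n-1,n}^2+\tfrac12(n+\alpha)\partial_\alpha\ln\kappa_{n,n}^{-2}-n(\kappa_{n-1,n}/\kappa_{n,n})^2\partial_\alpha\ln(\kappa_{n-1,n}^2/\kappa_{n,n}^2)+2n\,\partial_\alpha(\gamma_{n,n}-\tfrac12\delta_{n,n}^2)$ of \eqref{a:10}.

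The hard part is the local contribution at $z=\lambda_{n0}$. There $X^{j2}$ is branch-point- or logarithmically singular (according as $\alpha\in(-1,0)$, $\alpha=0$, $\alpha>0$), so the would-be residue is finite only after one subtracts the explicit singular part $S_j(z)$ of the Cauchy transform $X^{j2}$ --- a multiple of $(z-\lambda_{n0})^\alpha$, resp.\ of $\ln(z-\lambda_{n0})$, resp.\ zero, with coefficient governed by $f_1=\frac{\kappa_{n,n}^{-1}}{2\pi\im}p_{n,n}\e^{-nz^2}$ and $f_2=-\kappa_{n-1,n}p_{n-1,n}\e^{-nz^2}$ evaluated at $\lambda_{n0}$ --- leaving exactly the regularized values $\mathring{X}^{j2}(\lambda_{n0})$ of the statement. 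Carrying this out carefully, keeping track of the three $\alpha$-regimes, of the $\mathbb C$-valued $\beta$ in $\omega_{\alpha\beta}$, and of the branch of $(z-\lambda_{n0})^\alpha$ approached from $\Im z>0$, and using $\det X\equiv1$ to tie the singular parts of $X^{12},X^{22}$ to those of the entire $X^{11},X^{21}$, one finds that the finite remainders assemble by the product rule into $\alpha\kappa_{n,n}^{-1}\partial_\alpha(p_{n,n}(\lambda_{n0}))\mathring{X}^{22}(\lambda_{n0})+2\pi\im\,\alpha\kappa_{n-1,n}\partial_\alpha(p_{n-1,n}(\lambda_{n0}))\mathring{X}^{12}(\lambda_{n0})$, the prefactor $\alpha$ reflecting the degeneration of the power-type $S_j$ when $\alpha=0$. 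Equally laborious, though routine, is the $\infty$-side bookkeeping: the polynomial factor $z^{n\sigma_3}$ in the normalization of $X$ must be stripped off before the contour can be pushed out, and the surviving terms re-expressed through \eqref{a:6},\eqref{a:7}.
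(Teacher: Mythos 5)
Your opening reduction $\partial_\alpha\ln D_n=\int_{\mathbb R}\ln|x-\lambda_{n0}|\,\mathcal K_n(x)\,\d x$ is correct, and so is the identification of $\mathcal K_n$ with the jump of $\Phi(z)=X^{12}\partial_zX^{21}-X^{22}\partial_zX^{11}$ together with the leading asymptotic $\Phi(z)=-nz^{-1}+\mathcal O(z^{-2})$. But the closing step --- ``Differentiating the $\infty$-contribution in $\alpha$ produces the four terms'' --- does not cohere. The integral $\int\ln|x-\lambda_{n0}|\,\mathcal K_n\,\d x$ already \emph{is} $\partial_\alpha\ln D_n$; after contour deformation you obtain a closed expression for it in terms of the asymptotic coefficients $X_1,X_2$ (hence $\kappa_{n-1,n}^2,\kappa_{n,n}^{-2},\delta_{n,n},\gamma_{n,n}$ via \eqref{a:6},\eqref{a:7}) and the local data of $X$ near $\lambda_{n0}$, but that expression contains \emph{no} $\alpha$-derivatives, because $\Phi$ is built from $X$ and $\partial_zX$ alone. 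Yet every term on the right-hand side of \eqref{a:10} carries a $\partial_\alpha$: of $\ln\kappa_{n-1,n}^2$, of $\ln\kappa_{n,n}^{-2}$, of $\gamma_{n,n}-\tfrac12\delta_{n,n}^2$, of $p_{n,n}(\lambda_{n0})$, of $p_{n-1,n}(\lambda_{n0})$. There is no further $\alpha$-derivative available to take, and no identity has been supplied that would convert your $\alpha$-derivative-free answer into those $\partial_\alpha$'s. As it stands, the sketch arrives at a formula genuinely different in structure from \eqref{a:10}.

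The mechanism that does produce \eqref{a:10} --- and that the formula's structure essentially forces --- uses $\partial_\alpha X$ rather than $\partial_z X$. Setting $\Xi(z):=(\partial_\alpha X(z))X(z)^{-1}$ one finds that $\Xi$ is sectionally analytic off $\mathbb R\cup\{\lambda_{n0}\}$, has jump $\Xi_+-\Xi_-=X_-\bigl[\begin{smallmatrix}0&\ln|x-\lambda_{n0}|\,w\\0&0\end{smallmatrix}\bigr]X_-^{-1}$ across $\mathbb R$ whose $(2,1)$ entry is $-\ln|x-\lambda_{n0}|\,w\,(X^{21})^2$ (precisely the integrand of $\partial_\alpha\ln\kappa_{n-1,n}^{-2}$, up to constants), and decays as $\partial_\alpha X_1\,z^{-1}+\mathcal O(z^{-2})$ at infinity --- so that the $\infty$-side of any residue argument automatically produces $\partial_\alpha$ of the moment coefficients, and the local piece near $\lambda_{n0}$ automatically produces $\partial_\alpha p_{n,n}(\lambda_{n0})$, $\partial_\alpha p_{n-1,n}(\lambda_{n0})$ multiplied against the regularized values $\mathring X^{j2}(\lambda_{n0})$. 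That is the route taken in \cite[Proposition $2$]{Kra}. Your Wronskian $\Phi$ is the right first object for the $\lambda$-differential identity, not for the $\alpha$-one; having chosen it, you cannot recover \eqref{a:10} by ``differentiating again''. (Also note that the paper itself supplies no proof here --- the lemma is cited verbatim from Krasovsky --- so the comparison is necessarily with his argument rather than anything in the present text.)
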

\begin{lem}[{\cite[Lemma $3.7$]{BWW}}] Assume $s\in\mathbb{R},\alpha\in\mathbb{R}_{>-1},\beta\in\mathbb{C}\setminus(-\infty,0),n\in\mathbb{N}$ and $\theta\in[0,1]$ are such that $D_k(\lambda_{n\theta},\alpha,\beta;V_{\theta}(x))\neq 0$ for all $k=1,\ldots,n+1$. Then, 
\begin{equation}\label{a:11}
	\frac{\partial}{\partial\theta}\ln D_n\big(\lambda_{n\theta},\alpha,\beta;V_{\theta}(x)\big)=-\frac{n}{2\pi\im}\int_{-\infty}^{\infty}\bigg\{X(z)^{-1}X'(z)\bigg\}^{21}w(z)\big(V(z)-z^2\big)\d z,
\end{equation}
using a fixed branch for the logarithm and with $X(z)$ as in RHP \ref{FIKbeast} so $X'(z)=\frac{\d}{\d z}X(z)$.
\end{lem}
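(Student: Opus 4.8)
The plan is to recognise \eqref{a:11} as an instance of the classical differential identity for moment determinants, with the $\theta$-derivative pushed through the weight $w(z)=\omega_{\alpha\beta}(z-\lambda_{n\theta})\e^{-nV_{\theta}(z)}$, and then to encode the resulting combination of orthonormal polynomials in the solution $X$ of RHP \ref{FIKbeast} via the confluent Christoffel--Darboux formula. I would first record the double role of the hypothesis $D_k(\lambda_{n\theta},\alpha,\beta;V_{\theta}(x))\neq 0$ for $k=1,\dots,n+1$: by Heine's formula the $D_k$ are the Hankel (moment) determinants of $w$, so this is exactly what guarantees that the monic orthogonal polynomials of degrees $0,\dots,n$ — equivalently the orthonormal $p_{j,n}$ with finite non-zero leading coefficients $\kappa_{j,n}$, $j=0,\dots,n$, as in \eqref{a:5} — all exist, and equivalently again it is the Fokas--Its--Kitaev solvability condition, so the object $X$ on the right of \eqref{a:11} is meaningful.

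For the first step I would differentiate in $\theta$, along a fixed continuous branch of the logarithms, the normalisation $\int_{-\infty}^{\infty}p_{j,n}(x)^2\,w(x)\,\d x=1$. Expanding $\partial_\theta p_{j,n}$ (a polynomial of degree $\le j$) in $p_{0,n},\dots,p_{j,n}$, its $w$-pairing with $p_{j,n}$ is just $\partial_\theta\ln\kappa_{j,n}$, so differentiating the normalisation gives $\partial_\theta\ln\kappa_{j,n}=-\tfrac12\int_{-\infty}^{\infty}p_{j,n}^2\,\partial_\theta w\,\d x$; summing over $j=0,\dots,n-1$ and using $D_n=\prod_{j=0}^{n-1}\kappa_{j,n}^{-2}$ yields the trace identity
\begin{equation*}
	\partial_\theta\ln D_n\big(\lambda_{n\theta},\alpha,\beta;V_{\theta}(x)\big)=\sum_{j=0}^{n-1}\int_{-\infty}^{\infty}p_{j,n}(x)^2\,\partial_\theta w(x)\,\d x .
\end{equation*}
The explicit $\theta$-dependence sits in the potential, $\partial_\theta w(z)=-n\big(V(z)-z^2\big)w(z)=-nW(z)w(z)$, so the right-hand side equals $-n\int_{-\infty}^{\infty}\big(\sum_{j=0}^{n-1}p_{j,n}(x)^2\big)W(x)w(x)\,\d x$.

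For the second step I would turn the diagonal Christoffel--Darboux kernel into $X$. The confluent Christoffel--Darboux formula gives $\sum_{j=0}^{n-1}p_{j,n}(x)^2=\tfrac{\kappa_{n-1,n}}{\kappa_{n,n}}\big(p_{n,n}'p_{n-1,n}-p_{n-1,n}'p_{n,n}\big)(x)$, while the explicit form of the solution recalled just above \eqref{a:6} gives $X^{11}=\kappa_{n,n}^{-1}p_{n,n}$ and $X^{21}=-2\pi\im\,\kappa_{n-1,n}p_{n-1,n}$, both polynomials and in particular entire. Using $\det X\equiv1$ — the jump in \eqref{a:4} is unimodular, the local behaviour allowed at $\lambda_{n\theta}$ by item (3) of RHP \ref{FIKbeast} is too mild to obstruct analyticity of $\det X$ there, and $\det X\to1$ at $\infty$ — the entry $\{X^{-1}X'\}^{21}=X^{11}(X^{21})'-X^{21}(X^{11})'$ is itself entire, so the integral over $\mathbb{R}$ in \eqref{a:11} is unambiguous despite the jump of $X$ on the real line. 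A one-line computation with these identifications then gives $\{X(x)^{-1}X'(x)\}^{21}=2\pi\im\sum_{j=0}^{n-1}p_{j,n}(x)^2$, and substituting this into the expression for $\partial_\theta\ln D_n$ found above produces exactly \eqref{a:11}.

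I expect the only genuinely delicate point to be the rigorous justification of the first step: one must differentiate under the integral sign in the moment integrals $\int x^\ell w\,\d x$, uniformly for $\theta$ in $[0,1]$, which needs a fixed integrable majorant. Here $\alpha>-1$ controls the Fisher--Hartwig point $z=\lambda_{n\theta}$ — crucially, because only $V_{\theta}$ varies, $\partial_\theta w$ still has at that point only a $|z-\lambda_{n\theta}|^{\alpha}$-type singularity rather than a worse one — and the growth of $V$ from Assumption \ref{1-cut} controls the behaviour at infinity for the admissible $n$; everything after that is routine bookkeeping within the Fokas--Its--Kitaev dictionary together with the standard normalisation $\det X\equiv1$.
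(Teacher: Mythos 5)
Your argument reproduces the standard deformation identity underlying the cited [BWW, Lemma 3.7] and, since the paper itself offers no proof beyond the citation, this is essentially the proof a reader would reconstruct. The individual steps — Heine's $D_n=\prod_{j=0}^{n-1}\kappa_{j,n}^{-2}$, differentiating the normalisation to get $\partial_\theta\ln\kappa_{j,n}=-\tfrac12\int p_{j,n}^2\,\partial_\theta w$, the confluent Christoffel--Darboux formula, the Fokas--Its--Kitaev identifications $X^{11}=\kappa_{n,n}^{-1}p_{n,n}$, $X^{21}=-2\pi\im\kappa_{n-1,n}p_{n-1,n}$, and $\det X\equiv 1$ by a Liouville argument sharpened at the mildly singular point $z=\lambda_{n\theta}$ — are all correct, and combine to give $\{X^{-1}X'\}^{21}=2\pi\im\sum_{j=0}^{n-1}p_{j,n}^2$, hence \eqref{a:11}.

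The one place you should be more careful is the claim that ``the explicit $\theta$-dependence sits in the potential'' (and later, ``because only $V_\theta$ varies''). For the weight $w(z;s,\alpha,\beta,n,\theta)=\omega_{\alpha\beta}(z-\lambda_{n\theta})\e^{-nV_\theta(z)}$ of \eqref{a:4} this is not literally true: the Fisher--Hartwig node $\lambda_{n\theta}$ in \eqref{a:3} itself moves with $\theta$ through $\tau_\theta=\pi h_{V_\theta}(\sqrt2)2^{3/4}$. If one reads the left-hand side of \eqref{a:11} as the genuine total $\theta$-derivative, then your computation $\partial_\theta w=-nW(z)w(z)$ is dropping the term $-(\partial_\theta\lambda_{n\theta})\,\omega_{\alpha\beta}'(z-\lambda_{n\theta})\e^{-nV_\theta(z)}$, and this omission is not negligible: $\partial_\theta\lambda_{n\theta}=\mathcal O(n^{-2/3})$, while the $\lambda$-derivative of $\ln D_n$ is $\mathcal O(n)$ (cf.\ \eqref{a:9}), so the dropped contribution is $\mathcal O(n^{1/3})$ — exactly the order tracked in the applications. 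The correct reading, forced by the fact that only $V(z)-z^2=\partial_\theta V_\theta(z)$ appears on the right-hand side of \eqref{a:11} and consistent with BWW's original setting where the singularity does not move with the deformation parameter, is that $\partial_\theta$ acts through $V_\theta$ only, with $\lambda_{n\theta}$ held as a spectator. Your proof is correct under that interpretation and is the intended one, but you should state the interpretation explicitly rather than assert that $\lambda_{n\theta}$ does not vary with $\theta$, since in this paper it manifestly does.
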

Identities \eqref{a:9},\eqref{a:10} and \eqref{a:11} are valid away from the exceptional sets that underwrite the vanishing of some (or possibly all) of the $\{D_k\}_{k=1}^{n+1}$. Yet, our upcoming analysis will establish non-vanishing of $D_n$ for $n\geq n_0$ and \textit{all} $s\in\mathbb{R},\alpha>-1,\beta\notin(-\infty,0),\theta\in[0,1]$, plus its large $n$-asymptotic. This will be sufficient to conclude validity of \eqref{e:21},\eqref{e:23},\eqref{e:26} in the entire parameter domain by general regularity properties of $D_n$ in its various parameters. We now continue with the nonlinear steepest descent analysis of RHP \ref{FIKbeast}.

\section{Nonlinear steepest descent analysis of RHP \ref{FIKbeast} for general parameters}\label{sec3}

Several steps are needed for the asymptotic, as $n\rightarrow\infty$, resolution of RHP \ref{FIKbeast} and we present them in their natural chronological order.
\subsection{First transformation: translation and \textbf{\textit{g}}-function} We first move the singularity $\lambda_{n\theta}$ in RHP \ref{FIKbeast} to $z=\sqrt{2}$ by the simple translation
\begin{equation}\label{a:12}
	Y(z;s,\alpha,\beta,n,\theta):=X(z+s_{n\theta};s,\alpha,\beta,n,\theta),\ \ \ \ z\in\mathbb{C}\setminus\mathbb{R};\ \ \ \ \ \ \ \ s_{n\theta}:=\frac{s}{(n\tau_{\theta})^{\frac{2}{3}}}\in\mathbb{R}.
\end{equation}
Once done, the potential $z\mapsto V_{\theta}(z)$ in condition $(2)$ of RHP \ref{FIKbeast} is replaced by the $n$-dependent one $z\mapsto V_{\theta}(z+s_{n\theta})$, so the $g$-function for our analysis is a shifted version of the one underlying \eqref{a:1}. In detail, the relevant analytic and asymptotic properties of the $g$-function corresponding to \eqref{a:1} are summarized below.
\begin{lem}\label{glemma} Let $n\in\mathbb{N},\theta\in[0,1]$ and $V_{\theta},h_{V_{\theta}}$ as in \eqref{a:1},\eqref{a:2}. Set
\begin{equation}\label{a:13}
	 \rho_{V_{\theta}}(x):=h_{V_{\theta}}(x)\sqrt{(2-x^2)_+},\ x\in\mathbb{R};\ \ \ \ \ \ g_{\theta}(z):=\int_E\ln(z-x)\rho_{V_{\theta}}(x)\d x,\ \ \ z\in\mathbb{C}\setminus(-\infty,\sqrt{2}\,],
\end{equation}
with the principal branch in the definition of the logarithm $\ln:\mathbb{C}\setminus(-\infty,0]\rightarrow\mathbb{C}$. Then, for all $\theta\in[0,1]$,
\begin{enumerate}
	\item[(i)] $z\mapsto g_{\theta}(z)$ is analytic in $\mathbb{C}\setminus(-\infty,\sqrt{2}\,]$ and $z\mapsto\e^{ng_{\theta}(z)}$ is analytic in $\mathbb{C}\setminus E$.
	\item[(ii)] $z\mapsto g_{\theta}(z)$ admits limiting values $g_{\theta\pm}(z):=\lim_{\epsilon\downarrow 0}g_{\theta}(z\pm\im\epsilon)$ on $\mathbb{R}\setminus\{\pm\sqrt{2}\}$ such that for $z<-\sqrt{2}$,
	\begin{equation*}
		g_{\theta\pm}(z)=\int_E\ln|z-x|\rho_{V_{\theta}}(x)\d x\pm\im\pi,
	\end{equation*}
	for $z\in(-\sqrt{2},\sqrt{2})$,
	\begin{equation*}
		g_{\theta\pm}(z)=\int_E\ln|z-x|\rho_{V_{\theta}}(x)\d x\pm\im\pi\int_z^{\sqrt{2}}\rho_{V_{\theta}}(x)\d x,
	\end{equation*}
	and for $z>\sqrt{2}$,
	\begin{equation*}
		g_{\theta\pm}(z)=\int_E\ln|z-x|\rho_{V_{\theta}}(x)\d x.
	\end{equation*}
	\item[(iii)] As $z\rightarrow\infty$ in $\mathbb{C}\setminus(-\infty,\sqrt{2}\,]$,
	\begin{equation*}
		g_{\theta}(z)=\ln z-\frac{1}{z}\int_Ex\rho_{V_{\theta}}(x)\d x-\frac{1}{2z^2}\int_Ex^2\rho_{V_{\theta}}(x)\d x+\mathcal{O}\big(z^{-3}\big).
	\end{equation*}
	\item[(iv)] Making explicit the left hand side in \eqref{e:10}, we have for $z>\sqrt{2}$,
	\begin{equation*}
		g_{\theta+}(z)+g_{\theta-}(z)-V_{\theta}(z)+\ell_{V_{\theta}}=-2\pi\int_{\sqrt{2}}^zh_{V_{\theta}}(x)\sqrt{x^2-2}\,\d x,
	\end{equation*}
	and for $z<-\sqrt{2}$,
	\begin{equation*}
		g_{\theta+}(z)+g_{\theta-}(z)-V_{\theta}(z)+\ell_{V_{\theta}}=-2\pi\int_z^{-\sqrt{2}}h_{V_{\theta}}(x)\sqrt{x^2-2}\,\d x.
	\end{equation*}
\end{enumerate}
\end{lem}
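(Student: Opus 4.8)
The plan is to obtain parts (i)--(iii) by elementary means — differentiation under the integral sign, Morera's theorem, and bookkeeping of the principal branch — and to reduce the substantive statement (iv) to the resolvent identity for the equilibrium measure together with Lemma \ref{convex}. For (i), I would note that for each fixed $x\in E=[-\sqrt2,\sqrt2\,]$ the map $z\mapsto\ln(z-x)$ is analytic precisely on $\mathbb{C}\setminus(-\infty,x]$, so the union over $x\in E$ and the integrability of $\rho_{V_\theta}$ on $E$ give, via differentiation under the integral, analyticity of $g_\theta$ on $\mathbb{C}\setminus(-\infty,\sqrt2\,]$. For the statement about $\e^{ng_\theta}$ I would first establish (ii) and then observe that across the ray $(-\infty,-\sqrt2)$ the jump of $g_\theta$ equals $g_{\theta+}-g_{\theta-}=2\pi\im$ (total mass one); hence for $n\in\mathbb{N}$ the function $\e^{ng_\theta}$ is continuous across that ray and therefore, by Morera's theorem, extends analytically across it, which together with analyticity off $(-\infty,\sqrt2\,]$ yields analyticity on $\mathbb{C}\setminus E$.

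Part (ii) is a computation of limiting arguments: as $z\pm\im\epsilon\to z\in\mathbb{R}$ one has $\arg(z\pm\im\epsilon-x)\to\pm\pi$ whenever $x>z$ and $\arg(z\pm\im\epsilon-x)\to 0$ whenever $x<z$, so splitting $\int_E$ at $x=z$ — the split being vacuous when $z\notin E$ — and using $\int_E\rho_{V_\theta}=1$ reproduces the three displayed formulas. Part (iii) follows by integrating the expansion $\ln(z-x)=\ln z-x/z-x^2/(2z^2)+\mathcal{O}(z^{-3})$, uniform for $x$ in the compact set $E$, against $\rho_{V_\theta}$ and again using total mass one.

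The core of the proof is (iv). For $z>\sqrt2$ I would introduce, using (ii),
\[
	F(z):=g_{\theta+}(z)+g_{\theta-}(z)-V_\theta(z)+\ell_{V_\theta}=2\int_E\ln|z-x|\,\rho_{V_\theta}(x)\,\d x-V_\theta(z)+\ell_{V_\theta},
\]
which is exactly the left-hand side of the Euler--Lagrange relations \eqref{e:10} written for $V_\theta$, legitimate since $V_\theta$ is one-cut regular by Lemma \ref{convex}. The equality case of \eqref{e:10} on $E$, together with continuity of $F$ up to $z=\sqrt2$, gives $F(\sqrt2)=0$, so $F(z)=\int_{\sqrt2}^z F'(x)\,\d x$ with $F'(x)=2\int_E\rho_{V_\theta}(y)/(x-y)\,\d y-V_\theta'(x)$ for $x>\sqrt2$. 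I would then use the resolvent identity
\[
	\int_E\frac{\rho_{V_\theta}(y)}{z-y}\,\d y=\frac12V_\theta'(z)-\pi h_{V_\theta}(z)\sqrt{z^2-2},\qquad z\in\mathbb{C}\setminus E,
\]
with the branch of $\sqrt{z^2-2}$ fixed by $\sqrt{z^2-2}\sim z$ at infinity. This identity holds for $V_\theta=x^2$ by the explicit semicircle evaluation $\int_E(\sqrt{2-y^2}/\pi)/(z-y)\,\d y=z-\sqrt{z^2-2}$, it holds for $\theta=1$ by the standard one-cut description of the equilibrium measure, cf. \cite{DKMVZ,ST}, and by Lemma \ref{convex} both $h_{V_\theta}$ and $V_\theta'$, and hence both sides of the identity, are affine in $\theta$, so it holds for all $\theta\in[0,1]$. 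Substituting yields $F'(x)=-2\pi h_{V_\theta}(x)\sqrt{x^2-2}$ for $x>\sqrt2$, and integrating from $\sqrt2$ to $z$ gives the first identity in (iv). The case $z<-\sqrt2$ is entirely analogous, starting from $F(-\sqrt2)=0$ and noting that the chosen branch forces $\sqrt{x^2-2}$ to be the negative real square root for $x<-\sqrt2$, which changes exactly one sign and produces the second identity.

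The only step I expect to require genuine care is the resolvent identity above; everything else is a routine verification. Even there, the affineness in $\theta$ furnished by Lemma \ref{convex} reduces the claim to the Wigner semicircle case and to the well-documented one-cut formula for the Stieltjes transform of the equilibrium measure, so the additional work is light.
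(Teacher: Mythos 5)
The paper states Lemma \ref{glemma} without proof, so there is no in-text argument to compare against; your proposal is correct and supplies the standard one. Parts (i)--(iii) are exactly the routine branch-of-logarithm, Morera/periodicity, and Taylor-at-infinity arguments one would expect, and for (iv) your reduction to the Stieltjes-transform identity
\[
	\int_E\frac{\rho_{V_\theta}(y)}{z-y}\,\d y=\tfrac12 V_\theta'(z)-\pi h_{V_\theta}(z)\sqrt{z^2-2},\qquad z\in\mathbb{C}\setminus E,
\]
combined with $F(\pm\sqrt2)=0$ from the equality in \eqref{e:10} and the fundamental theorem of calculus, is precisely the usual mechanism (it is also the device the paper itself deploys in the proof of \eqref{h:11}, via the Liouville-theorem packaging). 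The affine-in-$\theta$ interpolation between the semicircle case and the $\theta=1$ case, enabled by the explicit formulas \eqref{a:2} of Lemma \ref{convex}, is a clean way to dispense with re-deriving the Stieltjes formula for each $V_\theta$; the only thing worth making slightly more explicit is that the $h_V$ appearing in Assumption \ref{1-cut}/\eqref{e:11} is the same function that satisfies the one-cut resolvent formula, which follows from the square-root vanishing in Assumption \ref{1-cut}(ii) together with the differentiated Euler--Lagrange equality via Plemelj, as in the cited references. Your sign bookkeeping for $z<-\sqrt2$, where the branch $\sqrt{z^2-2}\sim z$ forces a sign flip of the real square root, is correct and gives the second identity in (iv).
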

In order to put \eqref{a:13} to good use, we transform the RHP for $Y(z)$ via the $g$-function as follows. Introduce
\begin{equation}\label{a:14}
	T(z;s,\alpha,\beta,n,\theta):=\exp\bigg[\frac{n}{2}\ell_{V_{\theta}}\sigma_3\bigg]Y(z;s,\alpha,\beta,n,\theta)\exp\bigg[-n\Big(g_{\theta}(z+s_{n\theta})+\frac{1}{2}\ell_{V_{\theta}}\Big)\sigma_3\bigg],\ \ z\in\mathbb{C}\setminus\mathbb{R},
\end{equation}
with $\ell_{V_{\theta}}$ as in \eqref{a:2} and $g_{\theta}(z)$ as in \eqref{a:13}. What results for $T(z)$ is the following RHP:
\begin{problem}\label{gfct} Let $s\in\mathbb{R},\alpha>-1,\beta\notin(-\infty,0),n\in\mathbb{N}$ and $\theta\in[0,1]$. The function $T(z)=T(z;s,\alpha,\beta,n,\theta)\in\mathbb{C}^{2\times 2}$ defined in \eqref{a:14} is uniquely determined by the following four properties:
\begin{enumerate}
	\item[(1)] $T(z)$ is analytic for $z\in\mathbb{C}\setminus\mathbb{R}$ and extends continuously to the closed upper and lower half-planes away from a neighborhood of $z=\sqrt{2}$.
	\item[(2)] The pointwise limits $T_{\pm}(z):=\lim_{\epsilon\downarrow 0}T(z\pm\im\epsilon)$ at $z\in\mathbb{R}\setminus\{\sqrt{2}\}$ satisfy $T_+(z)=T_-(z)G_T(z)$ with jump
	\begin{equation}\label{a:15}
		G_T(z)=G_T(z;s,\alpha,\beta,n,\theta)=\begin{bmatrix}\e^{-n\pi(z)} & \omega_{\alpha\beta}(z-\sqrt{2})\e^{n\eta(z)}\smallskip\\ 0 & \e^{n\pi(z)}\end{bmatrix},
	\end{equation}
	which we write with the shorthand
	\begin{equation*}
		\pi(z):=g_{\theta+}(z+s_{n\theta})-g_{\theta-}(z+s_{n\theta})=\begin{cases}0,&z+s_{n\theta}>\sqrt{2}\\ 2\pi\im\int_{z+s_{n\theta}}^{\sqrt{2}}\rho_{V_{\theta}}(x)\d x,&z+s_{n\theta}\in(-\sqrt{2},\sqrt{2})\\ 2\pi\im,&z+s_{n\theta}<-\sqrt{2}\end{cases},
	\end{equation*}
	and the shorthand
	\begin{align*}
		\eta(z):=g_{\theta+}(z\,+\,&s_{n\theta})+g_{\theta-}(z+s_{n\theta})-V_{\theta}(z+s_{n\theta})+\ell_{V_{\theta}}\\
		=&\,\,2\int_E\ln|z+s_{n\theta}-x|\rho_{V_{\theta}}(x)\d x-V_{\theta}(z+s_{n\theta})+\ell_{V_{\theta}},\ \ \ \ z\in\mathbb{R}\setminus\{\pm\sqrt{2}-s_{n\theta}\}.
	\end{align*}
	\item[(3)] As $z\rightarrow\sqrt{2}$, $z\notin\mathbb{R}$, the first column of $T(z)$ remains bounded but its second column satisfies
	\begin{equation*}
		T^{j2}(z)=\mathcal{O}(|z-\sqrt{2}|^{\alpha}),\ \alpha\in(-1,0);\ \ \ T^{j2}(z)=\mathcal{O}(\ln|z-\sqrt{2}|),\ \alpha=0;\ \ \ T^{j2}(z)=\mathcal{O}(1),\ \alpha>0.
	\end{equation*}	
	\item[(4)] As $z\rightarrow\infty$, $T(z)$ satisfies the normalization $T(z)=I+\mathcal{O}(z^{-1})$.
\end{enumerate}
\end{problem}
The triangular structure of $G_T(z)$ in \eqref{a:15} motivates our next transformation.
\subsection{Second transformation: opening of lenses} We have $\eta(z)=0$ for $z+s_{n\theta}\in(-\sqrt{2},\sqrt{2})$ by the variational equality right above \eqref{e:10} and since $x\mapsto h_{V_{\theta}}(x)$ is real analytic, the function
\begin{equation}\label{a:16}
	z\mapsto\xi(z)=\xi(z;s,n,\theta):=2\pi\int_{\sqrt{2}}^{z+s_{n\theta}}h_{V_{\theta}}(w)\big(w^2-2\big)^{\frac{1}{2}}\d w,
\end{equation}
is analytic for $z+s_{n\theta}\in\mathbb{C}\setminus(-\infty,\sqrt{2}]$, for any $\theta\in [0,1]$, in the domain of analyticity of $h_{V_{\theta}}$. The path of integration in \eqref{a:16} does not cross $E\subset\mathbb{R}$, it stays in the domain of analyticity of $h_{V_{\theta}}$, and we choose the principal branch for the complex square root with cut on $E$. With this convention \eqref{a:16} yields
\begin{equation*}
	\xi_+(x):=\lim_{y\downarrow 0}\xi(x+\im y)=-\pi(x),\ \ \ \ \ \xi_-(x):=\lim_{y\downarrow 0}\xi(x-\im y)=\pi(x)\ \ \ \ \ \ \ \ \forall\,x+s_n\in(-\sqrt{2},\sqrt{2}),
\end{equation*}
as well as
\begin{equation*}
	\lim_{y\downarrow 0}\frac{\d}{\d y}\xi(x+\im y)=-2\pi\rho_{V_{\theta}}(x+s_{n\theta})<0\ \ \ \ \ \ \ \ \forall\,x+s_{n\theta}\in(-\sqrt{2},\sqrt{2}).
\end{equation*}
Thus, by the Cauchy-Riemann equations, $\Re(\xi(z))<0$ locally in the upper and lower half-planes, more detailed for $z$ in two $(\alpha,\beta,\theta)$-independent vicinities
\begin{equation*}
	\Omega_+\subset\{z\in\mathbb{C}:\ \Im z>0\}\ \ \ \ \ \ \textnormal{and}\ \ \ \ \ \ \ \Omega_-\subset\{z\in\mathbb{C}:\ \Im z<0\},
\end{equation*}
such that $\Omega_{\pm}$ lie in the strip $\{z=x+\im y:\ x+s_{n\theta}\in(-\sqrt{2},\sqrt{2}),\ y\in\mathbb{R}\}$ and in the domain of analyticity of $h_{V_{\theta}}$. Moving forward, we assume from now on that $n\geq n_0$ is sufficiently large so that $-\sqrt{2}-s_{n\theta}<\sqrt{2}$, which is admissible since $s\in\mathbb{R}$ is kept fixed, compare \eqref{a:12}. Then, considering the lens shaped domain $L_+\cup L_-$ shown in Figure \ref{fig1} with $L_{\pm}\subset\Omega_{\pm}$, we open a lens around $[-\sqrt{2}-s_{n\theta},\sqrt{2}]$ by defining 
\begin{equation}\label{a:17}
	S(z;s,\alpha,\beta,n,\theta):=T(z;s,\alpha,\beta,n,\theta)\begin{cases}L(z)^{-1},&z\in L_+\\ L(z),&z\in L_-\\ I,&\textnormal{else}\end{cases},
\end{equation}
in terms of the lower triangular multiplier
\begin{equation*}
	L(z)=L(z;s,\alpha,n,\theta):=\begin{bmatrix}1&0\\ (\sqrt{2}-z)^{-\alpha}\e^{n\xi(z)} & 1\end{bmatrix},
\end{equation*}
where $\mathbb{C}\setminus[\sqrt{2},\infty)\ni z\mapsto (\sqrt{2}-z)^{-\alpha}$ has its cut on $[\sqrt{2},\infty)\subset\mathbb{R}$ so that $(\sqrt{2}-z)^{-\alpha}>0$ for $z<\sqrt{2}$. The relevant analytic and asymptotic properties of $S(z)$ are summarized below.
\begin{problem}\label{opRHP} Let $s\in\mathbb{R},\alpha>-1,\beta\notin(-\infty,0),\theta\in[0,1]$ and $\mathbb{N}\ni n\geq n_0$ so that $-\sqrt{2}-s_{n\theta}<\sqrt{2}$. The function $S(z)=S(z;s,\alpha,\beta,n,\theta)\in\mathbb{C}^{2\times 2}$ defined in \eqref{a:17} is uniquely determined by the following four properties:
\begin{enumerate}
	\item[(1)] $S(z)$ is analytic for $z\in\mathbb{C}\setminus\Sigma_S$, with $\Sigma_S$ shown in Figure \ref{fig1}. It extends continuously to the closure of $\mathbb{C}\setminus\Sigma_S$ away from $z=\sqrt{2}$.
		\begin{figure}[tbh]
	\begin{tikzpicture}[xscale=0.9,yscale=0.9]
	\draw [thick, color=red, decoration={markings, mark=at position 0.15 with {\arrow{>}}},decoration={markings, mark=at position 0.4 with {\arrow{>}}}, decoration={markings, mark=at position 0.6 with {\arrow{>}}}, decoration={markings, mark=at position 0.85 with {\arrow{>}}}, postaction={decorate}] (-5,0) -- (5,0);
		\draw[thick,color=red, decoration={markings, mark=at position 0.5 with {\arrow{<}}}, postaction={decorate} ] (2.5,0) arc (38.65980825:141.3401918:3.201562119);
		\draw[thick,color=red, decoration={markings, mark=at position 0.5 with {\arrow{<}}}, postaction={decorate} ] (2.5,0) arc (-38.65980825:-141.3401918:3.201562119);
	\node [below] at (2.7,-0.15) {{\footnotesize $\sqrt{2}$}};
	\node [below] at (-3.2,-0.15) {{\footnotesize $-\sqrt{2}-s_{n\theta}$}};
	\node [right] at (-1.9,1.4) {{\footnotesize $\partial L_+$}};
	\node [right] at (-0.4,0.6) {{\footnotesize $L_+$}};
	\node [right] at (-1.9,-1.4) {{\footnotesize $\partial L_-$}};
	\node [right] at (-0.4,-0.6) {{\footnotesize $L_-$}};
	\draw [fill, color=black] (2.5,0) circle [radius=0.06];
	\draw [fill, color=black] (-2.5,0) circle [radius=0.06];
\end{tikzpicture}
\caption{The oriented jump contour $\Sigma_S$, shown in red, for $S(z)$ in the complex $z$-plane.}
\label{fig1}
\end{figure}
	\item[(2)] The non-tangential limiting values $S_{\pm}(z),z\in\Sigma_S\setminus\{\sqrt{2}\}$ from either side of $\Sigma_S$ satisfy the relation $S_+(z)=S_-(z)G_S(z)$ with $G_S(z)=G_S(z;s,\alpha,\beta,n,\theta)$ given by
	\begin{equation*}
		G_S(z)=\begin{bmatrix}1&|z-\sqrt{2}|^{\alpha}\e^{n\eta(z)}\\ 0&1\end{bmatrix},\ \ z<-\sqrt{2}-s_{n\theta};\ \ \ \ \ \ \ \ G_S(z)=L(z),\ \ z\in\partial L_+\cup\partial L_-.
	\end{equation*}
	For the remaining parts of $\Sigma_S$ on the real line, if $\sqrt{2}-s_{n\theta}<\sqrt{2}$, then
	\begin{align*}
		G_S(z)=&\,\begin{bmatrix}0&|z-\sqrt{2}|^{\alpha}\\ -|z-\sqrt{2}|^{-\alpha}&0\end{bmatrix},\ \ z\in(-\sqrt{2}-s_{n\theta},\sqrt{2}-s_{n\theta});\\
		G_S(z)=&\,\begin{bmatrix}0&|z-\sqrt{2}|^{\alpha}\e^{-n\xi(z)}\\ -|z-\sqrt{2}|^{-\alpha}\e^{n\xi(z)} & 0\end{bmatrix},\ \ z\in(\sqrt{2}-s_{n\theta},\sqrt{2});\\
		G_S(z)=&\,\begin{bmatrix}1&\beta|z-\sqrt{2}|^{\alpha}\e^{n\eta(z)}\\ 0&1\end{bmatrix},\ \ z\in(\sqrt{2},\infty).
	\end{align*}
	If however $\sqrt{2}<\sqrt{2}-s_{n\theta}$, then instead
	\begin{align*}
		G_S(z)=&\,\begin{bmatrix}0&|z-\sqrt{2}|^{\alpha}\\ -|z-\sqrt{2}|^{-\alpha}&0\end{bmatrix},\ \ z\in(-\sqrt{2}-s_{n\theta},\sqrt{2});\\
		G_S(z)=&\,\begin{bmatrix}\e^{-n\pi(z)} & |z-\sqrt{2}|^{\alpha}\\ 0&\e^{n\pi(z)}\end{bmatrix},\ \ z\in(\sqrt{2},\sqrt{2}-s_{n\theta});\\
		G_S(z)=&\,\begin{bmatrix}1&\beta|z-\sqrt{2}|^{\alpha}\e^{n\eta(z)}\\ 0&1\end{bmatrix},\ \ z\in(\sqrt{2}-s_{n\theta},\infty).
	\end{align*}
	\item[(3)] As $z\rightarrow\sqrt{2},z\notin\Sigma_S$, the behavior of $S(z)$ is as the behavior of $T(z)$ in condition $(3)$ of RHP \ref{gfct}, provided $z$ lies outside of $L_{\pm}$. Inside $L_{\pm}$ the relevant behavior is modified according to \eqref{a:17}.
	\item[(4)] As $z\rightarrow\infty,z\notin\Sigma_S$, $S(z)$ satisfies the normalization $S(z)=I+\mathcal{O}(z^{-1})$.
\end{enumerate}
\end{problem}
Given our discussion of $\Re(\xi(z))$ below \eqref{a:16}, the jump matrix $G_S(z)=G_S(z;s,\alpha,\beta,n,\theta)$ is asymptotically, as $n\rightarrow\infty$, close to the identity matrix, away from $z=-\sqrt{2}-s_{n\theta}$, from $z=\sqrt{2}$ and from the line segment in between. In more detail:
\begin{prop}\label{snorm1} Let $s\in\mathbb{R},\alpha>-1,\beta\notin(-\infty,0),\epsilon\in(0,1)$. There exist $n_0=n_0(s,\alpha,\beta,\epsilon)\in\mathbb{N}$ and $c_j=c_j(s,\alpha,\beta,\epsilon)>0$ so that
\begin{align*}
	\|G_S(\cdot;s,\alpha,\beta,n,\theta)-I\|_{L^2\cap L^{\infty}(-\infty,-\sqrt{2}-s_{n\theta}-\epsilon)}\leq c_1\e^{-nc_2},\ \ 
	\|G_S(\cdot;s,\alpha,\beta,n,\theta)-I\|_{L^2\cap L^{\infty}(m+\epsilon,\infty)}\leq c_3\e^{-nc_4},
\end{align*}
for all $n\geq n_0$ and $\theta\in[0,1]$ with $m:=\max\{\sqrt{2}-s_{n\theta},\sqrt{2}\}$. Moreover, there exist $d_j=d_j(s,\alpha,\epsilon)>0$ so that
\begin{equation*}
	\|G_S(\cdot;s,\alpha,\beta,n,\theta)-I\|_{L^2\cap L^{\infty}(\partial L_{\pm}\setminus(\mathbb{D}_{\epsilon}(-\sqrt{2}-s_{n\theta})\cup\mathbb{D}_{\epsilon}(\sqrt{2})))}\leq d_1\e^{-nd_2}
\end{equation*}
is valid for all $n\geq n_0$ and $\theta\in[0,1]$, with the open disk $\mathbb{D}_{\epsilon}(z_0):=\{z\in\mathbb{C}:\,|z-z_0|<\epsilon\}$.
\end{prop}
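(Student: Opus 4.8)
The plan is, on each of the three portions of $\Sigma_S$ named in Proposition \ref{snorm1}, to bound $G_S(z)-I$ pointwise by an explicit quantity that decays exponentially in $n$ and has enough polynomial decay at infinity to also control the $L^2$-norm, all uniformly in $\theta\in[0,1]$. In every case $G_S(z)-I$ has a single nonzero entry, so everything reduces to estimating one scalar, and the work is concentrated in two sign facts: strict negativity of $\eta$ on $\mathbb R\setminus E$ and strict negativity of $\Re\xi$ in the lens neighbourhoods.

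\textbf{The two real rays.} On $(-\infty,-\sqrt2-s_{n\theta}-\epsilon)$ the nonzero entry of $G_S(z)-I$ is $|z-\sqrt2|^{\alpha}\e^{n\eta(z)}$, and on $(m+\epsilon,\infty)$, $m=\max\{\sqrt2-s_{n\theta},\sqrt2\}$, it is $\beta|z-\sqrt2|^{\alpha}\e^{n\eta(z)}$. In both cases I would first establish $\eta(z)\le-c(\epsilon)<0$ uniformly in $\theta\in[0,1]$ and $n\ge n_0$. Near the relevant endpoint this is immediate from Lemma \ref{glemma}(iv): there $\eta$ equals $-2\pi$ times a strictly positive definite integral of $h_{V_{\theta}}(x)\sqrt{x^2-2}$, and $h_{V_{\theta}}=(1-\theta)\tfrac1\pi+\theta h_V$ is bounded below by a positive constant uniformly in $\theta$ on a fixed small neighbourhood of $\pm\sqrt2$. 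Strict negativity of $\eta$ further out along the ray is exactly Assumption \ref{1-cut}(iii), while the growth hypothesis $V(x)\ge(2+\epsilon)\ln(1+|x|)$ together with $g_{\theta+}(w)+g_{\theta-}(w)=2\int_E\ln|w-x|\rho_{V_{\theta}}(x)\d x\sim2\ln|w|$ gives $\eta(z)\le-\epsilon\ln(1+|z|)+C$ with $C$ uniform in $\theta$; so on the noncompact tail $\eta$ decays logarithmically, and on a compact middle segment a joint-continuity-and-compactness argument — valid once $n_0$ is so large that $|s_{n\theta}|<\epsilon/2$ for all $\theta$, which confines $\pm\sqrt2-s_{n\theta}$ and the short ``wrong-sign'' interval $[\sqrt2-s_{n\theta},\sqrt2]$ to fixed $\epsilon$-neighbourhoods of $\pm\sqrt2$ — produces the uniform bound. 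Finally, writing $|z-\sqrt2|^{\alpha}\e^{n\eta(z)}=\big(|z-\sqrt2|^{\alpha}\e^{k_0\eta(z)}\big)\e^{(n-k_0)\eta(z)}$ with $k_0=k_0(\alpha,\epsilon)\in\mathbb N$ chosen large enough that $|z-\sqrt2|^{\alpha}\e^{k_0\eta(z)}\in L^2\cap L^\infty$ of the ray, and bounding the last factor by $\e^{-(n-k_0)c(\epsilon)}$, yields the first two asserted estimates.

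\textbf{The lens boundaries.} On $\partial L_{\pm}\setminus\big(\mathbb D_{\epsilon}(-\sqrt2-s_{n\theta})\cup\mathbb D_{\epsilon}(\sqrt2)\big)$ the nonzero entry of $G_S(z)-I$ is $(\sqrt2-z)^{-\alpha}\e^{n\xi(z)}$. Here I would use the sign analysis already recorded below \eqref{a:16}: $\xi(x)$ is purely imaginary on $[-\sqrt2-s_{n\theta},\sqrt2]$ and $\lim_{y\downarrow0}\tfrac{\d}{\d y}\xi(x+\im y)=-2\pi\rho_{V_{\theta}}(x+s_{n\theta})<0$, so by the Cauchy--Riemann equations $\Re\xi<0$ throughout the $\theta$-independent lens neighbourhoods $\Omega_{\pm}$ off the real axis. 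Choosing the lens so that $\partial L_{\pm}$ meets $\mathbb R$ only at its two endpoints, for $n\ge n_0$ the excised set is a fixed compact subset of $\Omega_{\pm}$ at positive distance from $\mathbb R$ and from $\sqrt2$; joint continuity of $\Re\xi(z;s,n,\theta)$ in $(z,\theta)$ (with $s_{n\theta}\to0$) together with compactness of that set times $[0,1]$ gives $\Re\xi(z)\le-c(\epsilon)<0$ uniformly in $\theta$ and $n\ge n_0$. Since $\partial L_{\pm}$ is a bounded arc and $(\sqrt2-z)^{-\alpha}$ is bounded on the excised set, both the $L^2$- and the $L^\infty$-bound then reduce to $\|(\sqrt2-z)^{-\alpha}\e^{n\xi}\|\le C\e^{-nc(\epsilon)}$.

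\textbf{Main obstacle.} Taking $n_0$ to be the largest of the thresholds above and $c_j,d_j$ the corresponding constants finishes the proof. The genuine difficulty is not any individual estimate but the uniformity bookkeeping: making $\eta\le-c(\epsilon)$ and $\Re\xi\le-c(\epsilon)$ hold \emph{simultaneously} for all $\theta\in[0,1]$ and all large $n$ while the endpoint $-\sqrt2-s_{n\theta}$, the interval $[\sqrt2-s_{n\theta},\sqrt2]$, and the lens $L_{\pm}$ all drift with $n$. The device that makes this painless is precisely that $s_{n\theta}\to0$ uniformly in $\theta$, so for $n$ large everything of interest lives on $\theta$-independent compact sets on which $h_{V_{\theta}}$, $\rho_{V_{\theta}}$ and $V_{\theta}$ depend continuously — indeed affinely — on $\theta$.
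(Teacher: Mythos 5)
The paper states Proposition~\ref{snorm1} without a written proof; the preceding discussion of $\Re\xi$ below~\eqref{a:16} and Lemma~\ref{glemma}(iv) is all the reader is given. Your argument is correct and is exactly the standard small-norm reasoning the authors intend: it rests on strict negativity of $\eta$ on $\mathbb{R}\setminus E$ (via the strict Euler--Lagrange inequality, which holds for every $V_{\theta}$ by Lemma~\ref{convex}) together with the growth bound on $V_{\theta}$ to tame the non-compact tails, on the Cauchy--Riemann sign analysis of $\Re\xi$ for the lens arcs, and on $s_{n\theta}\to0$ uniformly in $\theta$ plus joint continuity to obtain $\theta$-uniform constants. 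The $k_0$-splitting to pass from pointwise exponential smallness to $L^2\cap L^{\infty}$ smallness on the infinite rays is the right device. The one place I would tighten the wording is your appeal to ``Assumption~\ref{1-cut}(iii)'' for strict negativity of $\eta$ further out along the ray: the relevant inequality is the one for $V_{\theta}$ rather than $V$, and this is exactly what Lemma~\ref{convex} supplies — a small citation slip, not a gap.
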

The small norm estimates in Proposition \ref{snorm1} make precise the notion of $G_S(z),z\notin[-\sqrt{2}-s_{n\theta},\sqrt{2}]$ being close to the identity matrix as $n\rightarrow\infty$, uniformly in $\theta\in[0,1]$ and for fixed $s\in\mathbb{R},\alpha>-1,\beta\notin(-\infty,0)$. Consequently, RHP \ref{opRHP} is localized on the line segment $(-\sqrt{2}-s_{n\theta},\sqrt{2})$ and in the vicinities of its endpoints. As such we now move on to the necessary local analysis.
\subsection{The outer parametrix} As in \cite[page $148$]{V}, we let
\begin{equation*}
	j(z):=z+\big(z^2-1\big)^{\frac{1}{2}},\ \ \ \ z\in\mathbb{C}\setminus[-1,1]\ \ \ \ \ \ \textnormal{with}\ \ \ \ J(z)>0\ \ \textnormal{for}\ z>1,
\end{equation*}
denote the conformal map from $\mathbb{C}\setminus[-1,1]$ to the exterior of the unit disk $\mathbb{D}_1(0)$, with the principal branch for the complex square root chosen. In turn, the Szeg\H{o} function
\begin{equation}\label{a:18}
	\mathcal{D}(z)=\mathcal{D}(z;s,n,\theta):=\bigg(\frac{z-\sqrt{2}}{j\big(1+2(z-\sqrt{2})/(2\sqrt{2}+s_{n\theta})\big)}\bigg)^{\frac{\alpha}{2}},\ \ \ \ z\in\mathbb{C}\setminus[-\sqrt{2}-s_{n\theta},\sqrt{2}],
\end{equation}
with principal branches for all fractional exponents, is analytic off $[-\sqrt{2}-s_{n\theta},\sqrt{2}]\subset\mathbb{R}$ and it satisfies
\begin{equation*}
	\mathcal{D}_+(z)\mathcal{D}_-(z)=|z-\sqrt{2}|^{\alpha},\ \ \ z\in(-\sqrt{2}-s_{n\theta},\sqrt{2});\ \ \ \ \ \ \ \ \ \ \ \chi:=\lim_{z\rightarrow\infty}\mathcal{D}(z)=\bigg(\frac{2\sqrt{2}+s_{n\theta}}{4}\bigg)^{\frac{\alpha}{2}}.
\end{equation*}
The same Szeg\H{o} function \eqref{a:18} allows us to define
\begin{equation}\label{a:19}
	P(z;s,\alpha,n,\theta):=\chi^{\sigma_3}\frac{1}{2}\begin{bmatrix}\nu(z)+\nu(z)^{-1} & -\im\big(\nu(z)-\nu(z)^{-1}\big)\smallskip\\ \im\big(\nu(z)-\nu(z)^{-1}\big) & \nu(z)+\nu(z)^{-1}\end{bmatrix}\mathcal{D}(z)^{-\sigma_3},\ \ \ z\in\mathbb{C}\setminus[-\sqrt{2}-s_{n\theta},\sqrt{2}],
\end{equation}
where
\begin{equation*}
	\nu(z)=\nu(z;s,n,\theta):=\bigg(\frac{z-\sqrt{2}}{z+\sqrt{2}+s_{n\theta}}\bigg)^{\frac{1}{4}},\ \ \ z\in\mathbb{C}\setminus[-\sqrt{2}-s_{n\theta},\sqrt{2}],
\end{equation*}
also uses principal branches such that $\lim_{z\rightarrow+\infty}\nu(z)=1$. The relevant analytic and asymptotic properties of the outer parametrix \eqref{a:19} are summarized below.
\begin{problem}[Outer RHP]\label{outerArno} Let $s\in\mathbb{R},\alpha>-1,\theta\in[0,1]$ and suppose $\mathbb{N}\ni n\geq n_0$ is such that $-\sqrt{2}-s_{n\theta}<\sqrt{2}$. Then $P(z)=P(z;s,\alpha,n,\theta)\in\mathbb{C}^{2\times 2}$ defined in \eqref{a:19} has the following properties:
\begin{enumerate}
	\item[(1)] $z\mapsto P(z)$ is analytic for $z\in\mathbb{C}\setminus[-\sqrt{2}-s_{n\theta},\sqrt{2}]$, it admits continuous boundary values $P_{\pm}(z):=\lim_{\epsilon\downarrow 0}P(z\pm\im\epsilon)$ for $z\in(-\sqrt{2}-s_{n\theta},\sqrt{2})$ and $P_{\pm}\in L^2[-\sqrt{2}-s_{n\theta},\sqrt{2}]$.
	\item[(2)] The pointwise limiting values $P_{\pm}(z)$ on $(-\sqrt{2}-s_{n\theta},\sqrt{2})$ satisfy
	\begin{equation*}
		P_+(z)=P_-(z)\begin{bmatrix}0 & |z-\sqrt{2}|^{\alpha}\\ -|z-\sqrt{2}|^{-\alpha} & 0\end{bmatrix}.
	\end{equation*}
	\item[(3)] As $z\rightarrow\infty$ we have the asymptotic $P(z)=I+\sum_{k=1}^2P_kz^{-k}+\mathcal{O}(z^{-3})$ with $P_k=P_k(s,\alpha,n,\theta)$ equal
	\begin{align*}
		P_1=&\,\frac{1}{4}(2\sqrt{2}+s_{n\theta})\,\chi^{\sigma_3}\begin{bmatrix}\alpha & \im\smallskip\\ -\im & -\alpha\end{bmatrix}\chi^{-\sigma_3},\\
		P_2=&\,\frac{1}{8}(2\sqrt{2}+s_{n\theta})\,\chi^{\sigma_3}\Bigg\{\frac{1}{4}(2\sqrt{2}+s_{n\theta})\begin{bmatrix}1+\alpha^2& -2\im\alpha\smallskip\\ -2\im\alpha & 1+\alpha^2\end{bmatrix}+\begin{bmatrix}\frac{\alpha}{4}(2\sqrt{2}-3s_{n\theta}) & -\im s_{n\theta}\smallskip\\ \im s_{n\theta} & -\frac{\alpha}{4}(2\sqrt{2}-3s_{n\theta})\end{bmatrix}\Bigg\}\chi^{-\sigma_3}.
	\end{align*}
\end{enumerate}
\end{problem}
\subsection{The local parametrix at $z=\sqrt{2}$} The construction near $z=\sqrt{2}$ is not as explicit as \eqref{a:19} or as \eqref{a:28} and makes use of the model problem \ref{YattRHP} which links to Painlev\'e-XXXIV. Let $Q(\zeta)=Q(\zeta;x,\alpha,\beta)\in\mathbb{C}^{2\times 2}$ denote the unique solution of RHP \ref{YattRHP} for $x\in\mathbb{R},\alpha>-1$ and $\beta\in\mathbb{C}\setminus(-\infty,0)$. Fix $\epsilon>0$ small, let $s\in\mathbb{R},\theta\in[0,1]$ and $n\geq n_0$ sufficiently large so that $\sqrt{2}-s_{n\theta}\in\mathbb{D}_{\epsilon}(\sqrt{2})$. Having ensured the same we consider the map
\begin{equation}\label{a:20}
	\mathbb{D}_{\epsilon}(\sqrt{2})\ni z\mapsto\zeta^b(z):=\bigg[\frac{3\pi}{2}\int_{\sqrt{2}}^{z+s_{n\theta}}h_{V_{\theta}}(w)\big(w^2-2\big)^{\frac{1}{2}}\d w\bigg]^{\frac{2}{3}},\ \ \ \ \theta\in[0,1],
\end{equation}
with path of integration off $E$ and in the domain of analyticity of $h_{V_{\theta}}$, and with the principal branch for the complex square root with cut on $E$.
\begin{prop}\label{conf1} The map $z\mapsto\zeta^b(z)$ in \eqref{a:20}, defined with $s\in\mathbb{R},\theta\in[0,1]$ and $n\geq n_0$ so that $\sqrt{2}-s_{n\theta}\in\mathbb{D}_{\epsilon}(\sqrt{2})$, is conformal near $z=\sqrt{2}$ for any $\theta\in[0,1]$. Moreover, as $\omega= z+s_{n\theta}-\sqrt{2}\rightarrow 0$,
\begin{align*}
	\zeta^b(z)=\sqrt{2}\big(\pi h_{V_{\theta}}(\sqrt{2})\big)^{\frac{2}{3}}\omega&\,\Bigg[1+\frac{2}{5}\bigg\{\frac{h_{V_{\theta}}'(\sqrt{2})}{h_{V_{\theta}}(\sqrt{2})}+\frac{1}{4\sqrt{2}}\bigg\}\omega+\Bigg\{\frac{2}{7}\left\{\frac{h_{V_{\theta}}''(\sqrt{2})}{2h_{V_{\theta}}(\sqrt{2})}+\frac{1}{4\sqrt{2}}\frac{h_{V_{\theta}}'(\sqrt{2})}{h_{V_{\theta}}(\sqrt{2})}-\frac{1}{64}\right\}\\
	&\hspace{5cm}-\frac{1}{25}\bigg\{\frac{h_{V_{\theta}}'(\sqrt{2})}{h_{V_{\theta}}(\sqrt{2})}+\frac{1}{4\sqrt{2}}\bigg\}^2\Bigg\}\omega^2+\mathcal{O}\big(\omega^3\big)\Bigg],
\end{align*}
and by Taylor expansion,
\begin{equation*}
	\zeta^b(z)-\zeta^b(\sqrt{2})=\zeta_1^b(z-\sqrt{2})\left[1+\sum_{j=1}^2\zeta_{j+1}^b(z-\sqrt{2})^j+\mathcal{O}\big((z-\sqrt{2})^3\big)\right],\ \ \ z-\sqrt{2}\rightarrow 0,
\end{equation*}
with $z$-independent coefficients $\zeta_j^b$ such that, as $n\rightarrow\infty$, for any fixed $s\in\mathbb{R}$ but uniformly in $\theta\in[0,1]$,
\begin{align*}
	\zeta_1^b=&\,\sqrt{2}\big(\pi h_{V_{\theta}}(\sqrt{2})\big)^{\frac{2}{3}}\left[1+\frac{4s}{5(n\tau_{\theta})^{\frac{2}{3}}}\left\{\frac{h_{V_{\theta}}'(\sqrt{2})}{h_{V_{\theta}}(\sqrt{2})}+\frac{1}{4\sqrt{2}}\right\}+\mathcal{O}\big(n^{-\frac{4}{3}}\big)\right],\\
	\zeta_2^b=&\,\frac{2}{5}\left\{\frac{h_{V_{\theta}}'(\sqrt{2})}{h_{V_{\theta}}(\sqrt{2})}+\frac{1}{4\sqrt{2}}\right\}\Bigg[1+\frac{s}{(n\tau_{\theta})^{\frac{2}{3}}}\Bigg\{\frac{15}{7}\left\{\frac{h_{V_{\theta}}''(\sqrt{2})}{2h_{V_{\theta}}(\sqrt{2})}+\frac{1}{4\sqrt{2}}\frac{h_{V_{\theta}}'(\sqrt{2})}{h_{V_{\theta}}(\sqrt{2})}-\frac{1}{64}\right\}\\
	&\hspace{5.5cm}-\frac{11}{10}\bigg\{\frac{h_{V_{\theta}}'(\sqrt{2})}{h_{V_{\theta}}(\sqrt{2})}+\frac{1}{4\sqrt{2}}\bigg\}^2\Bigg\}\bigg\{\frac{h_{V_{\theta}}'(\sqrt{2})}{h_{V_{\theta}}(\sqrt{2})}+\frac{1}{4\sqrt{2}}\bigg\}^{-1}+\mathcal{O}\big(n^{-\frac{4}{3}}\big)\Bigg],\\
	\zeta_3^b=&\,\Bigg\{\frac{2}{7}\left\{\frac{h_{V_{\theta}}''(\sqrt{2})}{2h_{V_{\theta}}(\sqrt{2})}+\frac{1}{4\sqrt{2}}\frac{h_{V_{\theta}}'(\sqrt{2})}{h_{V_{\theta}}(\sqrt{2})}-\frac{1}{64}\right\}
	-\frac{1}{25}\bigg\{\frac{h_{V_{\theta}}'(\sqrt{2})}{h_{V_{\theta}}(\sqrt{2})}+\frac{1}{4\sqrt{2}}\bigg\}^2\Bigg\}\Bigg[1+\mathcal{O}\big(n^{-\frac{2}{3}}\big)\Bigg].
\end{align*}
\end{prop}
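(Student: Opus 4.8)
The plan is to show that $\zeta^b$ equals the affine map $z\mapsto z+s_{n\theta}-\sqrt2$ multiplied by an explicit analytic non-vanishing prefactor, and then to read off both expansions by elementary Taylor and binomial manipulations, with all uniformity in $\theta$ coming from the convexity structure of $h_{V_\theta}$ in Lemma \ref{convex}.

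First I would localise the integrand in \eqref{a:20} near $w=\sqrt2$. Writing $w=\sqrt2+t$ and $(w^2-2)^{\frac12}=t^{\frac12}(t+2\sqrt2)^{\frac12}$ with the branch fixed by positivity for $t>0$ (consistent with the principal square root with cut on $E$), the integrand becomes $t^{\frac12}\varphi_\theta(t)$ with $\varphi_\theta(t):=h_{V_{\theta}}(\sqrt2+t)(t+2\sqrt2)^{\frac12}$ analytic near $t=0$ and $\varphi_\theta(0)=h_{V_{\theta}}(\sqrt2)2^{\frac34}>0$. By Lemma \ref{convex}, $h_{V_{\theta}}=(1-\theta)\pi^{-1}+\theta h_V$ is a convex combination of real analytic functions positive on $E$, so the family $\{h_{V_{\theta}}\}_{\theta\in[0,1]}$ extends to one fixed complex neighbourhood of $\sqrt2$ on which it is uniformly bounded and bounded away from $0$; consequently $\varphi_\theta$ and everything built from it below are uniformly analytic in $\theta$. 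Integrating term by term, $\int_0^\omega t^{\frac12}\varphi_\theta(t)\,\d t=\omega^{\frac32}\psi_\theta(\omega)$ with $\psi_\theta$ analytic near $\omega=0$, $\psi_\theta(0)=\tfrac23 h_{V_{\theta}}(\sqrt2)2^{\frac34}$, where throughout $\omega:=z+s_{n\theta}-\sqrt2$.

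Next I would take the $\tfrac23$-power. Since $\tfrac{3\pi}{2}\psi_\theta(0)=\pi h_{V_{\theta}}(\sqrt2)2^{\frac34}=\tau_\theta>0$, the prefactor $\omega\mapsto[\tfrac{3\pi}{2}\psi_\theta(\omega)]^{\frac23}$ is analytic and non-vanishing on a disk about $\omega=0$ whose radius is bounded below uniformly in $\theta$; with the branch convention in \eqref{a:20} the identity $(\omega^{\frac32})^{\frac23}=\omega$ holds and absorbs the square-root cut of $(w^2-2)^{\frac12}$, so $\zeta^b(z)=(z+s_{n\theta}-\sqrt2)\,[\tfrac{3\pi}{2}\psi_\theta(z+s_{n\theta}-\sqrt2)]^{\frac23}$. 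Hence $\zeta^b$ extends analytically across $z=\sqrt2-s_{n\theta}$ to a disk $\mathbb{D}_\epsilon(\sqrt2)$ (for $n\ge n_0$ and all $\theta$) with non-vanishing derivative there, and since $|s_{n\theta}|=\mathcal{O}(n^{-2/3})\to0$ this is conformality near $z=\sqrt2$, uniformly in $\theta\in[0,1]$; evaluating the prefactor at $\omega=0$ returns the leading coefficient $\tau_\theta^{2/3}=\sqrt2(\pi h_{V_{\theta}}(\sqrt2))^{2/3}$.

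The expansions are then bookkeeping. For the expansion in $\omega$ I would Taylor-expand $h_{V_{\theta}}(\sqrt2+t)/h_{V_{\theta}}(\sqrt2)$ and $(1+t/(2\sqrt2))^{\frac12}$, multiply to obtain the coefficients of $\varphi_\theta(t)/\varphi_\theta(0)$, integrate these against $t^{\frac12}$ (so $t^k\mapsto\tfrac{2}{2k+3}\omega^{k+3/2}$), and substitute the resulting $1+(\cdots)\omega+(\cdots)\omega^2+\cdots$ into the binomial series for $(1+\cdot)^{\frac23}$; matching the first three orders reproduces the displayed expansion, with the order-$\omega$ coefficient appearing as $\tfrac23\cdot\tfrac35$ times the order-$t$ coefficient $\{h_{V_{\theta}}'(\sqrt2)/h_{V_{\theta}}(\sqrt2)+\tfrac1{4\sqrt2}\}$ of $\varphi_\theta/\varphi_\theta(0)$. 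For the Taylor expansion in $(z-\sqrt2)$ I would re-centre the analytic function $\omega\mapsto(z+s_{n\theta}-\sqrt2)[\tfrac{3\pi}{2}\psi_\theta(\omega)]^{\frac23}$ at $\omega=s_{n\theta}$, i.e. expand in powers of $\omega-s_{n\theta}=z-\sqrt2$; its value at $\omega=s_{n\theta}$ is $\zeta^b(\sqrt2)$, so $\zeta_1^b=(\zeta^b)'(\sqrt2)$, $\zeta_2^b=\tfrac12(\zeta^b)''(\sqrt2)/(\zeta^b)'(\sqrt2)$, $\zeta_3^b=\tfrac16(\zeta^b)'''(\sqrt2)/(\zeta^b)'(\sqrt2)$. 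Finally, expanding these in $s_{n\theta}=s(n\tau_\theta)^{-2/3}=\mathcal{O}(n^{-2/3})$ and using that $h_{V_{\theta}}(\sqrt2),h_{V_{\theta}}'(\sqrt2),h_{V_{\theta}}''(\sqrt2),\tau_\theta$ depend affinely on $\theta$ and hence lie in fixed compact sets bounded away from $0$ delivers the stated large-$n$ formulas for $\zeta_1^b,\zeta_2^b,\zeta_3^b$, uniformly in $\theta\in[0,1]$. The one genuinely delicate point, rather than routine computation, is that the conformal map is naturally centred at $\sqrt2-s_{n\theta}$ and not at $\sqrt2$: one must secure a lower bound, uniform in both $\theta\in[0,1]$ and large $n$, on the radius of the disk on which $\omega\mapsto[\tfrac{3\pi}{2}\psi_\theta(\omega)]^{\frac23}$ is analytic and non-vanishing, which is exactly what the convexity structure from Lemma \ref{convex} provides; everything else, including the $\mathcal{O}(n^{-2/3})$ and $\mathcal{O}(n^{-4/3})$ error orders, propagates from there.
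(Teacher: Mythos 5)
Your proposal is correct and follows the natural route: factor out the $\omega^{3/2}$ from the integral, observe the remaining factor $\psi_\theta$ is analytic and non-vanishing near $\omega=0$ uniformly in $\theta$, so that $\zeta^b(z)=\omega[\tfrac{3\pi}{2}\psi_\theta(\omega)]^{2/3}$ with an analytic, non-vanishing prefactor, then read off both expansions by Taylor and binomial manipulations and a final re-centering at $\omega=s_{n\theta}$. The paper states Proposition \ref{conf1} without proof (it is treated as a routine, if tedious, calculation), so there is no author argument to compare against, but yours is the standard one and the coefficients you would obtain agree with those displayed: for instance the order-$\omega$ coefficient $\frac{2}{3}\cdot\frac{3}{5}=\frac{2}{5}$, the order-$\omega^2$ coefficient $\frac{2}{3}a_2-\frac{1}{9}a_1^2=\frac{2}{7}A-\frac{1}{25}B^2$, and the $\zeta_2^b$ correction $(3b_2-2b_1^2)/b_1=\frac{15}{7}\frac{A}{B}-\frac{11}{10}B$ after writing $\zeta^b(z)=\omega\,g(\omega)$ with $g=\tau_\theta^{2/3}(1+b_1\omega+b_2\omega^2+\cdots)$. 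One small point you could tighten: the identity $\zeta^b(z)=\omega[\tfrac{3\pi}{2}\psi_\theta(\omega)]^{2/3}$ is most cleanly justified by checking it for $\omega>0$, where all branches are unambiguous, and then invoking uniqueness of analytic continuation (the right-hand side being manifestly analytic on a disk about $\omega=0$); stating it as ``$(\omega^{3/2})^{2/3}=\omega$ with the branch conventions'' glosses over the fact that this identity fails for generic complex $\omega$ with all principal branches. You correctly identify the one genuinely non-routine issue, namely that the shift $s_{n\theta}$ moves the center of the factorization away from $\sqrt{2}$, and that uniformity in $\theta$ of the radius on which $g$ is analytic and non-vanishing is what makes the re-centering and the $\mathcal{O}(n^{-2/3})$, $\mathcal{O}(n^{-4/3})$ error orders legitimate; the affine $\theta$-dependence from Lemma \ref{convex} is indeed what supplies this.
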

Next, we employ \eqref{a:20} in the definition of the parametrix at $z=\sqrt{2}$: let $s\in\mathbb{R},\theta\in[0,1]$ and $n\geq n_0$ once more such that $\sqrt{2}-s_{n\theta}\in\mathbb{D}_{\epsilon}(\sqrt{2})$ with $\epsilon>0$ small. Using $Q(\zeta)=Q(\zeta;x,\alpha,\beta)$ of RHP \ref{YattRHP}, consider
\begin{equation}\label{a:21}
	M(z;s,\alpha,\beta,n,\theta):=E^b(z)Q\Big(n^{\frac{2}{3}}\big(\zeta^b(z)-\zeta^b(\sqrt{2})\big);n^{\frac{2}{3}}\zeta^b(\sqrt{2}),\alpha,\beta\Big)\exp\left[\frac{2n}{3}\big(\zeta^b(z)\big)^{\frac{3}{2}}\sigma_3\right](z-\sqrt{2})^{-\frac{\alpha}{2}\sigma_3},
\end{equation}
for $z\in\mathbb{D}_{\epsilon}(\sqrt{2})\setminus\Sigma_S$ and $\theta\in[0,1]$. All fractional exponents in \eqref{a:21} are taken with their principal branch and the multiplier $E^b(z)=E^b(z;s,\alpha,n,\theta)$ equals
\begin{equation}\label{a:22}
	E^b(z):=P(z)(z-\sqrt{2})^{\frac{\alpha}{2}\sigma_3}\e^{\im\frac{\pi}{4}\sigma_3}\frac{1}{\sqrt{2}}\begin{bmatrix}1 & -1\\ 1 & 1\end{bmatrix}\Big(n^{\frac{2}{3}}\big(\zeta^b(z)-\zeta^b(\sqrt{2})\big)\Big)^{\frac{1}{4}\sigma_3},\  z\in\mathbb{D}_{\epsilon}(\sqrt{2})\setminus[-\sqrt{2}-s_{n\theta},\sqrt{2}],
\end{equation}
with $P(z)=P(z;s,\alpha,n,\theta)$ from \eqref{a:19} and principal branches throughout. A priori, $z\mapsto E^b(z)$ is analytic at $z=\sqrt{2}$ off $[-\sqrt{2}-s_{n\theta},\sqrt{2}]\subset\mathbb{R}$, but things improve at second glance.
\begin{prop} Let $s\in\mathbb{R},\alpha>-1,\theta\in[0,1]$ and $n\geq n_0$ so that $\sqrt{2}-s_{n\theta}\in\mathbb{D}_{\epsilon}(\sqrt{2})$ with $\epsilon>0$ small. The function $z\mapsto E^b(z)$ defined in \eqref{a:22} is analytic at $z=\sqrt{2}$ for all $\theta\in[0,1]$. Moreover, as $z\rightarrow \sqrt{2}$,
\begin{equation}\label{a:23}
	E^b(z)=E_0^b+E_1^b(z-\sqrt{2})+\mathcal{O}\big((z-\sqrt{2})^2\big),
\end{equation}
with $z$-independent, $2\times 2$ matrix-valued, coefficients
\begin{align*}
	E_0^b=&\,\chi^{\sigma_3}\e^{\im\frac{\pi}{4}\sigma_3}\frac{1}{\sqrt{2}}\begin{bmatrix}1 & -1-\alpha\\ 1 & \ \,1-\alpha\end{bmatrix}\Big(n^{\frac{2}{3}}\zeta_1^b(2\sqrt{2}+s_{n\theta})\Big)^{\frac{1}{4}\sigma_3},\\
	E_1^b=&\,\chi^{\sigma_3}\e^{\im\frac{\pi}{4}\sigma_3}\frac{1}{\sqrt{2}}\Bigg(\frac{1}{2\sqrt{2}+s_{n\theta}}\begin{bmatrix}\frac{1}{4}+\alpha+\frac{1}{2}\alpha^2 & \ \ \,\frac{1}{4}-\frac{1}{12}\alpha-\frac{1}{2}\alpha^2-\frac{1}{6}\alpha^3\smallskip\\
	\frac{1}{4}-\alpha+\frac{1}{2}\alpha^2 & -\frac{1}{4}-\frac{1}{12}\alpha+\frac{1}{2}\alpha^2-\frac{1}{6}\alpha^3\end{bmatrix}\\
	&\hspace{8cm}+\frac{\zeta_2^b}{4}\begin{bmatrix}1 & \ \ \,1+\alpha\smallskip\\ 1 & -1+\alpha\end{bmatrix}\Bigg)\Big(n^{\frac{2}{3}}\zeta_1^b(2\sqrt{2}+s_{n\theta})\Big)^{\frac{1}{4}\sigma_3}.
\end{align*}
\end{prop}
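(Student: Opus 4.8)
The plan is to establish three things in turn: that $E^b$ has no jump across the local segment $(\sqrt{2}-\epsilon,\sqrt{2})$ and hence continues holomorphically across it; that the remaining isolated singularity of $E^b$ at $z=\sqrt{2}$ is removable; and finally to read off the coefficients $E_0^b,E_1^b$ in \eqref{a:23} by Taylor expansion. For the first point I would note that each of the three non-constant factors in \eqref{a:22}, namely $P(z)$, $(z-\sqrt{2})^{\frac{\alpha}{2}\sigma_3}$ and $(n^{\frac{2}{3}}(\zeta^b(z)-\zeta^b(\sqrt{2})))^{\frac{1}{4}\sigma_3}$, has its only cut near $z=\sqrt{2}$ along $(\sqrt{2}-\epsilon,\sqrt{2})$: the first by RHP \ref{outerArno}(1), the second by the principal-branch convention, and the third because $z\mapsto\zeta^b(z)-\zeta^b(\sqrt{2})$ is conformal near $z=\sqrt{2}$ with positive leading coefficient $\zeta_1^b>0$ by Proposition \ref{conf1}. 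Writing $J_P$ for the antidiagonal jump of $P$ from RHP \ref{outerArno}(2), one checks the elementary identity $J_P\,(z-\sqrt{2})_{+}^{\frac{\alpha}{2}\sigma_3}=(z-\sqrt{2})_{-}^{\frac{\alpha}{2}\sigma_3}\begin{bmatrix}0&1\\ -1&0\end{bmatrix}$, so that the $\alpha$-dependent modulus in $J_P$ is absorbed by the factor $(z-\sqrt{2})^{\frac{\alpha}{2}\sigma_3}$ and only the $\alpha$-independent sign matrix survives; and since the constant $\mathcal{C}:=\e^{\im\frac{\pi}{4}\sigma_3}\frac{1}{\sqrt{2}}\begin{bmatrix}1&-1\\ 1&1\end{bmatrix}$ satisfies $\begin{bmatrix}0&1\\ -1&0\end{bmatrix}\mathcal{C}=\mathcal{C}\,\e^{-\im\frac{\pi}{2}\sigma_3}$, where $\e^{-\im\frac{\pi}{2}\sigma_3}$ is exactly the multiplicative jump of $(n^{\frac{2}{3}}(\zeta^b(z)-\zeta^b(\sqrt{2})))^{\frac{1}{4}\sigma_3}$ across $(\sqrt{2}-\epsilon,\sqrt{2})$, the net jump of $E^b$ collapses to the identity.

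For removability I would factor the Szeg\H{o} function as $\mathcal{D}(z)=(z-\sqrt{2})^{\frac{\alpha}{2}}\widehat{\mathcal{D}}(z)$, where $\widehat{\mathcal{D}}(z):=j(1+2(z-\sqrt{2})/(2\sqrt{2}+s_{n\theta}))^{-\frac{\alpha}{2}}$ satisfies $\widehat{\mathcal{D}}(\sqrt{2})=1$. Then $P(z)(z-\sqrt{2})^{\frac{\alpha}{2}\sigma_3}=\chi^{\sigma_3}\frac{1}{2}\begin{bmatrix}\nu+\nu^{-1}&-\im(\nu-\nu^{-1})\\ \im(\nu-\nu^{-1})&\nu+\nu^{-1}\end{bmatrix}\widehat{\mathcal{D}}(z)^{-\sigma_3}$, whose only ingredient unbounded at $z=\sqrt{2}$ is $\nu(z)^{-1}=(z+\sqrt{2}+s_{n\theta})^{\frac{1}{4}}(z-\sqrt{2})^{-\frac{1}{4}}$. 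Inserting this into $E^b$, multiplying by $\mathcal{C}$ and by $(n^{\frac{2}{3}}(\zeta^b(z)-\zeta^b(\sqrt{2})))^{\frac{1}{4}\sigma_3}$, and using $\zeta^b(z)-\zeta^b(\sqrt{2})=\zeta_1^b(z-\sqrt{2})(1+\mathcal{O}(z-\sqrt{2}))$, one sees that every occurrence of $\nu^{-1}$ is paired either with the column-one factor $(\zeta^b(z)-\zeta^b(\sqrt{2}))^{\frac{1}{4}}$, which cancels the singularity, or with the column-two combination $(\widehat{\mathcal{D}}(z)-\widehat{\mathcal{D}}(z)^{-1})(\zeta^b(z)-\zeta^b(\sqrt{2}))^{-\frac{1}{4}}$, which is bounded because $\widehat{\mathcal{D}}(\sqrt{2})=1$ forces $\widehat{\mathcal{D}}(z)-\widehat{\mathcal{D}}(z)^{-1}=\mathcal{O}((z-\sqrt{2})^{\frac{1}{2}})$. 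Hence $E^b$ is bounded near $\sqrt{2}$, and being holomorphic there by the first step, it is analytic at $z=\sqrt{2}$ by Riemann's removable singularity theorem; in particular its Taylor series carries only integer powers of $(z-\sqrt{2})$.

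For the coefficients I would pull $\chi^{\sigma_3}\mathcal{C}=\chi^{\sigma_3}\e^{\im\frac{\pi}{4}\sigma_3}\frac{1}{\sqrt{2}}\begin{bmatrix}1&-1\\1&1\end{bmatrix}$ out in front and Taylor-expand each remaining factor at $z=\sqrt{2}$ to first order: $\nu(z)$ and $(z+\sqrt{2}+s_{n\theta})^{\pm\frac{1}{4}}$ by binomial series, $\widehat{\mathcal{D}}(z)^{-1}$ via the local expansion of $j$ at its branch point $w=1$, and $(n^{\frac{2}{3}}(\zeta^b(z)-\zeta^b(\sqrt{2})))^{\frac{1}{4}\sigma_3}=(n^{\frac{2}{3}}\zeta_1^b(z-\sqrt{2}))^{\frac{1}{4}\sigma_3}(1+\frac{1}{4}\zeta_2^b(z-\sqrt{2})\sigma_3+\mathcal{O}((z-\sqrt{2})^2))$ from Proposition \ref{conf1}; here the branch point of $j$ generates the polynomial-in-$\alpha$ entries of $E_0^b$ and $E_1^b$, while the $\zeta_2^b$-term yields the $\zeta_2^b$-contribution to $E_1^b$. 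Multiplying the expansions, collecting the $(z-\sqrt{2})^0$ and $(z-\sqrt{2})^1$ terms, and using $\nu^{-1}(n^{\frac{2}{3}}(\zeta^b(z)-\zeta^b(\sqrt{2})))^{\frac{1}{4}}|_{z=\sqrt{2}}=(n^{\frac{2}{3}}\zeta_1^b(2\sqrt{2}+s_{n\theta}))^{\frac{1}{4}}$ together with its reciprocal, then gives $E_0^b$ and $E_1^b$ in the stated form.

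The hard part will be the bookkeeping in this last step. All the $\alpha$-dependence sits in $\widehat{\mathcal{D}}(z)^{-\sigma_3}$, which is sandwiched between the $\nu$-matrix and the non-diagonal constant $\frac{1}{\sqrt{2}}\begin{bmatrix}1&-1\\1&1\end{bmatrix}$ and so does not commute through; one therefore has to track carefully how the half-integer powers of $(z-\sqrt{2})$ in $\widehat{\mathcal{D}}(z)^{\pm1}$ (from the square-root branch point of $j$ at $w=1$) combine with the quarter-integer powers of $\nu(z)$ and of the $\zeta^b$-factor, so that on the one hand all fractional powers cancel---consistently with the analyticity just established---and on the other hand the precise cubic-in-$\alpha$ entries of $E_1^b$ emerge. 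The first two steps are conceptually routine verifications.
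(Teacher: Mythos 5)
Your three-step architecture — trivial jump across the cut, removability of the residual singularity, Taylor expansion to read off $E_0^b,E_1^b$ — is exactly the structure of the paper's proof, and the first two steps are carried out correctly. Your jump identities check out: $J_P\,(z-\sqrt{2})_{+}^{\frac{\alpha}{2}\sigma_3}=(z-\sqrt{2})_{-}^{\frac{\alpha}{2}\sigma_3}\bigl[\begin{smallmatrix}0&1\\-1&0\end{smallmatrix}\bigr]$ and $\bigl[\begin{smallmatrix}0&1\\-1&0\end{smallmatrix}\bigr]\mathcal{C}=\mathcal{C}\,\e^{-\im\frac{\pi}{2}\sigma_3}$ are both correct, and they indeed combine with the jump of $(n^{\frac{2}{3}}(\zeta^b(z)-\zeta^b(\sqrt{2})))^{\frac{1}{4}\sigma_3}$ to give net jump $I$. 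Note, however, that the jump of that last factor in the usual $(+)=(-)\times(\text{jump})$ convention is $\e^{+\im\frac{\pi}{2}\sigma_3}$, not $\e^{-\im\frac{\pi}{2}\sigma_3}$; your argument still closes because $\e^{-\im\frac{\pi}{2}\sigma_3}\zeta_+^{\frac{1}{4}\sigma_3}=\zeta_-^{\frac{1}{4}\sigma_3}$, but the phrase ``$\e^{-\im\frac{\pi}{2}\sigma_3}$ is exactly the multiplicative jump'' is backwards from the convention used throughout the paper and would confuse a reader. Your removability step is actually a little sharper than the paper's: you show $E^b$ is \emph{bounded} near $\sqrt{2}$ by exhibiting the cancellations (using $\widehat{\mathcal{D}}(\sqrt{2})=1$ together with the square-root branch of $j$ at $w=1$ to get $\widehat{\mathcal{D}}-\widehat{\mathcal{D}}^{-1}=\mathcal{O}((z-\sqrt{2})^{1/2})$), whereas the paper settles for the weaker but equally sufficient bound $|z-\sqrt{2}|^{1/2}\|E^b(z)\|=\mathcal{O}(1)$.

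The gap is in the third step: you describe how one \emph{would} obtain $E_0^b$ and $E_1^b$ (expand $\nu^{\pm1}$, $\widehat{\mathcal{D}}^{\pm1}$, and the $\zeta^b$-factor; observe that the $\frac{1}{\sqrt{2}}\bigl[\begin{smallmatrix}1&1\\-1&1\end{smallmatrix}\bigr]\e^{-\im\frac{\pi}{4}\sigma_3}$ in the Puiseux series for $P(z)(z-\sqrt{2})^{\frac{\alpha}{2}\sigma_3}$ annihilates $\mathcal{C}$; collect integer-power terms), but you do not actually carry it out and you explicitly flag the bookkeeping as ``the hard part.'' Since the explicit matrix entries of $E_0^b$ and in particular the cubic-in-$\alpha$ entries of $E_1^b$ are the substance of the Proposition (they feed directly into $M_1,M_2,M_3$ in RHP \ref{localb} and the Laurent computations of Sections \ref{sec5}--\ref{sec6}), a proof must produce them. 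The paper does this by writing out the Puiseux expansion of $P(z)(z-\sqrt{2})^{\frac{\alpha}{2}\sigma_3}$ to order $\omega^{9/4}$ and then multiplying by $\mathcal{C}(\cdot)^{\frac{1}{4}\sigma_3}$; you should execute at least the first two nontrivial orders of that expansion rather than assert that it ``gives $E_0^b$ and $E_1^b$ in the stated form.''
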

\begin{proof} Seeing that $\zeta_1^b>0$ for $n\geq n_0$, compare Proposition \ref{conf1}, the Taylor expansion of $\zeta^b(z)-\zeta^b(\sqrt{2})$ combined with RHP \ref{outerArno} yields for $z\in(-\sqrt{2}-s_{n\theta},\sqrt{2})\cap\mathbb{D}_{\epsilon}(\sqrt{2})$,
\begin{align*}
	E_+^b(z)=&\,\lim_{\delta\downarrow 0}E^b(z+\im\delta)\\
	=&\,P_-(z)|z-\sqrt{2}|^{\frac{\alpha}{2}\sigma_3}\begin{bmatrix}0 & 1\\ -1 & 0\end{bmatrix}\e^{\im\frac{\pi}{2}\alpha\sigma_3}\e^{\im\frac{\pi}{4}\sigma_3}\frac{1}{\sqrt{2}}\begin{bmatrix}1 & -1\\ 1 & 1\end{bmatrix}\e^{\im\frac{\pi}{2}\sigma_3}\Big(n^{\frac{2}{3}}\big(\zeta^b(z)-\zeta^b(\sqrt{2})\big)\Big)_-^{\frac{1}{4}\sigma_3}\\
	=&\,P_-(z)(z-\sqrt{2})_-^{\frac{\alpha}{2}\sigma_3}\e^{\im\frac{\pi}{4}\sigma_3}\frac{1}{\sqrt{2}}\begin{bmatrix}1 & -1\\ 1 & 1\end{bmatrix}\Big(n^{\frac{2}{3}}\big(\zeta^b(z)-\zeta^b(\sqrt{2})\big)\Big)_-^{\frac{1}{4}\sigma_3}=\lim_{\delta\downarrow 0}E^b(z-\im\delta)=E_-^b(z).
\end{align*}
Thus, $z\mapsto E^b(z)$ extends analytically to $\mathbb{D}_{\epsilon}(\sqrt{2})\setminus\{\sqrt{2}\}$ and since $|z-\sqrt{2}|^{\frac{1}{2}}\|E^b(z)\|$ remains bounded as $z\rightarrow \sqrt{2}$, see \eqref{a:18},\eqref{a:19} and \eqref{a:22}, $z\mapsto E^b(z)$ extends analytically to all of $\mathbb{D}_{\epsilon}(\sqrt{2})$. Expansion \eqref{a:23} affirms the same local analyticity and it follows from straightforward calculations using the Puiseux expansion
\begin{align*}
	P(z)&(z-\sqrt{2})^{\frac{\alpha}{2}\sigma_3}=\chi^{\sigma_3}\e^{\im\frac{\pi}{4}\sigma_3}\frac{1}{\sqrt{2}}\bigg\{\omega^{-\frac{1}{4}}\begin{bmatrix}\varsigma(\omega) & 0\smallskip\\ \varsigma(\omega) & 0\end{bmatrix}+\omega^{\frac{1}{4}}\begin{bmatrix}0 & -\iota(\omega)-\alpha\varsigma(\omega)\smallskip\\ 0 & \ \ \iota(\omega)-\alpha\varsigma(\omega)\end{bmatrix}+\omega^{\frac{3}{4}}\begin{bmatrix}\ \ \,\alpha\iota(\omega)+\frac{1}{2}\alpha^2\varsigma(\omega) & 0\smallskip\\ -\alpha\iota(\omega)+\frac{1}{2}\alpha^2\varsigma(\omega) & 0\end{bmatrix}\\
	&\,+\omega^{\frac{5}{4}}\begin{bmatrix}0 & -\frac{1}{2}\alpha^2\iota(\omega)+\frac{1}{6}\alpha(1-\alpha^2)\varsigma(\omega)\smallskip\\
	0 & \ \ \frac{1}{2}\alpha^2\iota(\omega)+\frac{1}{6}\alpha(1-\alpha^2)\varsigma(\omega)\end{bmatrix}+\omega^{\frac{7}{4}}\begin{bmatrix}
	-\frac{1}{6}\alpha(1-\alpha^2)\iota(\omega)-\frac{1}{24}\alpha^2(4-\alpha^2)\varsigma(\omega) &0\smallskip\\
	\ \,\,\frac{1}{6}\alpha(1-\alpha^2)\iota(\omega)-\frac{1}{24}\alpha^2(4-\alpha^2)\varsigma(\omega)&0\end{bmatrix}\\
	&+\omega^{\frac{9}{4}}\begin{bmatrix}0&\,\,\,\,\frac{1}{24}\alpha^2(4-\alpha^2)\iota(\omega)-\alpha(\frac{3}{40}-\frac{1}{12}\alpha^2+\frac{1}{120}\alpha^4)\varsigma(\omega)\smallskip\\
	0&-\frac{1}{24}\alpha^2(4-\alpha^2)\iota(\omega)-\alpha(\frac{3}{40}-\frac{1}{12}\alpha^2+\frac{1}{120}\alpha^4)\varsigma(\omega)\end{bmatrix}+\mathcal{O}\big(\omega^{\frac{11}{4}}\big)\Bigg\}\frac{1}{\sqrt{2}}\begin{bmatrix}1 & 1\\ -1 & 1 \end{bmatrix}\e^{-\im\frac{\pi}{4}\sigma_3},
\end{align*}
for $\omega=\frac{z-\sqrt{2}}{2\sqrt{2}+s_{n\theta}}\downarrow 0$ and where $\omega\mapsto\varsigma(\omega):=(1+\omega)^{\frac{1}{4}}$ and $\omega\mapsto\iota(\omega):=(1+\omega)^{-\frac{1}{4}}$ are analytic at $\omega=0$.
\end{proof}
Having established the analytic properties of $z\mapsto E^b(z)$ we now summarize the relevant analytic and asymptotic properties of $z\mapsto M(z)$ as defined in \eqref{a:21}.
\begin{problem}[Local RHP at $z=\sqrt{2}$]\label{localb} Let $s\in\mathbb{R},\alpha>-1,\beta\notin(-\infty,0),\theta\in[0,1]$ and $n\geq n_0$ so that $\sqrt{2}-s_{n\theta}\in\mathbb{D}_{\epsilon}(\sqrt{2})$ with $\epsilon>0$ small. The function $M(z)=M(z;s,\alpha,\beta,n,\theta)\in\mathbb{C}^{2\times 2}$ defined in \eqref{a:21} has the following properties:
\begin{enumerate}
	\item[(1)] $z\mapsto M(z)$ is analytic for $z\in\mathbb{D}_{\epsilon}(\sqrt{2})\setminus\Sigma_S$, possibly after a local contour deformation near $z=\sqrt{2}$ in Figure \ref{fig1}, and extends continuously to the closure of $\mathbb{D}_{\epsilon}(\sqrt{2})\setminus\Sigma_S$ away from $z=\sqrt{2}$.
	\item[(2)] The non-tangential limiting values $M_{\pm}(z)$ for $z\in\Sigma_S\setminus\{\sqrt{2}\}$ near $z=\sqrt{2}$ obey
	\begin{equation*}
		M_+(z)=M_-(z)G_S(z),
	\end{equation*}
	with $G_S(z)=G_S(z;s,\alpha,\beta,n,\theta)$ exactly as specified in condition $(2)$ of RHP \ref{opRHP}.
	\item[(3)] As $z\rightarrow\sqrt{2},z\notin\Sigma_S$, $z\mapsto M(z)$ is weakly singular, however $\mathbb{D}_{\epsilon}(\sqrt{2})\ni z\mapsto S(z)M(z)^{-1}$ is analytic.
	\item[(4)] As $n\rightarrow\infty$ with $s\in\mathbb{R},\alpha>-1,\beta\notin(-\infty,0)$, the function $M(z)$ in \eqref{a:21} matches onto $P(z)$ in \eqref{a:19} as follows,
	\begin{equation}\label{a:25}
		M(z)=\left\{I+\sum_{k=1}^3M_k(z;s,\alpha,\beta,n,\theta)n^{-\frac{k}{3}}+\mathcal{O}\big(n^{-\frac{4}{3}}\big)\right\}P(z),
	\end{equation}
	uniformly in $0<r_1\leq|z-\sqrt{2}|\leq r_2<\frac{\epsilon}{2}$ and $\theta\in[0,1]$ for any fixed $r_1,r_2$. The coefficients $M_k(z)=M_k(z;s,\alpha,\beta,n,\theta)$ in \eqref{a:25} are
	bounded in $n$ and equal, with the shorthand $\hbar(z):=\zeta^b(z)-\zeta^b(\sqrt{2})$,
	\begin{align*}
		M_1(z)=&\,E_{\ast}^b(z)\begin{bmatrix}0 &1\smallskip\\ 0 & 0\end{bmatrix}E_{\ast}^{b}(z)^{-1}\frac{Q_1^{12}}{\hbar(z)}+D_1(z),\\
		M_2(z)=&\,E_{\ast}^b(z)\begin{bmatrix}1 & 0\\
		0 & -1\end{bmatrix}E_{\ast}^b(z)^{-1}\frac{Q_1^{11}}{\hbar(z)}+D_2(z)+E_{\ast}^b(z)\begin{bmatrix}0 & 1\\ 0 & 0\end{bmatrix}E_{\ast}^b(z)^{-1}\frac{D_1(z)\,Q_1^{12}}{\hbar(z)},\\
		M_3(z)=&\,E_{\ast}^b(z)\begin{bmatrix}0 & 0\\ 1 &0\end{bmatrix}E_{\ast}^b(z)^{-1}\frac{Q_1^{21}}{\hbar(z)}+E_{\ast}^b(z)\begin{bmatrix}0 & 1\\ 0 & 0\end{bmatrix}E_{\ast}^b(z)^{-1}\frac{Q_2^{12}}{\hbar^2(z)}+D_3(z)\\
		&\hspace{0.5cm}+E_{\ast}^b(z)\begin{bmatrix}0 & 1\\ 0 & 0\end{bmatrix}E_{\ast}^b(z)^{-1}\frac{D_2(z)\,Q_1^{12}}{\hbar(z)}+E_{\ast}^b(z)\begin{bmatrix}1 & 0\\
		0 & -1\end{bmatrix}E_{\ast}^b(z)^{-1}\frac{D_1(z)\,Q_1^{11}}{\hbar(z)}.
	\end{align*}
	Here $E_{\ast}^b(z):=E^b(z)n^{-\frac{1}{6}\sigma_3}$, we write $Q_k^{ij}=Q_k^{ij}(n^{\frac{2}{3}}\zeta^b(\sqrt{2}),\alpha,\beta)$ for the scalar entries of the coefficients $Q_k(x,\alpha,\beta)$ in condition $(4)$ of RHP \ref{YattRHP} and $D_k(z)$ is made explicit in \eqref{nasty3}.
\end{enumerate}
\end{problem}
The jump behavior in condition $(2)$ above follows from RHP \ref{YattRHP}, condition $(2)$, the fact that $z\mapsto E^b(z)$ is analytic at $z=\sqrt{2}$ and the definition of $z\mapsto\zeta^b(z)$ in \eqref{a:20}, compare Proposition \ref{conf1}. Likewise, the singular behavior of $z\mapsto M(z)$ at $z=\sqrt{2}$ is a consequence of condition $(3)$ in RHP \ref{YattRHP} and of the structure of $M(z)$ in \eqref{a:21}. The matching \eqref{a:25} is less straightforward and we devote Appendix \ref{tedious} to its derivation.

\begin{rem}\label{unif} While \eqref{a:25} is uniform in $(s,\alpha,\beta)\in\mathbb{R}\times(-1,\infty)\times(\mathbb{C}\setminus(-\infty,0))$ on compact sets, once $\alpha=0$, \cite[Section $4.5$]{BCI} establishes uniformity of \eqref{a:25} also for $s\geq s_0$. This will be useful later on.
\end{rem}

\subsection{The local parametrix at $z=-\sqrt{2}-s_{n\theta}$} The construction near $z=-\sqrt{2}-s_{n\theta}$ is standard and involves the Airy parametrix: let $\textnormal{Ai}(\zeta)$ denote the Airy function, $\textnormal{Ai}'(\zeta)$ its derivative and assemble, with the help of Figure \ref{fig2},
\begin{equation}\label{a:26}
	A(\zeta):=\sqrt{2\pi}\,\e^{-\im\frac{\pi}{4}}\begin{bmatrix}\textnormal{Ai}(\zeta) & \e^{\im\frac{\pi}{3}}\textnormal{Ai}(\e^{-\im\frac{2\pi}{3}}\zeta)\smallskip\\ \textnormal{Ai}'(\zeta) & \e^{-\im\frac{\pi}{3}}\textnormal{Ai}'(\e^{-\im\frac{2\pi}{3}}\zeta)\end{bmatrix}\begin{cases}I,&\zeta\in\Omega_1\\ 
	\bigl[\begin{smallmatrix}1 & 0\\ -1 & 1\end{smallmatrix}\bigr],&\zeta\in\Omega_2\\ 
	\bigl[\begin{smallmatrix}1 & -1\\ 0 & 1\end{smallmatrix}\bigr],&\zeta\in\Omega_4\\
	\bigl[\begin{smallmatrix}1 & -1\\ 0 & 1\end{smallmatrix}\bigr]\bigl[\begin{smallmatrix}1 & 0\\ 1 & 1\end{smallmatrix}\bigr],&\zeta\in\Omega_3
	\end{cases}.
\end{equation}
Then $A(\zeta)$ is uniquely characterized by the below four properties.
\begin{problem}\label{Air} The function $A(\zeta)\in\mathbb{C}^{2\times 2}$ defined in \eqref{a:26} is such that:
\begin{enumerate}
	\item[(1)] $\zeta\mapsto A(\zeta)$ is analytic for $\zeta\in\mathbb{C}\setminus\Sigma_Q$ with $\Sigma_Q$ shown in Figure \ref{fig2}. On $\Sigma_Q$, $A(\zeta)$ admits continuous limiting values $A_{\pm}(\zeta)$ as one approaches $\Sigma_Q$ from either sides of $\mathbb{C}\setminus\Sigma_Q$.
	\item[(2)] The non-tangential limiting values $A_{\pm}(\zeta)$ on $\Sigma_Q\ni\zeta$ satisfy $A_+(\zeta)=A_-(\zeta)G_A(\zeta)$ with $G_A(\zeta)=G_Q(\zeta;0,1)$ as in condition $(2)$ of RHP \ref{YattRHP}. 
	\item[(3)] $\zeta\mapsto A(\zeta)$ is bounded as $\zeta\rightarrow 0$ and $\zeta\notin\Sigma_Q$.
	\item[(4)] As $\zeta\rightarrow\infty$ and $\zeta\notin\Sigma_Q$, $A(\zeta)$ is normalized as follows,
	\begin{equation*}
		A(\zeta)=\Bigg\{I-\frac{7}{48\zeta}\begin{bmatrix}0 & 0\\ 1 & 0\end{bmatrix}+\frac{5}{48\zeta^2}\begin{bmatrix}0 & 1\\ 0 & 0\end{bmatrix}+\mathcal{O}\big(\zeta^{-3}\big)\Bigg\}\zeta^{-\frac{1}{4}\sigma_3}\frac{1}{\sqrt{2}}\begin{bmatrix}1 & 1\\ -1 & 1\end{bmatrix}\e^{-\im\frac{\pi}{4}\sigma_3}\e^{-\frac{2}{3}\zeta^{\frac{3}{2}}\sigma_3}.
	\end{equation*}
\end{enumerate}
\end{problem}
Equipped with $A(\zeta)$ in \eqref{a:26}, we fix $\epsilon>0$ small, let $s\in\mathbb{R},\theta\in[0,1]$ and consider the map
\begin{equation}\label{a:27}
	\mathbb{D}_{\epsilon}(a)\ni z\mapsto\zeta^a(z):=\e^{\im\pi}\left[\frac{3\pi}{2}\int_{-\sqrt{2}}^{z+s_{n\theta}}h_{V_{\theta}}(w)\big(w^2-2\big)^{\frac{1}{2}}\d w\right]^{\frac{2}{3}},\ \ \ \ \ \ \theta\in[0,1],
\end{equation}
with path of integration off $E$ and in the domain of analyticity of $h_{V_{\theta}}$, and the principal branch for the complex square root with cut on $E$.
\begin{prop}\label{conf2} The map $z\mapsto\zeta^a(z)$ in \eqref{a:27}, defined with parameters $s\in\mathbb{R},\theta\in[0,1]$ and $n\geq n_0$, is conformal near $z=-\sqrt{2}-s_{n\theta}$ for any $\theta\in[0,1]$. Moreover, as $\omega=z+\sqrt{2}+s_{n\theta}\rightarrow 0$,
\begin{align*}
	\zeta^a(z)=\sqrt{2}\big(\pi h_{V_{\theta}}(-\sqrt{2})\big)^{\frac{2}{3}}\omega&\Bigg[1+\frac{2}{5}\left\{\frac{h_{V_{\theta}}'(-\sqrt{2})}{h_{V_{\theta}}(-\sqrt{2})}-\frac{1}{4\sqrt{2}}\right\}\omega+\mathcal{O}\big(\omega^2\big)\Bigg].
\end{align*}
\end{prop}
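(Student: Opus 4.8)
The plan is to reduce Proposition \ref{conf2} to a local Puiseux expansion of the integrand at $w=-\sqrt2$, exactly as in the proof of Proposition \ref{conf1} for the right endpoint; the only genuinely new ingredient is the bookkeeping of branches and orientation at the \emph{left} endpoint, which the prefactor $\e^{\im\pi}$ in \eqref{a:27} is designed to absorb.

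First I localize. Putting $\sigma:=-(w+\sqrt2)$ one has $w^2-2=2\sqrt2\,\sigma\big(1+\tfrac{\sigma}{2\sqrt2}\big)$, and since the branch convention in \eqref{a:27} forces $(w^2-2)^{\frac12}<0$ for $w<-\sqrt2$ (consistently with the square root used in \eqref{a:16}), while $h_{V_\theta}(-\sqrt2)>0$ by Lemma \ref{convex} together with Assumption \ref{1-cut}(ii), the integrand near $w=-\sqrt2$ factors as
\begin{equation*}
	h_{V_\theta}(w)\big(w^2-2\big)^{\frac12}=-2^{\frac34}h_{V_\theta}(-\sqrt2)\,\sigma^{\frac12}\,\Phi(\sigma),\qquad \Phi(\sigma):=1+\Big[\tfrac{1}{4\sqrt2}-\tfrac{h_{V_\theta}'(-\sqrt2)}{h_{V_\theta}(-\sqrt2)}\Big]\sigma+\mathcal{O}(\sigma^2),
\end{equation*}
with $\Phi$ analytic at $0$ and $\Phi(0)=1$, using $h_{V_\theta}(-\sqrt2-\sigma)=h_{V_\theta}(-\sqrt2)\big(1-\tfrac{h_{V_\theta}'(-\sqrt2)}{h_{V_\theta}(-\sqrt2)}\sigma+\mathcal{O}(\sigma^2)\big)$ and $\big(1+\tfrac{\sigma}{2\sqrt2}\big)^{\frac12}=1+\tfrac{\sigma}{4\sqrt2}+\mathcal{O}(\sigma^2)$. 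Substituting $w=-\sqrt2-\sigma$ in the integral of \eqref{a:27}, so that $\sigma$ runs from $0$ to $-\omega$ with $\omega=z+\sqrt2+s_{n\theta}$, and integrating term by term ($\int\sigma^{\frac12}=\tfrac23\sigma^{\frac32}$, $\int\sigma^{\frac32}=\tfrac25\sigma^{\frac52}$) shows that $\tfrac{3\pi}{2}\int_{-\sqrt2}^{z+s_{n\theta}}h_{V_\theta}(w)(w^2-2)^{\frac12}\,\d w$ equals $(-\omega)^{\frac32}$ times a function analytic in $\omega$ at $0$ with nonzero value $\pi h_{V_\theta}(-\sqrt2)2^{\frac34}$ there. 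Since raising this to the power $\tfrac23$ with the principal branch turns $(-\omega)^{\frac32}$ back into $(-\omega)$ — the $\tfrac23$-power unwinds the $\tfrac32$-power branch point, exactly as in Proposition \ref{conf1} — the map $z\mapsto\zeta^a(z)$ extends analytically to a full neighbourhood of $z=-\sqrt2-s_{n\theta}$, vanishes there, and has nonzero derivative there; hence it is conformal. The uniformity in $\theta\in[0,1]$ is automatic because $h_{V_\theta}(-\sqrt2)\geq\min\{\pi^{-1},h_V(-\sqrt2)\}>0$ and $h_{V_\theta}'(-\sqrt2)=\theta h_V'(-\sqrt2)$ stay bounded.

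For the quantitative expansion I carry two terms through the above: integrating $-2^{\frac34}h_{V_\theta}(-\sqrt2)\sigma^{\frac12}\Phi(\sigma)$ over $\sigma\in(0,-\omega)$, multiplying by $\tfrac{3\pi}{2}$ and taking the $\tfrac23$-power (so that $\tfrac23\cdot\tfrac35=\tfrac25$) gives $\sqrt2\big(\pi h_{V_\theta}(-\sqrt2)\big)^{\frac23}(-\omega)\big(1+\tfrac25[\tfrac{1}{4\sqrt2}-\tfrac{h_{V_\theta}'(-\sqrt2)}{h_{V_\theta}(-\sqrt2)}](-\omega)+\mathcal{O}(\omega^2)\big)$; multiplying by $\e^{\im\pi}=-1$ and re-expanding in $\omega$ yields
\begin{equation*}
	\zeta^a(z)=\sqrt2\big(\pi h_{V_\theta}(-\sqrt2)\big)^{\frac23}\,\omega\bigg[1+\frac25\Big\{\frac{h_{V_\theta}'(-\sqrt2)}{h_{V_\theta}(-\sqrt2)}-\frac{1}{4\sqrt2}\Big\}\omega+\mathcal{O}\big(\omega^2\big)\bigg],
\end{equation*}
as asserted. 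I expect the only delicate point to be this final sign/branch reconciliation at the left endpoint — verifying that the cut-on-$E$ convention for $(w^2-2)^{\frac12}$ on $(-\infty,-\sqrt2)$, the $\e^{\im\pi}$ prefactor and the principal $\tfrac23$-power combine into a genuinely analytic map with the stated real, positive leading coefficient; this is handled just as the analogous (already established) step at the right endpoint in Proposition \ref{conf1}, the role of $\e^{\im\pi}$ being exactly to compensate for the reversed orientation. Everything else is a routine Taylor/Puiseux computation of the kind carried out there.
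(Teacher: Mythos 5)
The paper does not give a proof of Proposition \ref{conf2} (or \ref{conf1}); both are stated as straightforward Puiseux expansions, so there is no in-paper argument to compare against. Your strategy — substitute $\sigma=-(w+\sqrt2)$, Puiseux-expand the integrand at $\sigma=0$, integrate term by term, and take the $\tfrac23$-power with the $\e^{\im\pi}$ factor fixing the overall sign — is exactly the intended computation, and your arithmetic is correct: the $\Phi(\sigma)$ coefficient $\tfrac{1}{4\sqrt2}-\tfrac{h'_{V_\theta}(-\sqrt2)}{h_{V_\theta}(-\sqrt2)}$, the prefactor $\tfrac{3\pi}{2}\cdot 2^{3/4}h_{V_\theta}(-\sqrt2)\cdot\tfrac23=\pi2^{3/4}h_{V_\theta}(-\sqrt2)$, the $\tfrac23\cdot\tfrac35=\tfrac25$ bookkeeping, and the final sign flip all check out and reproduce the stated expansion.

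The one place where your write-up is not quite right is the sentence asserting that ``raising to the power $\tfrac23$ \emph{with the principal branch} turns $(-\omega)^{3/2}$ back into $-\omega$.'' This identity holds with principal branches only in the sector $\arg(-\omega)\in(-\tfrac{2\pi}{3},\tfrac{2\pi}{3})$; outside it the principal $\tfrac23$-power jumps (it is discontinuous across $\arg\omega=\pm\tfrac{\pi}{3}$), and for $\omega$ approaching $(0,\infty)$ the literal principal branch of $[\,\cdot\,]^{2/3}$ applied to $\tfrac{3\pi}{2}\int$ gives a phase $\e^{\pm\im\pi/3}$ rather than a real number. What actually makes $\zeta^a$ single-valued and analytic in a full punctured neighbourhood of $\omega=0$ is that \eqref{a:27} leaves the branch of the $\tfrac23$-power unspecified, and the correct reading is the branch obtained by analytic continuation from $\omega\in(-\epsilon,0)$, where $\tfrac{3\pi}{2}\int$ is real positive. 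Along that continuation one tracks $\arg(\tfrac{3\pi}{2}\int)=\tfrac32\arg(-\omega)$ continuously through $(-\tfrac{3\pi}{2},\tfrac{3\pi}{2})$, so that $\tfrac23$ of it ranges over $(-\pi,\pi)$ and, after multiplying by $\e^{\im\pi}$, matches $\arg\omega$ modulo $2\pi$ from both half-planes; the two boundary values at $\omega>0$ then coincide (both real positive), proving analyticity and conformality. In other words, your conclusion and your instinct (``the $\tfrac23$-power unwinds the $\tfrac32$-power branch point, the $\e^{\im\pi}$ compensates the orientation reversal'') are both correct, but the phrase ``principal branch'' should be replaced by ``the branch analytically continued from $\omega<0$''; as stated, your justification of the key step is literally false in part of the disk even though the claim it is meant to support is true. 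The same caveat applies to the analogous step you cite in Proposition \ref{conf1}.
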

Next, we utilize \eqref{a:27} in the definition of the parametrix at $z=-\sqrt{2}-s_{n\theta}$: let $s\in\mathbb{R},\theta\in[0,1],n\geq n_0$ and $\epsilon>0$ small. Using $A(\zeta)$ of RHP \ref{Air}, consider
\begin{equation}\label{a:28}
	N(z;s,\alpha,n,\theta):=E^a(z)A\Big(n^{\frac{2}{3}}\e^{-\im\pi}\zeta^a(z)\Big)\sigma_3\exp\left[\frac{2n}{3}\big(\e^{-\im\pi}\zeta^a(z)\big)^{\frac{3}{2}}\sigma_3\right](\sqrt{2}-z)^{-\frac{\alpha}{2}\sigma_3},
\end{equation}
for $z\in\mathbb{D}_{\epsilon}(-\sqrt{2}-s_{n\theta})\setminus\Sigma_S$ and $\theta\in[0,1]$. All fractional exponents in \eqref{a:28} are taken with their principal branch and the multiplier $E^a(z)=E^a(z;s,\alpha,n,\theta)$ equals
\begin{equation}\label{a:29}
	E^a(z):=P(z)(\sqrt{2}-z)^{\frac{\alpha}{2}\sigma_3}\e^{\im\frac{\pi}{4}\sigma_3}\frac{\sigma_3}{\sqrt{2}}\begin{bmatrix}1 & -1\\ 1 & 1\end{bmatrix}\Big(n^{\frac{2}{3}}\e^{-\im\pi}\zeta^a(z)\Big)^{\frac{1}{4}\sigma_3},\ \ z\in\mathbb{D}_{\epsilon}(-\sqrt{2}-s_{n\theta})\setminus[-\sqrt{2}-s_{n\theta},\sqrt{2}],
\end{equation}
with $P(z)=P(z;s,\alpha,n,\theta)$ from \eqref{a:19} and principal branches. Here is the analogue of \eqref{a:23} for $E^a(z)$.
\begin{prop} Let $s\in\mathbb{R},\alpha>-1,\theta\in[0,1]$ and $n\geq n_0$. The function $z\mapsto E^a(z)$ defined in \eqref{a:29} is analytic at $z=-\sqrt{2}-s_{n\theta}$ for all $\theta\in[0,1]$. Moreover, as $z\rightarrow -\sqrt{2}-s_{n\theta}$,
\begin{equation}\label{a:30}
	E^a(z)=E_0^a+E_1^a(z+\sqrt{2}+s_{n\theta})+\mathcal{O}\big((z+\sqrt{2}+s_{n\theta})^2\big),
\end{equation}
with $z$-independent, $2\times 2$ matrix-valued, coefficients
\begin{align*}
	E_0^a=&\,\chi^{\sigma_3}\e^{\im\frac{\pi}{4}\sigma_3}\frac{\sigma_3}{\sqrt{2}}\begin{bmatrix}1 & -1-\alpha\\ 1 & \ \ 1-\alpha\end{bmatrix}\Big(n^{\frac{2}{3}}\sqrt{2}\big(\pi h_{V_{\theta}}(-\sqrt{2})\big)^{\frac{2}{3}}(2\sqrt{2}+s_{n\theta})\Big)^{\frac{1}{4}\sigma_3},\\
	E_1^a=&\,\chi^{\sigma_3}\e^{\im\frac{\pi}{4}\sigma_3}\frac{\sigma_3}{\sqrt{2}}\Bigg(\frac{1}{-2\sqrt{2}-s_{n\theta}}\begin{bmatrix}\frac{1}{4}+\alpha+\frac{1}{2}\alpha^2 & \ \ \,\frac{1}{4}-\frac{1}{12}\alpha-\frac{1}{2}\alpha^2-\frac{1}{6}\alpha^3\smallskip\\
	\frac{1}{4}-\alpha+\frac{1}{2}\alpha^2 & -\frac{1}{4}-\frac{1}{12}\alpha+\frac{1}{2}\alpha^2-\frac{1}{6}\alpha^3\end{bmatrix}+\frac{1}{10}\begin{bmatrix}1 &\ \ 1+\alpha\\ 1 & -1+\alpha\end{bmatrix}\\
	&\hspace{5cm}\times\left\{\frac{h_{V_{\theta}}'(-\sqrt{2})}{h_{V_{\theta}}(-\sqrt{2})}-\frac{1}{4\sqrt{2}}\right\}\Bigg)\Big(n^{\frac{2}{3}}\sqrt{2}\big(\pi h_{V_{\theta}}(-\sqrt{2})\big)^{\frac{2}{3}}(2\sqrt{2}+s_{n\theta})\Big)^{\frac{1}{4}\sigma_3}.
\end{align*}
\end{prop}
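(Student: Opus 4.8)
The plan is to follow the proof of the preceding proposition for $E^b(z)$ near $z=\sqrt 2$, with the modifications dictated by the orientation reversal at the left endpoint and by the rotations $\e^{\pm\im\pi}$ that enter \eqref{a:27}--\eqref{a:29}. Write $z_{\ast}:=-\sqrt 2-s_{n\theta}$. First I would show that $z\mapsto E^a(z)$ extends analytically across the arc $(-\sqrt 2-s_{n\theta},\sqrt 2)$ in a punctured disk about $z_{\ast}$. Starting from \eqref{a:29}, I compute $E^a_+(z)$ for $z\in(-\sqrt 2-s_{n\theta},\sqrt 2)\cap\mathbb{D}_\epsilon(z_{\ast})$, substituting the jump $P_+(z)=P_-(z)\bigl[\begin{smallmatrix}0&|z-\sqrt 2|^{\alpha}\\ -|z-\sqrt 2|^{-\alpha}&0\end{smallmatrix}\bigr]$ from condition $(2)$ of RHP \ref{outerArno}, the branch relations for $(\sqrt 2-z)^{\frac{\alpha}{2}\sigma_3}$ — which is in fact single-valued near $z_{\ast}$, its cut lying on $[\sqrt 2,\infty)$ — and the branch relations for $(n^{2/3}\e^{-\im\pi}\zeta^a(z))^{\frac14\sigma_3}$. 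As in the $E^b$-case, the antidiagonal factor, the phases $\e^{\im\frac\pi2\alpha\sigma_3},\e^{\im\frac\pi2\sigma_3}$ and the reflection $\sigma_3$ sitting inside $\tfrac{\sigma_3}{\sqrt 2}\bigl[\begin{smallmatrix}1&-1\\1&1\end{smallmatrix}\bigr]$ collapse, leaving $E^a_+(z)=E^a_-(z)$, so that $E^a$ is analytic on $\mathbb{D}_\epsilon(z_{\ast})\setminus\{z_{\ast}\}$. Since by \eqref{a:18},\eqref{a:19},\eqref{a:29} the only singular factor is the $(z-z_{\ast})^{-1/4}$ hidden in $\nu(z)$, matched by the $(z-z_{\ast})^{1/4}$ produced by $(\e^{-\im\pi}\zeta^a(z))^{1/4}$ through Proposition \ref{conf2}, one has $\|E^a(z)\|=\mathcal{O}(|z-z_{\ast}|^{-1/2})$ as $z\to z_{\ast}$; this growth being slower than $|z-z_{\ast}|^{-1}$, Riemann's removable-singularity theorem upgrades the analyticity to all of $\mathbb{D}_\epsilon(z_{\ast})$.

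For the expansion \eqref{a:30} I would then compute the Puiseux expansion of $P(z)(\sqrt 2-z)^{\frac{\alpha}{2}\sigma_3}$ about $z_{\ast}$ in powers of $(z-z_{\ast})^{1/4}$ — the left-endpoint analogue of the long display in the proof of the preceding proposition, but needed only through order $(z-z_{\ast})^{5/4}$ since \eqref{a:30} terminates at $(z-z_{\ast})^1$. The two relevant ingredients are $\nu(z)=\bigl(\tfrac{z-\sqrt 2}{z-z_{\ast}}\bigr)^{1/4}$, whose Taylor coefficients about $z_{\ast}$ are elementary and supply the $(z-z_{\ast})^{\mp1/4}$ behaviour, and $\mathcal{D}(z)^{-\sigma_3}(\sqrt 2-z)^{\frac{\alpha}{2}\sigma_3}$, which is analytic in $(z-z_{\ast})$ at $z_{\ast}$ with leading coefficients read off from \eqref{a:18}. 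Multiplying the result on the right by $(n^{2/3}\e^{-\im\pi}\zeta^a(z))^{\frac14\sigma_3}$ and inserting the local expansion of $\zeta^a$ from Proposition \ref{conf2}, the off-diagonal pattern of the quarter-powers is arranged exactly so that they cancel column-by-column against the $(z-z_{\ast})^{\pm1/4}$ coming from the $\sigma_3$-power, leaving an honest integer power series; its constant and linear coefficients, after simplification with $\chi=\bigl(\tfrac{2\sqrt 2+s_{n\theta}}{4}\bigr)^{\alpha/2}$, are precisely the stated $E_0^a$ and $E_1^a$. Uniformity of the $\mathcal{O}((z-z_{\ast})^2)$ remainder in $\theta\in[0,1]$ and in $(s,\alpha)$ on compacts is inherited from the corresponding uniformity in Proposition \ref{conf2}.

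The main obstacle is entirely the bookkeeping of branches. The left-endpoint parametrix carries an extra $\sigma_3$ inside $\tfrac{\sigma_3}{\sqrt 2}\bigl[\begin{smallmatrix}1&-1\\1&1\end{smallmatrix}\bigr]$, a factor $\sigma_3$ between $A$ and the exponential in \eqref{a:28}, and the rotations $\e^{-\im\pi}$ in both $\zeta^a(z)$ and the Airy variable; a single misplaced $\e^{\im\pi}$ would spoil either the column-by-column cancellation of the quarter-powers or the reality structure of $E_0^a$. So the delicate part is fixing, and then using consistently on each side of the cut, the principal-branch conventions for $\nu(z)$, for the $j$-factor inside $\mathcal{D}(z)$ from \eqref{a:18} (whose argument tends to $-1$ as $z\to z_{\ast}$, where $j$ itself has a square-root branch point), and for $(\e^{-\im\pi}\zeta^a(z))^{1/4}$. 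Once these are pinned down, the computation is the same routine Taylor/Puiseux algebra already carried out for $E^b$, only shorter, and the closed forms for $E_0^a,E_1^a$ drop out by direct comparison.
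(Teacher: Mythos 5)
Your overall strategy — analytic continuation of $E^a$ across the cut by showing $E^a_+=E^a_-$, removal of the singularity at $z_{\ast}:=-\sqrt 2-s_{n\theta}$ via the $\mathcal{O}(|z-z_{\ast}|^{-1/2})$ bound, and then a Puiseux expansion of $P(z)(\sqrt 2-z)^{\frac{\alpha}{2}\sigma_3}$ to read off $E_0^a,E_1^a$ — is exactly what the paper does, and the order you stop at ($\omega^{5/4}$, error $\mathcal{O}(\omega^{7/4})$) is the right cut-off. However, your description of how the Puiseux expansion is assembled contains a genuine error: you assert that $\mathcal{D}(z)^{-\sigma_3}(\sqrt 2-z)^{\frac{\alpha}{2}\sigma_3}$ is analytic at $z_{\ast}$. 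It is not. From \eqref{a:18} we have $\mathcal{D}(z)^{-1}(\sqrt 2-z)^{\frac{\alpha}{2}}=j(w)^{\frac{\alpha}{2}}\cdot(\sqrt 2-z)^{\frac{\alpha}{2}}(z-\sqrt 2)^{-\frac{\alpha}{2}}$ with $w=1+2(z-\sqrt 2)/(2\sqrt 2+s_{n\theta})$; the last ratio is locally constant, but $j$ has a square-root branch point at $w=-1$, i.e.\ at $z=z_{\ast}$ — a fact you yourself note a sentence earlier. Thus $j(w)^{\frac{\alpha}{2}}$ has a nontrivial Puiseux series in $(z-z_{\ast})^{1/2}$ with coefficients polynomial in $\alpha$, and this is exactly where the $\alpha$-dependent entries of $E_0^a$ and the $\alpha^2,\alpha^3$ terms of $E_1^a$ originate. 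Treating $\mathcal{D}(z)^{-\sigma_3}(\sqrt 2-z)^{\frac{\alpha}{2}\sigma_3}$ as a Taylor series in $(z-z_{\ast})$ and letting $\nu$ alone carry the fractional powers would drop all of these contributions: the $-1-\alpha$ and $1-\alpha$ in $E_0^a$ would collapse to $-1$ and $1$, and the $\alpha$-cubics in $E_1^a$ would vanish. Your proposal is therefore internally inconsistent at this step (it acknowledges the $j$-branch point yet declares the factor analytic) and, if carried out as written, would produce the wrong coefficients.

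To repair the plan, replace that sentence with the correct statement: $\mathcal{D}(z)^{-\sigma_3}(\sqrt 2-z)^{\frac{\alpha}{2}\sigma_3}$ has a $(z-z_{\ast})^{1/2}$-Puiseux expansion at $z_{\ast}$ (coming from $j(w)^{\frac{\alpha}{2}}$ near $w=-1$), while $\nu^{\pm1}$ contribute $(z-z_{\ast})^{\mp1/4}\times(\text{analytic})$. The product then naturally gives a Puiseux series in $(z-z_{\ast})^{1/4}$ — precisely the one the paper displays, with $\varsigma(\omega)=(1+\omega)^{1/4},\iota(\omega)=(1+\omega)^{-1/4}$ from $\nu$ and the $\alpha$-polynomials from the $j$-expansion — and only after multiplication by $(n^{2/3}\e^{-\im\pi}\zeta^a(z))^{\frac14\sigma_3}$ and the right-hand constant factor do the quarter powers combine into an integer power series (the already-established analyticity of $E^a$ guaranteeing that the would-be $\omega^{3/2}$ remainder terms vanish). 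With this correction, the rest of your argument goes through as in the paper.
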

\begin{proof} We have $\e^{-\im\pi}\zeta^a(z)>0$ for $z<-\sqrt{2}-s_{n\theta}$, see Proposition \ref{conf2}, and so with RHP \ref{outerArno}, for $z\in(-\sqrt{2}-s_{n\theta},\sqrt{2})\cap\mathbb{D}_{\epsilon}(-\sqrt{2}-s_{n\theta})$,
\begin{align*}
	E_+^a(z)=&\,\lim_{\delta\downarrow 0}E^a(z+\im\delta)\\
	=&\,P_-(z)(\sqrt{2}-z)^{\frac{\alpha}{2}\sigma_3}\begin{bmatrix}0 & 1\\ -1 &0\end{bmatrix}\e^{\im\frac{\pi}{4}\sigma_3}\frac{\sigma_3}{\sqrt{2}}\begin{bmatrix}1 & -1\\ 1 & 1\end{bmatrix}\e^{-\im\frac{\pi}{2}\sigma_3}\Big(n^{\frac{2}{3}}\e^{-\im\pi}\zeta^a(z)\Big)^{\frac{1}{4}\sigma_3}_-\\
	=&\,P_-(z)(\sqrt{2}-z)^{\frac{\alpha}{2}\sigma_3}\e^{\im\frac{\pi}{4}\sigma_3}\frac{\sigma_3}{\sqrt{2}}\begin{bmatrix}1 & -1\\ 1 & 1\end{bmatrix}\Big(n^{\frac{2}{3}}\e^{-\im\pi}\zeta^a(z)\Big)^{\frac{1}{4}\sigma_3}_-=\lim_{\delta\downarrow 0}E^a(z-\im\delta)=E_-^a(z).
\end{align*}
Thus, $z\mapsto E^a(z)$ extends analytically to $\mathbb{D}_{\epsilon}(-\sqrt{2}-s_{n\theta})\setminus\{-\sqrt{2}-s_{n\theta}\}$ and since $|z+\sqrt{2}+s_{n\theta}|^{\frac{1}{2}}\|E^a(z)\|$ remains bounded as $z\rightarrow -\sqrt{2}-s_{n\theta}$, see \eqref{a:18}, \eqref{a:10} and \eqref{a:29}, $z\mapsto E^a(z)$ extends analytically to all of $\mathbb{D}_{\epsilon}(-\sqrt{2}-s_{n\theta})$. The Taylor expansion \eqref{a:30} confirms the same and it follows from straightforward calculations using the Puiseux expansion
\begin{align*}
	P(z)&(\sqrt{2}-z)^{\frac{\alpha}{2}\sigma_3}=\chi^{\sigma_3}\e^{\im\frac{\pi}{4}\sigma_3}\frac{\sigma_3}{\sqrt{2}}\Bigg\{\omega^{-\frac{1}{4}}\begin{bmatrix}\varsigma(\omega) & 0\smallskip\\ \varsigma(\omega)&0\end{bmatrix}+\omega^{\frac{1}{4}}\begin{bmatrix}0 & -\iota(w)-\alpha\varsigma(\omega)\smallskip\\ 0 & \ \ \,\iota(\omega)-\alpha\varsigma(\omega)\end{bmatrix}+\omega^{\frac{3}{4}}\begin{bmatrix}\ \ \,\alpha\iota(\omega)+\frac{1}{2}\alpha^2\varsigma(\omega) & 0\smallskip\\ -\alpha\iota(\omega)+\frac{1}{2}\alpha^2\varsigma(\omega) & 0\end{bmatrix}\\
	&\,\,+\omega^{\frac{5}{4}}\begin{bmatrix}0 & -\frac{1}{2}\alpha^2\iota(\omega)+\frac{1}{6}\alpha(1-\alpha^2)\varsigma(\omega)\smallskip\\ 0 & \ \ \,\frac{1}{2}\alpha^2\iota(\omega)+\frac{1}{6}\alpha(1-\alpha^2)\varsigma(\omega)\end{bmatrix}+\mathcal{O}\big(\omega^{\frac{7}{4}}\big)\Bigg\}\frac{1}{\sqrt{2}}\begin{bmatrix}1 & 1\\ -1 & 1\end{bmatrix}\sigma_3\,\e^{-\im\frac{\pi}{4}\sigma_3},\ \ \omega=\frac{-\sqrt{2}-s_{n\theta}-z}{2\sqrt{2}+s_{n\theta}}\downarrow 0,
\end{align*}
where, as before, $\omega\mapsto\varsigma(\omega)=(1+\omega)^{\frac{1}{4}}$ and $\omega\mapsto\iota(\omega)=(1+\omega)^{-\frac{1}{4}}$.
\end{proof}
Keeping in mind the analytic properties of $z\mapsto E^a(z)$, we now summarize the relevant analytic and asymptotic properties of $z\mapsto N(z)$ as defined in \eqref{a:28}.
\begin{problem}[Local RHP at $z=-\sqrt{2}-s_{n\theta}$]\label{locala} Let $s\in\mathbb{R},\alpha>-1,\theta\in[0,1],n\geq n_0$ and $\epsilon>0$ small. The function $N(z)=N(z;s,\alpha,n,\theta)\in\mathbb{C}$ defined in \eqref{a:28} has the following properties:
\begin{enumerate}
	\item[(1)] $z\mapsto N(z)$ is analytic for $z\in\mathbb{D}_{\epsilon}(-\sqrt{2}-s_{n\theta})\setminus\Sigma_S$, possibly after a local contour deformation near $z=-\sqrt{2}-s_{n\theta}$ in Figure \ref{fig1}, and extends continuously to the closure of $\mathbb{D}_{\epsilon}(-\sqrt{2}-s_{n\theta})\setminus\Sigma_S$ away from $z=-\sqrt{2}-s_{n\theta}$.
	\item[(2)] The non-tangential limiting values $N_{\pm}(z)$ for $z\in\Sigma_S\setminus\{-\sqrt{2}-s_{n\theta}\}$ near $z=-\sqrt{2}-s_{n\theta}$ obey
	\begin{equation*}
		N_+(z)=N_-(z)G_S(z),
	\end{equation*}
	with $G_S(z)=G_S(z;s,\alpha,\beta,n,\theta)$ exactly as specified in condition $(2)$ of RHP \ref{opRHP}.
	\item[(3)] $z\mapsto N(z)$ is bounded as $z\rightarrow -\sqrt{2}-s_{n\theta},z\notin\Sigma_S$ and $\mathbb{D}_{\epsilon}(-\sqrt{2}-s_{n\theta})\ni z\mapsto S(z)N(z)^{-1}$ is analytic.
	\item[(4)] As $n\rightarrow\infty$ with $s\in\mathbb{R},\alpha>-1$, the function $N(z)$ in \eqref{a:28} matches onto $P(z)$ in \eqref{a:19} as follows,
	\begin{equation}\label{a:31}
		N(z)=\left\{I+N_1(z;s,\alpha,n,\theta)n^{-1}+\mathcal{O}\big(n^{-\frac{5}{3}}\big)\right\}P(z),
	\end{equation}
	uniformly in $0<r_1\leq|z+\sqrt{2}+s_{n\theta}|\leq r_2<\frac{\epsilon}{2}$ and $\theta\in[0,1]$ for any fixed $r_1,r_2$. The coefficient $N_1(z)=N_1(z;s,\alpha,n,\theta)$ in \eqref{a:31} is bounded in $n$ and equals
	\begin{equation*}
		N_1(z)=\frac{7}{48}E_{\ast}^a(z)\begin{bmatrix}0 & 0\\ 1 & 0\end{bmatrix}E_{\ast}^a(z)^{-1}\big(\zeta^a(z)\big)^{-1}+\frac{5}{48}E_{\ast}^a(z)\begin{bmatrix}0 & 1\\ 0 & 0\end{bmatrix}E_{\ast}^a(z)^{-1}\big(\zeta^a(z)\big)^{-2}.
	\end{equation*}
	Here, $E_{\ast}^a(z):=E^a(z)n^{-\frac{1}{6}\sigma_3}$.
\end{enumerate}
\end{problem}
The jump behavior in condition $(2)$ above follows from RHP \ref{Air}, condition $(2)$, the fact that $z\mapsto E^a(z)$ is analytic at $z=-\sqrt{2}-s_{n\theta}$ and the definition of $z\mapsto\zeta^a(z)$ in \eqref{a:27}, compare Proposition \ref{conf2}.
\begin{rem}\label{speccon} Utilizing \eqref{a:26}, the solution of RHP \ref{YattRHP} for $\alpha=0$ and $\beta=1$ equals
\begin{equation}\label{a:32}
	Q(\zeta;x,0,1)=\sqrt{2\pi}\,\e^{-\im\frac{\pi}{4}}\begin{bmatrix}1&0\,\smallskip\\ -\frac{1}{4}x^2&1\,\end{bmatrix}\begin{bmatrix}\textnormal{Ai}(\zeta+x) & \e^{\im\frac{\pi}{3}}\textnormal{Ai}\big(\e^{-\im\frac{2\pi}{3}}(\zeta+x)\big)\smallskip\\ \textnormal{Ai}'(\zeta+x) & \e^{-\im\frac{\pi}{3}}\textnormal{Ai}'\big(\e^{-\im\frac{2\pi}{3}}(\zeta+x)\big)\end{bmatrix}\begin{cases}I,&\zeta\in\Omega_1\\ 
	\bigl[\begin{smallmatrix}1 & 0\\ -1 & 1\end{smallmatrix}\bigr],&\zeta\in\Omega_2\\ 
	\bigl[\begin{smallmatrix}1 & -1\\ 0 & 1\end{smallmatrix}\bigr],&\zeta\in\Omega_4\\
	\bigl[\begin{smallmatrix}1 & -1\\ 0 & 1\end{smallmatrix}\bigr]\bigl[\begin{smallmatrix}1 & 0\\ 1 & 1\end{smallmatrix}\bigr],&\zeta\in\Omega_3
	\end{cases},
\end{equation}
where $x\in\mathbb{R}$ is arbitrary. Consequently, compare Lemma \ref{ArnoLax}, with $(\alpha,\beta)=(0,1)$ throughout,
\begin{equation*}
	a=\frac{x^2}{4},\ \ \ b=\frac{x}{2}\bigg(1-\frac{x^3}{8}\bigg),\ \ \ Q_1^{21}=\frac{1}{12}\bigg(x^3-\frac{x^6}{16}-\frac{7}{4}\bigg),\ \ \ Q_2^{12}=\frac{1}{48}\bigg(5-5x^3+\frac{x^6}{8}\bigg),
\end{equation*}
\begin{equation*}
	Q_2^{11}=\frac{1}{192}\bigg(35-\frac{7}{5}x^5+\frac{1}{32}x^8\bigg).
\end{equation*}
\end{rem}
The outcomes of RHP \ref{outerArno}, \ref{localb} and \ref{locala} conclude our parametrix analysis. We now compare the underlying model functions locally to $S(z)$.

\subsection{Third transformation: ratio problem} Suppose $s\in\mathbb{R},\alpha>-1,\beta\notin(-\infty,0),\theta\in[0,1]$ and $n\geq n_0$ so $\sqrt{2}-s_{n\theta}\in\mathbb{D}_{\epsilon}(\sqrt{2})$ with $0<\epsilon<\sqrt{2}$ small. Using the invertible functions \eqref{a:19},\eqref{a:21} and \eqref{a:28}, define
\begin{equation}\label{a:33}
	R(z;s,\alpha,\beta,n,\theta):=S(z;s,\alpha,\beta,n,\theta)\begin{cases}M(z)^{-1},&z\in\mathbb{D}_{\epsilon}(\sqrt{2})\\ N(z)^{-1},&z\in\mathbb{D}_{\epsilon}(-\sqrt{2}-s_{n\theta})\\ P(z)^{-1},&z\in\mathbb{C}\setminus(\Sigma_S\cup\overline{\mathbb{D}_{\epsilon}(-\sqrt{2}-s_{n\theta})}\cup\overline{\mathbb{D}_{\epsilon}(\sqrt{2})})
	\end{cases},
\end{equation}
and obtain the following properties of $R(z)$.
\begin{problem}[Small norm problem]\label{ratioArno} Let $s\in\mathbb{R},\alpha>-1,\beta\notin(-\infty,0),\theta\in[0,1]$ and $n\geq n_0$ so $\sqrt{2}-s_n\in\mathbb{D}_{\epsilon}(\sqrt{2})$ with $0<\epsilon<\sqrt{2}$ small. The function $R(z)=R(z;s,\alpha,\beta,n,\theta)\in\mathbb{C}^{2\times 2}$ defined in \eqref{a:33} has the following properties:
\begin{enumerate}
	\item[(1)] $z\mapsto R(z)$ is analytic for $z\in\mathbb{C}\setminus\Sigma_R$ with $\Sigma_R$ shown in Figure \ref{fig3}. On $\Sigma_R$, $R(z)$ admits continuous boundary values $R_{\pm}(z)$ as one approaches $\Sigma_R$ non-tangentially from either side of $\mathbb{C}\setminus\Sigma_R$.
	\begin{figure}[tbh]
	\begin{tikzpicture}[xscale=0.9,yscale=0.9]
	\draw [thick, color=red,decoration={markings, mark=at position 0.4 with {\arrow{>}}}, postaction={decorate}] (-5,0) -- (-1.9,0);
	\draw [thick, color=red,decoration={markings, mark=at position 0.7 with {\arrow{>}}}, postaction={decorate}] (1.9,0) -- (5,0);
		\draw[thick,color=red, decoration={markings, mark=at position 0.5 with {\arrow{<}}}, postaction={decorate} ] (2.5,0) arc (38.65980825:141.3401918:3.201562119);
		\draw[thick,color=red, decoration={markings, mark=at position 0.5 with {\arrow{<}}}, postaction={decorate} ] (2.5,0) arc (-38.65980825:-141.3401918:3.201562119);
	\draw [fill=white, thick, white] (2.5,0) circle [radius=0.6];
	\draw [thick, color=red, decoration={markings, mark=at position 0.15 with {\arrow{<}}}, decoration={markings, mark=at position 0.45 with {\arrow{<}}}, decoration={markings, mark=at position 0.75 with {\arrow{<}}},postaction={decorate}] (2.5,0) circle [radius=0.6];
	\draw [fill, color=black] (2.5,0) circle [radius=0.06];
	\draw [fill=white, thick, white] (-2.5,0) circle [radius=0.6];
	\draw [thick, color=red, decoration={markings, mark=at position 0.07 with {\arrow{<}}}, decoration={markings, mark=at position 0.35 with {\arrow{<}}}, decoration={markings, mark=at position 0.75 with {\arrow{<}}},postaction={decorate}] (-2.5,0) circle [radius=0.6];
	\draw [fill, color=black] (-2.5,0) circle [radius=0.06];
	\node [right] at (-1.2,1.6) {{\footnotesize $\partial\Omega_+$}};
	\node [right] at (-1.2,-1.6) {{\footnotesize $\partial\Omega_-$}};
\end{tikzpicture}
\caption{The oriented jump contour $\Sigma_R$, shown in red, for $R(z)$ in the complex $z$-plane.}
\label{fig3}
\end{figure}
	\item[(2)] The limiting values on $\Sigma_R$ satisfy $R_+(z)=R_-(z)G_R(z)$ with $G_R(z)=G_R(z;s,\alpha,\beta,n,\theta)$ given by
	\begin{equation*}
		G_R(z)=I+(\sqrt{2}-z)^{-\alpha}\e^{n\xi(z)}P(z)\begin{bmatrix}0 & 0\\ 1 & 0\end{bmatrix}P(z)^{-1},\ \ \ z\in\partial\Omega_+\cup\partial\Omega_-,
	\end{equation*}
	and
	\begin{equation*}
		G_R(z)=I+|z-\sqrt{2}|^{\alpha}\e^{n\eta(z)}P(z)\begin{bmatrix}0 & 1\\ 0 & 0\end{bmatrix}P(z)^{-1},\ \ \ z\in\mathbb{R}\setminus[-\sqrt{2}-s_{n\theta}-\epsilon,\sqrt{2}+\epsilon],
	\end{equation*}
	on the parts of $\Sigma_R$ disjoint from the circles $\partial\mathbb{D}_{\epsilon}(-\sqrt{2}-s_{n\theta})\cup\partial\mathbb{D}_{\epsilon}(\sqrt{2})$. On those
	\begin{equation*}
		G_R(z)=N(z)P(z)^{-1},\ \ z\in\partial\mathbb{D}_{\epsilon}(-\sqrt{2}-s_{n\theta});\hspace{1cm}G_R(z)=M(z)P(z)^{-1},\ \ z\in\partial\mathbb{D}_{\epsilon}(\sqrt{2}).
	\end{equation*}
	\item[(3)] As $z\rightarrow\infty$, $R(z)$ satisfies the normalization $R(z)=I+\mathcal{O}(z^{-1})$.
\end{enumerate}
\end{problem}
We emphasize that $z\mapsto R(z)$ is analytic in $\mathbb{D}_{\epsilon}(-\sqrt{2}-s_{n\theta})\cup\mathbb{D}_{\epsilon}(\sqrt{2})$ and on $(-\sqrt{2}-s_{n\theta}+\epsilon,\sqrt{2}-\epsilon\subset\mathbb{R}$, because of the analyticity of the maps
\begin{equation*}
	\mathbb{D}_{\epsilon}(-\sqrt{2}-s_{n\theta})\ni z\mapsto S(z)N(z)^{-1},\hspace{2cm}\mathbb{D}_{\epsilon}(\sqrt{2})\ni z\mapsto S(z)M(z)^{-1},
\end{equation*}
and because of RHP \ref{outerArno}. What's more, the jump matrix in RHP \ref{ratioArno} is close to the identity matrix for large $n$, uniformly in $\theta\in[0,1]$, compare \eqref{a:25},\eqref{a:31} and Proposition \ref{snorm1}, in particular since $P(z)=P(z;s,\alpha,n,\theta)$ is bounded in $n$. This is made precise in the following result.
\begin{prop}\label{Arnonorm2} Let $s\in\mathbb{R},\alpha>-1,\beta\in\mathbb{C}\setminus(-\infty,0)$ and $0<\epsilon<\sqrt{2}$ small. There exist $n_0=n_0(s,\alpha,\beta,\epsilon)\in\mathbb{N}$ and $c=c(s,\alpha,\beta,\epsilon)>0$ so that
\begin{equation*}
	\|G_R(\cdot;s,\alpha,\beta,n,\theta)-I\|_{L^2\cap L^{\infty}(\Sigma_R)}\leq\frac{c}{\sqrt[3]{n}}\ \ \ \ \ \forall\,n\geq n_0,\ \ \theta\in[0,1].
\end{equation*}
\end{prop}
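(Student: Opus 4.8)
The plan is to estimate $G_R(z)-I$ separately on the three families of components of $\Sigma_R$ — the two bounding circles $\partial\mathbb{D}_{\epsilon}(\sqrt{2})$ and $\partial\mathbb{D}_{\epsilon}(-\sqrt{2}-s_{n\theta})$, the lens arcs $\partial\Omega_{\pm}$ lying outside those disks, and the two real rays — and then add up the contributions. The guiding principle is that on the arcs and rays the jump $G_R-I$ is a conjugate by the bounded parametrix $P(z)$ of the off-diagonal part of $G_S-I$, so it inherits the exponential smallness supplied by Proposition \ref{snorm1}, whereas on the circles it is governed by the parametrix matchings \eqref{a:25} and \eqref{a:31}, whose leading corrections are of sizes $n^{-1/3}$ and $n^{-1}$ respectively.

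First I would record that $P(z)=P(z;s,\alpha,n,\theta)$ and $P(z)^{-1}$ are bounded on $\Sigma_R$ uniformly in $n\geq n_0$ and $\theta\in[0,1]$. This is immediate from the explicit formula \eqref{a:19}: the scalar $\chi=((2\sqrt{2}+s_{n\theta})/4)^{\alpha/2}$ is bounded away from $0$ and $\infty$, and the factors $\nu(z)^{\pm1},\mathcal{D}(z)^{\pm1}$ are bounded on $\Sigma_R$ because their only branch points sit at $\sqrt{2}$ and $-\sqrt{2}-s_{n\theta}$, while on $\Sigma_R$ we stay at distance $\geq\epsilon$ from those points (and $P(z)\to I$ as $z\to\infty$). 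Consequently, on $\partial\Omega_{\pm}\setminus(\mathbb{D}_{\epsilon}(-\sqrt{2}-s_{n\theta})\cup\mathbb{D}_{\epsilon}(\sqrt{2}))$ and on $\mathbb{R}\setminus[-\sqrt{2}-s_{n\theta}-\epsilon,\sqrt{2}+\epsilon]$, the expressions for $G_R(z)$ in condition (2) of RHP \ref{ratioArno} yield $\|G_R(z)-I\|\leq C(\sqrt{2}-z)^{-\alpha}\e^{n\xi(z)}$, resp. $\|G_R(z)-I\|\leq C|z-\sqrt{2}|^{\alpha}\e^{n\eta(z)}$, which are exactly, up to a fixed constant, the scalar quantities bounding $\|G_S(z)-I\|$ in Proposition \ref{snorm1}. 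Hence the contribution of these parts to $\|G_R-I\|_{L^2\cap L^{\infty}(\Sigma_R)}$ is $\mathcal{O}(\e^{-cn})$ for some $c=c(s,\alpha,\beta,\epsilon)>0$, uniformly in $\theta\in[0,1]$; in particular the $L^2$-bound on the two unbounded rays is already part of Proposition \ref{snorm1}.

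Next I would treat the two circles. On $\partial\mathbb{D}_{\epsilon}(\sqrt{2})$ we have $G_R(z)=M(z)P(z)^{-1}$, so the matching \eqref{a:25} of RHP \ref{localb} gives
\begin{equation*}
	G_R(z)-I=\sum_{k=1}^3 M_k(z;s,\alpha,\beta,n,\theta)n^{-\frac{k}{3}}+\mathcal{O}\big(n^{-\frac{4}{3}}\big),
\end{equation*}
with the $M_k(z)$ bounded in $n$ and uniformly in $\theta\in[0,1]$; thus $\|G_R-I\|_{L^{\infty}(\partial\mathbb{D}_{\epsilon}(\sqrt{2}))}=\mathcal{O}(n^{-1/3})$. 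On $\partial\mathbb{D}_{\epsilon}(-\sqrt{2}-s_{n\theta})$ we have $G_R(z)=N(z)P(z)^{-1}$, and the matching \eqref{a:31} of RHP \ref{locala} gives $\|G_R-I\|_{L^{\infty}}=\mathcal{O}(n^{-1})$ there. Both circles have uniformly bounded arclength (the second merely translates with $n$ and stays in a fixed compact neighborhood of $-\sqrt{2}$ for $n\geq n_0$), so their $L^2$-norms are bounded by constants times the respective $L^{\infty}$-norms, and both estimates are uniform in $\theta\in[0,1]$. Adding the three families of components, the dominant term is the $\mathcal{O}(n^{-1/3})$ contributed by $\partial\mathbb{D}_{\epsilon}(\sqrt{2})$, which proves the claim.

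The main obstacle — and the reason the bound is only $\mathcal{O}(n^{-1/3})$, not $\mathcal{O}(n^{-1})$ — is the slow matching at the soft edge $z=\sqrt{2}$: the Painlev\'e-XXXIV parametrix \eqref{a:21} carries a correction $M_1(z)n^{-1/3}$ absent in the Airy case, produced by the $n^{2/3}\zeta^b(\sqrt{2})$ drift of the argument of $Q$ together with the $Q_1^{12}/\hbar(z)$ structure of $M_1$. The other delicate point is $\theta$-uniformity of every input — the $M_k,N_1$, the constants in Proposition \ref{snorm1}, and the bounds on $P^{\pm1}$ — which is available here because the lens vicinities $\Omega_{\pm}$ were chosen $\theta$-independent, the conformal maps $\zeta^b,\zeta^a$ and the matchings were built with $\theta$-uniform control, and the one-cut data $(h_{V_{\theta}},\ell_{V_{\theta}})$ in \eqref{a:2} is affine in $\theta$ with $h_{V_{\theta}}(\pm\sqrt{2})$ bounded away from $0$.
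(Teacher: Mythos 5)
Your argument is correct and is precisely the expansion of the brief remark the paper makes just before the proposition (``compare \eqref{a:25},\eqref{a:31} and Proposition \ref{snorm1}, in particular since $P(z)$ is bounded in $n$''): you dispose of the arcs and rays by conjugating the off-diagonal scalar estimates of Proposition \ref{snorm1} through the bounded $P^{\pm1}$, and read off the $\mathcal{O}(n^{-1/3})$ and $\mathcal{O}(n^{-1})$ rates on the two circles from the matchings \eqref{a:25} and \eqref{a:31}. The commentary on why the rate is $n^{-1/3}$ (the $Q_1^{12}/\hbar$ term in $M_1$ evaluated at the $\mathcal{O}(1)$ argument $n^{2/3}\zeta^b(\sqrt{2})\sim s$) and on the sources of $\theta$-uniformity is accurate and consistent with the paper.
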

Hence, RHP \ref{ratioArno} is a small norm problem and so the general theory of \cite{DZ,DKMVZ} applies to its solvability, for large $n\geq n_0$. We summarize our findings:
\begin{theo}\label{Arnotheo1} For every $s\in\mathbb{R},\alpha>-1,\beta\in\mathbb{C}\setminus(-\infty,0)$ and $0<\epsilon<\sqrt{2}$ small, there exist $n_0=n_0(s,\alpha,\beta,\epsilon)\in\mathbb{N}$ and $c=c(s,\alpha,\beta,\epsilon)>0$ so that RHP \ref{ratioArno} is uniquely solvable for all $n\geq n_0$ and all $\theta\in[0,1]$. Its solution can be computed from the integral representation
\begin{equation}\label{a:34}
	R(z)=I+\frac{1}{2\pi\im}\int_{\Sigma_R}R_-(w)\big(G_R(w)-I)\frac{\d w}{w-z},\ \ \ z\in\mathbb{C}\setminus\Sigma_R,
\end{equation}
by using the small norm estimate
\begin{equation}\label{a:35}
	\|R_-(\cdot;s,\alpha,\beta,n,\theta)-I\|_{L^2(\Sigma_R)}\leq \frac{c}{\sqrt[3]{n}}\ \ \ \ \forall\,n\geq n_0,\ \theta\in[0,1].
\end{equation}
\end{theo}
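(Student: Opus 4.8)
The plan is to recognise RHP \ref{ratioArno} as a small-norm Riemann--Hilbert problem in the sense of \cite{DZ,DKMVZ} and to read off \eqref{a:34},\eqref{a:35} from the associated singular integral equation. First I would record that $\det G_R(\cdot;s,\alpha,\beta,n,\theta)\equiv 1$ on all of $\Sigma_R$: this is inherited from the unit-determinant jump matrices of $S(z)$ in RHP \ref{opRHP} together with $\det P(z)=\det M(z)=\det N(z)\equiv 1$, which are immediate from \eqref{a:19},\eqref{a:21},\eqref{a:26},\eqref{a:28}. Hence any solution $R(z)$ has $\det R(z)$ entire with $\det R(z)\to 1$ as $z\to\infty$, so $\det R\equiv 1$ and $R(z)$ is invertible; applying Liouville's theorem to $R^{(1)}(z)R^{(2)}(z)^{-1}$ shows that a solution, if it exists, is unique. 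Existence and the quantitative bounds will come from the integral equation.

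Write $\Delta:=G_R-I$ and let $C_-$ denote the Cauchy projection on $L^2(\Sigma_R)$, $(C_-f)(z):=\lim_{z'\to z,\,-}\frac{1}{2\pi\im}\int_{\Sigma_R}\frac{f(w)}{w-z'}\,\d w$. By the standard theory, $R(z)$ solves RHP \ref{ratioArno} precisely when $R(z)=I+\frac{1}{2\pi\im}\int_{\Sigma_R}\frac{\mu(w)\Delta(w)}{w-z}\,\d w$ with $\mu=R_-\in I+L^2(\Sigma_R)$ solving $(\mathrm{Id}-C_\Delta)\mu=I$, where $C_\Delta f:=C_-(f\Delta)$; analyticity off $\Sigma_R$, the jump relation $R_+=R_-G_R$ and the normalisation at $\infty$ then follow from the Sokhotski--Plemelj formulae. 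The crucial input is Proposition \ref{Arnonorm2}, which gives $\|\Delta\|_{L^2\cap L^\infty(\Sigma_R)}\le c\,n^{-1/3}$ uniformly in $\theta\in[0,1]$. Combined with the $L^2(\Sigma_R)$-boundedness of $C_-$ this yields $\|C_\Delta\|_{L^2\to L^2}\le\|C_-\|\,\|\Delta\|_{L^\infty}\le C_\ast c\,n^{-1/3}$, so there is $n_0$ with $\|C_\Delta\|\le\frac12$ for all $n\ge n_0$; the Neumann series $\mu=\sum_{k\ge 0}C_\Delta^k I$ then converges in $I+L^2(\Sigma_R)$, giving unique solvability, and
\[
\|R_--I\|_{L^2(\Sigma_R)}=\big\|(\mathrm{Id}-C_\Delta)^{-1}C_-(\Delta)\big\|_{L^2}\le 2\,\|C_-\|\,\|\Delta\|_{L^2(\Sigma_R)}\le\frac{c}{\sqrt[3]{n}},
\]
which is \eqref{a:35}; representation \eqref{a:34} holds because $\mu\in I+L^2(\Sigma_R)$ and $\Delta\in L^2\cap L^\infty(\Sigma_R)$.

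The one point needing care --- and, I expect, the main obstacle --- is that $\Sigma_R=\Sigma_R(n,\theta)$ is genuinely $n$- and $\theta$-dependent, since the left disk $\partial\mathbb{D}_\epsilon(-\sqrt 2-s_{n\theta})$ and the endpoint of the left ray move with $s_{n\theta}$. For all constants above to be uniform in $\theta\in[0,1]$ I need $\|C_-\|_{L^2(\Sigma_R)\to L^2(\Sigma_R)}$ bounded by a constant independent of $n\ge n_0$ and $\theta$. This follows from the fact that $s_{n\theta}=s(n\tau_\theta)^{-2/3}\to 0$ uniformly in $\theta\in[0,1]$ as $n\to\infty$, with $\tau_\theta$ bounded away from $0$ by \eqref{a:2},\eqref{a:3}: for $n\ge n_0$ each $\Sigma_R(n,\theta)$ is the image of the single fixed contour of Figure \ref{fig3} (taken at $s_{n\theta}=0$) under a diffeomorphism of $\mathbb{C}$ which equals the identity outside a fixed neighbourhood of $-\sqrt 2$ and is uniformly $C^1$-close to the identity, and the $L^2$-boundedness of the Cauchy singular operator is stable under such uniformly bi-Lipschitz changes of a piecewise-analytic contour. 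This is precisely the uniformity mechanism already invoked in Propositions \ref{snorm1} and \ref{Arnonorm2}, so the constants $n_0,c$ in the statement are indeed uniform in $\theta$, which completes the argument.
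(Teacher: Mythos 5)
Your proof is correct and takes the same route the paper intends: the paper proves Proposition \ref{Arnonorm2} and then simply cites the small-norm theory of \cite{DZ,DKMVZ} without spelling out the singular integral equation, Neumann series, or the Liouville-type uniqueness argument, all of which you have reconstructed accurately. Your final paragraph addressing the $n$- and $\theta$-dependence of $\Sigma_R$ (uniform $L^2$-boundedness of the Cauchy operator via a bi-Lipschitz deformation that tends to the identity as $s_{n\theta}\to 0$) is a genuine point the paper leaves implicit, and you have handled it correctly; it is exactly what is needed to make the constants $n_0,c$ uniform in $\theta\in[0,1]$ as claimed.
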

The last Theorem concludes our nonlinear steepest descent analysis of RHP \ref{FIKbeast} and thus our asymptotic legwork. We now begin to extract the large $n$-asymptotics of $E_n$ in \eqref{e:8}. We begin with the Fredholm determinant $F_n$.

\section{The Fredholm determinant $F_n$ - proof of Theorem \ref{BS:1}}\label{sec4}
Recall from \eqref{e:8} and \eqref{e:9} the formula
\begin{equation}\label{f:1}
	F_n[\phi;\lambda_n,\alpha,\beta]=1+\sum_{\ell=1}^n\frac{(-1)^{\ell}}{\ell!}\int_{J^{\ell}}\det\Big[K_{n,\alpha,\beta}^{\lambda_n}(x_j,x_k)\Big]_{j,k=1}^{\ell}\prod_{m=1}^{\ell}\phi(x_m)\d x_m,
\end{equation}
with the reproducing kernel
\begin{equation*}
	K_{n,\alpha,\beta}^{\lambda_n}(x,y)=\omega_{\alpha\beta}^{\frac{1}{2}}(x-\lambda_n)\e^{-\frac{n}{2}V(x)}\omega_{\alpha\beta}^{\frac{1}{2}}(y-\lambda_n)\e^{-\frac{n}{2}V(y)}\sum_{\ell=0}^{n-1}p_{\ell,n}(x)p_{\ell,n}(y)
\end{equation*}
expressed in terms of \eqref{a:5}, where $\lambda_n\equiv\lambda_{n1}$ and where the test function $\phi:\mathbb{R}\rightarrow\mathbb{C}$ has support $J$.
\begin{proof}[Proof of \eqref{e:14}] We take $\theta=1$ in RHP \ref{FIKbeast} and all subsequent RHPs
\begin{equation}\label{f:2}
	X(z)\stackrel{\eqref{a:12}}{\longrightarrow}Y(z)\stackrel{\eqref{a:14}}{\longrightarrow}T(z)\stackrel{\eqref{a:17}}{\longrightarrow}S(z)\stackrel{\eqref{a:33}}{\longrightarrow}R(z).
\end{equation}
Given the well-known Riemann-Hilbert representation of the correlation kernel $K_{n,\alpha,\beta}^{\lambda_n}$, i.e. the formula
\begin{equation*}
	K_{n,\alpha,\beta}^{\lambda_n}(x,y)=\frac{1}{2\pi\im(x-y)}\omega_{\alpha\beta}^{\frac{1}{2}}(x-\lambda_n)\e^{-\frac{n}{2}V(x)}\omega_{\alpha\beta}^{\frac{1}{2}}(y-\lambda_n)\e^{-\frac{n}{2}V(y)}
	\Big(X_+^{-1}(y)X_+(x)\Big)^{21},\  x,y\in\mathbb{R}\setminus\{\lambda_n\},
\end{equation*}
we now exploit \eqref{f:2} and apply the written out transformations. What results is, for sufficiently large $n\geq n_0$, see also \cite[page $39,40$]{IKO}, here with the shorthand 
\begin{equation*}
	u:=\sqrt{2}+\frac{x-s}{(n\tau)^{\frac{2}{3}}},\ \ \  v:=\sqrt{2}+\frac{y-s}{(n\tau )^{\frac{2}{3}}},
\end{equation*}
where $\tau\equiv\tau_1=\pi h_V(\sqrt{2})2^{\frac{3}{4}}$, see \eqref{a:12} and \eqref{a:3},
\begin{align}
	\frac{1}{(n\tau)^{\frac{2}{3}}}&K_{n,\alpha,\beta}^{\lambda_n}\bigg(\sqrt{2}+\frac{x}{(n\tau)^{\frac{2}{3}}},\sqrt{2}+\frac{y}{(n\tau)^{\frac{2}{3}}}\bigg)=\frac{1}{2\pi\im(x-y)}\Psi\Big(n^{\frac{2}{3}}\big(\zeta^b(v)-\zeta^b(\sqrt{2})\big);n^{\frac{2}{3}}\zeta^b(\sqrt{2}),\alpha,\beta\Big)^{\top}\nonumber\\
	&\times\begin{bmatrix}0&1\\ -1&0\end{bmatrix}E^b(v)^{-1}R(v)^{-1}R(u)E^b(u)\Psi\Big(n^{\frac{2}{3}}\big(\zeta^b(u)-\zeta^b(\sqrt{2})\big);n^{\frac{2}{3}}\zeta^b(\sqrt{2}),\alpha,\beta\Big),\ \ x,y\neq s.\label{f:3}
\end{align}
In \eqref{f:3} we use $\Psi^{\top}$ to denote the matrix transpose of $\Psi$ and the same denotes the $2\times 1$ column vector-valued function
\begin{equation*}
	\Psi(z;x,\alpha,\beta)=\begin{bmatrix}\psi_1(z;x,\alpha,\beta)\\ \psi_2(z;x,\alpha,\beta)\end{bmatrix},\ \ \ z\in\mathbb{R}\setminus\{0\},\ \ x\in\mathbb{R},\ \alpha>-1,\ \beta\notin(-\infty,0)
\end{equation*}
defined in \eqref{e:15}. At this point we pass to the limit $n\rightarrow\infty$ in the right hand side of \eqref{f:3}: by Proposition \ref{conf1} and because of the chosen $\tau$, one finds
\begin{equation*}
	n^{\frac{2}{3}}\zeta^b(\sqrt{2})=s+\mathcal{O}\big(n^{-\frac{2}{3}}\big),\ n^{\frac{2}{3}}\big(\zeta^b(u)-\zeta^b(\sqrt{2})\big)=x-s+\mathcal{O}\big(n^{-\frac{2}{3}}\big),\ n^{\frac{2}{3}}\big(\zeta^b(v)-\zeta^b(\sqrt{2})\big)=y-s+\mathcal{O}\big(n^{-\frac{2}{3}}\big),
\end{equation*}
pointwise in $x,y,s\in\mathbb{R}$, as $n\rightarrow\infty$. Then, arguing exactly as in \cite[$(2.99),(2.100)$]{IKO} we have
\begin{equation}\label{f:4}
	R(v)^{-1}R(u)=I+\mathcal{O}\bigg(\frac{x-y}{n}\bigg),\ \ \ \ E^b(v)^{-1}E^b(u)=I+\mathcal{O}\bigg(\frac{x-y}{\sqrt[3]{n}}\bigg),
\end{equation}
as $n\rightarrow\infty$ for fixed $x,y\neq s$. Inserting \eqref{f:4} into \eqref{f:3} we have thus established the limiting behavior
\begin{align}
	\frac{1}{(n\tau)^{\frac{2}{3}}}&\,K_{n,\alpha,\beta}^{\lambda_n}\bigg(\sqrt{2}+\frac{x}{(n\tau)^{\frac{2}{3}}},\sqrt{2}+\frac{y}{(n\tau)^{\frac{2}{3}}}\bigg)\nonumber\\
	=&\,\frac{\psi_2(x-s;s,\alpha,\beta)\psi_1(y-s;s,\alpha,\beta)-\psi_1(x-s;s,\alpha,\beta)\psi_2(y-s;s,\alpha,\beta)}{2\pi\im(x-y)}+\mathcal{O}\big(n^{-\frac{1}{3}}\big)\label{f:5}
\end{align}
for the reproducing kernel, which holds uniformly in $(x,y,s,\alpha,\beta)$ when chosen on compact sets of their admissible domains and for $x,y\neq s$. Denoting with $A_s^{\alpha\beta}(x-s,y-s)$ the limiting kernel in \eqref{f:5} and choosing $\phi_n$ edge-scaled as in \eqref{e:13}, i.e.
\begin{equation*}
	\phi_n(x)=\phi\Big((x-\sqrt{2})(n\tau)^{\frac{2}{3}}\Big),\ \ \ \ \tau\equiv\tau_1=\pi h_V(\sqrt{2})2^{\frac{3}{4}},
\end{equation*}
the change of variables $y_j=(x_j-\sqrt{2})(n\tau)^{\frac{2}{3}}$ in \eqref{f:1} yields
\begin{equation*}
	F_n[\phi_n;\lambda_n,\alpha,\beta]=1+\sum_{\ell=1}^n\frac{(-1)^{\ell}}{\ell!}\int_{J^{\ell}}\det\Bigg[\frac{1}{(n\tau)^{\frac{2}{3}}}K_{n,\alpha,\beta}^{\lambda_n}\bigg(\sqrt{2}+\frac{y_j}{(n\tau)^{\frac{2}{3}}},\sqrt{2}+\frac{y_k}{(\tau n)^{\frac{2}{3}}}\bigg)\Bigg]_{j,k=1}^{\ell}\prod_{m=1}^{\ell}\phi(y_m)\d y_m.
\end{equation*}
Here, $[K_{n,\alpha,\beta}^{\lambda_n}(x_j,x_k)]_{j,k=1}^{\ell}$ is positive definite for $\beta\geq 0$, and so by Hadamard's inequality, 
\begin{equation}\label{f:6}
	\det\bigg[\frac{1}{(n\tau)^{\frac{2}{3}}}K_{n,\alpha,\beta}^{\lambda_n}(x_j,x_k)\bigg]_{j,k=1}^{\ell}\leq\prod_{j=1}^{\ell}\frac{1}{(n\tau)^{\frac{2}{3}}}K_{n,\alpha,\beta}^{\lambda_n}\bigg(\sqrt{2}+\frac{y_j}{(n\tau)^{\frac{2}{3}}},\sqrt{2}+\frac{y_j}{(\tau n)^{\frac{2}{3}}}\bigg).
\end{equation}
Since, by the proof workings leading to \eqref{f:5},
\begin{equation*}
	\lim_{n\rightarrow\infty}\frac{1}{(n\tau)^{\frac{2}{3}}}K_{n,\alpha,\beta}^{\lambda_n}\bigg(\sqrt{2}+\frac{y}{(n\tau)^{\frac{2}{3}}},\sqrt{2}+\frac{y}{(n\tau)^{\frac{2}{3}}}\bigg)
\end{equation*}
exists uniformly in $y$ on any finite interval $(-A,A)\subset\mathbb{R}$, the $\ell$th term of the sum in $F_n$ above, for bounded test functions $\phi$ of bounded support, is bounded by $c^{\ell}/\ell!$ for some $0<c<\infty$. This is sufficient for the dominated convergence of the series in the formula for $F_n[\phi_n;\lambda_n,\alpha,\beta]$, provided $\phi$ is a bounded test function of bounded support. If the support of $\phi$ has no upper limit, while $\phi$ is still bounded, then we have to find an integrable, on $(A,\infty)$ with $A>0$ large, upper bound for the right hand side of \eqref{f:6} and we must check that the integral over $(A,\infty)^{\ell}$ of the bound yields a convergent series. This can be achieved by exploiting l'Hospital's formula,
\begin{equation*}
	K_{n,\alpha,\beta}^{\lambda_n}(y,y)=\frac{1}{2\pi\im}\omega_{\alpha\beta}(y-\lambda_n)\Big(X_+^{-1}(y)X_+'(y)\Big)^{21},\ \ y\in\mathbb{R}\setminus\{\lambda_n\},
\end{equation*}
and tracing back the chain \eqref{f:2}. What results are two formul\ae\, for each factor in \eqref{f:6} in terms of either $P(z)$ (if $(y-s)/(\tau n)^{2/3}>\epsilon$) in \eqref{a:19} or $Q(z)$ in RHP \ref{YattRHP} (if $(y-s)/(\tau n)^{2/3}<\epsilon$). Either way, Theorem \ref{Arnotheo1}, condition $(4)$ in RHP \ref{YattRHP} and \eqref{a:16} guarantee existence of $c_j=c_j(s,\alpha,\beta)>0$ so that 
\begin{equation*}
	\frac{1}{(n\tau)^{\frac{2}{3}}}K_n\bigg(\sqrt{2}+\frac{y}{(n\tau)^{\frac{2}{3}}},\sqrt{2}+\frac{y}{(n\tau)^{\frac{2}{3}}}\bigg)\leq c_1\exp\left[-c_2y^{\frac{3}{2}}\right]
\end{equation*}
holds for $y>A$, with $A>0$ large. Consequently the integral of the product \eqref{f:6} over $(A,\infty)^{\ell}$ is bounded above by $c^{\ell}$ with a positive $c$, and this is sufficient for the dominated convergence of the series in the formula for $F_n[\phi_n;\lambda_n,\alpha,\beta]$. In turn, \eqref{e:14} follows from \eqref{f:5} for any fixed $s\in\mathbb{R},\alpha>-1$ and $\beta\geq 0$. To have \eqref{e:14} for all $\beta\in\mathbb{C}\setminus(-\infty,0)$, we notice that \eqref{f:3} and $A_s^{\alpha\beta}(x-s,y-s)$ are analytic in $\beta\notin(-\infty,0)$, for fixed $x,y\neq s\in\mathbb{R}$ and $\alpha>-1$, and bounded on all compact sets $K\subset\mathbb{C}\setminus(-\infty,0)$. So \eqref{e:14} holds also for $\beta\in\mathbb{C}\setminus(-\infty,0)$ on compact subsets, by an application of Vitali's convergence theorem and the identity theorem for analytic functions.
\end{proof}
Moving on, we choose $\phi(x)=1,x>s$ and $\phi(x)=0,x<s$ in \eqref{e:14}, so that in the wording of Theorem \ref{BS:1},
\begin{equation}\label{f:7}
	F(s;\alpha,\beta)=1+\sum_{\ell=1}^{\infty}\frac{(-1)^{\ell}}{\ell!}\int_{(0,\infty)^{\ell}}\det\Big[A_s^{\alpha\beta}(x_j,x_k)\Big]_{j,k=1}^{\ell}\prod_{m=1}^{\ell}\d x_m.
\end{equation}
The structure of $A_s^{\alpha}(x,y)$ identifies \eqref{f:7} as Fredholm determinant of an integral operator on $L^2(0,\infty)$ with integrable kernel, cf. \cite{IIKS}. This means we can associate a  canonical RHP with $F(s;\alpha,\beta)$ and \eqref{e:16} will eventually follow. Start by defining the $2\times 1$ column vectors
\begin{equation*}
	f(x)=\begin{bmatrix}f_{1}(x)\\ f_{2}(x)\end{bmatrix}:=\sqrt{\frac{\im}{2\pi}}\begin{bmatrix}\psi_1(x;s,\alpha,\beta)\smallskip\\ \psi_2(x;s,\alpha,\beta)\end{bmatrix},\ \ \ h(x)=\begin{bmatrix}h_{1}(x)\\ h_{2}(x)\end{bmatrix}:=\sqrt{\frac{\im}{2\pi}}\begin{bmatrix}\ \ \psi_2(x;s,\alpha,\beta)\smallskip\\ -\psi_1(x;s,\alpha,\beta)\end{bmatrix},
\end{equation*}
for $x\in\mathbb{R}\setminus\{0\},s\in\mathbb{R},\alpha>-1$ and $\beta\in\mathbb{C}\setminus(-\infty,1)$, with some arbitrary but fixed branch for the complex square root. The additional restriction on $\beta$ will ensure invertibility of $I-A_s^{\alpha\beta}$ on $L^2(0,\infty)$ and thus the non-vanishing of $F(s;\alpha,\beta)$ - see Proposition \ref{solva}. For now we consider the following RHP.
\begin{problem}\label{Arnoint} Let $(s,\alpha,\beta)\in\mathbb{R}\times(-1,\infty)\times(\mathbb{C}\setminus(-\infty,1))$. Find $W(z)=W(z;s,\alpha,\beta)\in\mathbb{C}^{2\times 2}$ with the following properties:
\begin{enumerate}
	\item[(1)] $z\mapsto W(z)$ is analytic for $z\in\mathbb{C}\setminus[0,\infty)$ with continuous boundary values $W_{\pm}(z)=\lim_{\epsilon\downarrow 0}W(z\pm\im\epsilon)$ on $(0,\infty)$.
	\item[(2)] The values $W_{\pm}(z)$ satisfy $W_+(z)=W_-(z)G_W(z),z\in(0,\infty)$ with $G_W(z)=G_W(z;s,\alpha,\beta)$ given by
	\begin{equation*}
		G_W(z)=\begin{bmatrix}1+\psi_1(z)\psi_2(z)& -\psi_1^2(z)\smallskip\\ \psi_2^2(z) & 1-\psi_1(z)\psi_2(z)\end{bmatrix},\ \ \ \ \psi_j(z)=\psi_j(z;s,\alpha,\beta).
	\end{equation*}
	\item[(3)] The limiting values $W_{\pm}(z)$ are in $L^2(0,\infty)$.
	\item[(4)] As $z\rightarrow\infty$ and $z\notin(0,\infty)$, $W(z)$ satisfies the normalization
	\begin{equation*}
		W(z)=I+\frac{W_1}{z}+\mathcal{O}\big(z^{-2}\big),\ \ \ \ \ \ W_1=W_1(s,\alpha,\beta)\in\mathbb{C}^{2\times 2}.
	\end{equation*}
\end{enumerate}
\end{problem}
By the general theory of \cite{IIKS}, RHP \ref{Arnoint} is uniquely solvable in $I+L^2(0,\infty)$, for given $(s,\alpha,\beta)$, if and only if the operator $I-A_s^{\alpha\beta}$ is invertible on $L^2(0,\infty)$. And if solvable, then the unique solution to RHP \ref{Arnoint} is of the form
\begin{equation*}
	W(z)=I-\int_0^{\infty}F(w)h(w)^{\top}\frac{\d w}{w-z},\ \ \ z\in\mathbb{C}\setminus[0,\infty),
\end{equation*}
where, independent of the choice of limiting values, for $z>0$,
\begin{equation*}
	F(z):=\big((I-A_s^{\alpha\beta})^{-1}f\big)(z)=W_{\pm}(z)f(z)\in\mathbb{C}^{2\times 1}.
\end{equation*}
Moreover, another standard fact from \cite{IIKS} says that the resolvent $R_s^{\alpha\beta}=(I-A_s^{\alpha\beta})^{-1}A_s^{\alpha\beta}$, if existent, has kernel
\begin{equation*}
	R_s^{\alpha\beta}(x,y)=\frac{F(x)^{\top}H(y)}{x-y},\ \ \ x,y>0;\hspace{1.5cm} H(y)=\big(W_{\pm}^{\top}(y)\big)^{-1}h(y)\in\mathbb{C}^{2\times 1},\ \ y>0.
\end{equation*}
In order to prove solvability of RHP \ref{Arnoint} and in turn \eqref{e:16} we employ an undressing transformation to the same problem: By \eqref{e:15} and condition $(2)$ in RHP \ref{YattRHP}, the functions
\begin{equation*}
	\psi_1(z;s,\alpha,\beta)=Q_{\pm}^{11}(z;s,\alpha,\beta)\ \ \ \ \textnormal{and}\ \ \ \ \psi_2(z;s,\alpha,\beta)=Q_{\pm}^{21}(z;s,\alpha,\beta)
\end{equation*}
are independent of the choice of limiting values $\pm$ for $z>0$. Hence, choosing $Q_-^{j1}$ in the definition of $\psi_j$ we can factor $G_W(z)$ in RHP \ref{Arnoint} with the help of the solution $Q(\zeta)=Q(\zeta;s,\alpha,\beta)$ to RHP \ref{YattRHP} as follows,
\begin{equation*}
	G_W(z)=Q_-(z)\begin{bmatrix}1&-1\\ 0&1\end{bmatrix}Q_-(z)^{-1},\ \ \ z>0.
\end{equation*}
Consequently, the undressing transformation
\begin{equation}\label{f:8}
	U(z;s,\alpha,\beta):=W(z;s,\alpha,\beta)Q(z;s,\alpha,\beta),\ \ \ \ z\in\mathbb{C}\setminus\Sigma_Q,\ \ \ s\in\mathbb{R},\ \alpha>-1,\ \beta\in\mathbb{C}\setminus(-\infty,1),
\end{equation}
with contour $\Sigma_Q$ shown in Figure \ref{fig2}, leads us to our next RHP.
\begin{problem}[Undressed problem]\label{Arnoundress} Let $s\in\mathbb{R},\alpha>-1,\beta\in\mathbb{C}\setminus(-\infty,1)$ and $U(z)=U(z;s,\alpha,\beta)\in\mathbb{C}^{2\times 2}$ as in \eqref{f:8}. Then $U(z)$ has the following properties:
\begin{enumerate}
	\item[(1)] $z\mapsto U(z)$ is analytic for $z\in\mathbb{C}\setminus\Sigma_Q$ where $\Sigma_Q=\bigcup_{j=1}^4\Sigma_j$ is shown and described in RHP \ref{YattRHP}, compare Figure \ref{fig2}.	On $\Sigma_Q\setminus\{0\}$, $U(z)$ has continuous non-tangential boundary values $U_{\pm}(z)$.
	\item[(2)] The limiting values $U_{\pm}(z)$ on $\Sigma_Q\setminus\{0\}$ satisfy $U_+(z)=U_-(z)G_U(z)$ with $G_U(z)=G_U(z;\alpha,\beta)$ equal
	\begin{equation*}
		G_U(z)=\begin{bmatrix}1&0\\ \e^{\im\pi\alpha}&1\end{bmatrix},\ z\in\Sigma_2;\ \ \ G_U(z)=\begin{bmatrix}0 & 1\\ -1 & 0\end{bmatrix},\ z\in\Sigma_3;\
	\end{equation*}
	\begin{equation}\label{f:9}
		G_U(z)=\begin{bmatrix}1 & 0\\ \e^{-\im\pi\alpha} & 1\end{bmatrix},\ z\in\Sigma_4,;\ \ \ \ G_U(z)=\begin{bmatrix}1&\beta-1\\ 0&1\end{bmatrix},\ z\in\Sigma_1.
	\end{equation}
	\item[(3)] Near $z=0$, $U(z)$ admits the local representation
	\begin{equation*}
		U(z)=\widehat{U}(z)S(z)\mathcal{M}_j\Big|_{\beta\rightarrow\beta-1},\ \ \ \ z\in\big(\Omega_j\cap\mathbb{D}_{\epsilon}(0)\big)\setminus\{0\},
	\end{equation*}
	in terms of some $z\mapsto\widehat{U}(z)=\widehat{U}(z;s,\alpha,\beta)$ that is analytic and non-vanishing at $z=0$ and with $S(z),\mathcal{M}_j$ as in condition $(3)$ of RHP \ref{YattRHP}.
	\item[(4)] As $z\rightarrow\infty$ and $z\notin\Sigma_Q$, $U(z)$ is normalized as 
	\begin{equation*}
		U(z)=\Big\{I+\sum_{k=1}^2U_k(s,\alpha,\beta)z^{-k}+\mathcal{O}\big(z^{-3}\big)\Big\}z^{-\frac{1}{4}\sigma_3}\frac{1}{\sqrt{2}}\begin{bmatrix}1 & 1\\ -1 & 1\end{bmatrix}\e^{-\im\frac{\pi}{4}\sigma_3}\e^{-\varpi(z,s)\sigma_3},
	\end{equation*}
	with $\varpi(z,s)=\frac{2}{3}z^{\frac{3}{2}}+sz^{\frac{1}{2}}$, using principal branches throughout.
\end{enumerate}
\end{problem}
Evidently, compare RHP \ref{YattRHP} and \ref{Arnoundress},
\begin{equation}\label{f:10}
	z\mapsto U(z;s,\alpha,\beta)Q(z;s,\alpha,\beta-1)^{-1}
\end{equation}
is an entire function normalized to unity at $z=\infty$, provided $s\in\mathbb{R},\alpha>-1$ and $\beta\in\mathbb{C}\setminus(-\infty,1)$. Thus, RHP \ref{Arnoundress} is uniquely solvable and so, see \eqref{f:8}, the same goes for RHP \ref{Arnoint}. In turn, by \cite{IIKS}:
\begin{prop}\label{solva} Let $s\in\mathbb{R},\alpha>-1$ and $\beta\in(-\infty,1)$. Then $I-A_s^{\alpha\beta}$ is invertible on $L^2(0,\infty)$.
\end{prop}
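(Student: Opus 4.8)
The plan is to deduce the proposition from two ingredients that are already in place: the Its--Izergin--Korepin--Slavnov (IIKS) correspondence between integrable integral operators and their canonical Riemann--Hilbert problems, and the unique solvability of RHP \ref{Arnoint} established in the preceding paragraphs. First I would record that $A_s^{\alpha\beta}$ is an integrable operator on $L^2(0,\infty)$ in the sense of \cite{IIKS}: with $f,h$ as introduced just below \eqref{f:7} one has $A_s^{\alpha\beta}(x,y)=\frac{f(x)^\top h(y)}{x-y}$, and the diagonal singularity is removable because $f(x)^\top h(x)=\frac{\im}{2\pi}\big(\psi_1(x)\psi_2(x)-\psi_2(x)\psi_1(x)\big)=0$. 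The super-exponential decay of $\psi_1,\psi_2$ on $(0,\infty)$ (like $\e^{-\frac{2}{3}z^{3/2}}$ up to algebraic factors) forced by condition $(4)$ of RHP \ref{YattRHP}, together with the at-worst $|z|^{\alpha/2}$ — respectively logarithmic — growth at $z=0$ coming from condition $(3)$ of RHP \ref{YattRHP}, place $f,h$ in $L^2(0,\infty)\cap L^\infty(0,\infty)$ because $\alpha>-1$. Hence the standing hypotheses of the IIKS machinery are satisfied, and that framework gives the equivalence: \emph{$I-A_s^{\alpha\beta}$ is invertible on $L^2(0,\infty)$ if and only if RHP \ref{Arnoint} has a (then necessarily unique) solution with $W_\pm\in I+L^2(0,\infty)$}.

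Second, I would invoke the solvability of RHP \ref{Arnoint}, which the undressing transformation \eqref{f:8} delivers. The function \eqref{f:10}, $z\mapsto U(z;s,\alpha,\beta)Q(z;s,\alpha,\beta-1)^{-1}$, has trivial jumps across all rays of $\Sigma_Q$: on $\Sigma_2,\Sigma_3,\Sigma_4$ the jumps of $U$ coincide with those of $Q(\cdot;\beta-1)$ since they do not depend on $\beta$, and on $\Sigma_1$ the jump $\bigl[\begin{smallmatrix}1 & \beta-1\\ 0 & 1\end{smallmatrix}\bigr]$ recorded in \eqref{f:9} is exactly the $\Sigma_1$-jump of $Q(\cdot;\beta-1)$; it is bounded at $z=0$ by condition $(3)$ of RHP \ref{Arnoundress} and it tends to $I$ as $z\to\infty$. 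By Liouville it equals $I$, so $U(z)=Q(z;s,\alpha,\beta-1)$ and, by \eqref{f:8},
\begin{equation*}
W(z;s,\alpha,\beta)=Q(z;s,\alpha,\beta-1)\,Q(z;s,\alpha,\beta)^{-1}
\end{equation*}
is the unique solution of RHP \ref{Arnoint}. One then checks directly that this $W$ meets conditions $(1)$--$(4)$ of RHP \ref{Arnoint}: it has no jump on $\Sigma_2\cup\Sigma_3\cup\Sigma_4$, it realizes the jump $G_W$ on $(0,\infty)$ (this is precisely the factorization $G_W=Q_-\bigl[\begin{smallmatrix}1&-1\\ 0&1\end{smallmatrix}\bigr]Q_-^{-1}$ recorded before \eqref{f:8}), its boundary values on $(0,\infty)$ lie in $L^2$ by the same decay and mild-singularity bounds used above, and it admits the expansion $I+W_1/z+\mathcal{O}(z^{-2})$ at infinity. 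Feeding this back into the equivalence of the previous paragraph gives that $I-A_s^{\alpha\beta}$ is invertible on $L^2(0,\infty)$, which is the claim.

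The only genuinely delicate point is making the IIKS dictionary applicable on the unbounded contour $(0,\infty)$: one must verify the integrability hypotheses on the generating functions $f,h$ (done via the $z^{\pm1/4}\e^{-\frac23 z^{3/2}}$ asymptotics and the removable-diagonal identity $f^\top h\equiv0$) and, more importantly, make sure that the version of the IIKS theorem being cited is formulated for exactly the normalization and the $L^2$ boundary-value class appearing in RHP \ref{Arnoint} (namely $W=I+\mathcal{O}(z^{-1})$ with $W_\pm\in I+L^2(0,\infty)$), so that its ``solvable $\iff$ invertible'' conclusion transfers verbatim. Everything else is bookkeeping already carried out in the construction of $U$ and $W$, and the restriction $\beta\notin(-\infty,1)$ enters only through the well-definedness of the undressing, i.e. through RHP \ref{Arnoundress}.
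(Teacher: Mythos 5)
Your proposal is correct and follows the same route as the paper: build the explicit solution $W(z)=Q(z;s,\alpha,\beta-1)Q(z;s,\alpha,\beta)^{-1}$ of RHP \ref{Arnoint} via the undressing transformation \eqref{f:8} and Liouville's theorem applied to \eqref{f:10}, then appeal to the IIKS equivalence between solvability of RHP \ref{Arnoint} in $I+L^2(0,\infty)$ and invertibility of $I-A_s^{\alpha\beta}$. Your added verifications (IIKS integrability hypotheses, jump and decay checks for the explicit $W$) make the argument more self-contained than the paper's terse statement, and you also implicitly correct what appears to be a typo in the proposition's hypothesis, which should read $\beta\in\mathbb{C}\setminus(-\infty,1)$, as you in fact use, rather than $\beta\in(-\infty,1)$.
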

\begin{proof}[Proof of \eqref{e:16}] Let $s\in\mathbb{R},\alpha>-1$ and $\beta\notin(-\infty,1)$ be arbitrary but fixed. Given the invertibility of $I-A_s^{\alpha\beta}$, $\mathbb{R}\ni s\mapsto \ln F(s;\alpha,\beta)$ is well-defined with some fixed branch, and using Jacobi's variational formula
\begin{equation}\label{f:11}
	\frac{\d}{\d s}\ln F(s;\alpha,\beta)=-\tr_{L^2(0,\infty)}\bigg((I-A_s^{\alpha\beta})^{-1}\frac{\d A_s^{\alpha\beta}}{\d s}\bigg).
\end{equation}
Since $\psi_j=Q_{\pm}^{j1}$ for $j\in\{1,2\}$, the Lax equation $\frac{\partial Q}{\partial s}=BQ$ in Lemma \ref{ArnoLax}, with $x\equiv s$, yields 
\begin{equation*}
	\frac{\partial\psi_1}{\partial s}=a\psi_1+\psi_2,\ \ \ \ \ \frac{\partial\psi_2}{\partial s}=(z+b)\psi_1-a\psi_2;\ \ \ \ \psi_j=\psi_j(z;s,\alpha,\beta),\ \ a=a(s,\alpha,\beta),\ \ b=b(s,\alpha,\beta),
\end{equation*}
and so,
\begin{equation}\label{f:12}
	\frac{\d }{\d s}A_s^{\alpha\beta}(x,y)=\frac{1}{2\pi\im}\psi_1(x;s,\alpha,\beta)\psi_1(y;s,\alpha,\beta)=-f_1(x)f_1(y).
\end{equation}
This means, exploiting the general setup of the integrable integral operators $A_s^{\alpha\beta}$,
\begin{align}
	\frac{\d}{\d s}\ln F(s;\alpha,\beta)\stackrel{\eqref{f:11}}{=}-\int_0^{\infty}\int_0^{\infty}&(I-A_s^{\alpha\beta})^{-1}(x,y)\frac{\d A_s^{\alpha\beta}}{\d s}(y,x)\d y\,\d x\stackrel{\eqref{f:12}}{=}\int_0^{\infty}\big((I-K_s^{\alpha})^{-1}f_1\big)(x)f_1(x)\,\d x\nonumber\\
	&=\int_0^{\infty}F_1(x)f_1(x)\,\d x=-\int_0^{\infty}F_1(x)h_2(x)\,\d x=-W_1^{12}(s,\alpha,\beta),\label{f:13}
\end{align}
and by \eqref{f:8}, compare also condition $(4)$ in RHPs \ref{Arnoundress}, \ref{YattRHP},
\begin{equation*}
	W_1(s,\alpha,\beta)=U_1(s,\alpha,\beta)-Q_1(s,\alpha,\beta)\stackrel{\eqref{f:10}}{=}Q_1(s,\alpha,\beta-1)-Q_1(s,\alpha,\beta).
\end{equation*}
Consequently, by \eqref{f:13} and Lemma \ref{ArnoLax},
\begin{equation}\label{f:14}
	\frac{\d}{\d s}\ln F(s;\alpha,\beta)=a(s,\alpha,\beta)-a(s,\alpha,\beta-1)=\sigma(s,\alpha,\beta-1)-\sigma(s,\alpha,\beta),
\end{equation}
where $s\mapsto\sigma(s,\alpha,\beta)$ solves the sigma-Painlev\'e-II equation \eqref{B:3} with large $s$-asymptotics written in \eqref{e:17}. Integrating \eqref{f:14} with respect to $s$, using $F(\infty;\alpha,\beta)=1$, results in \eqref{e:16}. 
\end{proof}

\section{The ratio of Hankel determinants - proof of Theorem \ref{BS:2}}\label{sec5}
Recall the formula
\begin{equation*}
	D_n\big(\lambda,\alpha,\beta;V(x)\big)=\frac{1}{n!}\int_{\mathbb{R}^n}\prod_{1\leq j<k\leq n}|x_k-x_j|^2\prod_{\ell=1}^n\omega_{\alpha\beta}(x_{\ell}-\lambda)\e^{-nV(x_{\ell})}\d x_{\ell},
\end{equation*}
and the differential identities \eqref{a:9} and \eqref{a:10} for it. The right hand sides in those identities are expressible in terms of Riemann-Hilbert data via \eqref{a:6}, \eqref{a:7} and via $p_{n,n}(z)=\kappa_{n,n}X^{11}(z),p_{n-1,n}(z)=\frac{\im}{2\pi}\kappa_{n-1,n}^{-1}X^{21}(z)$. Thus, seeing that RHP \ref{FIKbeast} is solvable for large $n\geq n_0$ in the admissible parameter domain $s\in\mathbb{R},\alpha>-1,\beta\notin(-\infty,0)$ by Theorem \ref{Arnotheo1} and by the invertible chain of transformations \eqref{f:2}, we now choose $\theta=0$ and derive asymptotics for the right hand sides in \eqref{a:9}, \eqref{a:10}. The upcoming calculations are straightforward but tedious given the explicit appearance of $n$ in \eqref{a:9},\eqref{a:10}, while \eqref{a:35} decays only of order $1/\sqrt[3]{n}$.
\begin{lem} Write
\begin{equation*}
	R_k:=\frac{\im}{2\pi}\int_{\Sigma_R}R_-(w)\big(G_R(w)-I\big)w^{k-1}\d w,\ \ \ k\in\mathbb{N},
\end{equation*}
for the coefficients in the asymptotic series of $R(z)=R(z;s,\alpha,\beta,n,\theta=0)$ at $z=\infty$, compare \eqref{a:34}. Then
\begin{equation}\label{g:1}
	\kappa_{n-1,n}^2=\frac{\im}{2\pi}\e^{n\ell_{V_0}}\big(R_1^{21}+P_1^{21}\big),\ \ \ \ \ \ \kappa_{n,n}^{-2}=-2\pi\im\,\e^{-n\ell_{V_0}}\big(R_1^{12}+P_1^{12}\big),\ \ \ \ \ \ \delta_{n,n}=R_1^{11}+P_1^{11},
\end{equation}
\begin{equation}\label{g:2}
	\gamma_{n,n}=s_{n0}\big(R_1^{11}+P_1^{11}\big)+R_2^{11}+P_2^{11}+\big(R_1P_1\big)^{11}-\frac{n}{4},
\end{equation}
with $P_k=P_k(s,\alpha,n,\theta=0)$ written out in condition $(3)$ of RHP \ref{outerArno} and $\ell_{V_0}=1+\ln 2$, see \eqref{a:2}.
\end{lem}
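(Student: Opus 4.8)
\emph{Proof strategy.} The plan is to read off all four quantities from the expansion of $X(z)$ at $z=\infty$ through the formul\ae\ \eqref{a:6},\eqref{a:7}, after expressing $X(z)$ in terms of $R(z)$ and $P(z)$ by unwinding the chain $X\to Y\to T\to S\to R$ at $\theta=0$. First I would observe that near $z=\infty$ every transformation is explicit: the lens multiplier in \eqref{a:17} equals $I$ away from $L_{\pm}$, so $S(z)=T(z)$ there, and $R(z)=S(z)P(z)^{-1}$ by \eqref{a:33} outside the two disks and off $\Sigma_S$. Tracing \eqref{a:12} and \eqref{a:14} back with $\theta=0$ then gives, for all sufficiently large $|z|$ and $n\geq n_0$,
\begin{equation*}
	X(z)=\e^{-\frac n2\ell_{V_0}\sigma_3}\,R(z-s_{n0})\,P(z-s_{n0})\,\e^{n(g_0(z)+\frac12\ell_{V_0})\sigma_3}.
\end{equation*}

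The next step is to expand this as $z\to\infty$. The equilibrium density of $V_0(x)=x^2$ is the symmetric semicircle $\tfrac1\pi\sqrt{(2-x^2)_+}$, whose first moment over $E$ vanishes and whose second moment equals $\tfrac12$, so Lemma \ref{glemma}(iii) yields $g_0(z)=\ln z-\tfrac1{4z^2}+\mathcal{O}(z^{-3})$ and hence $\e^{ng_0(z)\sigma_3}=z^{n\sigma_3}\bigl(I-\tfrac n{4z^2}\sigma_3+\mathcal{O}(z^{-3})\bigr)$; moreover $\ell_{V_0}=1+\ln2$ by Lemma \ref{convex}. Using \eqref{a:34} and condition $(3)$ of RHP \ref{outerArno}, both shifted by $s_{n0}$, one has $R(z-s_{n0})=I+\tfrac{R_1}z+\tfrac{R_1s_{n0}+R_2}{z^2}+\mathcal{O}(z^{-3})$ and likewise for $P(z-s_{n0})$, whence
\begin{equation*}
	R(z-s_{n0})P(z-s_{n0})=I+\frac{R_1+P_1}{z}+\frac{(R_1+P_1)s_{n0}+R_2+P_2+R_1P_1}{z^2}+\mathcal{O}(z^{-3}).
\end{equation*}
Conjugation by $\e^{\mp\frac n2\ell_{V_0}\sigma_3}$ multiplies the $(1,2)$, resp. $(2,1)$, entry by $\e^{-n\ell_{V_0}}$, resp. $\e^{n\ell_{V_0}}$, and leaves the diagonal unchanged; commuting $z^{n\sigma_3}$ past the remaining diagonal factors produces
\begin{equation*}
	X(z)z^{-n\sigma_3}=I+\frac{\widetilde M_1}{z}+\frac{\widetilde M_2-\frac n4\sigma_3}{z^2}+\mathcal{O}(z^{-3}),
\end{equation*}
with $\widetilde M_1^{11}=R_1^{11}+P_1^{11}$, $\widetilde M_1^{21}=\e^{n\ell_{V_0}}(R_1^{21}+P_1^{21})$, $\widetilde M_1^{12}=\e^{-n\ell_{V_0}}(R_1^{12}+P_1^{12})$ and $\widetilde M_2^{11}=s_{n0}(R_1^{11}+P_1^{11})+R_2^{11}+P_2^{11}+(R_1P_1)^{11}$.

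Finally I would substitute into \eqref{a:6},\eqref{a:7}. Since the $(1,1)$, $(2,1)$, $(1,2)$ entries of $(\,\cdot\,)z^{n\sigma_3}$ equal $z^n$, $z^n$, $z^{-n}$ times the corresponding entry, the limits in \eqref{a:6},\eqref{a:7} extract $\delta_{n,n}=\widetilde M_1^{11}$, $\kappa_{n-1,n}^2=\tfrac{\im}{2\pi}\widetilde M_1^{21}$, $\kappa_{n,n}^{-2}=-2\pi\im\,\widetilde M_1^{12}$ and $\gamma_{n,n}=\widetilde M_2^{11}-\tfrac n4$, which are precisely \eqref{g:1},\eqref{g:2}. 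The argument is entirely elementary; the two points that need care are the bookkeeping of the translation $z\mapsto z-s_{n0}$ — it is this shift that feeds the cross-terms $R_1s_{n0}$, $P_1s_{n0}$ into the $1/z^2$-coefficient — and the evaluation of the subleading coefficient of $g_0$ for the quadratic weight, since that $-\tfrac1{4z^2}$ term, equivalently the Gaussian second moment $\tfrac12$, is what produces the $-\tfrac n4$ in $\gamma_{n,n}$. I do not expect a genuine obstacle beyond this accounting.
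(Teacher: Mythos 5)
Your proposal is correct and reconstructs precisely the paper's argument: the paper simply records the two displayed expansions ($g_0(z)=\ln z-\frac1{4z^2}+\mathcal{O}(z^{-3})$ and the expansion of $R(z-s_{n0})P(z-s_{n0})$) and asserts the conclusion ``follows from the chain of transformations \eqref{f:2}'', which is exactly the unwinding of $X\to Y\to T\to S\to R$ at $\theta=0$ that you carried out, followed by the extraction via \eqref{a:6}--\eqref{a:7}. The two points you flag as requiring care — the shift $z\mapsto z-s_{n0}$ feeding the cross-terms into the $1/z^2$-coefficient, and the second moment $\tfrac12$ of the semicircle producing the $-\tfrac n4$ — are indeed the only bookkeeping issues, and you handled both correctly.
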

 \begin{proof} When $\theta=0$, then
 \begin{align*}
 	g_0(z)=&\,\ln z-\frac{1}{4z^2}+\mathcal{O}\big(z^{-3}\big),\\
	R(z-s_{n0})P(z-s_{n0})=&\,I+\frac{1}{z}\big(R_1+P_1\big)+\frac{1}{z^2}\big(R_1P_1+R_2+P_2+s_{n0}(R_1+P_1)\big)+\mathcal{O}\big(z^{-3}\big),
\end{align*}
as $z\rightarrow\infty, z\notin\mathbb{R}$. The identities in \eqref{g:1},\eqref{g:2} now follow from the chain of transformations \eqref{f:2}.
 \end{proof}
\subsection{Asymptotics for $R_k$} Abbreviate $\Sigma_a:=\partial\mathbb{D}_{\epsilon}(-\sqrt{2}-s_{n0})$ and $\Sigma_b:=\partial\mathbb{D}_{\epsilon}(\sqrt{2})$, both oriented clockwise as shown in Figure \ref{fig3}. By Proposition \ref{snorm1} and estimates \eqref{a:25},\eqref{a:31},\eqref{a:35},
\begin{equation*}
	R_k=\frac{\im}{2\pi}\left[\oint_{\Sigma_a}+\oint_{\gamma_b}\right]\big(G_R(w)-I\big)w^{k-1}\d w+\frac{\im}{2\pi}\oint_{\Sigma_b}\big(R_-(w)-I\big)\big(G_R(w)-I\big)w^{k-1}\d w+\mathcal{O}\big(n^{-\frac{4}{3}}\big).
\end{equation*}
We thus first set out to compute asymptotics for $R_-(z),z\in\Sigma_b$.
\begin{lem} As $n\rightarrow\infty$, for any $\theta\in[0,1]$,
\begin{align}
	R_-(z)-I=&\left\{\bigg[\res_{w=\sqrt{2}}M_1(w)\bigg]\frac{1}{z-\sqrt{2}}-M_1(z)\right\}\frac{1}{\sqrt[3]{n}}\nonumber\\
	&\hspace{1cm}+\left\{M_1^2(z)-\bigg[\res_{w=\sqrt{2}}M_1(w)\bigg]\frac{M_1(z)}{z-\sqrt{2}}-M_2(z)+\frac{L}{z-\sqrt{2}}\right\}\frac{1}{\sqrt[3]{n^2}}+\mathcal{O}\big(n^{-1}\big),\label{g:3}
\end{align}
uniformly in $z\in\Sigma_b$ and in $(s,\alpha,\beta)\in\mathbb{R}\times(-1,\infty)\times(\mathbb{C}\setminus(-\infty,0))$ chosen on compact sets. Here
\begin{align*}
	L=\sigma\Bigg\{\frac{Q_1^{11}}{\zeta_1^b}\begin{bmatrix}-\alpha&1+\alpha\\ 1-\alpha&\alpha\end{bmatrix}+\frac{s^4}{32}\frac{1}{\tau_{\theta}^{2/3}}\begin{bmatrix}0&1\\ 1&0\end{bmatrix}-\frac{s^2}{4}\frac{Q_1^{12}}{\sqrt{\zeta_1^b}}\frac{1}{\tau_{\theta}^{1/3}}\begin{bmatrix}0&1\\ 1&0\end{bmatrix}+\frac{\alpha}{2\zeta_1^b}\big(Q_1^{12}\big)^2\begin{bmatrix}1&-1\\ 1&-1\end{bmatrix}\Bigg\}\sigma^{-1},
\end{align*}
and $Q_k^{ij}=Q_k^{ij}(n^{\frac{2}{3}}\zeta^b(\sqrt{2}),\alpha,\beta)$ as in RHP \ref{YattRHP}. Also, $\sigma:=\chi^{\sigma_3}\e^{\im\frac{\pi}{4}\sigma_3}$ and $M_j(z)$ as in RHP \ref{localb}.
\end{lem}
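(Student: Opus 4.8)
The starting point is the integral representation \eqref{a:34}, which gives $R(z)=I+\frac{1}{2\pi\im}\int_{\Sigma_R}R_-(w)(G_R(w)-I)\frac{\d w}{w-z}$, so that for $z\in\Sigma_b$ the boundary value $R_-(z)$ satisfies the singular integral equation $R_-(z)-I=\mathcal{C}_-\big[R_-(\cdot)(G_R-I)\big](z)$, with $\mathcal{C}_-$ the Cauchy operator on $\Sigma_R$. The plan is to iterate this fixed point equation twice, tracking everything up to $\mathcal{O}(n^{-1})$, and using \eqref{a:35} to control the remainder of each iterate. The key structural input is the breakdown of $\Sigma_R$ into the two circles $\Sigma_a=\partial\mathbb{D}_\epsilon(-\sqrt2-s_{n0})$, $\Sigma_b=\partial\mathbb{D}_\epsilon(\sqrt2)$, and the rest; on the rest Proposition \ref{snorm1} shows $G_R-I$ is exponentially small, on $\Sigma_a$ it is $\mathcal{O}(n^{-1})$ by \eqref{a:31}, and on $\Sigma_b$ it is $G_R-I=MP^{-1}-I=\sum_{k=1}^3 M_k(z)n^{-k/3}+\mathcal{O}(n^{-4/3})$ by \eqref{a:25}. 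Hence only the $\Sigma_b$ contribution produces terms of order $n^{-1/3}$ and $n^{-2/3}$, and only the leading two matter here.

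First I would write, for $z\in\Sigma_b$,
\begin{equation*}
	R_-(z)-I=\frac{1}{2\pi\im}\oint_{\Sigma_b}\big(R_-(w)-I\big)\big(G_R(w)-I\big)\frac{\d w}{w-z}+\frac{1}{2\pi\im}\oint_{\Sigma_b}\big(G_R(w)-I\big)\frac{\d w}{w-z}+\mathcal{O}\big(n^{-4/3}\big),
\end{equation*}
where the orientation of $\Sigma_b$ is clockwise and $z$ lies \emph{on} the contour (so the Cauchy integral is the minus boundary value, i.e. the contour is approached from outside $\mathbb{D}_\epsilon(\sqrt2)$, which for a clockwise circle means the pole at $w=z$ is enclosed). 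Evaluating $\frac{1}{2\pi\im}\oint_{\Sigma_b}M_k(w)\frac{\d w}{w-z}$ by residues: $M_k(w)$ is meromorphic in $\mathbb{D}_\epsilon(\sqrt2)$ with a pole only at $w=\sqrt2$ coming from the factors $1/\hbar(w)$ and $1/\hbar^2(w)$ (recall $\hbar(w)=\zeta^b(w)-\zeta^b(\sqrt2)$ vanishes simply at $w=\sqrt2$ with $\hbar'(\sqrt2)=\zeta_1^b$), so for a clockwise circle enclosing both $\sqrt2$ and $z$ one picks up $-\big(M_k(z)+\operatorname*{res}_{w=\sqrt2}\frac{M_k(w)}{w-z}\big)$, and one rewrites $\operatorname*{res}_{w=\sqrt2}\frac{M_k(w)}{w-z}=\frac{1}{z-\sqrt2}\big[\operatorname*{res}_{w=\sqrt2}M_k(w)\big]+(\text{higher-order pole contributions})$. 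At order $n^{-1/3}$ this gives exactly the bracket $\big[\operatorname*{res}_{w=\sqrt2}M_1(w)\big]\frac{1}{z-\sqrt2}-M_1(z)$ in \eqref{g:3}, since $M_1$ has only a simple pole at $\sqrt2$. At order $n^{-2/3}$ one gets $-M_2(z)$ minus the residue contribution of $M_2$ (which again is $\frac{1}{z-\sqrt2}\operatorname*{res}_{w=\sqrt2}M_2(w)$ since the double-pole piece of $M_2$, namely $0$ — inspecting RHP \ref{localb}, $M_2$ has only $1/\hbar(z)$, hence a simple pole), \emph{plus} the feedback term from inserting the order-$n^{-1/3}$ part of $R_--I$ into the quadratic integral $\oint (R_--I)(G_R-I)$, which produces $M_1^2(z)-\big[\operatorname*{res}_{w=\sqrt2}M_1(w)\big]\frac{M_1(z)}{z-\sqrt2}$ after a second residue evaluation, together with whatever residue-at-$\sqrt2$ remainder assembles into the stated matrix $L/(z-\sqrt2)$.

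The main obstacle is the explicit computation of $L$: it is the coefficient of $\frac{1}{z-\sqrt2}$ at order $n^{-2/3}$, and it collects several pieces — the double-pole residue contributions that were suppressed in writing $\operatorname*{res}_{w=\sqrt2}\frac{M_k(w)}{w-z}$ as a single $\frac1{z-\sqrt2}$ term, the residue-of-product term from the second iteration, and the contribution of $M_1$ squared near $\sqrt2$. Concretely one must expand $E^b(z)=E_0^b+E_1^b(z-\sqrt2)+\cdots$ (available from \eqref{a:23}), $\hbar(z)=\zeta_1^b(z-\sqrt2)(1+\zeta_2^b(z-\sqrt2)+\cdots)$ from Proposition \ref{conf1}, and $\hbar^2(z)$ accordingly, then compute the Laurent coefficients at $z=\sqrt2$ of $M_1(z)$ up to constant order and of $M_2(z)$ up to its residue, plug the conjugating matrix $E_*^b(z)=E^b(z)n^{-\sigma_3/6}$ through the constant nilpotent matrices $\left[\begin{smallmatrix}0&1\\0&0\end{smallmatrix}\right]$, $\left[\begin{smallmatrix}0&0\\1&0\end{smallmatrix}\right]$, $\sigma_3$, and simplify. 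The factors $\sigma=\chi^{\sigma_3}\e^{\im\frac\pi4\sigma_3}$ arise because $E_0^b=\sigma\cdot\frac{1}{\sqrt2}\left[\begin{smallmatrix}1&-1-\alpha\\1&1-\alpha\end{smallmatrix}\right](n^{2/3}\zeta_1^b(2\sqrt2+s_{n0}))^{\sigma_3/4}$, and the $n$-dependent scalar $(n^{2/3}\zeta_1^b(2\sqrt2+s_{n0}))^{\sigma_3/4}$ conjugates the nilpotents trivially on their matrix entries, leaving behind the explicit $\tau_\theta^{-1/3}$, $\tau_\theta^{-2/3}$ and $\zeta_1^b$ factors displayed in $L$; the appearance of $s^2$ and $s^4$ traces to $n^{2/3}\zeta^b(\sqrt2)=s+\mathcal{O}(n^{-2/3})$ feeding through $Q_k^{ij}(n^{2/3}\zeta^b(\sqrt2),\alpha,\beta)$ and through the $D_k(z)$ pieces. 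Throughout, uniformity in $(s,\alpha,\beta)$ on compact sets and in $\theta\in[0,1]$ follows from the corresponding uniform statements in \eqref{a:25}, \eqref{a:35}, and Proposition \ref{snorm1}, so the only genuine work is the bookkeeping of these Laurent expansions.
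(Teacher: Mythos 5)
Your plan is correct and matches the paper's own proof in all essentials: one iterates the representation \eqref{a:34} twice over $\Sigma_b$, discards the $\Sigma_a$ and lens contributions (which are $\mathcal{O}(n^{-1})$ resp.\ exponentially small), evaluates the Cauchy integrals by residues at $w=\sqrt{2}$, and expresses the resulting coefficient of $(z-\sqrt{2})^{-1}$ at order $n^{-2/3}$ through the Laurent coefficients $M_k^j$ of $M_j(z)$ near $\sqrt 2$; in the paper's notation $L=M_{-1}^2-M_0^1M_{-1}^1$ and this is what assembles into the displayed matrix via \eqref{a:23} and Lemma \ref{ugly}.

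One small inaccuracy worth flagging: you assert that $M_2(z)$ has only a simple pole ``since it has only $1/\hbar(z)$,'' but looking at condition (4) of RHP \ref{localb}, $M_2$ also contains the term $E_*^b(z)\bigl[\begin{smallmatrix}0&1\\0&0\end{smallmatrix}\bigr]E_*^b(z)^{-1}\frac{D_1(z)Q_1^{12}}{\hbar(z)}$, and $D_1(z)$ itself has a simple pole at $\sqrt{2}$ (see Lemma \ref{ugly}), so this term is a priori a double pole. The conclusion $M_{-2}^2=0$ is true, but it relies on a cancellation that one must actually check (the paper records it by quoting \eqref{a:23}); your stated reason does not prove it. Relatedly, you describe the feedback contribution as ``the contribution of $M_1$ squared near $\sqrt 2$,'' but the integrand in the second iteration is $\{M_1(w)-M_{-1}^1/(w-\sqrt{2})\}M_1(w)$, not $M_1(w)^2$, and this matters: the first factor has been de-polarized so the residue of the product is the single ordered term $M_0^1M_{-1}^1$, not the symmetrization $M_0^1M_{-1}^1+M_{-1}^1M_0^1$ that $M_1^2$ would produce. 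These are bookkeeping slips rather than structural errors, but in a computation this sign- and ordering-sensitive they should be made precise before claiming the formula for $L$.
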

\begin{proof} We will iterate \eqref{a:34}. First, by \eqref{a:25},\eqref{a:31} and \eqref{a:35}, for $z\in\Sigma_b$, via residue theorem,
\begin{align}
	R_-(z)-I=&\,\frac{1}{2\pi\im}\oint_{\Sigma_b}\big(G_R(w)-I\big)\frac{\d w}{w-z_-}+\mathcal{O}\big(n^{-\frac{2}{3}}\big)\stackrel{\eqref{a:25}}{=}\frac{1}{2\pi\im}\oint_{\Sigma_b}M_1(w)n^{-\frac{1}{3}}\frac{\d w}{w-z_-}+\mathcal{O}\big(n^{-\frac{2}{3}}\big)\nonumber\\
	=&\,-\bigg\{M_1(z)-\bigg[\res_{w=\sqrt{2}}M_1(w)\bigg]\frac{1}{z-\sqrt{2}}\bigg\}\frac{1}{\sqrt[3]{n}}+\mathcal{O}\big(n^{-\frac{2}{3}}\big).\label{g:4}
\end{align}
In turn, using \eqref{g:4} at once back in \eqref{a:34},
\begin{align*}
	&R_-(z)-I=\frac{1}{2\pi\im}\oint_{\Sigma_b}\big(R_-(w)-I\big)\big(G_R(w)-I\big)\frac{\d w}{w-z_-}+\frac{1}{2\pi\im}\oint_{\Sigma_b}\big(G_R(w)-I\big)\frac{\d w}{w-z_-}+\mathcal{O}\big(n^{-1}\big)\\
	=&\,\frac{\im}{2\pi}\oint_{\Sigma_b}\bigg\{M_1(w)-\bigg[\res_{\xi=\sqrt{2}}M_1(\xi)\bigg]\frac{1}{w-\sqrt{2}}\bigg\}\frac{M_1(w)}{\sqrt[3]{n^2}}\frac{\d w}{w-z_-}-\frac{\im}{2\pi}\oint_{\Sigma_b}\bigg\{\sum_{k=1}^2\frac{M_k(w)}{\sqrt[3]{n^k}}\bigg\}\frac{\d w}{w-z_-}+\mathcal{O}\big(n^{-1}\big)\\
	=&\,\left\{\bigg[\res_{\xi=\sqrt{2}}M_1(\xi)\bigg]\frac{1}{z-\sqrt{2}}-M_1(z)\right\}\frac{1}{\sqrt[3]{n}}+\left\{\big(M_1(z)\big)^2-\bigg[\res_{\xi=\sqrt{2}}M_1(\xi)\bigg]\frac{M_1(z)}{z-\sqrt{2}}-M_2(z)\right\}\frac{1}{\sqrt[3]{n^2}}\\
    &\hspace{2cm}-\frac{M_0^1M_{-1}^1-M_{-1}^2}{z-\sqrt{2}}\frac{1}{\sqrt[3]{n^2}}+\mathcal{O}\big(n^{-1}\big).
\end{align*}
Here, $M_k^j$ denotes the Laurent coefficients in the local expansion of
\begin{equation*}
    M_j(z)=\sum_{k=-j}^{\infty}M_k^j(z-\sqrt{2})^k,\ \ \ \ \ 0<|z-\sqrt{2}|<\epsilon,\ \ \ j\in\{1,2\},
\end{equation*}
and the same can be read off from \eqref{a:23}, from the explicit formul\ae\,in condition $(4)$ of RHP \ref{localb} and from Lemma \ref{ugly}. In fact, and which was already used in the calculation of $R_-(z)-I$ above,
\begin{equation*}  
	M_{-2}^2\stackrel{\eqref{a:23}}{=}\begin{bmatrix}0&0\\0&0\end{bmatrix},
\end{equation*}
followed by $L:=M_{-1}^2-M_0^1M_{-1}^1$ as written in the statement of the Lemma.
\end{proof}
Next we calculate the contour integrals for $R_k$ that do not involve $R_-(w)-I,w\in\Sigma_b$. 
\begin{prop} Let $k\in\mathbb{N}$. As $n\rightarrow\infty$, with $\sigma=\chi^{\sigma_3}\e^{\im\frac{\pi}{4}\sigma_3}$,
\begin{align}
    &\frac{\im}{2\pi}\oint_{\Sigma_a}\big(G_R(w)-I\big)\Big(\frac{w}{\varkappa}\Big)^{k-1}\d w=\sigma\Bigg\{\frac{5}{96\sqrt{2}}\sqrt{1-\frac{\varkappa}{\sqrt{2}}}\textcolor{brown}{\bigg\{}\frac{1}{2(\varkappa-\sqrt{2})}\begin{bmatrix}-(2\alpha^2+1)&-2\alpha^2-4\alpha-1\smallskip\\ 2\alpha^2-4\alpha+1&2\alpha^2+1\end{bmatrix}\nonumber\\
    &+\bigg(\frac{3}{20\sqrt{2}}+\frac{k-1}{\varkappa}\bigg)\begin{bmatrix}-1&-1\\ 1&1\end{bmatrix}\textcolor{brown}{\bigg\}}+\frac{7}{192}\Big(1-\frac{\varkappa}{\sqrt{2}}\Big)^{-\frac{1}{2}}\begin{bmatrix}\alpha^2-1&(\alpha+1)^2\smallskip\\ -(\alpha-1)^2 & 1-\alpha^2\end{bmatrix}\Bigg\}\frac{\sigma^{-1}}{n}+\mathcal{O}\big(n^{-\frac{5}{3}}\big),\label{g:5}
\end{align}
uniformly in $(s,\alpha,\beta)\in\mathbb{R}\times(-1,\infty)\in(\mathbb{C}\setminus(-\infty,0))$ chosen on compact sets. Here, $\varkappa:=-\sqrt{2}-s_{n0}$.
\end{prop}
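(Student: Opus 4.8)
The plan is to localise the integral to the disk $\mathbb{D}_{\epsilon}(\varkappa)$ about the left endpoint $\varkappa=-\sqrt{2}-s_{n0}$ and to reduce it to a single residue at $w=\varkappa$. On $\Sigma_a=\partial\mathbb{D}_{\epsilon}(-\sqrt{2}-s_{n0})$ the jump equals $G_R(w)=N(w)P(w)^{-1}$, so the matching estimate \eqref{a:31} of RHP \ref{locala}, condition $(4)$, gives
\begin{equation*}
	G_R(w)-I=N_1(w;s,\alpha,n,0)\,n^{-1}+\mathcal{O}\big(n^{-5/3}\big),\ \ \ w\in\Sigma_a,
\end{equation*}
uniformly on $\Sigma_a$ and in $(s,\alpha,\beta)$ on compact sets (here $\theta=0$). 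Since $N_1$ is bounded in $n$, this already yields
\begin{equation*}
	\frac{\im}{2\pi}\oint_{\Sigma_a}\big(G_R(w)-I\big)\Big(\frac{w}{\varkappa}\Big)^{k-1}\d w=\frac{1}{n}\,\frac{\im}{2\pi}\oint_{\Sigma_a}N_1(w)\Big(\frac{w}{\varkappa}\Big)^{k-1}\d w+\mathcal{O}\big(n^{-5/3}\big),
\end{equation*}
so everything hinges on the $n^{-1}$--coefficient.

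Next I would note that $w\mapsto N_1(w)(w/\varkappa)^{k-1}$ is meromorphic in $\mathbb{D}_{\epsilon}(\varkappa)$ (for $\epsilon$ small this disk contains only the endpoint $\varkappa$), with a single pole, of order at most two, at $w=\varkappa$: the factor $(w/\varkappa)^{k-1}$ is entire, $z\mapsto E^a(z)$ is analytic and invertible in $\mathbb{D}_{\epsilon}(\varkappa)$ by the proposition preceding RHP \ref{locala} (indeed $\det E_0^a\neq0$), and $z\mapsto\zeta^a(z)$ is conformal near $z=\varkappa$ with a simple zero there by Proposition \ref{conf2}, so that only the factors $(\zeta^a(z))^{-1}$ and $(\zeta^a(z))^{-2}$ in $N_1$ are singular. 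As $\Sigma_a$ is oriented clockwise and $\frac{\im}{2\pi}=-\frac{1}{2\pi\im}$, the residue theorem gives
\begin{equation*}
	\frac{\im}{2\pi}\oint_{\Sigma_a}N_1(w)\Big(\frac{w}{\varkappa}\Big)^{k-1}\d w=\res_{w=\varkappa}\Big[N_1(w)\big(w/\varkappa\big)^{k-1}\Big].
\end{equation*}

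It then remains to evaluate this residue explicitly. I would Taylor-expand $E^a(w)=E_0^a+E_1^a(w-\varkappa)+\mathcal{O}((w-\varkappa)^2)$ via \eqref{a:30}, expand $\zeta^a(w)$ about $w=\varkappa$ via Proposition \ref{conf2}, and expand $(w/\varkappa)^{k-1}=1+(k-1)(w-\varkappa)/\varkappa+\mathcal{O}((w-\varkappa)^2)$. The simple-pole part $\tfrac{7}{48}E_{\ast}^a(w)\bigl[\begin{smallmatrix}0&0\\1&0\end{smallmatrix}\bigr]E_{\ast}^a(w)^{-1}(\zeta^a(w))^{-1}$ contributes only its value at $w=\varkappa$, producing the $k$-independent $(1-\varkappa/\sqrt{2})^{-1/2}$--term of \eqref{g:5}; the double-pole part $\tfrac{5}{48}E_{\ast}^a(w)\bigl[\begin{smallmatrix}0&1\\0&0\end{smallmatrix}\bigr]E_{\ast}^a(w)^{-1}(\zeta^a(w))^{-2}$ needs the first derivative at $w=\varkappa$, hence $E_1^a$ and the subleading coefficient of $\zeta^a$, and produces the $\sqrt{1-\varkappa/\sqrt{2}}$--line of \eqref{g:5}, the $(k-1)/\varkappa$ summand arising from differentiating $(w/\varkappa)^{k-1}$. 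Two simplifications collapse the result to the stated form. First, for $\theta=0$ one has $h_{V_0}\equiv\pi^{-1}$, so $\pi h_{V_0}(-\sqrt{2})=1$ and $h_{V_0}'(-\sqrt{2})=0$ in Proposition \ref{conf2} and in $E_0^a,E_1^a$, while $2\sqrt{2}+s_{n0}=\sqrt{2}-\varkappa$; thus the factor $\big(n^{2/3}\sqrt{2}(\pi h_{V_0}(-\sqrt{2}))^{2/3}(2\sqrt{2}+s_{n0})\big)^{\frac14\sigma_3}$ inside $E_0^a,E_1^a$ equals $n^{\frac16\sigma_3}\big(2(1-\varkappa/\sqrt{2})\big)^{\frac14\sigma_3}$, the $n$-powers cancel against $n^{-\frac16\sigma_3}$ in $E_{\ast}^a=E^an^{-\frac16\sigma_3}$, and conjugating the nilpotent matrices $\bigl[\begin{smallmatrix}0&0\\1&0\end{smallmatrix}\bigr],\bigl[\begin{smallmatrix}0&1\\0&0\end{smallmatrix}\bigr]$ by the surviving diagonal scaling $\big(2(1-\varkappa/\sqrt{2})\big)^{\pm\frac14\sigma_3}$ produces exactly the scalars $\sqrt{1-\varkappa/\sqrt{2}}$ and $(1-\varkappa/\sqrt{2})^{-1/2}$ of \eqref{g:5}. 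Second, $E_0^a$ and $E_1^a$ carry the common left factor $\chi^{\sigma_3}\e^{\im\frac{\pi}{4}\sigma_3}=\sigma$, so conjugation by $\sigma$ factors out and one is left with the $\alpha$-dependent $2\times2$ matrices of \eqref{g:5}. Uniformity follows from that of \eqref{a:31} and from the residue being a finite, jointly analytic function of $(s,\alpha,\beta)$; since moreover $\varkappa=-\sqrt{2}+\mathcal{O}(n^{-2/3})$, replacing $\varkappa$ by $-\sqrt{2}$ in the $n^{-1}$--coefficient would alter it only by $\mathcal{O}(n^{-5/3})$. The only delicate point is the order-two residue bookkeeping, namely computing $\tfrac{\d}{\d w}\{E_{\ast}^a(w)\bigl[\begin{smallmatrix}0&1\\0&0\end{smallmatrix}\bigr]E_{\ast}^a(w)^{-1}\}$ at $w=\varkappa$ from \eqref{a:30}, and then carrying out the matrix algebra that reduces the outcome to the closed form on the right-hand side of \eqref{g:5}.
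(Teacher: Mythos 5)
Your proposal is correct and follows essentially the same route as the paper: approximate $G_R-I$ on $\Sigma_a$ by $N_1/n$ via the matching estimate \eqref{a:31}, and then reduce the clockwise contour integral to a residue at $w=\varkappa$, which picks out the combination $\tfrac{k-1}{\varkappa}N^1_{-2}+N^1_{-1}$ of the two singular Laurent coefficients of $N_1$. The extra bookkeeping you describe (the $\theta=0$ simplification $h_{V_0}\equiv\pi^{-1}$, the cancellation of $n^{\pm1/6\sigma_3}$, and factoring out $\sigma=\chi^{\sigma_3}\e^{\im\pi/4\,\sigma_3}$) is exactly what makes the paper's explicit values of $N^1_{-2}$ and $N^1_{-1}$ come out.
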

\begin{proof} By \eqref{a:31} we have the Laurent expansion $N_1(z)=\sum_{k=-2}^{\infty}N_k^1(z-\varkappa)^k,0<|z-\varkappa|<\epsilon$, and thus via residue theorem,
\begin{equation*}
    \frac{\im}{2\pi}\oint_{\Sigma_a}\big(G_R(w)-I\big)\Big(\frac{w}{\varkappa}\Big)^{k-1}\d w=\Bigg\{\bigg(\frac{k-1}{\varkappa}\bigg)N_{-2}^1+N_{-1}^1\Bigg\}\frac{1}{n}+\mathcal{O}\big(n^{-\frac{5}{3}}\big).
\end{equation*}
It remains to compute
\begin{align*}
    N_{-2}^1\stackrel{\eqref{a:30}}{=}\frac{5}{96\sqrt{2}}\sqrt{1-\frac{\varkappa}{\sqrt{2}}}\,\sigma\begin{bmatrix}-1&-1\\ 1&1\end{bmatrix}\sigma^{-1},
\end{align*}
as well as
\begin{align*}
    N_{-1}^1=&\,\frac{7}{192}\Big(1-\frac{\varkappa}{\sqrt{2}}\Big)^{-\frac{1}{2}}\sigma\begin{bmatrix}\alpha^2-1&(\alpha+1)^2\\ -(\alpha-1)^2&1-\alpha^2\end{bmatrix}\sigma^{-1}\\
    &+\frac{5}{96\sqrt{2}}\sqrt{1-\frac{\varkappa}{\sqrt{2}}}\,\sigma\Bigg\{\frac{1}{2(\varkappa-\sqrt{2})}\begin{bmatrix}-(2\alpha^2+1)&-2\alpha^2-4\alpha-1\\ 2\alpha^2-4\alpha+1&2\alpha^2+1\end{bmatrix}+\frac{3}{20\sqrt{2}}\begin{bmatrix}-1&-1\\ 1&1\end{bmatrix}\Bigg\}\sigma^{-1}.
\end{align*}
The derivation of \eqref{g:5} is complete.
\end{proof}
\begin{prop} Let $k\in\mathbb{N}$. As $n\rightarrow\infty$, with $Q_k^{ij}=Q_k^{ij}(n^{\frac{2}{3}}\zeta^b(\sqrt{2}),\alpha,\beta)$ as in RHP \ref{YattRHP},
\begin{align}
    &\frac{\im}{2\pi}\oint_{\Sigma_b}\big(G_R(w)-I\big)\Big(\frac{w}{\sqrt{2}}\Big)^{k-1}\d w=\sigma\begin{bmatrix}-1&1\\ -1&1\end{bmatrix}\Bigg\{\frac{1}{2}\sqrt{\frac{\sqrt{2}-\varkappa}{\zeta_1^b}}\,Q_1^{12}-\frac{s^2}{8}\sqrt{1-\frac{\varkappa}{\sqrt{2}}}\Bigg\}\frac{\sigma^{-1}}{\sqrt[3]{n}}\nonumber\\
    &+\sigma\Bigg\{\frac{Q_1^{11}}{\zeta_1^b}\begin{bmatrix}-\alpha&1+\alpha\\ 1-\alpha&\alpha\end{bmatrix}+\frac{s^4}{32\sqrt{2}}\begin{bmatrix}1&0\\0&1\end{bmatrix}-\frac{s^2}{8}\frac{Q_1^{12}}{\sqrt{\zeta_1^b\sqrt{2}}}\begin{bmatrix}1-\alpha&1+\alpha\\ 1-\alpha&1+\alpha\end{bmatrix}\Bigg\}\frac{\sigma^{-1}}{\sqrt[3]{n^2}}\nonumber\\
    &+\sigma\Bigg\{\bigg(\frac{k-1}{\sqrt{2}}\bigg)\begin{bmatrix}-1&1\\-1&1\end{bmatrix}\textcolor{brown}{\bigg\{}\frac{1}{2\zeta_1^b}\sqrt{\frac{\sqrt{2}-\varkappa}{\zeta_1^b}}\,Q_2^{12}+\frac{s^3}{48\sqrt{2}}\sqrt{1-\frac{\varkappa}{\sqrt{2}}}\bigg(1-\frac{s^3}{16}\bigg)+\frac{s^4}{64\sqrt{2}}\sqrt{\frac{\sqrt{2}-\varkappa}{\zeta_1^b}}\,Q_1^{12}\nonumber\\
    &-\frac{s^2}{8\zeta_1^b}\sqrt{1-\frac{\varkappa}{\sqrt{2}}}\,Q_1^{11}\textcolor{brown}{\bigg\}}+\frac{1}{2(\zeta_1^b)^2}\sqrt{\frac{\zeta_1^b}{\sqrt{2}-\varkappa}}\,Q_1^{21}\begin{bmatrix}\alpha^2-1&-(\alpha+1)^2\\ (\alpha-1)^2&1-\alpha^2\end{bmatrix}-\frac{s^3}{96\sqrt{2}}\sqrt{1-\frac{\varkappa}{\sqrt{2}}}\textcolor{orange}{\bigg\{}\bigg(1-\frac{s^3}{16}\bigg)\nonumber\\
    &\times\frac{1}{\sqrt{2}-\varkappa}\begin{bmatrix}3&1\\ -1&-3\end{bmatrix}-\frac{3}{2\sqrt{2}}\bigg(1-\frac{s^3}{80}\bigg)\begin{bmatrix}1&-1\\ 1&-1\end{bmatrix}\textcolor{orange}{\bigg\}}-\frac{s^4}{128\sqrt{2}}\sqrt{\frac{\sqrt{2}-\varkappa}{\zeta_1^b}}\,Q_1^{12}\textcolor{blue}{\bigg\{}
    \bigg(\zeta_2^b+\frac{1}{5\sqrt{2}}\bigg)\begin{bmatrix}-1&1\\ -1&1\end{bmatrix}\nonumber\\
    &+\frac{1}{\sqrt{2}-\varkappa}\begin{bmatrix}2\alpha^2+1&-2\alpha^2-4\alpha-1\\ 2\alpha^2-4\alpha+1&-2\alpha^2-1\end{bmatrix}\textcolor{blue}{\bigg\}}-\frac{Q_2^{12}}{4\zeta_1^b}\sqrt{\frac{\sqrt{2}-\varkappa}{\zeta_1^b}}\textcolor{olive}{\bigg\{}\frac{1}{\sqrt{2}-\varkappa}\begin{bmatrix}2\alpha^2+1&-2\alpha^2-4\alpha-1\\ 2\alpha^2-4\alpha+1&-2\alpha^2-1\end{bmatrix}\nonumber\\
    &+3\zeta_2^b\begin{bmatrix}-1&1\\ -1&1\end{bmatrix}\textcolor{olive}{\bigg\}}+\frac{s^2}{8\zeta_1^b}\sqrt{1-\frac{\varkappa}{\sqrt{2}}}\,Q_1^{11}\textcolor{magenta}{\bigg\{}\frac{1}{\sqrt{2}-\varkappa}\begin{bmatrix}2\alpha^2-1&-2\alpha^2-4\alpha-1\\ 2\alpha^2-4\alpha+1&1-2\alpha^2\end{bmatrix}-\bigg(\frac{1}{2(\sqrt{2}-\varkappa)}-\frac{1}{20\sqrt{2}}\nonumber\\
    &-\zeta_2^b\bigg)\begin{bmatrix}-1&1\\ -1&1\end{bmatrix}\textcolor{magenta}{\bigg\}}\Bigg\}
    \frac{\sigma^{-1}}{n}+\mathcal{O}\big(n^{-\frac{4}{3}}\big),\label{g:6}
\end{align}
uniformly in $(s,\alpha,\beta)\in\mathbb{R}\times(-1,\infty)\times(\mathbb{C}\setminus(-\infty,0))$ on compact sets. Here, $\sigma=\chi^{\sigma_3}\e^{\im\frac{\pi}{4}\sigma_3},\varkappa=-\sqrt{2}-s_{n0}$.
\end{prop}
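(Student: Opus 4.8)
The plan is to reduce the contour integral over $\Sigma_b$ to a sum of residues at $w=\sqrt2$ and then to evaluate these residues explicitly from the local data assembled in the earlier steps of the nonlinear steepest descent analysis. On $\Sigma_b=\partial\mathbb{D}_\epsilon(\sqrt2)$, which for $n\geq n_0$ lies in the region where the matching \eqref{a:25} is valid, one has
\[
	G_R(w)-I=M(w)P(w)^{-1}-I=\sum_{j=1}^3M_j(w)\,n^{-j/3}+\mathcal{O}\big(n^{-4/3}\big),
\]
uniformly in $(s,\alpha,\beta)$ on compact sets, where each $M_j(w)$ is meromorphic in a fixed neighbourhood of $\sqrt2$, holomorphic on $\Sigma_b$, with a pole only at $w=\sqrt2$. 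Since $w\mapsto(w/\sqrt2)^{k-1}$ is entire, the residue theorem — with $\Sigma_b$ oriented clockwise, exactly as in the proof of \eqref{g:5} — turns the left-hand side of \eqref{g:6} into $\sum_{j=1}^3 n^{-j/3}\,\res_{w=\sqrt2}\big[M_j(w)(w/\sqrt2)^{k-1}\big]+\mathcal{O}(n^{-4/3})$. By \eqref{a:23} and condition $(4)$ of RHP \ref{localb}, $M_1$ and $M_2$ have at worst a simple pole at $\sqrt2$, whereas the singular part of $M_3$ relevant to the $k$-dependence is a double pole, carried by its $Q_2^{12}/\hbar^2(z)$ term; writing $(w/\sqrt2)^{k-1}=1+(k-1)(w-\sqrt2)/\sqrt2+\mathcal{O}((w-\sqrt2)^2)$, this is exactly why $k$ enters \eqref{g:6} explicitly only through the $n^{-1}$ coefficient.

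The evaluation of the three residues proceeds by substituting the local expansions already at our disposal: the Taylor expansion $E^b(z)=E_0^b+E_1^b(z-\sqrt2)+\mathcal{O}((z-\sqrt2)^2)$ from \eqref{a:23}, together with the induced expansions of $E_\ast^b(z)=E^b(z)n^{-\sigma_3/6}$ and $E_\ast^b(z)^{-1}$; the expansion $\hbar(z)=\zeta^b(z)-\zeta^b(\sqrt2)=\zeta_1^b(z-\sqrt2)\big[1+\zeta_2^b(z-\sqrt2)+\zeta_3^b(z-\sqrt2)^2+\mathcal{O}((z-\sqrt2)^3)\big]$ from Proposition \ref{conf1} and the resulting geometric series for $1/\hbar(z)$ and $1/\hbar^2(z)$; the explicit matrices $M_1(z),M_2(z),M_3(z)$ from condition $(4)$ of RHP \ref{localb}; and the explicit $D_1(z),D_2(z),D_3(z)$ from \eqref{nasty3}. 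A structural simplification that keeps this tractable is that in each conjugation $E_\ast^b(z)\,N\,E_\ast^b(z)^{-1}$ of a constant model matrix $N$, the diagonal factors $n^{-\sigma_3/6}$ and $\big(n^{2/3}\zeta_1^b(2\sqrt2+s_{n0})\big)^{\sigma_3/4}$ only rescale the off-diagonal entries of $N$ by $n$-independent powers, so that the $n^{2/3}$-powers cancel and what remains is conjugation by $\sigma=\chi^{\sigma_3}\e^{\im\frac{\pi}{4}\sigma_3}$; this is why \eqref{g:6} is uniformly of the form $\sigma\{\cdots\}\sigma^{-1}$. Since $\theta=0$ here one has $\tau_0=2^{3/4}$, $s_{n0}=s/(n^{2/3}\sqrt2)$ and $n^{2/3}\zeta^b(\sqrt2)=s+\mathcal{O}(n^{-2/3})$, so the quantities $Q_k^{ij}=Q_k^{ij}(n^{2/3}\zeta^b(\sqrt2),\alpha,\beta)$ as well as $\sqrt2-\varkappa=2\sqrt2+s_{n0}$ and $1-\varkappa/\sqrt2=2+s_{n0}/\sqrt2$ are retained in their present form.

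Concretely, the $n^{-1/3}$ coefficient comes only from $\res_{w=\sqrt2}M_1(w)$: the simple pole $Q_1^{12}/\hbar(z)$ conjugated by $E_0^b$ — a one-line computation with $A=\bigl[\begin{smallmatrix}1&-1-\alpha\\1&1-\alpha\end{smallmatrix}\bigr]$, $\det A=2$, using $A\bigl[\begin{smallmatrix}0&1\\0&0\end{smallmatrix}\bigr]A^{-1}=\tfrac12\bigl[\begin{smallmatrix}-1&1\\-1&1\end{smallmatrix}\bigr]$ — plus the $D_1$ contribution, giving the stated $\sigma\bigl[\begin{smallmatrix}-1&1\\-1&1\end{smallmatrix}\bigr]\big\{\tfrac12\sqrt{(\sqrt2-\varkappa)/\zeta_1^b}\,Q_1^{12}-\tfrac{s^2}{8}\sqrt{1-\varkappa/\sqrt2}\big\}\sigma^{-1}$. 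The $n^{-2/3}$ coefficient collects $\res_{w=\sqrt2}M_2(w)$ — the diagonal $Q_1^{11}/\hbar$ piece, the $D_2$ piece and the cross-term $D_1Q_1^{12}/\hbar$ — and is again $k$-independent. The $n^{-1}$ coefficient is the laborious one: after expanding $1/\hbar$, $1/\hbar^2$ and $(w/\sqrt2)^{k-1}$ to the relevant order one extracts $\res_{w=\sqrt2}\big[M_3(w)(w/\sqrt2)^{k-1}\big]$ and organises the output according to the five constituents of $M_3$ in RHP \ref{localb} (the $Q_1^{21}/\hbar$, $Q_2^{12}/\hbar^2$, $D_3$, $D_2Q_1^{12}/\hbar$ and $D_1Q_1^{11}/\hbar$ pieces), which correspond one-to-one to the bracket groupings displayed in \eqref{g:6}. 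The uniformity in $(s,\alpha,\beta)$ on compact sets is inherited from that in \eqref{a:25}, Proposition \ref{conf1} and \eqref{a:23}. I expect the only real obstacle to be the sheer length of the $n^{-1}$ bookkeeping: carrying the repeated $E_\ast^b$-conjugations of the nilpotent and diagonal model matrices against the (up to) third-order Taylor and Puiseux data without sign or combinatorial slips.
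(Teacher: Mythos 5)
Your plan is correct and is essentially the paper's own: orient $\Sigma_b$ clockwise so that $\frac{\im}{2\pi}\oint_{\Sigma_b}=\res_{w=\sqrt{2}}$, use \eqref{a:25} to write $G_R-I=\sum_{j=1}^3 M_j\,n^{-j/3}+\mathcal{O}(n^{-4/3})$ on $\Sigma_b$, and compute $\res_{w=\sqrt{2}}[M_j(w)(w/\sqrt2)^{k-1}]$ from the Laurent coefficients $M_\ell^j$ supplied by condition $(4)$ of RHP~\ref{localb}, \eqref{a:23}, Proposition~\ref{conf1} and Lemma~\ref{ugly}. This reproduces the paper's formula $\frac{M^1_{-1}}{\sqrt[3]{n}}+\big\{\frac{k-1}{\sqrt2}M^2_{-2}+M^2_{-1}\big\}\frac{1}{\sqrt[3]{n^2}}+\big\{\frac{k-1}{\sqrt2}M^3_{-2}+M^3_{-1}\big\}\frac{1}{n}$.

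Two small narrative slips worth fixing, though they do not affect the plan's validity since you carry the full residue computation. First, the $1/\hbar^2$ coefficient of $M_3$, i.e.\ $M^3_{-2}$, is \emph{not} carried by the $Q_2^{12}/\hbar^2$ term alone: $D_3(z)$, $D_2(z)Q_1^{12}/\hbar(z)$ and $D_1(z)Q_1^{11}/\hbar(z)$ are each $\mathcal{O}(\omega^{-2})$ as well (since $D_1,D_2,D_3$ all have simple poles by Lemma~\ref{ugly}), and indeed the paper's $M^3_{-2}$ contains $Q_2^{12}$, $s^3$, $s^4 Q_1^{12}$ and $s^2 Q_1^{11}$ contributions. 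Second, the statement that $M_2$ ``has at worst a simple pole'' is not read off condition $(4)$; the term $E_\ast^b\bigl[\begin{smallmatrix}0&1\\0&0\end{smallmatrix}\bigr]E_\ast^{b\,-1}D_1(z)Q_1^{12}/\hbar(z)$ is a priori $\mathcal{O}(\omega^{-2})$, and $M^2_{-2}=0$ only because the conjugated nilpotent $\sigma\bigl[\begin{smallmatrix}-1&1\\-1&1\end{smallmatrix}\bigr]\sigma^{-1}$ annihilates the leading singular part $\sigma\bigl[\begin{smallmatrix}1&-1\\1&-1\end{smallmatrix}\bigr]\sigma^{-1}$ of $D_1$ --- a cancellation you would discover en route but should record, since it is precisely why $k$ does not enter at order $n^{-2/3}$.
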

\begin{proof} By \eqref{a:25}, as $n\rightarrow\infty$
\begin{align}
    \frac{\im}{2\pi}\oint_{\Sigma_b}\big(G_R(w)-I\big)\Big(\frac{w}{\sqrt{2}}\Big)^{k-1}\d w=\frac{M_{-1}^1}{\sqrt[3]{n}}+&\,\Bigg\{\bigg(\frac{k-1}{\sqrt{2}}\bigg)M_{-2}^2+M_{-1}^2\Bigg\}\frac{1}{\sqrt[3]{n^2}}\nonumber\\
    &\hspace{1cm}+\Bigg\{\bigg(\frac{k-1}{\sqrt{2}}\bigg)M_{-2}^3+M_{-1}^3\Bigg\}\frac{1}{n}+\mathcal{O}\big(n^{-\frac{4}{3}}\big),\label{g:7}
\end{align}
with $M_k^j$ as Laurent coefficients in the expansion $M_j(z)=\sum_{k=-j}^{\infty}M_k^j(z-\sqrt{2})^k,0<|z-\sqrt{2}|<\epsilon$ for $j\in\{1,2,3\}$ with $M_{-3}^3=M_{-2}^2=\bigl[\begin{smallmatrix}0&0\\ 0&0\end{smallmatrix}\bigr]$. The non-zero coefficents follow again from \eqref{a:23}, condition $(4)$ in RHP \ref{localb} and Lemma \ref{ugly}. In detail,
\begin{align*}
    M_{-1}^1=&\,\,\sigma\begin{bmatrix}-1&1\\-1&1\end{bmatrix}\sigma^{-1}\Bigg\{\frac{1}{2}\sqrt{\frac{\sqrt{2}-\varkappa}{\zeta_1^b}}\,Q_1^{12}-\frac{s^2}{8}\sqrt{1-\frac{\varkappa}{\sqrt{2}}}\Bigg\};\\
    \bigg(\frac{k-1}{\sqrt{2}}\bigg)M_{-2}^2+M_{-1}^2=&\,\,\sigma\Bigg\{\frac{Q_1^{11}}{\zeta_1^b}\begin{bmatrix}-\alpha&1+\alpha\\ 1-\alpha&\alpha\end{bmatrix}+\frac{s^4}{32\sqrt{2}}\begin{bmatrix}1&0\\0&1\end{bmatrix}-\frac{s^2}{8}\frac{Q_1^{12}}{\sqrt{\zeta_1^b\sqrt{2}}}\begin{bmatrix}1-\alpha&1+\alpha\\ 1-\alpha&1+\alpha\end{bmatrix}\Bigg\}\sigma^{-1},
\end{align*}
and the lengthier
\begin{align*}
    M_{-2}^3&=\sigma\begin{bmatrix}-1&1\\ -1&1\end{bmatrix}\sigma^{-1}\\
    &\hspace{0.5cm}\times\Bigg\{\frac{1}{2\zeta_1^b}\sqrt{\frac{\sqrt{2}-\varkappa}{\zeta_1^b}}\,Q_2^{12}+\frac{s^3}{48\sqrt{2}}\sqrt{1-\frac{\varkappa}{\sqrt{2}}}\,\bigg(1-\frac{s^3}{16}\bigg)+\frac{s^4}{64\sqrt{2}}\sqrt{\frac{\sqrt{2}-\varkappa}{\zeta_1^b}}\,Q_1^{12}-\frac{s^2}{8\zeta_1^b}\sqrt{1-\frac{\varkappa}{\sqrt{2}}}\,Q_1^{11}\Bigg\};\\
	M_{-1}^3&=\sigma\Bigg\{\frac{1}{2(\zeta_1^b)^2}\sqrt{\frac{\zeta_1^b}{\sqrt{2}-\varkappa}}\,Q_1^{21}\begin{bmatrix}\alpha^2-1&-(\alpha+1)^2\\ (\alpha-1)^2&1-\alpha^2\end{bmatrix}-\frac{s^3}{96\sqrt{2}}\sqrt{1-\frac{\varkappa}{\sqrt{2}}}\textcolor{orange}{\bigg\{}\bigg(1-\frac{s^3}{16}\bigg)\frac{1}{\sqrt{2}-\varkappa}\begin{bmatrix}3&1\\ -1&-3\end{bmatrix}\\
    &-\frac{3}{2\sqrt{2}}\bigg(1-\frac{s^3}{80}\bigg)\begin{bmatrix}1&-1\\ 1&-1\end{bmatrix}\textcolor{orange}{\bigg\}}-\frac{s^4}{128\sqrt{2}}\sqrt{\frac{\sqrt{2}-\varkappa}{\zeta_1^b}}\,Q_1^{12}\textcolor{blue}{\bigg\{}\frac{1}{\sqrt{2}-\varkappa}\begin{bmatrix}2\alpha^2+1&-2\alpha^2-4\alpha-1\\ 2\alpha^2-4\alpha+1&-2\alpha^2-1\end{bmatrix}\\
    &+\bigg(\zeta_2^b+\frac{1}{5\sqrt{2}}\bigg)\begin{bmatrix}-1&1\\ -1&1\end{bmatrix}\textcolor{blue}{\bigg\}}-\frac{Q_2^{12}}{4\zeta_1^b}\sqrt{\frac{\sqrt{2}-\varkappa}{\zeta_1^b}}\textcolor{olive}{\bigg\{}\frac{1}{\sqrt{2}-\varkappa}\begin{bmatrix}2\alpha^2+1&-2\alpha^2-4\alpha-1\\ 2\alpha^2-4\alpha+1&-2\alpha^2-1\end{bmatrix}\\
    &+3\zeta_2^b\begin{bmatrix}-1&1\\ -1&1\end{bmatrix}\textcolor{olive}{\bigg\}}+\frac{s^2}{8\zeta_1^b}\sqrt{1-\frac{\varkappa}{\sqrt{2}}}\,Q_1^{11}\textcolor{magenta}{\bigg\{}\frac{1}{\sqrt{2}-\varkappa}\begin{bmatrix}2\alpha^2-1&-2\alpha^2-4\alpha-1\\ 2\alpha^2-4\alpha+1&1-2\alpha^2\end{bmatrix}\\
    &-\bigg(\frac{1}{2(\sqrt{2}-\varkappa)}-\frac{1}{20\sqrt{2}}-\zeta_2^b\bigg)\begin{bmatrix}-1&1\\ -1&1\end{bmatrix}\textcolor{magenta}{\bigg\}}\Bigg\}\sigma^{-1}.
\end{align*}
Combining the above formul\ae\,for $M_{-1}^1,M_{-1}^2,M_{-2}^3$ and $M_{-1}^3$ in \eqref{g:7} results in \eqref{g:6}.
\end{proof}
\begin{prop} Let $k\in\mathbb{N}$. As $n\rightarrow\infty$, with $Q_k^{ij}=Q_k^{ij}(n^{\frac{2}{3}}\zeta^b(\sqrt{2}),\alpha,\beta)$ as in RHP \ref{YattRHP},
\begin{align}
   & \frac{\im}{2\pi}\oint_{\Sigma_b}\big(R_-(w)-I\big)\big(G_R(w)-I\big)\Big(\frac{w}{\sqrt{2}}\Big)^{k-1}\d w=\sigma\Bigg\{\frac{s^4}{32\sqrt{2}}\begin{bmatrix}-1&1\\1&-1\end{bmatrix}+\frac{\alpha}{2\zeta_1^b}\big(Q_1^{12}\big)^2\begin{bmatrix}1&-1\\1&-1\end{bmatrix}\nonumber\\
    &+\frac{s^2}{8\sqrt{\zeta_1^b\sqrt{2}}}Q_1^{12}\begin{bmatrix}1-\alpha&-1+\alpha\\ -1-\alpha&1+\alpha\end{bmatrix}\Bigg\}\frac{\sigma^{-1}}{\sqrt[3]{n^2}}+\sigma\Bigg\{-\frac{s^4}{128\sqrt{2}}\sqrt{\frac{\sqrt{2}-\varkappa}{\zeta_1^b}}\,Q_1^{12}\textcolor{brown}{\bigg\{}\frac{1}{\sqrt{2}-\varkappa}\nonumber\\
    &\hspace{0.5cm}\times\begin{bmatrix}-2\alpha^2-1&2\alpha^2+4\alpha-7\\-2\alpha^2+4\alpha+7&2\alpha^2+1\end{bmatrix}+\bigg(\zeta_2^b+\frac{1}{5\sqrt{2}}\bigg)\begin{bmatrix}1&-1\\ 1&-1\end{bmatrix}\textcolor{brown}{\bigg\}}-\frac{\alpha^2}{2\zeta_1^b}\sqrt{\frac{\sqrt{2}-\varkappa}{\zeta_1^b}}\frac{(Q_1^{12})^3}{\sqrt{2}-\varkappa}\begin{bmatrix}1&-1\\ 1&-1\end{bmatrix}\nonumber\\
    &+\frac{s^2}{16\sqrt{\zeta_1^b\sqrt{2}}}\sqrt{\frac{\sqrt{2}-\varkappa}{\zeta_1^b}}\,\big(Q_1^{12}\big)^2\textcolor{orange}{\bigg\{}\frac{4\alpha}{\sqrt{2}-\varkappa}\begin{bmatrix}0&1\\ 1&0\end{bmatrix}+\bigg(\frac{1}{\sqrt{2}-\varkappa}+\frac{1}{10\sqrt{2}}\bigg)\begin{bmatrix}1&-1\\1&-1\end{bmatrix}\textcolor{orange}{\bigg\}}+\frac{s^6}{256\sqrt{2}}\sqrt{1-\frac{\varkappa}{\sqrt{2}}}\nonumber\\
    &\hspace{0.5cm}\times\textcolor{blue}{\bigg\{}\frac{1}{2}\bigg(\frac{1}{\sqrt{2}-\varkappa}+\frac{1}{10\sqrt{2}}\bigg)\begin{bmatrix}1&-1\\ 1&-1\end{bmatrix}-\frac{1}{\sqrt{2}-\varkappa}\begin{bmatrix}1&1\\ -1&-1\end{bmatrix}\textcolor{blue}{\bigg\}}\nonumber\\
    &+\frac{s^2}{8\zeta_1^b}\sqrt{1-\frac{\varkappa}{\sqrt{2}}}\,Q_1^{11}\textcolor{magenta}{\bigg\{}\frac{1}{\sqrt{2}-\varkappa}\begin{bmatrix}-2\alpha^2-1&2\alpha^2-1\\ -2\alpha^2+1&2\alpha^2+1\end{bmatrix}-\bigg(\zeta_2^b+\frac{1}{2(\sqrt{2}-\varkappa)}-\frac{1}{20\sqrt{2}}\bigg)\begin{bmatrix}-1&1\\ -1&1\end{bmatrix}\textcolor{magenta}{\bigg\}}\nonumber\\
    &-\frac{1}{4\zeta_1^b}\sqrt{\frac{\sqrt{2}-\varkappa}{\zeta_1^b}}\,Q_1^{11}Q_1^{12}\textcolor{olive}{\bigg\{}\frac{1}{\sqrt{2}-\varkappa}\begin{bmatrix}-6\alpha^2+1&6\alpha^2+4\alpha-1\\ -6\alpha^2+4\alpha+1 & 6\alpha^2-1\end{bmatrix}+\zeta_2^b\begin{bmatrix}1&-1\\1&-1\end{bmatrix}\textcolor{olive}{\bigg\}}\Bigg\}\frac{\sigma^{-1}}{n}+\mathcal{O}\big(n^{-\frac{4}{3}}\big),\label{g:8}
\end{align}
uniformly in $(s,\alpha,\beta)\in\mathbb{R}\times(-1,\infty)\times(\mathbb{C}\setminus(-\infty,0))$ on compact sets. Here, $\sigma=\chi^{\sigma_3}\e^{\im\frac{\pi}{4}\sigma_3},\varkappa=-\sqrt{2}-s_{n0}$.
\end{prop}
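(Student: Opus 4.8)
The plan is to feed the two asymptotic expansions already at hand into the integrand and then collapse the resulting contour integral over $\Sigma_b=\partial\mathbb{D}_{\epsilon}(\sqrt2)$ onto a residue at $w=\sqrt2$. On $\Sigma_b$ the matching \eqref{a:25} rewrites $G_R(w)-I=M(w)P(w)^{-1}-I=\sum_{k=1}^3 M_k(w)n^{-k/3}+\mathcal O(n^{-4/3})$, while \eqref{g:3} supplies the companion expansion of $R_-(w)-I$ to the same precision. Since each factor is $\mathcal O(n^{-1/3})$ on $\Sigma_b$, their product is $\mathcal O(n^{-2/3})$, so to reach the stated error $\mathcal O(n^{-4/3})$ in \eqref{g:8} I only keep three pieces of the product: the term of order $n^{-2/3}$, obtained by multiplying the $n^{-1/3}$ coefficients $\bigl[\res_{w=\sqrt2}M_1(w)\bigr](w-\sqrt2)^{-1}-M_1(w)$ and $M_1(w)$ of \eqref{g:3} and \eqref{a:25}, together with the two cross terms of order $n^{-1}$ pairing the $n^{-1/3}$ coefficient of one factor with the $n^{-2/3}$ coefficient of the other.

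After this substitution every matrix function in the integrand — the $M_j(w)$, the constant matrices $\res_{w=\sqrt2}M_1$ and $L$ divided by $w-\sqrt2$, and all their products — is analytic on the circle $\Sigma_b$ and meromorphic inside $\mathbb{D}_{\epsilon}(\sqrt2)$ with its only singularity at $w=\sqrt2$, of order at most three once the cross terms are multiplied out. Hence, applying the residue theorem on the clockwise-oriented $\Sigma_b$ together with the prefactor $\tfrac{\im}{2\pi}$, the integral equals the residue at $w=\sqrt2$ of the product times $(w/\sqrt2)^{k-1}$; this is exactly the reduction already used in the derivation of \eqref{g:6}, the only new feature being higher-order poles, which forces one to expand $(w/\sqrt2)^{k-1}$ in powers of $w-\sqrt2$ up to second order.

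It then remains to compute those residues. I would multiply out the Laurent expansions about $w=\sqrt2$ of all the factors, using the coefficients $M_k^j$ recorded in the preceding proposition (themselves read off from \eqref{a:23}, condition $(4)$ of RHP \ref{localb}, Proposition \ref{conf1} and Lemma \ref{ugly}), and extract the coefficient of $(w-\sqrt2)^{-1}$; throughout, $\zeta_1^b,\zeta_2^b,\varkappa=-\sqrt2-s_{n0}$ and $Q_k^{ij}=Q_k^{ij}(n^{2/3}\zeta^b(\sqrt2),\alpha,\beta)$ are treated as symbols, their residual $n$-dependence being harmless at the order $\mathcal O(n^{-4/3})$. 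Collecting the $n^{-2/3}$ and $n^{-1}$ pieces, conjugating by $\sigma=\chi^{\sigma_3}\e^{\im\frac{\pi}{4}\sigma_3}$ and simplifying yields \eqref{g:8}. The main obstacle is not conceptual but the sheer volume of $2\times2$ matrix algebra: many products of matrices with unwieldy entries, numerous cancellations, and a delicate interplay between the pole-subtracted terms $\bigl[\res_{w=\sqrt2}M_1\bigr](w-\sqrt2)^{-1}$ and $L(w-\sqrt2)^{-1}$ in \eqref{g:3} and the genuine poles of the $M_j$. A convenient sanity check along the way is the specialization $(\alpha,\beta)=(0,1)$ of Remark \ref{speccon}, where $Q$, and therefore $R_--I$ and every $M_j$, becomes fully explicit and the majority of terms in \eqref{g:8} collapse.
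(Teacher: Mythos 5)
Your proposal matches the paper's proof: substitute the expansions \eqref{g:3} and \eqref{a:25} for $R_-(w)-I$ and $G_R(w)-I$ on $\Sigma_b$, keep the product through order $n^{-1}$, and evaluate the clockwise contour integral as a residue at $w=\sqrt2$ expressed through the Laurent coefficients $M_k^j$, which is exactly the paper's intermediate identity \eqref{g:9} followed by the matrix algebra you describe. The one small imprecision is bookkeeping, not a gap: since the $n^{-1/3}$ and $n^{-2/3}$ coefficients of $R_-(w)-I$ in \eqref{g:3} are in fact analytic at $w=\sqrt2$ (the pole parts of $M_1$, $M_1^2$, $M_2$ are exactly cancelled by the $[\res_{w=\sqrt2}M_1]/(w-\sqrt2)$ and $L/(w-\sqrt2)$ subtractions built into that formula, and $M_{-2}^2=0$), the integrand has only a simple pole, so no expansion of $(w/\sqrt2)^{k-1}$ beyond the constant term is required and the apparent $k$-dependence vanishes identically rather than by a late cancellation.
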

\begin{proof} By \eqref{g:3} and \eqref{a:25}, as $n\rightarrow\infty$,
\begin{align}
	\frac{\im}{2\pi}\oint_{\Sigma_b}\big(R_-(w)-I\big)\big(G_R(w)-I\big)&\Big(\frac{w}{\sqrt{2}}\Big)^{k-1}\d w=-\frac{M_0^1M_{-1}^1}{\sqrt[3]{n^2}}\label{g:9}\\
	&+\Big\{\big(M_0^1\big)^2M_{-1}^1+M_1^1\big(M_{-1}^1\big)^2-M_0^2M_{-1}^1-M_0^1M_{-1}^2\Big\}\frac{1}{n}+\mathcal{O}\big(n^{-\frac{4}{3}}\big)\nonumber
\end{align}
with $M_k^j$ as Laurent coefficients in the expansion $M_j(z)=\sum_{k=-j}^{\infty}M_k^j(z-\sqrt{2})^k,0<|z-\sqrt{2}|<\epsilon$. Here,
\begin{align*}
	M_0^1M_{-1}^1=\sigma\Bigg\{\frac{s^4}{32\sqrt{2}}\begin{bmatrix}1&-1\\ -1&1\end{bmatrix}-\frac{\alpha}{2\zeta_1^b}\big(Q_1^{12}\big)^2\begin{bmatrix}1&-1\\ 1&-1\end{bmatrix}-\frac{s^2}{8\sqrt{\zeta_1^b\sqrt{2}}}\,Q_1^{12}\begin{bmatrix}1-\alpha&-1+\alpha\\ -1-\alpha&1+\alpha\end{bmatrix}\Bigg\}\sigma^{-1},
\end{align*}
followed by
\begin{align*}
	&\big(M_0^1\big)^2M_{-1}^1=\sigma\Bigg\{\frac{s^4}{64\sqrt{2}}\sqrt{\frac{\sqrt{2}-\varkappa}{\zeta_1^b}}\,Q_1^{12}\bigg(\zeta_2^b+\frac{1}{10\sqrt{2}}-\frac{2(1+\alpha^2)}{\sqrt{2}-\varkappa}\bigg)-\frac{s^2}{16\sqrt{\zeta_1^b\sqrt{2}}}\sqrt{\frac{\sqrt{2}-\varkappa}{\zeta_1^b}}\big(Q_1^{12}\big)^2\\
	&\times\bigg(\zeta_2^b-\frac{1+4\alpha^2}{\sqrt{2}-\varkappa}\bigg)-\frac{\alpha^2}{2\zeta_1^b}\sqrt{\frac{\sqrt{2}-\varkappa}{\zeta_1^b}}\frac{(Q_1^{12})^3}{\sqrt{2}-\varkappa}+\frac{s^6}{256\sqrt{2}}\sqrt{1-\frac{\varkappa}{\sqrt{2}}}\bigg(\frac{1}{\sqrt{2}-\varkappa}-\frac{1}{10\sqrt{2}}\bigg)\Bigg\}\begin{bmatrix}1&-1\\1&-1\end{bmatrix}\sigma^{-1},
\end{align*}
and $M_1^1\big(M_{-1}^1\big)^2=\bigl[\begin{smallmatrix}0&0\\ 0&0\end{smallmatrix}\bigr]$. Next,
\begin{align*}
	M_0^2&\,M_{-1}^1=\sigma\Bigg\{\frac{Q_1^{11}}{\zeta_1^b}\textcolor{brown}{\bigg\{}\frac{2\alpha}{\sqrt{2}-\varkappa}\begin{bmatrix}-1-\alpha&1+\alpha\\ 1-\alpha&\alpha-1\end{bmatrix}-\zeta_2^b\begin{bmatrix}-1&1\\-1&1\end{bmatrix}\textcolor{brown}{\bigg\}}-\frac{s^4}{640}\begin{bmatrix}-1&1\\-1&1\end{bmatrix}-\frac{s^2}{8\sqrt{\zeta_1^b\sqrt{2}}}\,Q_1^{12}\\
	&\times\textcolor{orange}{\bigg\{}\bigg(\frac{1}{\sqrt{2}-\varkappa}+\frac{1}{10\sqrt{2}}+\zeta_2^b\bigg)\begin{bmatrix}1&-1\\ 1&-1\end{bmatrix}+\frac{1}{\sqrt{2}-\varkappa}\begin{bmatrix}-2\alpha^2-2\alpha-1&2\alpha^2+2\alpha+1\\ -2\alpha^2+2\alpha-1&2\alpha^2-2\alpha+1\end{bmatrix}\textcolor{orange}{\bigg\}}\Bigg\}\sigma^{-1}\\
	&\hspace{1cm}\times\Bigg\{\frac{1}{2}\sqrt{\frac{\sqrt{2}-\varkappa}{\zeta_1^b}}\,Q_1^{12}-\frac{s^2}{8}\sqrt{1-\frac{\varkappa}{\sqrt{2}}}\Bigg\},
\end{align*}
and finally
\begin{align*}
	&M_0^1M_{-1}^2=\sigma\Bigg\{\frac{1}{4\zeta_1^b}\sqrt{\frac{\sqrt{2}-\varkappa}{\zeta_1^b}}\,Q_1^{11}Q_1^{12}\textcolor{brown}{\bigg\{}\frac{1}{\sqrt{2}-\varkappa}\begin{bmatrix}-2\alpha^2+4\alpha+1&2\alpha^2-1\\ -2\alpha^2+1&2\alpha^2+4\alpha-1\end{bmatrix}+\zeta_2^b\begin{bmatrix}-1&1\\-1&1\end{bmatrix}\textcolor{brown}{\bigg\}}\\
	&+\frac{s^4}{128\sqrt{2}}\sqrt{\frac{\sqrt{2}-\varkappa}{\zeta_1^b}}\,Q_1^{12}\textcolor{orange}{\bigg\{}\frac{1}{\sqrt{2}-\varkappa}\begin{bmatrix}-2\alpha^2+4\alpha-5&2\alpha^2-3\\-2\alpha^2+3&2\alpha^2+4\alpha+5\end{bmatrix}+\zeta_2^b\begin{bmatrix}1&-1\\ 1&-1\end{bmatrix}\textcolor{orange}{\bigg\}}\\
	&-\frac{s^2}{8\sqrt{\zeta_1^b\sqrt{2}}}\sqrt{\frac{\sqrt{2}-\varkappa}{\zeta_1^b}}\big(Q_1^{12}\big)^2\frac{\alpha}{\sqrt{2}-\varkappa}\begin{bmatrix}1-\alpha&1+\alpha\\ 1-\alpha&1+\alpha\end{bmatrix}+\frac{s^2}{8\zeta_1^b}\sqrt{1-\frac{\varkappa}{\sqrt{2}}}\,Q_1^{11}\textcolor{blue}{\bigg\{}\frac{1}{\sqrt{2}-\varkappa}\begin{bmatrix}1-2\alpha&1+2\alpha\\ -1+2\alpha&-1-2\alpha\end{bmatrix}\\
	&+\bigg(\frac{1}{2(\sqrt{2}-\varkappa)}-\frac{1}{20\sqrt{2}}\bigg)\begin{bmatrix}-1&1\\-1&1\end{bmatrix}\textcolor{blue}{\bigg\}}+\frac{s^6}{256\sqrt{2}}\sqrt{1-\frac{\varkappa}{\sqrt{2}}}\textcolor{magenta}{\bigg\{}\frac{1}{\sqrt{2}-\varkappa}\begin{bmatrix}1&1\\-1&-1\end{bmatrix}+\bigg(\frac{1}{2(\sqrt{2}-\varkappa)}-\frac{1}{20\sqrt{2}}\bigg)\\
	&\hspace{0.5cm}\times\begin{bmatrix}1&-1\\1&-1\end{bmatrix}\textcolor{magenta}{\bigg\}}\Bigg\}\sigma^{-1}.
\end{align*}
Combining the above identities in \eqref{g:9} results in \eqref{g:8}. Our proof of the Proposition is complete.
\end{proof}
Equipped with \eqref{g:5},\eqref{g:6},\eqref{g:8} the asymptotics of $R_k$ are now within reach and thus the asymptotics for $\kappa_{n-1,n}^2,\kappa_{n,n}^{-2},\delta_{n,n}$ and $\gamma_{n,n}$ in \eqref{g:1},\eqref{g:2}. To obtain them fully, we recall $\varkappa=-\sqrt{2}-s_{n0}$ and ($\theta=0$ at present)
\begin{equation*}
	\sqrt{2}-\varkappa=2\sqrt{2}\bigg(1+\frac{s}{4n^{\frac{2}{3}}}\bigg),\ \ \ \ \zeta_1^b=\sqrt{2}\bigg(1+\frac{s}{10n^{\frac{2}{3}}}+\mathcal{O}\big(n^{-\frac{4}{3}}\big)\bigg),\ \ \ \ n^{\frac{2}{3}}\zeta^b(\sqrt{2})=s\bigg(1+\frac{s}{20n^{\frac{2}{3}}}+\mathcal{O}\big(n^{-\frac{4}{3}}\big)\bigg),
\end{equation*}
from Proposition \ref{conf1} and the expansions, compare Lemma \ref{ArnoLax}, using $\tr Q_1=0$,
\begin{align*}
	Q_1^{12}\big(n^{\frac{2}{3}}\zeta^b(\sqrt{2}),\alpha,\beta\big)&\,=\,a(s,\alpha,\beta)+\bigg[\frac{\d a}{\d s}(s,\alpha,\beta)\bigg]\frac{s^2}{20n^{\frac{2}{3}}}+\mathcal{O}\big(n^{-\frac{4}{3}}\big),\\
	-2Q_1^{11}\big(n^{\frac{2}{3}}\zeta^b(\sqrt{2}),\alpha,\beta\big)&\,=\,Q_1^{22}\big(n^{\frac{2}{3}}\zeta^b(\sqrt{2}),\alpha,\beta\big)-Q_1^{11}\big(n^{\frac{2}{3}}\zeta^b(\sqrt{2}),\alpha,\beta\big)=b(s,\alpha,\beta)+\mathcal{O}\big(n^{-\frac{2}{3}}\big).
\end{align*}
What results are the following expansions
\begin{cor}\label{dataexp} As $n\rightarrow\infty$, we have at $\theta=0$,
\begin{align*}
	&\kappa_{n-1,n}^2=\frac{\e^{n\ell_{V_0}}}{2\pi}2^{\frac{1}{2}(\alpha-1)}\Bigg\{1-\textcolor{orange}{\bigg[}a-\frac{s^2}{4}\textcolor{orange}{\bigg]}\frac{1}{\sqrt[3]{n}}+\frac{1}{2}\textcolor{blue}{\bigg[}(\alpha-1)b-\frac{as^2}{2}+\alpha a^2+\frac{s^4}{16}+(1-\alpha)\frac{s}{2}\textcolor{blue}{\bigg]}\frac{1}{\sqrt[3]{n^2}}\\
	&+\frac{1}{4}\textcolor{magenta}{\Bigg[}\frac{\alpha}{2}-\frac{\alpha^2}{4}-\frac{1}{6}+(\alpha-1)^2Q_1^{21}+\frac{s^6}{96}+\frac{5s^3}{24}-\frac{s^4}{8}a-Q_2^{12}\bigg(\alpha^2-2\alpha+\frac{1}{5}\bigg)+\frac{s^2}{2}b\bigg(\alpha-\frac{7}{10}\bigg)\\
	&-\alpha^2a^3+\frac{s^2}{2}a^2\bigg(\alpha+\frac{3}{10}\bigg)-\frac{1}{2}ab\bigg(3\alpha^2-2\alpha-\frac{3}{5}\bigg)+\alpha s\bigg(a-\frac{s^2}{4}\bigg)-\frac{s^2}{5}\frac{\d a}{\d s}-\frac{3s}{10}a\textcolor{magenta}{\Bigg]}\frac{1}{n}+\mathcal{O}\big(n^{-\frac{4}{3}}\big)\Bigg\},
\end{align*}
and
\begin{align*}
	&\kappa_{n,n}^{-2}=2\pi\e^{-n\ell_{V_0}}2^{-\frac{1}{2}(\alpha+1)}\Bigg\{1+\textcolor{orange}{\bigg[}a-\frac{s^2}{4}\textcolor{orange}{\bigg]}\frac{1}{\sqrt[3]{n}}+\frac{1}{2}\textcolor{blue}{\bigg[}-(\alpha+1)b-\frac{as^2}{2}-\alpha a^2+\frac{s^4}{16}+(1+\alpha)\frac{s}{2}\textcolor{blue}{\bigg]}\frac{1}{\sqrt[3]{n^2}}\\
	&+\frac{1}{4}\textcolor{magenta}{\bigg[}\frac{\alpha}{2}+\frac{\alpha^2}{4}+\frac{1}{6}-(\alpha+1)^2Q_1^{21}-\frac{s^6}{96}-\frac{5s^3}{24}+\frac{s^4}{8}\hat{a}+Q_2^{12}\bigg(\alpha^2+2\alpha+\frac{1}{5}\bigg)+\frac{s^2}{2}b\bigg(\alpha+\frac{7}{10}\bigg)\\
	&+\alpha^2a^3+\frac{s^2}{2}a^2\bigg(\alpha-\frac{3}{10}\bigg)+\frac{1}{2}ab\bigg(3\alpha^2+2\alpha-\frac{3}{5}\bigg)+\alpha s\bigg(a-\frac{s^2}{4}\bigg)+\frac{s^2}{5}\frac{\d a}{\d s}+\frac{3s}{10}a\textcolor{magenta}{\bigg]}\frac{1}{n}+\mathcal{O}\big(n^{-\frac{4}{3}}\big)\Bigg\}.
\end{align*}
Moreover,
\begin{align*}
	\delta_{n,n}=\frac{1}{\sqrt{2}}\Bigg\{&\alpha-\textcolor{orange}{\bigg[}a-\frac{s^2}{4}\textcolor{orange}{\bigg]}\frac{1}{\sqrt[3]{n}}+\frac{\alpha}{2}\textcolor{blue}{\bigg[}b+a^2+\frac{s}{2}\textcolor{blue}{\bigg]}\frac{1}{\sqrt[3]{n^2}}+\frac{1}{4}\textcolor{magenta}{\bigg[}\frac{\alpha^2}{4}-\frac{1}{8}+(\alpha^2-1)Q_1^{21}-\bigg(\alpha^2+\frac{1}{5}\bigg)Q_2^{12}\\
	&+\frac{3s^2}{20}\big(b+a^2\big)-\alpha^2a^3-\frac{3}{2}ab\bigg(\alpha^2-\frac{1}{5}\bigg)-\frac{s^2}{5}\frac{\d a}{\d s}-\frac{3s}{10}a+\frac{s^3}{8}\textcolor{magenta}{\bigg]}\frac{1}{n}+\mathcal{O}\big(n^{-\frac{4}{3}}\big)\Bigg\},
\end{align*}
and
\begin{align*}
	&\gamma_{n,n}=-\frac{n}{4}+\frac{1}{4}(1+\alpha+\alpha^2)-\frac{1}{2}(1+\alpha)\textcolor{orange}{\bigg[}a-\frac{s^2}{4}\textcolor{orange}{\bigg]}\frac{1}{\sqrt[3]{n}}+\frac{1}{2}\textcolor{blue}{\bigg[}\frac{s}{4}(1+3\alpha+\alpha^2)+\frac{\alpha}{2}(1+\alpha)\big(b+a^2\big)-\frac{b}{2}\\
	&-\frac{s^2}{4}a+\frac{s^4}{32}\textcolor{blue}{\bigg]}\frac{1}{\sqrt[3]{n^2}}+\frac{1}{8}\textcolor{magenta}{\bigg[}\bigg(\frac{9\alpha}{5}-\alpha^3-\alpha^2-\frac{21}{5}\bigg)Q_2^{12}+(\alpha^3+\alpha^2-3\alpha-3\big)Q_1^{21}+\frac{5}{8}s^3\\
	&+\frac{3}{8}\alpha-\frac{1}{4}\alpha^2+\frac{\alpha^3}{4}+\frac{3}{8}\alpha s^3+ab\bigg(\frac{3}{10}-\frac{3\alpha^2}{2}+\frac{13}{10}\alpha-\frac{3}{2}\alpha^3\bigg)+s^2\bigg(\frac{13}{20}\alpha b+\frac{13}{20}\alpha a^2-\frac{1}{5}(1+\alpha)\frac{\d a}{\d s}\\
	&+\frac{3}{20}b+\frac{3}{20}a^2\bigg)-\frac{13}{10}\alpha as-\frac{33}{10}sa-\alpha^2(1+\alpha)a^3\textcolor{magenta}{\bigg]}\frac{1}{n}+\mathcal{O}\big(n^{-\frac{4}{3}}\big)
\end{align*}
Here, $a=a(s,\alpha,\beta),b=b(s,\alpha,\beta)$ as in Lemma \ref{ArnoLax} and $Q_1^{21}=Q_1^{21}(s,\alpha,\beta),Q_2^{12}=Q_2^{12}(s,\alpha,\beta)$ as in RHP \ref{YattRHP}. All expansions are uniform in $(s,\alpha,\beta)\in\mathbb{R}\times(-1,\infty)\times(\mathbb{C}\setminus(-\infty,0))$ on compact sets.
\end{cor}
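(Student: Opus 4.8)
The plan is to feed the exact identities \eqref{g:1}, \eqref{g:2} with the asymptotics of $R_k$ and then substitute the local expansions listed just above the statement. First I would observe that \eqref{g:1}, \eqref{g:2} reduce the task to computing, as $n\to\infty$, the scalar quantities $R_1^{21}$, $R_1^{12}$, $R_1^{11}$, $R_2^{11}$ and the matrix-product entry $(R_1P_1)^{11}$, where $P_1,P_2=P_k(s,\alpha,n,0)$ are the explicit outer-parametrix coefficients in condition $(3)$ of RHP \ref{outerArno} and $\ell_{V_0}=1+\ln 2$. Since $\tau_0=2^{3/4}$, one has $s_{n0}=s/(\sqrt2\,n^{2/3})$, hence $2\sqrt2+s_{n0}=2\sqrt2(1+s/(4n^{2/3}))$ and $\chi^2=2^{-\alpha/2}(1+s/(4n^{2/3}))^{\alpha}$; plugging these into $P_1,P_2$ already produces the leading prefactors $\e^{\pm n\ell_{V_0}}2^{\pm\frac12(\alpha\mp1)}/(2\pi)$ of $\kappa_{n-1,n}^2$ and $\kappa_{n,n}^{-2}$, the $-n/4$ and $\frac14(1+\alpha+\alpha^2)$ terms of $\gamma_{n,n}$, and the leading $\alpha/\sqrt2$ of $\delta_{n,n}$.

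Next I would assemble $R_k$ from the decomposition recorded just before \eqref{g:3}, i.e.
\begin{equation*}
	R_k=\frac{\im}{2\pi}\left[\oint_{\Sigma_a}+\oint_{\Sigma_b}\right]\big(G_R(w)-I\big)w^{k-1}\,\d w+\frac{\im}{2\pi}\oint_{\Sigma_b}\big(R_-(w)-I\big)\big(G_R(w)-I\big)w^{k-1}\,\d w+\mathcal{O}\big(n^{-\frac43}\big),
\end{equation*}
together with the three evaluations \eqref{g:5}, \eqref{g:6}, \eqref{g:8}. The $\Sigma_a$-term \eqref{g:5} is $\mathcal{O}(n^{-1})$, so it only enters the $1/n$ coefficients, whereas \eqref{g:6} starts at order $n^{-1/3}$ and \eqref{g:8} at order $n^{-2/3}$; note that the iterated representation \eqref{a:34} is genuinely needed here because \eqref{a:35} is only $\mathcal{O}(n^{-1/3})$, which is why \eqref{g:8} must be carried along. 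The required scalar entries are then extracted by conjugating the matrix coefficients by $\sigma=\chi^{\sigma_3}\e^{\im\frac\pi4\sigma_3}$: diagonal entries are unchanged, off-diagonal ones pick up $\chi^{\mp2}$, which combines with the $P_1$-prefactor as above.

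The third step is to substitute the expansions collected from Proposition \ref{conf1}, namely $\sqrt2-\varkappa=2\sqrt2(1+s/(4n^{2/3}))$, $\zeta_1^b=\sqrt2(1+s/(10n^{2/3})+\mathcal{O}(n^{-4/3}))$ and $n^{2/3}\zeta^b(\sqrt2)=s(1+s/(20n^{2/3})+\mathcal{O}(n^{-4/3}))$, together with the Taylor expansions $Q_1^{12}(n^{2/3}\zeta^b(\sqrt2),\alpha,\beta)=a(s,\alpha,\beta)+\tfrac{s^2}{20n^{2/3}}\tfrac{\d a}{\d s}(s,\alpha,\beta)+\mathcal{O}(n^{-4/3})$ and $-2Q_1^{11}=b+\mathcal{O}(n^{-2/3})$ (using $\tr Q_1=0$), and the corresponding leading-order values of $Q_1^{21}$, $Q_2^{12}$. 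Collecting powers of $n^{-1/3}$ up to $\mathcal{O}(n^{-4/3})$ then gives the four displayed expansions; for $\kappa_{n-1,n}^2$ and $\kappa_{n,n}^{-2}$ one factors out the leading prefactor and expands the ratio $(R_1+P_1)/P_1$, while for $\gamma_{n,n}$ one uses \eqref{g:2} directly, the $\mathcal{O}(n^0)$ term coming from $P_2^{11}$. Uniformity on compact sets is inherited at each step from Theorem \ref{Arnotheo1}, Proposition \ref{conf1} and the smoothness of the Painlev\'e data in Lemma \ref{ArnoLax}.

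The hard part will not be any single idea but the bookkeeping: one must carry out $2\times2$ matrix products of the $\sigma$-conjugated coefficients, track every cross term at order $1/n$ produced by mixing the $n^{-1/3}$- and $n^{-2/3}$-corrections in \eqref{g:3}, \eqref{g:6}, \eqref{g:8}, and expand the $Q$-data consistently inside the $R$-data all the way to $\mathcal{O}(n^{-4/3})$. I would organize the computation by first writing each of $R_1^{11}$, $R_1^{12}$, $R_1^{21}$, $R_2^{11}$, $(R_1P_1)^{11}$ as a finite sum of the Laurent coefficients $M_k^j$, $N_k^j$ named in the proofs of \eqref{g:5}, \eqref{g:6}, \eqref{g:8}, and only then substituting the scalar expansions of $a$, $b$, $\zeta_1^b$, $\zeta_2^b$; this keeps the algebra linear until the very last step.
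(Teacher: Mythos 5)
Your proposal is correct and follows the same route the paper takes: substitute the three contour-integral expansions \eqref{g:5}, \eqref{g:6}, \eqref{g:8}, combined through the decomposition of $R_k$ recorded above \eqref{g:3}, into the exact identities \eqref{g:1}, \eqref{g:2}, then Taylor-expand the remaining scalar data ($\sqrt2-\varkappa$, $\zeta_1^b$, $\zeta_2^b$, $n^{2/3}\zeta^b(\sqrt2)$, $Q_1^{12}$, $Q_1^{11}$) using Proposition \ref{conf1} and Lemma \ref{ArnoLax}. Your observations about the role of $P_1,P_2$ in producing the leading prefactors, the order at which the $\Sigma_a$-term first contributes, and the need to carry the iterated estimate \eqref{g:8} because \eqref{a:35} decays only as $n^{-1/3}$ all match the paper's argument.
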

We are now well-prepared to proof Theorem \ref{BS:2}. First we establish \eqref{e:23} within our calculations, although the same is known from \cite[Theorem $2$]{BCI}.
\begin{proof}[Proof of \eqref{e:23}] Take $\alpha=0$ in Corollary \ref{dataexp}, use \eqref{a:9} and Lemma \ref{ArnoLax},
\begin{equation*}
	\frac{\partial}{\partial\lambda_{n0}}\ln D_n\big(\lambda_{n0},0,\beta;x^2\big)=\sqrt{2}n^{\frac{2}{3}}\sigma_0(s,\beta)+\mathcal{O}(1).
\end{equation*}
Consequently, integrating from $\lambda_n=\sqrt{2}+s/(\sqrt{2}n^{\frac{2}{3}})$ to $t_n=\sqrt{2}+t/(\sqrt{2}n^{\frac{2}{3}})$ with $s<t$, we obtain
\begin{equation}\label{g:10}
	\ln\bigg[\frac{D_n(\lambda_n,0,\beta;x^2)}{D_n(t_n,0,\beta;x^2)}\bigg]=-\int_{s}^{t}\sigma_0(x,\beta)\d x+\mathcal{O}\big(n^{-\frac{2}{3}}\big),\ \ \ n\rightarrow\infty,
\end{equation}
and utilizing Remark \ref{unif} we let $t\rightarrow+\infty$ in both sides of \eqref{g:10}. Evidently $D_n(t_n,0,\beta;x^2)\rightarrow D_n(\lambda_n,0,1;x^2)$ and the right hand side in \eqref{g:10} becomes $\int_{s}^{\infty}\sigma_0(x,\beta)\d x$ by \eqref{e:17}. Thus \eqref{e:23} follows.
\end{proof}
Second, we move to \eqref{e:21} where we split our calculations in two parts, as driven by \eqref{a:10}. First we focus on 
\begin{equation*}
	D_{1n}:=\frac{n}{2}\frac{\partial}{\partial\alpha}\ln\kappa_{n-1,n}^2+\frac{1}{2}(n+\alpha)\frac{\partial}{\partial\alpha}\ln\kappa_{n,n}^{-2}-n\bigg(\frac{\kappa_{n-1,n}}{\kappa_{n,n}}\bigg)^2\frac{\partial}{\partial\alpha}\ln\bigg(\frac{\kappa_{n-1,n}^2}{\kappa_{n,n}^2}\bigg)+2n\frac{\partial}{\partial\alpha}\bigg(\gamma_{n,n}-\frac{1}{2}\delta_{n,n}^2\bigg).
\end{equation*}
A straightforward, albeit tedious, calculation based on Corollary \ref{dataexp} leads to
\begin{cor} As $n\rightarrow\infty$, 
\begin{equation}\label{g:11}
	D_{1n}=\frac{n}{2}(1-\ln 2)+s\sqrt[3]{n}+\frac{\partial}{\partial\alpha}\bigg[\frac{1}{3}a^3-sa+\frac{1}{2}ab-Q_2^{12}-Q_1^{21}\bigg]
	-\frac{\alpha}{4}\ln 2+\mathcal{O}\big(n^{-\frac{1}{3}}\big),
\end{equation}
uniformly in $(s,\alpha,\beta)\in\mathbb{R}\times(-1,\infty)\times(\mathbb{C}\setminus(-\infty,0))$ on compact sets. Here, $a=a(s,\alpha,\beta),b=b(s,\alpha,\beta)$ as in Lemma \ref{ArnoLax} and $Q_2^{12}=Q_2^{12}(s,\alpha,\beta),Q_1^{21}=Q_1^{21}(s,\alpha,\beta)$ as in RHP \ref{YattRHP}.
\end{cor}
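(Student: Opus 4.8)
The plan is to insert the expansions of Corollary \ref{dataexp} into $D_{1n}$ — which is the sum of the first four terms on the right-hand side of \eqref{a:10} — expand the logarithms, differentiate in $\alpha$, and collect powers of $\sqrt[3]{n}$. First I would justify differentiating the expansions of Corollary \ref{dataexp} term by term in $\alpha$: the quantities $\kappa_{n-1,n}^2,\kappa_{n,n}^{-2},\delta_{n,n},\gamma_{n,n}$ are analytic in $\alpha$ in a complex neighbourhood of any compact subset of $(-1,\infty)$ (they are rational in the entries of $R(z),P(z)$ from the chain \eqref{f:2}, and $\omega_{\alpha\beta}$ is entire in $\alpha$), and the steepest-descent estimates underlying Corollary \ref{dataexp} are uniform there, so Cauchy's estimate yields the differentiated expansions with remainder still $\mathcal{O}(n^{-4/3})$. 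In particular $n\,\partial_\alpha$ of any of these leaves an $\mathcal{O}(n^{-1/3})$ error, which is the announced accuracy.

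Next I would write, modulo terms independent of $\alpha$, $\ln\kappa_{n-1,n}^2\equiv\tfrac{\alpha-1}{2}\ln 2+\ln\{1+A_1n^{-1/3}+A_2n^{-2/3}+A_3n^{-1}+\cdots\}$ and $\ln\kappa_{n,n}^{-2}\equiv-\tfrac{\alpha+1}{2}\ln 2+\ln\{1+B_1n^{-1/3}+B_2n^{-2/3}+B_3n^{-1}+\cdots\}$, with $A_j,B_j$ read off from Corollary \ref{dataexp} and $B_1=-A_1=a-\tfrac{s^2}{4}$, and similarly $(\kappa_{n-1,n}/\kappa_{n,n})^2=\kappa_{n-1,n}^2\,\kappa_{n,n}^{-2}=\tfrac12\{1+\mathcal{O}(n^{-2/3})\}$, $\ln(\kappa_{n-1,n}^2/\kappa_{n,n}^2)=-\ln 2+\ln\{\cdots\}_{n-1}+\ln\{\cdots\}_n$, together with the expansion of $\gamma_{n,n}-\tfrac12\delta_{n,n}^2$. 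Expanding $\ln(1+x)=x-\tfrac12x^2+\tfrac13x^3-\cdots$, applying $\partial_\alpha$, and multiplying the four pieces by $-\tfrac{n}{2}$, $\tfrac12(n+\alpha)$, $-n(\kappa_{n-1,n}/\kappa_{n,n})^2$ and $2n$ respectively, I collect by order of $\sqrt[3]{n}$. At order $n$: the leading $-\tfrac{n}{4}\ln 2$ from each of the first two pieces sum to $-\tfrac{n}{2}\ln 2$; the third piece contributes nothing (its $\partial_\alpha\ln$ factor being $\mathcal{O}(n^{-2/3})$); and $2n\,\partial_\alpha$ of the $\mathcal{O}(1)$-in-$\alpha$ part $\tfrac14(1+\alpha)$ of $\gamma_{n,n}-\tfrac12\delta_{n,n}^2$ gives $\tfrac{n}{2}$ — altogether $\tfrac{n}{2}(1-\ln 2)$. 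At order $\sqrt[3]{n^2}$: the three surviving contributions $\tfrac12\sqrt[3]{n^2}\,\partial_\alpha a$, $\tfrac12\sqrt[3]{n^2}\,\partial_\alpha a$ and $-\sqrt[3]{n^2}\,\partial_\alpha a$ cancel. At order $\sqrt[3]{n}$: inserting $A_1=-(a-\tfrac{s^2}{4})$ and the second-order brackets of Corollary \ref{dataexp}, the $a$- and $b$-dependent parts cancel and the remainder is $\partial_\alpha(\alpha s+\tfrac{s}{4})=s$. At order $1$: one picks up $\tfrac{\alpha}{2}\cdot(-\tfrac12\ln 2)=-\tfrac{\alpha}{4}\ln 2$ from $\tfrac12(n+\alpha)\partial_\alpha\ln\kappa_{n,n}^{-2}$, plus the contributions of the $n^{-1}$-coefficients of all four expansions.

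The order-$1$ collection is the laborious step, and the main obstacle: it needs the $n^{-1}$-coefficients (the magenta brackets) of $\kappa_{n-1,n}^2,\kappa_{n,n}^{-2},\delta_{n,n},\gamma_{n,n}$ in Corollary \ref{dataexp}, the $n^{-1}$-part of $\partial_\alpha\ln(\kappa_{n-1,n}^2/\kappa_{n,n}^2)$, and the quadratic and cubic cross-terms from the logarithmic expansions, so one must carry this bookkeeping without slips and then recognise the outcome — a polynomial in $a,b,Q_1^{21},Q_2^{12},\partial_\alpha a$ and their $\alpha$-derivatives — as the single total derivative $\partial_\alpha\big[\tfrac13a^3-sa+\tfrac12ab-Q_2^{12}-Q_1^{21}\big]$; here one uses $\tr Q_1=0$ together with $Q_1^{12}=a$ and $Q_1^{22}-Q_1^{11}=b$ from Lemma \ref{ArnoLax}. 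A useful self-check is $(\alpha,\beta)=(0,1)$, where Remark \ref{speccon} gives $a=\tfrac{s^2}{4}$, $b=\tfrac{s}{2}(1-\tfrac{s^3}{8})$ and explicit $Q_1^{21},Q_2^{12}$, so both sides of \eqref{g:11} reduce to elementary expressions.
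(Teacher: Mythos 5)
Your approach is exactly the one the paper uses: the paper's entire proof of this Corollary is the single sentence ``A straightforward, albeit tedious, calculation based on Corollary~\ref{dataexp} leads to'' the display, so your explicit order-by-order organisation in fact supplies more detail than the source. Your computations at orders $n$, $n^{2/3}$ and $n^{1/3}$ are correct (I checked the cancellation $\tfrac12\partial_\alpha a+\tfrac12\partial_\alpha a-\partial_\alpha a=0$ at order $n^{2/3}$, and the order-$n^{1/3}$ collapse to $s$ after writing the first three pieces as $n^{1/3}\partial_\alpha[-A_2+\tfrac12A_1^2]$ and the fourth as $2n^{1/3}\partial_\alpha[\tfrac{s(1+3\alpha)}{8}+\tfrac{(\alpha-1)(b+a^2)}{4}]$), and the Cauchy-estimate justification for differentiating the expansions in $\alpha$ is sound.

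Two caveats. First, a sign: the paper's displayed definition of $D_{1n}$ begins with $+\tfrac{n}{2}\partial_\alpha\ln\kappa_{n-1,n}^2$, but \eqref{a:10} carries $-\tfrac{n}{2}$ in that slot, and only with the minus sign do the first two pieces each contribute $-\tfrac{n}{4}\ln 2$ (with $+\tfrac{n}{2}$ they would cancel and the $\ln 2$ in $\tfrac{n}{2}(1-\ln 2)$ would be unrecoverable). You have silently adopted the sign from \eqref{a:10} -- correctly, but without pointing out that the displayed definition of $D_{1n}$ is inconsistent with it. Second, and more substantively: the order-$1$ reduction is precisely the content of the Corollary, and you defer it. To close the gap you need to show that the $\mathcal{O}(1)$ collection -- which involves the $n^{-1}$-coefficients (including the $\tfrac{\d a}{\d s}$-bearing terms) of all four entries of Corollary~\ref{dataexp} plus quadratic and cubic cross-terms from $\ln(1+\cdot)$ -- is a perfect $\alpha$-derivative, and the $\tfrac{\d a}{\d s}$ and $\partial_\alpha b$ contributions cancel. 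That is not automatic from structure; it must be verified, e.g.\ by carrying out the bookkeeping once and using $\tfrac{\d a}{\d s}=a^2+b$ from Lemma~\ref{ArnoLax} to eliminate mixed derivatives, after which the $(\alpha,\beta)=(0,1)$ reduction via Remark~\ref{speccon} serves as the final sanity check.
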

The second contribution in \eqref{a:10}, i.e.
\begin{equation}\label{g:12}
	D_{2n}=\alpha\bigg[\kappa_{n,n}^{-1}\frac{\partial}{\partial\alpha}\big(\kappa_{n,n}X^{11}(\lambda_{n0})\big)\mathring{X}^{22}(\lambda_{n0})-\kappa_{n-1,n}\frac{\partial}{\partial\alpha}\big(\kappa_{n-1,n}^{-1}X^{21}(\lambda_{n0})\big)\mathring{X}^{12}(\lambda_{n0})\bigg]
\end{equation}
requires more work as we still need to calculate the leading order $n$-asymptotics of $X^{11}(\lambda_{n0}),X^{21}(\lambda_{n0})$ and of $\mathring{X}^{22}(\lambda_{n0}),\mathring{X}^{12}(\lambda_{n0})$.
\begin{prop}\label{weird} As $n\rightarrow\infty$, 
\begin{align*}
	X^{11}(\lambda_{n0})=&\,\,\e^{\frac{n}{2}V_0(\sqrt{2}+s_{n0})-\frac{n}{2}\ell_{V_0}}n^{\frac{\alpha}{3}}(\zeta_1^b)^{\frac{\alpha}{2}}\frac{\chi}{\sqrt{2}}\e^{\im\frac{\pi}{4}}\big(n^{\frac{2}{3}}\zeta_1^b(2\sqrt{2}+s_{n0})\big)^{\frac{1}{4}}H_n^{11}\e^{\im\frac{\pi}{2}\alpha},\\
	X^{21}(\lambda_{n0})=&\,\,\e^{\frac{n}{2}V_0(\sqrt{2}+s_{n0})+\frac{n}{2}\ell_{V_0}}n^{\frac{\alpha}{3}}(\zeta_1^b)^{\frac{\alpha}{2}}\frac{\chi^{-1}}{\sqrt{2}}\e^{-\im\frac{\pi}{4}}\big(n^{\frac{2}{3}}\zeta_1^b(2\sqrt{2}+s_{n0})\big)^{\frac{1}{4}}H_n^{21}\e^{\im\frac{\pi}{2}\alpha},
\end{align*}
followed by
\begin{align*}
	\mathring{X}^{22}(\lambda_{n0})=&\,\,\e^{-\frac{n}{2}V_0(\sqrt{2}+s_{n0})+\frac{n}{2}\ell_{V_0}}n^{-\frac{\alpha}{3}}(\zeta_1^b)^{-\frac{\alpha}{2}}\frac{\chi^{-1}}{\sqrt{2}}\e^{-\im\frac{\pi}{4}}\big(n^{\frac{2}{3}}\zeta_1^b(2\sqrt{2}+s_{n0})\big)^{\frac{1}{4}}H_n^{22}\e^{-\im\frac{\pi}{2}\alpha}\\
	\mathring{X}^{12}(\lambda_{n0})=&\,\,\e^{-\frac{n}{2}V_0(\sqrt{2}+s_{n0})-\frac{n}{2}\ell_{V_0}}n^{-\frac{\alpha}{3}}(\zeta_1^b)^{-\frac{\alpha}{2}}\frac{\chi}{\sqrt{2}}\e^{\im\frac{\pi}{4}}\big(n^{\frac{2}{3}}\zeta_1^b(2\sqrt{2}+s_{n0})\big)^{\frac{1}{4}}H_n^{12}\e^{-\im\frac{\pi}{2}\alpha},
\end{align*}
uniformly in $(s,\alpha,\beta)\in\mathbb{R}\times((-1,\infty)\setminus\mathbb{Z})\times(\mathbb{C}\setminus(-\infty,0))$ on compact sets. Here, $H_n^{jk}=H_n^{jk}(s,\alpha,\beta)$ have the large $n$-asymptotics
\begin{align*}
	H_n^{11}=&\,\,\widehat{Q}^{11}(0)-\frac{1}{2}\bigg\{(1+\alpha)\widehat{Q}^{21}(0)+\bigg(\frac{s^2}{4}+\alpha a\bigg)\widehat{Q}^{11}(0)\bigg\}\frac{1}{\sqrt[3]{n}}+\mathcal{O}\big(n^{-\frac{2}{3}}\big),\\
	H_n^{22}=&\,\,\widehat{Q}^{12}(0)+\frac{1}{2}\bigg\{(1-\alpha)\widehat{Q}^{22}(0)+\bigg(\frac{s^2}{4}-\alpha a\bigg)\widehat{Q}^{12}(0)\bigg\}\frac{1}{\sqrt[3]{n}}+\mathcal{O}\big(n^{-\frac{2}{3}}\big),\\
	H_n^{21}=&\,\,\widehat{Q}^{11}(0)+\frac{1}{2}\bigg\{(1-\alpha)\widehat{Q}^{21}(0)+\bigg(\frac{s^2}{4}-\alpha a\bigg)\widehat{Q}^{11}(0)\bigg\}\frac{1}{\sqrt[3]{n}}+\mathcal{O}\big(n^{-\frac{2}{3}}\big),\\
	H_n^{12}=&\,\,\widehat{Q}^{12}(0)-\frac{1}{2}\bigg\{(1+\alpha)\widehat{Q}^{22}(0)+\bigg(\frac{s^2}{4}+\alpha a\bigg)\widehat{Q}^{12}(0)\bigg\}\frac{1}{\sqrt[3]{n}}+\mathcal{O}\big(n^{-\frac{2}{3}}\big),
\end{align*}
with $a=a(s,\alpha,\beta)$ as in Lemma \ref{ArnoLax} and $\widehat{Q}^{jk}(0)=\widehat{Q}^{jk}(0;s,\alpha,\beta)$ as in RHP \ref{YattRHP}.
\end{prop}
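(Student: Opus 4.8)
\emph{Plan.} The plan is to unwind the chain of transformations \eqref{f:2} at $\theta=0$ and re-express the solution $X$ of RHP \ref{FIKbeast} near the Fisher--Hartwig point $\lambda_{n0}=\sqrt2+s_{n0}$ through the local parametrix $M$ of RHP \ref{localb} at $z=\sqrt2$, hence through the Painlev\'e-XXXIV model function $Q$. Combining \eqref{a:12}, \eqref{a:14}, \eqref{a:17} and \eqref{a:33} one has, for $z\in\mathbb{D}_{\epsilon}(\sqrt2)$ approached through the region where $S=T$,
\begin{equation*}
	X(z+s_{n0})=\e^{-\frac{n}{2}\ell_{V_0}\sigma_3}\,R(z)\,M(z)\,\e^{n(g_0(z+s_{n0})+\frac{1}{2}\ell_{V_0})\sigma_3};
\end{equation*}
for the first column of $X$ (an entire polynomial up to normalisation) the approach direction is immaterial, while for the regularised second column one keeps the $\Im z>0$ convention of \eqref{a:10} after the usual contour deformation at the endpoint. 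Into this identity I would insert $M(z)$ from \eqref{a:21} together with the near-origin factorisation of $Q$ into its analytic non-vanishing part $\widehat{Q}$, its explicit singular factor, and the sector matrices $\mathcal{M}_j$ from condition $(3)$ of RHP \ref{YattRHP}. Two recombinations then occur. Since $\exp[\tfrac{2n}{3}(\zeta^b(z))^{\frac32}\sigma_3]=\exp[\tfrac{n}{2}\xi(z)\sigma_3]$ by \eqref{a:16}, \eqref{a:20}, Lemma \ref{glemma}$(iv)$ makes this factor recombine with the $g_0$- and $\ell_{V_0}$-factors into the graded scalar prefactors $\e^{-\frac{n}{2}\ell_{V_0}\sigma_3}$ (left) and $\e^{\frac{n}{2}V_0(\sqrt2+s_{n0})\sigma_3}$ (right), which reproduce exactly the exponential prefactors in the four asserted formulae; and since $\zeta^b(z)-\zeta^b(\sqrt2)=\zeta_1^b(z-\sqrt2)\bigl(1+\mathcal{O}(z-\sqrt2)\bigr)$ by Proposition \ref{conf1}, the singular powers $\zeta(z)^{\pm\frac\alpha2}$, $\zeta(z):=n^{\frac23}(\zeta^b(z)-\zeta^b(\sqrt2))$, carried by the singular factor of $Q$ equal $(n^{\frac23}\zeta_1^b)^{\pm\frac\alpha2}(z-\sqrt2)^{\pm\frac\alpha2}\bigl(1+\mathcal{O}(z-\sqrt2)\bigr)$ and hence cancel, against the trailing $(z-\sqrt2)^{-\frac\alpha2\sigma_3}$ in \eqref{a:21}, the $(z-\sqrt2)$-singularity of the first column of $X$ and, after subtracting the singular part $S_j(z)$ of \eqref{a:10} (whose coefficient $f_j(\lambda_n)$ is itself read off from the just-computed $X^{11}(\lambda_{n0}),X^{21}(\lambda_{n0})$), the singularity of the regularised second column.

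Letting $z\to\sqrt2$ in the resulting expression leaves the scalar factors $n^{\frac\alpha3}(\zeta_1^b)^{\frac\alpha2}\e^{\pm\im\frac\pi2\alpha}$ and $\tfrac{\chi}{\sqrt2}\e^{\im\frac\pi4}\bigl(n^{\frac23}\zeta_1^b(2\sqrt2+s_{n0})\bigr)^{\frac14}$, the latter from the leading coefficient $E_0^b$ of \eqref{a:23}, and $\widehat{Q}^{jk}(0;n^{\frac23}\zeta^b(\sqrt2),\alpha,\beta)$ as the surviving matrix data; recalling $n^{\frac23}\zeta^b(\sqrt2)=s+\mathcal{O}(n^{-\frac23})$ this is the leading term of each $H_n^{jk}$. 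For the $1/\sqrt[3]{n}$ correction I would use that $R$ is analytic in $\mathbb{D}_{\epsilon}(\sqrt2)$: evaluating \eqref{a:34} at $z=\sqrt2$ and using Proposition \ref{snorm1} together with \eqref{a:25}, \eqref{a:31}, \eqref{a:35}, the only $\mathcal{O}(n^{-\frac13})$ contribution comes from the $\partial\mathbb{D}_{\epsilon}(\sqrt2)$-part of $\Sigma_R$ and equals, by the residue of $M_1$ at its simple pole $z=\sqrt2$, $\pm M_0^1\,n^{-\frac13}$, with $M_0^1$ the constant Laurent coefficient of $M_1(z)$ from condition $(4)$ of RHP \ref{localb} — so it involves $E_0^b$, $\zeta_1^b$, $\zeta_2^b$, $Q_1^{12}(n^{\frac23}\zeta^b(\sqrt2),\alpha,\beta)=a(s,\alpha,\beta)+\mathcal{O}(n^{-\frac23})$ and $D_1(\sqrt2)$ from \eqref{nasty3}. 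Multiplying $I\pm M_0^1 n^{-\frac13}$ into $E_0^b\widehat Q(0)(\cdots)$ and reading off the four entries produces precisely the stated expansions of $H_n^{11},H_n^{22},H_n^{21},H_n^{12}$; no further $n^{-\frac13}$ terms arise since $E^b(z)\to E_0^b$, $\zeta_1^b$ and $n^{\frac23}\zeta^b(\sqrt2)$ carry only $\mathcal{O}(n^{-\frac23})$ corrections. The exclusion $\alpha\notin\mathbb{Z}$ enters here and in the regularisation, since for integer $\alpha$ the singular factor of $Q$ picks up logarithms and $\cos(\tfrac\pi2\alpha)$ in $S_j$ may vanish; the ranges $\alpha>0$ (where the second column is bounded and $\mathring X^{j2}=X^{j2}(\lambda_{n0})$) and $\alpha\in(-1,0)$ fall out of the same calculation, with uniformity in $(s,\alpha,\beta)$ on compact sets inherited from \eqref{a:25}, \eqref{a:31}, \eqref{a:35} and Propositions \ref{conf1}, \ref{snorm1}.

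\emph{Main obstacle.} No single step is deep; the difficulty is the bookkeeping. One must keep exact track of the branch cuts of $(z-\sqrt2)^{\pm\frac\alpha2}$, $(\sqrt2-z)^{-\alpha}$ and $\zeta(z)^{\frac14}$, and of the piecewise-constant $\mathcal{M}_j$, across the lens and across $z=\sqrt2$, so that the phases $\e^{\pm\im\frac\pi2\alpha}$ emerge correctly, while simultaneously pushing the $\mathcal{O}(n^{-\frac13})$ correction from $R(\sqrt2)$ through all four entries and through the regularisation, and checking that the $\mathcal{O}(1)$ diagonal and phase factors generated by $\exp[\tfrac{2n}{3}(\zeta^b(\sqrt2))^{\frac32}\sigma_3]$ combine consistently with $\widehat Q(0)$ into the compact form claimed.
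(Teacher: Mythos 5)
Your proposal follows the same route the paper takes: unwind the chain $X\to Y\to T\to S\to R$ at $\theta=0$ to obtain the exact representations of $X^{11},X^{21},\mathring X^{22},\mathring X^{12}$ in terms of $R(\sqrt2)E^b(\sqrt2)\widehat Q(0)$, extract the scalar prefactors from the $g$-function/$\xi$ recombination and the singular factor of $Q$, evaluate $R(\sqrt2)=I-M_0^1 n^{-1/3}+\mathcal O(n^{-2/3})$ by residue theorem on $\Sigma_b$, and read the entries of the matrix product. That is exactly what the paper does (the paper simply states the exact identities without re-deriving the unwinding).

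Two small points of accuracy worth flagging. First, the relevant residue is that of $M_1(w)/(w-\sqrt2)$ at $w=\sqrt2$, which is the constant Laurent coefficient $M_0^1$ — not ``the residue of $M_1$'' itself, which would be $M_{-1}^1$. Second, be careful how you phrase ``the only $\mathcal O(n^{-1/3})$ contribution comes from $\partial\mathbb D_\epsilon(\sqrt2)$'': that is true for the expansion of $R(\sqrt2)$, but the $(1\pm\alpha)\widehat Q^{21}(0)$ and $(1\pm\alpha)\widehat Q^{22}(0)$ pieces of $H_n^{jk}$ arise already from the cross-diagonal structure of $E_0^b$ (the $(n^{2/3}\zeta_1^b(2\sqrt2+s_{n0}))^{-1/4}$ factor in its off-diagonal entries), independently of $M_0^1$. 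Your plan does capture this because you propose to multiply out $E_0^b\widehat Q(0)$ as a full matrix, but the sentence could be read as overlooking it; the paper's displayed formula for $R(\sqrt2)E^b(\sqrt2)\widehat Q(0)$ makes both sources explicit.
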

\begin{proof} Utilizing \eqref{f:2} with $\theta=0$ throughout, we start with the exact identities
\begin{align*}
	X^{11}(\lambda_{n0})=\e^{\frac{n}{2}V_0(\sqrt{2}+s_{n0})-\frac{n}{2}\ell_{V_0}}n^{\frac{\alpha}{3}}(\zeta_1^b)^{\frac{\alpha}{2}}\big(R(\sqrt{2})E^b(\sqrt{2})\widehat{Q}(0)\big)^{11}\e^{\im\frac{\pi}{2}\alpha}
\end{align*}
\begin{align*}
	X^{21}(\lambda_{n0})=&\,\,\e^{\frac{n}{2}V_0(\sqrt{2}+s_{n0})+\frac{n}{2}\ell_{V_0}}n^{\frac{\alpha}{3}}(\zeta_1^b)^{\frac{\alpha}{2}}\big(R(\sqrt{2})E^b(\sqrt{2})\widehat{Q}(0)\big)^{21}\e^{\im\frac{\pi}{2}\alpha},\\
	\mathring{X}^{22}(\lambda_{n0})=&\,\,\e^{-\frac{n}{2}V_0(\sqrt{2}+s_{n0})+\frac{n}{2}\ell_{V_0}}n^{-\frac{\alpha}{3}}(\zeta_1^b)^{-\frac{\alpha}{2}}\big(R(\sqrt{2})E^b(\sqrt{2})\widehat{Q}(0)\big)^{22}\e^{-\im\frac{\pi}{2}\alpha},\\
	\mathring{X}^{12}(\lambda_{n0})=&\,\,\e^{-\frac{n}{2}V_0(\sqrt{2}+s_{n0})-\frac{n}{2}\ell_{V_0}}n^{-\frac{\alpha}{3}}(\zeta_1^b)^{-\frac{\alpha}{2}}\big(R(\sqrt{2})E^b(\sqrt{2})\widehat{W}(0)\big)^{12}\e^{-\im\frac{\pi}{2}\alpha}.
\end{align*}
After that we are in need of
\begin{equation*}
	R(\sqrt{2})\stackrel{\eqref{a:34}}{=}I+\frac{1}{2\pi\im}\int_{\Sigma_R}R_-(w)\big(G_R(w)-I\big)\frac{\d w}{w-\sqrt{2}}\stackrel{\eqref{a:35}}{=}I+\frac{1}{2\pi\im}\oint_{\Sigma_b}\big(G_R(w)-I\big)\frac{\d w}{w-\sqrt{2}}+\mathcal{O}\big(n^{-\frac{2}{3}}\big),
\end{equation*}
with $\Sigma_b=\partial\mathbb{D}_{\epsilon}(\sqrt{2})$ oriented clockwise. By residue theorem, compare \eqref{a:25},
\begin{align*}
	\frac{1}{2\pi\im}&\oint_{\Sigma_b}\big(G_R(w)-I\big)\frac{\d w}{w-\sqrt{2}}=\frac{1}{2\pi\im}\oint_{\Sigma_b}M_1(w)n^{-\frac{1}{3}}\frac{\d w}{w-\sqrt{2}}+\mathcal{O}\big(n^{-\frac{2}{3}}\big)=-\frac{M_0^1}{\sqrt[3]{n}}+\mathcal{O}\big(n^{-\frac{2}{3}}\big),
\end{align*}
with $M_k^1$ as Laurent coefficient in the expansion $M_1(z)=\sum_{k=-1}^{\infty}M_k^1(z-\sqrt{2})^k, 0<|z-\sqrt{2}|<\epsilon$, and so
\begin{align*}
	R(\sqrt{2})&E^b(\sqrt{2})\widehat{Q}(0)=\chi^{\sigma_3}\e^{\im\frac{\pi}{4}\sigma_3}\frac{1}{\sqrt{2}}\big(n^{\frac{2}{3}}\zeta_1^b(2\sqrt{2}+s_{n0})\big)^{\frac{1}{4}}\Bigg\{\begin{bmatrix}\widehat{Q}^{11}(0) & \widehat{Q}^{12}(0)\smallskip\\ \widehat{Q}^{11}(0) & \widehat{Q}^{12}(0)\end{bmatrix}+\big(n^{\frac{2}{3}}\zeta_1^b(2\sqrt{2}+s_{n0})\big)^{-\frac{1}{2}}\\
	&\hspace{1.5cm}\times\begin{bmatrix}-(1+\alpha)\widehat{Q}^{21}(0)&-(1+\alpha)\widehat{Q}^{22}(0)\smallskip\\ (1-\alpha)\widehat{Q}^{21}(0) & (1-\alpha)\widehat{Q}^{22}(0)\end{bmatrix}-\frac{Q_1^{12}}{\sqrt{\zeta_1^b(2\sqrt{2}+s_{n0})}}\frac{\alpha}{\sqrt[3]{n}}\begin{bmatrix}\widehat{Q}^{11}(0)&\widehat{Q}^{12}(0)\smallskip\\ \widehat{Q}^{11}(0) &\widehat{Q}^{12}(0)\end{bmatrix}\\
	&\hspace{2.5cm}-\frac{s^2}{4\sqrt{\sqrt{2}(2\sqrt{2}+s_{n0})}}\frac{1}{\sqrt[3]{n}}\begin{bmatrix}\widehat{Q}^{11}(0) & \widehat{Q}^{12}(0)\smallskip\\ -\widehat{Q}^{11}(0) & -\widehat{Q}^{12}(0)\end{bmatrix}+\mathcal{O}\big(n^{-\frac{2}{3}}\big)\Bigg\}.
\end{align*}
The four listed expansions for $X^{11}(\lambda_{n0}),X^{21}(\lambda_{n0}),\mathring{X}^{22}(\lambda_{n0})$ and $\mathring{X}^{12}(\lambda_{n0})$ follow at once.
\end{proof}
\begin{cor} Recall \eqref{g:12}. Then, as $n\rightarrow\infty$,
\begin{align}\label{g:13}
	D_{2n}=\frac{\alpha}{3}\ln n+\frac{\alpha}{4}\ln 2+q\frac{\partial a}{\partial\alpha}+\frac{\im\pi}{2}\alpha+\alpha\bigg[\frac{\partial}{\partial\alpha}\widehat{Q}^{11}(0)\bigg]\widehat{Q}^{22}(0)-\alpha\bigg[\frac{\partial}{\partial\alpha}\widehat{Q}^{21}(0)\bigg]\widehat{Q}^{12}(0)+\mathcal{O}\bigg(\frac{\ln n}{\sqrt[3]{n}}\bigg),
\end{align}
uniformly in $s\in\mathbb{R}$ and $\alpha\in(-1,\infty)\setminus\mathbb{Z}$ and $\beta\in\mathbb{C}\setminus(-\infty,0)$, all chosen on compact sets. Here, $\widehat{Q}^{jk}(0)=\widehat{Q}^{jk}(0;s,\alpha,\beta)$ as in RHP \ref{YattRHP} and $q=q(s,\alpha,\beta),a=a(s,\alpha,\beta)$ as in Lemma \ref{ArnoLax}.
\end{cor}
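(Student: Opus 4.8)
The plan is to substitute the four exact identities of Proposition \ref{weird} for $X^{11}(\lambda_{n0})$, $X^{21}(\lambda_{n0})$, $\mathring{X}^{22}(\lambda_{n0})$ and $\mathring{X}^{12}(\lambda_{n0})$ into \eqref{g:12}, to split each $\alpha$-derivative by the product rule into a ``$\kappa$-derivative part'' and an ``$X$-derivative part'' via
\[
\kappa_{n,n}^{-1}\frac{\partial}{\partial\alpha}\big(\kappa_{n,n}X^{11}(\lambda_{n0})\big)=\Big(\frac{\partial}{\partial\alpha}\ln\kappa_{n,n}\Big)X^{11}(\lambda_{n0})+\frac{\partial X^{11}}{\partial\alpha}(\lambda_{n0}),\qquad \kappa_{n-1,n}\frac{\partial}{\partial\alpha}\big(\kappa_{n-1,n}^{-1}X^{21}(\lambda_{n0})\big)=-\Big(\frac{\partial}{\partial\alpha}\ln\kappa_{n-1,n}\Big)X^{21}(\lambda_{n0})+\frac{\partial X^{21}}{\partial\alpha}(\lambda_{n0}),
\]
and then to pass to the limit $n\rightarrow\infty$. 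The first thing to exploit is that in the bilinear combinations $X^{11}(\lambda_{n0})\mathring{X}^{22}(\lambda_{n0})$ and $X^{21}(\lambda_{n0})\mathring{X}^{12}(\lambda_{n0})$ every exponentially large or small prefactor of Proposition \ref{weird} — the factors $\e^{\pm\frac n2 V_0(\sqrt2+s_{n0})}$, $\e^{\pm\frac n2\ell_{V_0}}$, $n^{\pm\alpha/3}$, $(\zeta_1^b)^{\pm\alpha/2}$, $\chi^{\pm1}$, $\e^{\pm\im\pi/4}$ and $\e^{\pm\im\frac\pi2\alpha}$ — cancels pairwise, so each product equals $\tfrac12\big(n^{2/3}\zeta_1^b(2\sqrt2+s_{n0})\big)^{1/2}$ times $H_n^{11}H_n^{22}$ resp. $H_n^{21}H_n^{12}$, and by Proposition \ref{conf1} this common prefactor is $n^{1/3}\big(1+\mathcal{O}(n^{-2/3})\big)$ — it \emph{grows}.

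Next I would record from Corollary \ref{dataexp} the logarithmic derivatives $\partial_\alpha\ln\kappa_{n,n}=\tfrac14\ln2-\tfrac12 n^{-1/3}\partial_\alpha a+\mathcal{O}(n^{-2/3})$ and, crucially, $\partial_\alpha\ln\kappa_{n-1,n}=\tfrac14\ln2-\tfrac12 n^{-1/3}\partial_\alpha a+\mathcal{O}(n^{-2/3})$: the leading and first subleading terms coincide. From the factored forms of Proposition \ref{weird} one likewise finds $\partial_\alpha\ln X^{11}(\lambda_{n0})=\tfrac13\ln n+\tfrac{\im\pi}{2}+\partial_\alpha\ln H_n^{11}+\mathcal{O}(n^{-2/3})$ and $\partial_\alpha\ln X^{21}(\lambda_{n0})=\tfrac13\ln n+\tfrac{\im\pi}{2}+\tfrac12\ln2+\partial_\alpha\ln H_n^{21}+\mathcal{O}(n^{-2/3})$, the $\tfrac12\ln2$ discrepancy being precisely the $\chi$ versus $\chi^{-1}$ asymmetry between the two rows of $X$; it is this asymmetry that, together with the $\tfrac14\ln2$ above, will ultimately produce the $\tfrac\alpha4\ln2$ in \eqref{g:13}.

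The core of the argument is the cancellation of the apparent $\mathcal{O}(n^{1/3})$ and $\mathcal{O}(n^{1/3}\ln n)$ growth. Writing $B_1,B_2$ for the two logarithmic derivatives of $X^{11},X^{21}$ just computed and $P=\big(n^{2/3}\zeta_1^b(2\sqrt2+s_{n0})\big)^{1/2}$, the $X$-derivative part equals $\tfrac12 P\big(H_n^{11}H_n^{22}-H_n^{21}H_n^{12}\big)B_1+\tfrac12 P\,H_n^{21}H_n^{12}(B_1-B_2)$. Using the four expansions of the $H_n^{jk}$ in Proposition \ref{weird} together with $\det\widehat{Q}(0)=1$, one checks that $H_n^{11}H_n^{22}-H_n^{21}H_n^{12}=n^{-1/3}+\mathcal{O}(n^{-2/3})$ — the $n^0$-term vanishes because both products have leading value $\widehat{Q}^{11}(0)\widehat{Q}^{12}(0)$ — so $\tfrac12 P\big(H_n^{11}H_n^{22}-H_n^{21}H_n^{12}\big)=1+\mathcal{O}(n^{-1/3})$ is bounded, the $\tfrac13\ln n$ and $\tfrac{\im\pi}{2}$ in $B_1$ survive (with an $\mathcal{O}(n^{-1/3}\ln n)$ error), and $\partial_\alpha\ln H_n^{11}\rightarrow\partial_\alpha\widehat{Q}^{11}(0)/\widehat{Q}^{11}(0)$. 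The leftover $\mathcal{O}(n^{1/3})$ terms — namely $\tfrac12 P\,H_n^{21}H_n^{12}(B_1-B_2)$, with $B_1-B_2\rightarrow-\tfrac12\ln2$, against $\tfrac14\ln2$ times $\tfrac12 P\big(H_n^{11}H_n^{22}+H_n^{21}H_n^{12}\big)\sim 2n^{1/3}\widehat{Q}^{11}(0)\widehat{Q}^{12}(0)$ from the $\kappa$-derivative part — cancel, since $\tfrac14\ln2\cdot2-\tfrac12\ln2=0$. What remains is of order $\ln n$, and the claimed $\mathcal{O}(n^{-1/3}\ln n)$ error is exactly the $\ln n$ in $B_1$ times the $n^{-1/3}$-accurate uniform remainders of Theorem \ref{Arnotheo1}, Proposition \ref{weird} and Corollary \ref{dataexp} after differentiation in $\alpha$.

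Finally I would collect the surviving $\mathcal{O}(\ln n)$ and $\mathcal{O}(1)$ pieces and match them to the right-hand side of \eqref{g:13}: the $\tfrac\alpha3\ln n$, the $\tfrac{\im\pi}{2}\alpha$ and the $\alpha\,\partial_\alpha\widehat{Q}^{11}(0)/\widehat{Q}^{11}(0)$ come from $\tfrac12 P\big(H_n^{11}H_n^{22}-H_n^{21}H_n^{12}\big)B_1$; the $\tfrac\alpha4\ln2$ emerges at the $n^0$-level of $\tfrac12 P\big(H_n^{11}H_n^{22}+H_n^{21}H_n^{12}\big)$ against the $\tfrac14\ln2$ of the $\kappa$-derivative part (and of the $\tfrac12\ln2$ carried by $B_2$), once $\det\widehat{Q}(0)=1$ is used to reduce the $\widehat{Q}$-bilinears; the $q\,\partial_\alpha a$ is produced by the subleading $-\tfrac12 n^{-1/3}\partial_\alpha a$ in $\partial_\alpha\ln\kappa_{n,n}$ and $\partial_\alpha\ln\kappa_{n-1,n}$ against the $n^{1/3}$-part of $\tfrac12 P\big(H_n^{11}H_n^{22}+H_n^{21}H_n^{12}\big)$ together with the subleading bilinear $\widehat{Q}$-contributions in the $H_n^{jk}$-expansions, after which the algebraic identities of Lemma \ref{ArnoLax} between $a$, $b$ and the $\widehat{Q}^{jk}(0)$ collapse the combination into $q\,\partial_\alpha a$; and the remaining bilinears reassemble, again via $\det\widehat{Q}(0)=1$, into $\alpha[\partial_\alpha\widehat{Q}^{11}(0)]\widehat{Q}^{22}(0)-\alpha[\partial_\alpha\widehat{Q}^{21}(0)]\widehat{Q}^{12}(0)$. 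Restoring the overall factor $\alpha$ from \eqref{g:12}, and inheriting uniformity on compact sets from Theorem \ref{Arnotheo1}, Proposition \ref{weird} and Corollary \ref{dataexp}, yields \eqref{g:13}. The main obstacle is purely bookkeeping: numerous products of the shape (subleading)$\times$(leading) and $\mathcal{O}(n^{-1/3})\times\mathcal{O}(n^{1/3})$ feed into the $\mathcal{O}(1)$ level, forcing one to carry the $n^{-1/3}$-accurate — and $\alpha$-differentiated — forms of Proposition \ref{weird} and Corollary \ref{dataexp} throughout, and then to recognise the precise combination of the $\widehat{Q}^{jk}(0)$, $a$ and $b$ that condenses to $q\,\partial_\alpha a$.
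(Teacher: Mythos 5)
Your proposal is correct and follows essentially the same path as the paper: the paper's proof consists precisely of recording the three bilinear combinations $H_n^{11}H_n^{22}\mp H_n^{21}H_n^{12}$ and $\partial_\alpha H_n^{11}\cdot H_n^{22}-\partial_\alpha H_n^{21}\cdot H_n^{12}$ obtained from Proposition \ref{weird} and $\det\widehat{Q}(0)=1$, and your computation is just an explicit unpacking of how these feed into \eqref{g:12} after the product rule, with the $\mathcal{O}(n^{1/3})$ cancellations and the identification $\widehat{Q}^{11}(0)\widehat{Q}^{12}(0)=-q/\alpha$ from Lemma \ref{ArnoLax}. The only cosmetic difference is organizational: grouping $\partial_\alpha X^{11}\cdot\mathring{X}^{22}-\partial_\alpha X^{21}\cdot\mathring{X}^{12}$ into coefficients of $H_n^{11}H_n^{22}\mp H_n^{21}H_n^{12}$ plus the pure bilinear $\partial_\alpha H_n^{11}\cdot H_n^{22}-\partial_\alpha H_n^{21}\cdot H_n^{12}$ is tidier than your split via $B_1$ and $B_1-B_2$, but the two decompositions reassemble to the same thing.
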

\begin{proof} One applies Proposition \ref{weird} together with $\det\widehat{Q}(0;s,\alpha,\beta)=1$, leading to
\begin{equation*}
	H_n^{11}H_n^{22}-H_n^{21}H_n^{12}=\frac{1}{\sqrt[3]{n}}+\mathcal{O}\big(n^{-\frac{2}{3}}\big),\ \ \ H_n^{11}H_n^{22}+H_n^{21}H_n^{12}=-\frac{2q}{\alpha}+\mathcal{O}\big(n^{-\frac{1}{3}}\big),
\end{equation*}
and
\begin{equation*}
	\frac{\partial H_n^{11}}{\partial\alpha}H_n^{22}-\frac{\partial H_n^{21}}{\partial\alpha}H_n^{12}=\Bigg\{\bigg[\frac{\partial\widehat{Q}^{11}}{\partial\alpha}(0;s,\alpha)\bigg]\widehat{Q}^{22}(0;s,\alpha)-\bigg[\frac{\partial\widehat{Q}^{21}}{\partial\alpha}(0;s,\alpha)\bigg]\widehat{Q}^{12}(0;s,\alpha)\Bigg\}\frac{1}{\sqrt[3]{n}}+\mathcal{O}\big(n^{-\frac{2}{3}}\big).
\end{equation*}
This completes the proof of \eqref{g:13}.
\end{proof}
Combining \eqref{g:11} and \eqref{g:13} we have thus for the right hand side of \eqref{a:10},
\begin{cor} As $n\rightarrow\infty$,
\begin{align}
	\frac{\partial}{\partial\alpha}\ln D_n\big(\lambda_{n0},\alpha,\beta;&\,x^2\big)=\frac{n}{2}(1-\ln 2)+s\sqrt[3]{n}+\frac{\alpha}{3}\ln n+\frac{\partial}{\partial\alpha}\bigg[\frac{1}{3}a^3-sa+\frac{1}{2}ab-Q_2^{12}-Q_1^{21}\bigg]\nonumber\\
	&+q\frac{\partial a}{\partial\alpha}+\frac{\im\pi}{2}\alpha+\alpha\bigg[\frac{\partial}{\partial\alpha}\widehat{Q}^{11}(0)\bigg]\widehat{Q}^{22}(0)-\alpha\bigg[\frac{\partial}{\partial\alpha}\widehat{Q}^{21}(0)\bigg]\widehat{Q}^{12}(0)+\mathcal{O}\bigg(\frac{\ln n}{\sqrt[3]{n}}\bigg),\label{g:14}
\end{align}
uniformly in $(s,\alpha,\beta)\in\mathbb{R}\times((-1,\infty)\setminus\mathbb{Z})\times(\mathbb{C}\setminus(-\infty,0))$.
\end{cor}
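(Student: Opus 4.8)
The final statement follows by combining the two asymptotic expansions already established for the right-hand side of the differential identity \eqref{a:10}. Concretely, the plan is as follows. First recall that \eqref{a:10} is legitimate for all $n\geq n_0$ and all $s\in\mathbb{R},\alpha>-1,\beta\notin(-\infty,0)$, since Theorem \ref{Arnotheo1} together with the invertible chain \eqref{f:2} guarantees solvability of RHP \ref{FIKbeast}; it writes $\frac{\partial}{\partial\alpha}\ln D_n(\lambda_{n0},\alpha,\beta;x^2)$ as $D_{1n}+D_{2n}$, with $D_{1n}$ collecting the contributions of the norming constants and of $\gamma_{n,n}-\tfrac12\delta_{n,n}^2$, and $D_{2n}$ collecting the boundary contributions through $\mathring{X}^{22}(\lambda_{n0}),\mathring{X}^{12}(\lambda_{n0})$. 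Then substitute \eqref{g:11} for $D_{1n}$ and \eqref{g:13} for $D_{2n}$, and simplify.

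To reach \eqref{g:11} I would feed the asymptotics of $R_1,R_2$---which themselves require iterating the integral representation \eqref{a:34} three times, the iteration being forced by the fact that \eqref{a:35} only decays like $n^{-1/3}$ while \eqref{a:9}--\eqref{a:10} carry explicit factors of $n$, and which use the contour computations \eqref{g:5},\eqref{g:6},\eqref{g:8}---into \eqref{g:1},\eqref{g:2} to obtain the four-term expansions of $\kappa_{n-1,n}^2,\kappa_{n,n}^{-2},\delta_{n,n},\gamma_{n,n}$ recorded in Corollary \ref{dataexp}. Differentiating the defining combination of $D_{1n}$ in $\alpha$ and collecting powers of $n$, the delicate point is to check that every potentially divergent contribution beyond the explicit $\tfrac{n}{2}(1-\ln 2)$ and $s\sqrt[3]{n}$ cancels; in particular the $\mathcal{O}(n^{-1/3})$ parts of $\kappa_{n-1,n}^2$ and $\kappa_{n,n}^{-2}$ must cancel inside $\tfrac{n}{2}\partial_\alpha\ln\kappa_{n-1,n}^2+\tfrac12(n+\alpha)\partial_\alpha\ln\kappa_{n,n}^{-2}-n(\kappa_{n-1,n}/\kappa_{n,n})^2\partial_\alpha\ln(\kappa_{n-1,n}^2/\kappa_{n,n}^2)$, leaving the finite remainder $\partial_\alpha[\tfrac13 a^3-sa+\tfrac12 ab-Q_2^{12}-Q_1^{21}]-\tfrac{\alpha}{4}\ln 2$ plus the $\mathcal{O}(n^{-1/3})$ error.

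For \eqref{g:13} I would trace the chain \eqref{f:2} backwards at $z=\lambda_{n0}$, expressing $X^{11}(\lambda_{n0}),X^{21}(\lambda_{n0})$ and the regularized $\mathring{X}^{22}(\lambda_{n0}),\mathring{X}^{12}(\lambda_{n0})$ through the local parametrix $M(z)$ of RHP \ref{localb} and $R(\sqrt{2})=I-M_0^1 n^{-1/3}+\mathcal{O}(n^{-2/3})$; this is precisely the content of Proposition \ref{weird}, which introduces the explicit powers $n^{\pm\alpha/3}$---and hence the logarithm $\tfrac{\alpha}{3}\ln n$---and which also forces the hypothesis $\alpha\notin\mathbb{Z}$ needed for the regularized entries to be well defined. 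Inserting Proposition \ref{weird} into \eqref{g:12}, using $\det\widehat{Q}(0)=1$ together with the two induced identities for the products $H_n^{jk}$, produces \eqref{g:13} with the $+\tfrac{\alpha}{4}\ln 2$ term, the $q\,\partial_\alpha a$ term, the $\tfrac{\im\pi}{2}\alpha$ term, and the $\widehat{Q}(0)$-bilinear contributions, up to an $\mathcal{O}(\ln n/\sqrt[3]{n})$ error.

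The final assembly is then routine: adding \eqref{g:11} and \eqref{g:13}, the two $\mp\tfrac{\alpha}{4}\ln 2$ pieces cancel exactly, the bracket $\partial_\alpha[\tfrac13 a^3-sa+\tfrac12 ab-Q_2^{12}-Q_1^{21}]$ and the extra $D_{2n}$-terms survive as written, and the error degrades to the larger of $\mathcal{O}(n^{-1/3})$ and $\mathcal{O}(\ln n/\sqrt[3]{n})$, i.e.\ $\mathcal{O}(\ln n/\sqrt[3]{n})$; uniformity in $(s,\alpha,\beta)$ on compact sets is inherited from the two inputs. I do not expect any obstacle at this last step---the substantive work is entirely upstream, in Corollary \ref{dataexp} and Proposition \ref{weird}, both resting on the threefold iteration of \eqref{a:34}. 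The one sanity check worth performing is the cancellation of the $\tfrac{\alpha}{4}\ln 2$ terms and of every positive power of $n$, since a sign slip in Corollary \ref{dataexp} would surface exactly there; the explicit evaluation at $(\alpha,\beta)=(0,1)$ afforded by Remark \ref{speccon} provides an additional consistency test.
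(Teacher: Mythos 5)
Your proof follows the paper's argument verbatim: split the right-hand side of \eqref{a:10} into $D_{1n}+D_{2n}$, substitute the already-derived expansions \eqref{g:11} and \eqref{g:13}, and observe the cancellation of the $\mp\frac{\alpha}{4}\ln 2$ terms while taking the coarser error bound $\mathcal{O}(\ln n/\sqrt[3]{n})$. This is exactly what the paper does, and you correctly trace the upstream inputs (Corollary \ref{dataexp} feeding \eqref{g:11}, Proposition \ref{weird} feeding \eqref{g:13}) and the provenance of the $\alpha\notin\mathbb{Z}$ restriction.
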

Likewise, generalizing the workings that underlie \eqref{g:10},
\begin{cor} As $n\rightarrow\infty$,
\begin{equation}\label{g:15}
	\frac{\partial}{\partial s}\ln D_n\big(\lambda_{n0},\alpha,\beta;x^2\big)=\alpha\sqrt[3]{n}+\sigma_{\alpha}(s,\beta)+\mathcal{O}\big(n^{-\frac{1}{3}}\big),
\end{equation}
uniformly in $(s,\alpha,\beta)\in\mathbb{R}\times(-1,\infty)\times(\mathbb{C}\setminus(-\infty,0))$.
\end{cor}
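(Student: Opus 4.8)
The plan is to differentiate the Hankel determinant in $\lambda_{n0}$ via the identity \eqref{a:9} and then convert to an $s$-derivative by the chain rule, feeding in the expansion of $\delta_{n,n}$ already obtained in Corollary \ref{dataexp}; this is the promised generalization of the $\alpha=0$ computation in the proof of \eqref{e:23}, carried out now for general $\alpha$ and with the $s$-derivative taken directly rather than integrated. Since $\theta=0$ throughout, Theorem \ref{Arnotheo1} together with the invertible chain \eqref{f:2} ensures that RHP \ref{FIKbeast} is solvable and $D_n(\lambda_{n0},\alpha,\beta;x^2)\neq 0$ for all $n\geq n_0$ in the admissible parameter range, so $s\mapsto\ln D_n(\lambda_{n0},\alpha,\beta;x^2)$ is well defined with a fixed branch and the differential identity \eqref{a:9} applies. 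Recalling $h_{V_0}\equiv\frac1\pi$, so that $\tau_0=2^{\frac34}$ and $\lambda_{n0}=\sqrt2+s/(\sqrt2\,n^{\frac23})$ with $\partial\lambda_{n0}/\partial s=(\sqrt2\,n^{\frac23})^{-1}$, I would write
\begin{equation*}
	\frac{\partial}{\partial s}\ln D_n\big(\lambda_{n0},\alpha,\beta;x^2\big)=\frac{1}{\sqrt2\,n^{\frac23}}\,\frac{\partial}{\partial\lambda_{n0}}\ln D_n\big(\lambda_{n0},\alpha,\beta;x^2\big)\stackrel{\eqref{a:9}}{=}\sqrt2\,n^{\frac13}\,\delta_{n,n}.
\end{equation*}

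Next I would insert the two leading orders of the expansion of $\delta_{n,n}$ from Corollary \ref{dataexp}, namely $\delta_{n,n}=\frac{1}{\sqrt2}\big(\alpha-(a-\tfrac{s^2}{4})n^{-\frac13}+\mathcal{O}(n^{-\frac23})\big)$, to obtain
\begin{equation*}
	\frac{\partial}{\partial s}\ln D_n\big(\lambda_{n0},\alpha,\beta;x^2\big)=\alpha\sqrt[3]{n}+\Big(\frac{s^2}{4}-a(s,\alpha,\beta)\Big)+\mathcal{O}\big(n^{-\frac13}\big),
\end{equation*}
with $a=a(s,\alpha,\beta)$ the quantity from Lemma \ref{ArnoLax}; here only the first two terms of the $\delta_{n,n}$ expansion survive after multiplication by $\sqrt2\,n^{\frac13}$, the $n^{-\frac23}$ term of $\delta_{n,n}$ and the error $\mathcal{O}(n^{-\frac43})$ there being absorbed into $\mathcal{O}(n^{-\frac13})$. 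Because every expansion in Corollary \ref{dataexp} is uniform for $(s,\alpha,\beta)$ on compact subsets of $\mathbb{R}\times(-1,\infty)\times(\mathbb{C}\setminus(-\infty,0))$, the displayed error is uniform there as well.

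Finally, I would identify $\frac{s^2}{4}-a(s,\alpha,\beta)$ with $\sigma_\alpha(s,\beta)$, the solution of \eqref{e:4} with parameter $\alpha$ and boundary constraint \eqref{e:17}; this is exactly the content of Lemma \ref{ArnoLax} — for $\alpha=0$ it reduces to the identity already used in the proof of \eqref{e:23}, and it is consistent with \eqref{f:14} and with differentiating \eqref{e:21} in $s$. Combining, one arrives at \eqref{g:15}. The bookkeeping here is routine; the genuine obstacle is not in the present step but upstream: a single round of the small-norm integral representation \eqref{a:34} is insufficient, since one needs the $n^{-\frac13}$ correction of $\delta_{n,n}$, which is precisely why the iteration of \eqref{a:34} was pushed through in the preceding propositions and why Corollary \ref{dataexp} is as lengthy as it is.
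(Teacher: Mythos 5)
Your proof is correct and is precisely the route the paper intends: the paper disposes of this corollary in one line as ``generalizing the workings that underlie (g:10),'' i.e.\ apply the differential identity \eqref{a:9}, convert the $\lambda_{n0}$-derivative to an $s$-derivative via $\partial\lambda_{n0}/\partial s=(\sqrt2\,n^{2/3})^{-1}$ (using $h_{V_0}\equiv\pi^{-1}$, $\tau_0=2^{3/4}$), plug in the first two orders of $\delta_{n,n}$ from Corollary \ref{dataexp}, and recognize $\tfrac{s^2}{4}-a(s,\alpha,\beta)=\sigma_\alpha(s,\beta)$ from the definition $\sigma=\tfrac14 x^2-a$ in Lemma \ref{ArnoLax}. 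Your bookkeeping ($\sqrt2\,n^{1/3}\delta_{n,n}=\alpha\sqrt[3]{n}+(\tfrac{s^2}{4}-a)+\mathcal{O}(n^{-1/3})$), the uniformity statement inherited from Corollary \ref{dataexp}, and your closing remark about why the iteration of \eqref{a:34} is the real work are all accurate; the only cosmetic point is that the Corollary's uniformity should be read as uniform on compact subsets, exactly as you state.
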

At this point everything is in place to derive \eqref{e:21}.
\begin{proof}[Proof of \eqref{e:21}] Definite $\alpha$-integration in \eqref{g:14} yields
\begin{equation}\label{g:16}
	\ln\bigg[\frac{D_n(\lambda_{n0},\alpha,\beta;x^2)}{D_n(\lambda_{n0},0,\beta;x^2)}\bigg]=\frac{n\alpha}{2}(1-\ln 2)+\alpha s\sqrt[3]{n}+\frac{\alpha^2}{6}\ln n+\eta_{\alpha}(s,\beta)+\mathcal{O}\bigg(\frac{\ln n}{\sqrt[3]{n}}\bigg)
\end{equation}
in terms of the function
\begin{align}\label{g:17}
	\eta_{\alpha}(s,\beta):=\int_0^{\alpha}\Bigg(&\frac{\partial}{\partial\alpha'}\bigg[\frac{1}{3}a^3-sa+\frac{1}{2}ab-Q_2^{12}-Q_1^{21}\bigg]+q\frac{\partial a}{\partial\alpha'}+\frac{\im\pi}{2}\alpha'\nonumber\\
	&+\alpha'\bigg[\frac{\partial}{\partial\alpha'}\widehat{Q}^{11}(0)\bigg]\widehat{Q}^{22}(0)-\alpha'\bigg[\frac{\partial}{\partial\alpha'}\widehat{Q}^{21}(0)\bigg]\widehat{Q}^{12}(0)\Bigg)\d\alpha'
\end{align}
with $a=a(s,\alpha',\beta),b=b(s,\alpha',\beta)$ as in Lemma \ref{ArnoLax} and $Q_n^{jk}=Q_n^{jk}(s,\alpha',\beta),\widehat{Q}^{jk}(0)=\widehat{Q}^{jk}(0;s,\alpha',\beta)$ as in RHP \ref{YattRHP} in the integrand. Note that the integrand in \eqref{g:16} is integrable at $\alpha'=0$ as can be seen from RHP \ref{YattRHP}, and so $\eta_0(s,\beta)\equiv 0$. Also, the integral \eqref{g:17} can be made more explicit once $\beta=1$, see Remark \ref{speccon}. Moving on, indefinite $s$-integration in \eqref{g:15} yields
\begin{equation}\label{g:18}
	\ln D_n\big(\lambda_{n0},\alpha,\beta;x^2\big)=\alpha s\sqrt[3]{n}+\int^s\sigma_{\alpha}(x,\beta)\d x+c_n(\alpha,\beta)+\mathcal{O}\big(n^{-\frac{1}{3}}\big)
\end{equation}
with $c_n(\alpha,\beta)$ as integration constant. Comparing \eqref{g:16} with \eqref{g:18} and \eqref{e:23} we deduce 
\begin{equation*}
	c_n(\alpha,\beta)=\frac{n\alpha}{2}(1-\ln 2)+\frac{\alpha^2}{6}\ln n+\ln D_n\big(\lambda_{n0},0,1;x^2\big)+\mathcal{O}\bigg(\frac{\ln n}{\sqrt[3]{n}}\bigg),\ \ n\rightarrow\infty,
\end{equation*}
and the $s$-differentiable asymptotic
\begin{equation*}
	\int^s\sigma_{\alpha}(x,\beta)\d x=\eta_{\alpha}(s,\beta)-\int_s^{\infty}\sigma_0(x,\beta)\d x+\mathcal{O}\bigg(\frac{\ln n}{\sqrt[3]{n}}\bigg).
\end{equation*}
Utilizing once more \eqref{g:15}, noting $\lambda_{n0}\equiv\lambda_n$, we arrive at \eqref{e:22} and so \eqref{g:16} completes the proof of \eqref{e:21}.
\end{proof}

\section{The ratio of Hankel determinants - proof of Theorem \ref{BS:3}}\label{sec6}

For arbitrary one-cut regular $V$ we apply \eqref{a:11} and derive large $n$-asymptotics for its right hand side. Fix a bounded contour $\gamma_+$, resp. $\gamma_-$, in the unbounded component of $\mathbb{C}\setminus\Sigma_R$, such that it lies in the domain of analyticity of $V$ in the upper, resp. lower, half-plane, and terminates at $\tau_{\pm}\in\mathbb{R}$ so  $-\infty<\tau_-<\varkappa<\sqrt{2}<\tau_+<\infty$. Here, $\varkappa=-\sqrt{2}-s_{n\theta}$ and $\theta\in[0,1]$ is kept arbitrary throughout. Then
\begin{lem} As $n\rightarrow\infty$, using $E=[-\sqrt{2},\sqrt{2}]\subset\mathbb{R}$,
\begin{align}
	\textnormal{RHS of}\ \eqref{a:11}=-&n^2\int_E\rho_{V_{\theta}}(x)\frac{\partial V_{\theta}}{\partial\theta}(x)\,\d x-\frac{n}{2\pi\im}\bigg[\int_{\gamma_+}-\int_{\gamma_-}\bigg]\frac{\mathcal{D}'(z)}{\mathcal{D}(z)}\frac{\partial V_{\theta}}{\partial\theta}(z+s_{n\theta})\,\d z\nonumber\\
	&+\frac{n}{2\pi\im}\bigg[\int_{\gamma_+}-\int_{\gamma_-}\bigg]\bigg\{\big(R(z)P(z)\big)^{-1}\Big(\frac{\d}{\d z}R(z)\Big)P(z)\bigg\}^{11}\,\frac{\partial V_{\theta}}{\partial\theta}(z+s_{n\theta})\,\d z+\mathcal{O}\big(n^{-\infty}\big),\label{h:1}
\end{align}
uniformly in $\theta\in[0,1]$ and $(s,\alpha,\beta)\in\mathbb{R}\times(-1,\infty)\times(\mathbb{C}\setminus(-\infty,0))$ on compact sets. Here, $s_{n\theta}$ as in \eqref{a:12}.
\end{lem}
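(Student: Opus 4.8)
The plan is to start from the exact differential identity \eqref{a:11}, namely
\[
\frac{\partial}{\partial\theta}\ln D_n\big(\lambda_{n\theta},\alpha,\beta;V_{\theta}(x)\big)=-\frac{n}{2\pi\im}\int_{-\infty}^{\infty}\Big\{X(z)^{-1}X'(z)\Big\}^{21}w(z)\big(V(z)-z^2\big)\,\d z,
\]
and to push the integral through the chain of transformations \eqref{f:2}, i.e. $X\to Y\to T\to S\to R$, keeping careful track of how $w(z)=\omega_{\alpha\beta}(z-\lambda_{n\theta})\e^{-nV_{\theta}(z)}$ interacts with the $g$-function conjugation and the lens opening. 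First I would rewrite the integrand as $\{X^{-1}X'\}^{21}w$ in terms of $S$ via \eqref{a:14},\eqref{a:17}; since $\partial_\theta V_\theta(z)=V(z)-z^2=W(z)$ is exactly the weight factor appearing in \eqref{a:11}, the translation \eqref{a:12} turns the argument into $z+s_{n\theta}$, which is why $\partial_\theta V_\theta(z+s_{n\theta})$ shows up on the right of \eqref{h:1}. The key observation, standard in this circle of ideas (compare \cite[Lemma 3.7]{BWW}), is that after the $g$-function transformation the quantity $\{T^{-1}T'\}^{21}$ picks up a term $-ng'_{\theta}(z+s_{n\theta})\,(\cdot)$ whose contribution, when integrated against $\rho_{V_\theta}$-supported data, collapses onto $E$ via the Euler--Lagrange equality of Lemma \ref{glemma}(iv): this produces the leading term $-n^2\int_E\rho_{V_\theta}(x)\,\partial_\theta V_\theta(x)\,\d x$.

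Next I would deform the contour of integration off $\mathbb{R}$ into the bounded contours $\gamma_\pm$ lying in the unbounded component of $\mathbb{C}\setminus\Sigma_R$ and in the domain of analyticity of $V$; this is legitimate because there $S(z)=R(z)P(z)$ (by \eqref{a:33}, away from the disks and the bands), $z\mapsto V_\theta(z+s_{n\theta})$ is analytic, and the jump of $S$ across the remaining real segments is exponentially negligible by Proposition \ref{snorm1}, contributing only $\mathcal{O}(n^{-\infty})$. On $\gamma_\pm$ one has $\{S^{-1}S'\}^{21}=\{(RP)^{-1}(RP)'\}^{21}$, and expanding $(RP)^{-1}(RP)'=P^{-1}P'+P^{-1}(R^{-1}R')P$ splits the $\gamma_\pm$-integral into a purely ``outer'' piece involving only $P$ and a piece carrying the small-norm correction $R^{-1}R'$. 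For the outer piece, the explicit form \eqref{a:19} of $P$ — in particular the Szeg\H{o} factor $\mathcal{D}$ in \eqref{a:18} — gives $\{P^{-1}P'\}^{21}$ a clean expression; the leading contribution is $-\mathcal{D}'(z)/\mathcal{D}(z)$ (the off-diagonal $\nu$-terms integrate to the same quantity on $\gamma_+$ and $\gamma_-$ and cancel in the difference $\int_{\gamma_+}-\int_{\gamma_-}$, or are absorbed into the error), yielding the second term on the right of \eqref{h:1}. The remaining piece is precisely the third term, written with the shorthand $\{(RP)^{-1}R'P\}^{11}$ after using the cyclic/structural identities for $2\times 2$ matrices with unit determinant.

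The main obstacle I anticipate is the bookkeeping of the conjugations near the two endpoints and the Fisher--Hartwig point: one must verify that the contour deformation from $\mathbb{R}$ to $\gamma_\pm$ genuinely picks up no residue at $z=\sqrt{2}-s_{n\theta}$ (the location of the singularity of $w$) nor any contribution from the band $[-\sqrt2-s_{n\theta},\sqrt2]$ beyond what is captured by the $E$-integral, and that the branch choices for $(\sqrt2-z)^{-\alpha}$ and for $\mathcal{D}$ are consistent across $\gamma_+$ and $\gamma_-$ so that the differences $\int_{\gamma_+}-\int_{\gamma_-}$ are the right real quantities. The uniformity in $\theta\in[0,1]$ and in $(s,\alpha,\beta)$ on compacts then follows from the corresponding uniform statements in Proposition \ref{snorm1}, Theorem \ref{Arnotheo1} (estimate \eqref{a:35}), and the boundedness of $P=P(z;s,\alpha,n,\theta)$ in $n$ noted after RHP \ref{ratioArno}. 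I would end by noting that the $\mathcal{O}(n^{-\infty})$ error is the combined contribution of all exponentially small jumps off $[-\sqrt2-s_{n\theta},\sqrt2]$, which is why it is super-polynomially small rather than merely $\mathcal{O}(n^{-1/3})$.
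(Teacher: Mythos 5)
There is a genuine gap in your proposal: you never explain the one algebraic step on which the whole lemma hinges, namely how the integral
\[
-\frac{n}{2\pi\im}\int_{\mathbb{R}}\big\{X^{-1}X'\big\}^{21}\,w\,\partial_\theta V_\theta\,\d z
\]
turns into an integral of the \emph{$(11)$-entry} of $X^{-1}X'$, \emph{without} the weight $w$, over $\gamma_\pm$. The mechanism is the jump relation in RHP~\ref{FIKbeast}: since $X_+=X_-\bigl[\begin{smallmatrix}1&w\\0&1\end{smallmatrix}\bigr]$, one has
\[
\big\{X_+^{-1}X_+'\big\}^{11}-\big\{X_-^{-1}X_-'\big\}^{11}=-\,w\,\big\{X^{-1}X'\big\}^{21}\quad\text{on }\mathbb{R},
\]
(the $(21)$-entry being continuous across $\mathbb{R}$). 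This rewrites the weighted $(21)$-integral on $\mathbb{R}$ as a boundary-value difference of the $(11)$-entry, and only \emph{then} can one push the upper/lower integrals onto $\gamma_\pm$ by Cauchy's theorem, leaving exponentially small tails on $(-\infty,\tau_-]\cup[\tau_+,\infty)$. Your text instead speaks of ``deforming the contour of integration off $\mathbb{R}$'' directly and, at the end, of ``cyclic/structural identities'' turning $\{(RP)^{-1}(RP)'\}^{21}$ into $\{(RP)^{-1}R'P\}^{11}$; neither is a valid substitute for the jump-relation computation above, and without it the deformation is not licit (the integrand $\{X^{-1}X'\}^{21}w\,W$ has no decay off $\mathbb{R}$, and no ``cyclic identity'' relates the $(21)$ and $(11)$ entries of a unimodular matrix).

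Two further statements in your sketch are factually off and would need correction. First, you claim ``after the $g$-function transformation the quantity $\{T^{-1}T'\}^{21}$ picks up a term $-ng'_{\theta}$''; in fact the conjugation $T=\e^{\frac{n}{2}\ell_{V_\theta}\sigma_3}Y\e^{-n(g_\theta+\frac12\ell_{V_\theta})\sigma_3}$ multiplies the $(21)$-entry by $\e^{-2n(g_\theta+\frac12\ell_{V_\theta})}$, while the additive $\mp n g_\theta'$ terms appear on the \emph{diagonal}. This is precisely why the paper first converts to the $(11)$-entry: there the $g$-function contribution is additive and isolates the $O(n^2)$ term. Second, the leading reduction to $-n^2\int_E\rho_{V_\theta}\,\partial_\theta V_\theta$ is not Lemma~\ref{glemma}(iv) (Euler--Lagrange outside $E$) but the jump of $g_\theta$ across $E$, i.e.\ the properties of $\pi(z)$ noted below \eqref{a:15}, applied after writing $g_\theta'$ as the Stieltjes transform of $\rho_{V_\theta}$. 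Also, the $\nu$-terms in $\{P^{-1}P'\}^{11}$ do not ``cancel in the difference $\int_{\gamma_+}-\int_{\gamma_-}$'': a short calculation from \eqref{a:19} shows $\{P^{-1}P'\}^{11}=-\mathcal{D}'/\mathcal{D}$ identically, because the $\nu$-part of $P$ contributes zero to the $(11)$-entry of $P^{-1}P'$ pointwise.
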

\begin{proof} By Cauchy's theorem and RHP \ref{FIKbeast},
\begin{align}
	\textnormal{RHS of}\ \eqref{a:11}=\frac{n}{2\pi\im}\bigg[\int_{\gamma_+}-\int_{\gamma_-}&\bigg]\bigg\{X(z)^{-1}\frac{\d}{\d z}X(z)\bigg\}^{11}\,\frac{\partial V_{\theta}}{\partial\theta}(z)\,\d z\nonumber\\
	&-\frac{n}{2\pi\im}\bigg[\int_{-\infty}^{\tau_-}+\int_{\tau_+}^{\infty}\bigg]\bigg\{X(x)^{-1}\frac{\d}{\d x}X(x)\bigg\}^{21}\,\frac{\partial V_{\theta}}{\partial\theta}(x)\,\d\nu(x),\label{h:2}
\end{align}
where $\tau_-<\varkappa<\sqrt{2}<\tau_+$ are the terminal points of $\gamma_{\pm}$. By the chain \eqref{f:2}, with general $\theta\in[0,1]$ in place, we find at once, for $z\in(-\infty,\tau_-)\cup\gamma_{\pm}\cup(\tau_+,\infty)$,
\begin{align}
	X(z)^{-1}\frac{\d}{\d z}X(z)=n\frac{\d}{\d z}g_{\theta}(z)&\sigma_3+\e^{-n(g_{\theta}(z)+\frac{1}{2}\ell_{V_{\theta}})\sigma_3}P(z-s_{n\theta})^{-1}R(z-s_{n\theta})^{-1}\nonumber\\
	&\hspace{2cm}\times\frac{\d}{\d z}\big(R(z-s_{n\theta})P(z-s_{n\theta})\big)\e^{n(g_{\theta}(z)+\frac{1}{2}\ell_{V_{\theta}})\sigma_3},\label{h:3}
\end{align}
and consequently, by \eqref{a:34},\eqref{a:35},
\begin{align}
	\bigg[\int_{-\infty}^{\tau_-}+&\int_{\tau_+}^{\infty}\bigg]\bigg\{X(x)^{-1}\frac{\d}{\d x}X(x)\bigg\}^{21}\,\frac{\partial V_{\theta}}{\partial\theta}(x)\,\d\nu(x)\label{h:4}\\
	&\,=\bigg[\int_{-\infty}^{\tau_-}+\int_{\tau_+}^{\infty}\bigg]\bigg\{P(x-s_{n\theta})^{-1}\frac{\d}{\d x}P(x-s_{n\theta})+\mathcal{O}\big(n^{-\frac{1}{3}}\big)\bigg\}^{21}\e^{n\eta(x-s_{n\theta})}|x-\lambda_{n\theta}|^{\alpha}\frac{\partial V_{\theta}}{\partial\theta}(x)\,\d x\nonumber
\end{align}
with $\eta(z)$ as in \eqref{a:15}. But $x\mapsto\eta(x-s_{n\theta})$ is strictly negative on $(-\infty,\tau_-)\cup(\tau_+,\infty)$, compare Lemma \ref{glemma}. Hence, \eqref{h:4} is exponentially decaying as $n\rightarrow\infty$ and so
\begin{align*}
	\textnormal{RHS of}\ \eqref{a:11}=\frac{n}{2\pi\im}\bigg[\int_{\gamma_+}-\int_{\gamma_-}\bigg]\bigg\{X(z)^{-1}\frac{\d}{\d z}X(z)\bigg\}^{11}\,\frac{\partial V_{\theta}}{\partial\theta}(z)\,\d z+\mathcal{O}\big(n^{-\infty}\big),
\end{align*}
where by \eqref{h:2} and \eqref{a:19},
\begin{equation*}
	\bigg\{X(z)^{-1}\frac{\d}{\d z}X(z)\bigg\}^{11}=n\frac{\d}{\d z}g_{\theta}(z)-\frac{\mathcal{D}'(z-s_{n\theta})}{\mathcal{D}(z-s_{n\theta})}+\bigg\{\big(R(z-s_{n\theta})P(z-s_{n\theta})\big)^{-1}R'(z-s_{n\theta})P(z-s_{n\theta})\bigg\}^{11}.
\end{equation*}
Inserting the last identity into \eqref{h:2} results in
\begin{align*}
	\frac{n}{2\pi\im}\bigg[\int_{\gamma_+}-\int_{\gamma_-}&\bigg]\bigg\{X(z)^{-1}\frac{\d}{\d z}X(z)\bigg\}^{11}\,\frac{\partial V_{\theta}}{\partial\theta}(z)\,\d z=\frac{n^2}{2\pi\im}\bigg[\int_{\gamma_+}-\int_{\gamma_-}\bigg]\Big(\frac{\d}{\d z}g_{\theta}(z)\Big)\,\frac{\partial V_{\theta}}{\partial\theta}(z)\,\d z\\
	&-\frac{n}{2\pi\im}\bigg[\int_{\gamma_+}-\int_{\gamma_-}\bigg]\frac{\mathcal{D}'(z-s_{n\theta})}{\mathcal{D}(z-s_{n\theta})}\frac{\partial V_{\theta}}{\partial\theta}(z)\,\d z+\frac{n}{2\pi\im}\bigg[\int_{\gamma_+}-\int_{\gamma_-}\bigg]\bigg\{\big(R(z-s_{n\theta})P(z-s_{n\theta})\big)^{-1}\\
	&\hspace{2cm}\times\Big(\frac{\d}{\d z}R(z-s_{n\theta})\Big)P(z-s_{n\theta})\bigg\}^{11}\,\frac{\partial V_{\theta}}{\partial\theta}(z)\,\d z,
\end{align*}
and which establishes \eqref{h:1} after another application of Cauchy's theorem while using the properties of $\pi(z)$ as written below \eqref{a:15}. The proof of the Lemma is complete.
\end{proof}
Moving forward, we now evaluate the second integral term in the right hand side of \eqref{h:1} for large $n$.
\begin{lem} As $n\rightarrow\infty$,
\begin{align}
	\frac{n}{2\pi\im}\bigg[\int_{\gamma_+}-\int_{\gamma_-}\bigg]\frac{\mathcal{D}'(z)}{\mathcal{D}(z)}\frac{\partial V_{\theta}}{\partial\theta}(z+s_{n\theta})\,\d z=&\,-\frac{n\alpha}{2}\frac{\partial V_{\theta}}{\partial\theta}(b+s_{n\theta})+\frac{\partial}{\partial\theta}\Bigg[\frac{n\alpha}{2\pi}\int_E\frac{V_{\theta}(x)}{\sqrt{2-x^2}}\,\d x\bigg]+\mathcal{O}\big(n^{-\frac{1}{3}}\big)\label{h:5}
\end{align}
uniformly in $\theta\in[0,1]$ and $(s,\alpha,\beta)\in\mathbb{R}\times(-1,\infty)\in(\mathbb{C}\setminus(-\infty,0))$ on compact sets.
\end{lem}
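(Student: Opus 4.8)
The plan is to read $\frac{1}{2\pi\im}\big[\int_{\gamma_+}-\int_{\gamma_-}\big]$ as integration over the closed loop $\gamma_+\cup\gamma_-^{-1}$ (oriented clockwise around the real interval $[\varkappa,\sqrt2]$, $\varkappa=-\sqrt2-s_{n\theta}$) and to collapse that loop onto the branch cut $[\varkappa,\sqrt2]$ of the Szeg\H o function $\mathcal{D}(z)=\mathcal{D}(z;s,n,\theta)$ in \eqref{a:18}. This is legitimate because, by the choice of $\gamma_\pm$ and for $n$ large, the integrand factor $g(z):=\frac{\partial V_\theta}{\partial\theta}(z+s_{n\theta})=V(z+s_{n\theta})-(z+s_{n\theta})^2$ is analytic in the region enclosed by the loop, while $z\mapsto\mathcal{D}'(z)/\mathcal{D}(z)=(\log\mathcal{D}(z))'$ is analytic off $[\varkappa,\sqrt2]$ and $\mathcal{O}(z^{-2})$ at $z=\infty$, so no contribution at infinity appears. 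I would then write $(\log\mathcal{D})'(z)=\frac{\alpha/2}{z-\sqrt2}+\widehat r(z)$, isolating the only true pole of $\mathcal{D}'/\mathcal{D}$: it comes from the factor $(z-\sqrt2)^{\alpha/2}$ in \eqref{a:18} and has residue $\tfrac\alpha2$, while the remainder $\widehat r$ is analytic off $[\varkappa,\sqrt2]$ and carries only integrable, $\mathcal{O}(|z-\sqrt2|^{-1/2})$ resp.\ $\mathcal{O}(|z-\varkappa|^{-1/2})$, singularities at the two endpoints since $j$ has square-root branch points there.

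For the pole term the residue theorem gives at once
\begin{equation*}
	\frac{n}{2\pi\im}\bigg[\int_{\gamma_+}-\int_{\gamma_-}\bigg]\frac{\alpha/2}{z-\sqrt2}\,g(z)\,\d z=-\frac{n\alpha}{2}\,g(\sqrt2)=-\frac{n\alpha}{2}\,\frac{\partial V_\theta}{\partial\theta}(\sqrt2+s_{n\theta}),
\end{equation*}
which is the first term on the right of \eqref{h:5} (thus ``$b$'' there is the right endpoint $\sqrt2$ of $E$). For $\widehat r$ I would collapse the loop onto $[\varkappa,\sqrt2]$ and use the boundary data of $\mathcal{D}$ there: combining the multiplicative jump $\mathcal{D}_+\mathcal{D}_-=|x-\sqrt2|^{\alpha}$ recorded below \eqref{a:18} with Schwarz symmetry $\overline{\mathcal{D}_+(x)}=\mathcal{D}_-(x)$ (valid since $\mathcal{D}$ is real and positive for $x>\sqrt2$) yields $\mathcal{D}_\pm(x)=|x-\sqrt2|^{\alpha/2}\e^{\pm\im\varphi(x)}$ with the real phase $\varphi(x)=\tfrac\alpha2\big(\pi-\arccos w(x)\big)$, $w(x)=1+2(x-\sqrt2)/(2\sqrt2+s_{n\theta})$. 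Hence $\widehat r_+(x)-\widehat r_-(x)=2\im\,\varphi'(x)=\im\alpha\big((\sqrt2-x)(x-\varkappa)\big)^{-1/2}$, and the $\widehat r$-contribution equals
\begin{equation*}
	\frac{n\alpha}{2\pi}\int_{\varkappa}^{\sqrt2}\frac{g(x)}{\sqrt{(\sqrt2-x)(x-\varkappa)}}\,\d x .
\end{equation*}

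It remains to take the limit of this last integral. Substituting $y=x+s_{n\theta}$ turns it into $\frac{n\alpha}{2\pi}\int_{-\sqrt2}^{\sqrt2+s_{n\theta}}\frac{\partial_\theta V_\theta(y)\,\d y}{\sqrt{(\sqrt2+s_{n\theta}-y)(y+\sqrt2)}}$; rescaling $[-\sqrt2,\sqrt2+s_{n\theta}]$ linearly to $[-1,1]$ and Taylor-expanding in $s_{n\theta}=\mathcal{O}(n^{-2/3})$ (using the analyticity of $V$), the leading term is $\frac{n\alpha}{2\pi}\int_E\frac{\partial_\theta V_\theta(x)}{\sqrt{2-x^2}}\,\d x=\frac{\partial}{\partial\theta}\big[\frac{n\alpha}{2\pi}\int_E\frac{V_\theta(x)}{\sqrt{2-x^2}}\,\d x\big]$. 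The first-order-in-$s_{n\theta}$ term is a constant multiple of $\int_{-1}^1(1+t)W'(\sqrt2 t)(1-t^2)^{-1/2}\,\d t$ with $W:=\partial_\theta V_\theta=V-x^2$, which in the variable $x=\sqrt2 t$ is a combination of $\int_E\frac{W'(x)}{\sqrt{2-x^2}}\,\d x$ and $\int_E\frac{xW'(x)}{\sqrt{2-x^2}}\,\d x$; both vanish, because the support of $\mu_{V_\theta}$ equals $E=[-\sqrt2,\sqrt2]$ for every $\theta$ (Lemma \ref{convex}), so the two endpoint conditions that pin down the support of a one-cut regular potential are $\theta$-independent and, on differentiating in $\theta$, become exactly these two identities. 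Consequently only the $\mathcal{O}(n\,s_{n\theta}^2)=\mathcal{O}(n^{-1/3})$ remainder survives, which yields \eqref{h:5}; the uniformity in $\theta\in[0,1]$ and in $(s,\alpha,\beta)$ on compacta is immediate since every estimate depends continuously on these data and $s_{n\theta}$ is uniformly small.

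The main obstacle is the local analysis at $z=\sqrt2$, where the pole of $\mathcal{D}'/\mathcal{D}$ sits on top of a branch point of both factors of $\mathcal{D}$ in \eqref{a:18}: one has to separate the $\frac{\alpha/2}{z-\sqrt2}$ part cleanly and control the residual integrable singularity of $\widehat r$ after collapsing the contour, and --- crucially for the claimed order of the error --- one must recognise and invoke the two $\theta$-independent support conditions so that the $\mathcal{O}(n^{1/3})$ correction generated by the shift $s_{n\theta}$ actually cancels.
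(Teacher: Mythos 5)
Your proof is correct and takes essentially the same route as the paper's: decompose $\mathcal{D}'/\mathcal{D}$ into the simple pole $\frac{\alpha/2}{z-\sqrt2}$ plus a remainder with only square-root branch points, evaluate the pole by residues, collapse the remainder's loop onto the cut to get $\frac{n\alpha}{2\pi}\int_{\varkappa}^{\sqrt2}\frac{g(x)}{\sqrt{(\sqrt2-x)(x-\varkappa)}}\,\d x$, then Taylor-expand in $s_{n\theta}$ and kill the $\mathcal{O}(n^{1/3})$ term by $\theta$-independence of the equilibrium constraints (and you correctly read ``$b$'' in the statement as the right endpoint $\sqrt2$). The only cosmetic difference: the paper writes $\mathcal{D}'/\mathcal{D}$ explicitly as $\frac{\alpha/2}{z-\sqrt2}-\frac{\alpha}{2\sqrt2+s_{n\theta}}(w^2-1)^{-1/2}$ and then kills the linear term with the \emph{single} combined identity $\int_E\sqrt{\tfrac{\sqrt2+x}{\sqrt2-x}}\,V_\theta'(x)\,\d x=2\pi$ (the precise combination $(\sqrt2+x)W'(x)/\sqrt{2-x^2}$ that arises), whereas you split it into the two separate endpoint conditions $\int_E\frac{W'(x)}{\sqrt{2-x^2}}\,\d x=0$ and $\int_E\frac{xW'(x)}{\sqrt{2-x^2}}\,\d x=0$; these are equivalent for the purpose at hand.
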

\begin{proof} By \eqref{a:18}, for $z\in\mathbb{C}\setminus[-\sqrt{2}-s_{n\theta},\sqrt{2}]$,
\begin{equation*}
	\frac{\mathcal{D}'(z)}{\mathcal{D}(z)}=\frac{\alpha}{2}\frac{1}{z-\sqrt{2}}-\frac{\alpha}{2\sqrt{2}+s_{n\theta}}\frac{1}{(w^2-1)^{\frac{1}{2}}},\ \ \ \ w=1+2\,\frac{z-\sqrt{2}}{2\sqrt{2}+s_{n\theta}},
\end{equation*}
and so with Cauchy's theorem,
\begin{align*}
	\textnormal{LHS of}\ \eqref{h:5}=&\,-\frac{n\alpha}{2}\frac{\partial V_{\theta}}{\partial\theta}(\sqrt{2}+s_{n\theta})-\frac{n\alpha}{4\pi\im}\int_{-1}^1\Bigg[\frac{1}{(w^2-1)^{\frac{1}{2}}_+}-\frac{1}{(w^2-1)^{\frac{1}{2}}_-}\Bigg]\frac{\partial V_{\theta}}{\partial\theta}\bigg(\frac{2\sqrt{2}+s_{n\theta}}{2}w+\frac{s_{n\theta}}{2}\bigg)\,\d w\\
	=&\,-\frac{n\alpha}{2}\frac{\partial V_{\theta}}{\partial\theta}(\sqrt{2}+s_{n\theta})+\frac{n\alpha}{2\pi}\int_E\frac{1}{\sqrt{2-x^2}}\frac{\partial V_{\theta}}{\partial\theta}\bigg(x+\frac{s_{n\theta}}{2\sqrt{2}}(x+\sqrt{2})\bigg)\d x.
\end{align*}
This yields \eqref{h:5} by Taylor expansion, recalling that $s_{n\theta}$ in \eqref{a:12} and $\tau_{\theta}$ in \eqref{a:3} and using the integral identity, see \eqref{h:11} below,
\begin{equation*}
	\int_E\sqrt{\frac{\sqrt{2}+x}{\sqrt{2}-x}}V'_{\theta}(x)\,\d x=2\pi,\ \ \ \ \theta\in[0,1].
\end{equation*}
Our derivation of \eqref{h:5} is complete.
\end{proof}
The third integral in the right hand side of \eqref{h:1} involves $R(z)$ for $z\in\gamma_{\pm}$. So, we are once more in need of 
\begin{equation*}
	R(z)=I+\frac{1}{2\pi\im}\bigg[\oint_{\Sigma_a}+\oint_{\Sigma_b}\bigg]\big(G_R(w)-I\big)\frac{\d w}{w-z}+\frac{1}{2\pi\im}\oint_{\Sigma_b}\big(R_-(w)-I\big)\big(G_R(w)-I\big)\frac{\d w}{w-z}+\mathcal{O}\big(n^{-\frac{4}{3}}\big),
\end{equation*}
where $z\in\gamma_{\pm}$, and its large $n$-asymptotic that can be computed from residue theorem, and \eqref{a:25},\eqref{a:31},\eqref{g:3}.
\begin{prop} As $n\rightarrow\infty$, with $\sigma=\chi^{\sigma_3}\e^{\im\frac{\pi}{4}\sigma_3}$ and $\varkappa=-\sqrt{2}-s_{n\theta}$,
\begin{align}
	&\frac{1}{2\pi\im}\oint_{\Sigma_b}\big(G_R(w)-I\big)\frac{\d w}{w-z}=\sigma\begin{bmatrix}-1&1\\ -1&1\end{bmatrix}\sigma^{-1}\textcolor{orange}{\Bigg\{}\frac{1}{2}\sqrt{\frac{\sqrt{2}-\varkappa}{\zeta_1^b}}\,Q_1^{12}-\frac{s^2}{8}\sqrt{1-\frac{\varkappa}{\sqrt{2}}}\frac{1}{(\pi h_{V_{\theta}}(\sqrt{2}))^{\frac{1}{3}}}\textcolor{orange}{\Bigg\}}\frac{n^{-\frac{1}{3}}}{z-\sqrt{2}}\nonumber\\
	&+\sigma\textcolor{blue}{\Bigg\{}\frac{Q_1^{11}}{\zeta_1^b}\begin{bmatrix}-\alpha&1+\alpha\\ 1-\alpha&\alpha\end{bmatrix}+\frac{s^4}{32\sqrt{2}}\frac{1}{(\pi h_{V_{\theta}}(\sqrt{2}))^{\frac{2}{3}}}\begin{bmatrix}1&0\\ 0&1\end{bmatrix}-\frac{s^2}{8\sqrt{\zeta_1^b\sqrt{2}}}\frac{Q_1^{12}}{(\pi h_{V_{\theta}}(\sqrt{2}))^{\frac{1}{3}}}\begin{bmatrix}1-\alpha&1+\alpha\\ 1-\alpha&1+\alpha\end{bmatrix}\textcolor{blue}{\Bigg\}}\sigma^{-1}\nonumber\\
	&\times\frac{n^{-\frac{2}{3}}}{z-\sqrt{2}}+\sigma\textcolor{red}{\Bigg\{}\begin{bmatrix}-1&1\\ -1&1\end{bmatrix}\Bigg(\frac{1}{2\zeta_1^b}\sqrt{\frac{\sqrt{2}-\varkappa}{\zeta_1^b}}Q_2^{12}+\frac{s^3}{48\sqrt{2}}\sqrt{1-\frac{\varkappa}{\sqrt{2}}}\bigg(1-\frac{s^3}{16}\bigg)\frac{1}{\pi h_{V_{\theta}}(\sqrt{2})}+\frac{s^4}{64\sqrt{2}}\sqrt{\frac{\sqrt{2}-\varkappa}{\zeta_1^b}}\nonumber\\
	&\times\frac{Q_1^{12}}{(\pi h_{V_{\theta}}(\sqrt{2}))^{\frac{2}{3}}}-\frac{s^2}{8\zeta_1^b}\sqrt{1-\frac{\varkappa}{\sqrt{2}}}\frac{Q_1^{11}}{(\pi h_{V_{\theta}}(\sqrt{2}))^{\frac{1}{3}}}\Bigg)\frac{1}{z-\sqrt{2}}+\frac{1}{2(\zeta_1^b)^2}\sqrt{\frac{\zeta_1^b}{\sqrt{2}-\varkappa}}Q_1^{21}\begin{bmatrix}\alpha^2-1 & -(\alpha+1)^2\\ (\alpha-1)^2&1-\alpha^2\end{bmatrix}\nonumber\\
	&-\frac{Q_2^{12}}{4\zeta_1^b}\sqrt{\frac{\sqrt{2}-\varkappa}{\zeta_1^b}}\textcolor{olive}{\Bigg(}\frac{1}{\sqrt{2}-\varkappa}\begin{bmatrix}2\alpha^2+1 & -2\alpha^2-4\alpha-1\\ 2\alpha^2-4\alpha+1 & -2\alpha^2-1\end{bmatrix}+3\zeta_2^b\begin{bmatrix}-1&1\\ -1&1\end{bmatrix}\textcolor{olive}{\Bigg)}-\frac{s^3}{96\sqrt{2}}\sqrt{1-\frac{\varkappa}{\sqrt{2}}}\frac{1}{\pi h_{V_{\theta}}(\sqrt{2})}\nonumber\\
	&\times\textcolor{magenta}{\Bigg(}\bigg(1-\frac{s^3}{16}\bigg)\frac{1}{\sqrt{2}-\varkappa}\begin{bmatrix}3&1\\ -1&-3\end{bmatrix}-6\bigg(1-\frac{s^3}{80}\bigg)\bigg\{\frac{h_{V_{\theta}}'(\sqrt{2})}{h_{V_{\theta}}(\sqrt{2})}+\frac{1}{4\sqrt{2}}\bigg\}\begin{bmatrix}1&-1\\ 1&-1\end{bmatrix}\textcolor{magenta}{\Bigg)}-\frac{s^4}{128\sqrt{2}}\sqrt{\frac{\sqrt{2}-\varkappa}{\zeta_1^b}}\nonumber\\
	&\times\frac{Q_1^{12}}{(\pi h_{V_{\theta}}(\sqrt{2}))^{\frac{2}{3}}}\textcolor{brown}{\Bigg(}\frac{1}{\sqrt{2}-\varkappa}\begin{bmatrix}2\alpha^2+1 & -2\alpha^2-4\alpha-1\\ 2\alpha^2-4\alpha+1 & -2\alpha^2-1\end{bmatrix}+\bigg(\zeta_2^b+\frac{4}{5}\bigg\{\frac{h_{V_{\theta}}'(\sqrt{2})}{h_{V_{\theta}}(\sqrt{2})}+\frac{1}{4\sqrt{2}}\bigg\}\bigg)\begin{bmatrix}-1&1\\ -1&1\end{bmatrix}\textcolor{brown}{\Bigg)}\nonumber\\
	&+\frac{s^2}{8\zeta_1^b}\sqrt{1-\frac{\varkappa}{\sqrt{2}}}\frac{Q_1^{11}}{(\pi h_{V_{\theta}}(\sqrt{2}))^{\frac{1}{3}}}\textcolor{cyan}{\Bigg(}\frac{1}{\sqrt{2}-\varkappa}\begin{bmatrix}2\alpha^2-1 & -2\alpha^2-4\alpha-1\\ 2\alpha^2-4\alpha+1 & -2\alpha^2+1\end{bmatrix}-\bigg(\frac{1}{2(\sqrt{2}-\varkappa)}\nonumber\\
	&-\frac{1}{5}\bigg\{\frac{h_{V_{\theta}}'(\sqrt{2})}{h_{V_{\theta}}(\sqrt{2})}+\frac{1}{4\sqrt{2}}\bigg\}-\zeta_2^b\bigg)\begin{bmatrix}-1&1\\ -1&1\end{bmatrix}\textcolor{cyan}{\Bigg)}\textcolor{red}{\Bigg\}}\sigma^{-1}\frac{n^{-1}}{z-\sqrt{2}}+\mathcal{O}\big(n^{-\frac{4}{3}}\big)\label{h:6}
\end{align}
uniformly in $z\in\gamma_{\pm}$ and in $(s,\alpha,\beta)\in\mathbb{R}\times(-1,\infty)\times(\mathbb{C}\setminus(-\infty,0))$ on compact sets. Expansion \eqref{h:6} is differentiable with respect to $z\in\gamma_{\pm}$. Throughout $Q_k^{ij}=Q_k^{ij}(n^{\frac{2}{3}}\zeta^b(\sqrt{2}),\alpha,\beta)$ as in RHP \ref{YattRHP}.
\end{prop}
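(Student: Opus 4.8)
The plan is to evaluate the Cauchy integral by residues and then reuse the local data already assembled for \eqref{g:6}. Since $\gamma_{\pm}$ lies in the unbounded component of $\mathbb{C}\setminus\Sigma_R$, every $z\in\gamma_{\pm}$ stays a fixed positive distance outside $\overline{\mathbb{D}_{\epsilon}(\sqrt{2})}$, so for $w\in\Sigma_b=\partial\mathbb{D}_{\epsilon}(\sqrt{2})$ the kernel $(w-z)^{-1}$ is analytic in $w$ throughout $\mathbb{D}_{\epsilon}(\sqrt{2})$, with $(w-z)^{-1}=-\sum_{m\geq 0}(w-\sqrt{2})^m(z-\sqrt{2})^{-m-1}$ there. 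On $\Sigma_b$ one has $G_R(w)=M(w)P(w)^{-1}$ (RHP \ref{ratioArno}, condition $(2)$), and \eqref{a:25} gives $G_R(w)-I=\sum_{k=1}^{3}M_k(w)n^{-\frac{k}{3}}+\mathcal{O}(n^{-\frac{4}{3}})$, where each $M_k$ is meromorphic in $\mathbb{D}_{\epsilon}(\sqrt{2})$ with its only singularity at $w=\sqrt{2}$. Since $\Sigma_b$ is oriented clockwise in Figure \ref{fig3}, the residue theorem, applied exactly as in the derivation of \eqref{g:6}, yields
\begin{equation*}
	\frac{1}{2\pi\im}\oint_{\Sigma_b}\big(G_R(w)-I\big)\frac{\d w}{w-z}=\sum_{k=1}^{3}n^{-\frac{k}{3}}\sum_{j=1}^{k}\frac{M_{-j}^k}{(z-\sqrt{2})^j}+\mathcal{O}\big(n^{-\frac{4}{3}}\big),
\end{equation*}
with $M_k(w)=\sum_{m\geq -k}M_m^k(w-\sqrt{2})^m$ on $0<|w-\sqrt{2}|<\epsilon$ and with the error controlled as in Theorem \ref{Arnotheo1}, uniformly on $\gamma_{\pm}$ and in $(s,\alpha,\beta)$ on compact sets.

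It then remains to insert the Laurent coefficients $M_{-j}^k$. Precisely as in the proof of \eqref{g:6}, the structure of \eqref{a:23} forces $M_{-2}^2=M_{-3}^3=\bigl[\begin{smallmatrix}0&0\\ 0&0\end{smallmatrix}\bigr]$, so only $M_{-1}^1,M_{-1}^2,M_{-2}^3$ and $M_{-1}^3$ enter; these were already computed there by combining \eqref{a:23}, the explicit formul\ae\ for $M_k(z)$ in condition $(4)$ of RHP \ref{localb} (with $\hbar(z)=\zeta^b(z)-\zeta^b(\sqrt{2})$), and Lemma \ref{ugly}, in terms of $\zeta_1^b,\zeta_2^b,\varkappa=-\sqrt{2}-s_{n\theta}$ and the scalars $Q_k^{ij}(n^{\frac{2}{3}}\zeta^b(\sqrt{2}),\alpha,\beta)$. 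The only new point relative to Section \ref{sec5} is that $\theta\in[0,1]$ is now kept arbitrary, so the conformal data must be taken in the $\theta$-dependent form of Proposition \ref{conf1}: one substitutes $\sqrt{2}-\varkappa=2\sqrt{2}+s_{n\theta}$, $\zeta_1^b=\sqrt{2}(\pi h_{V_\theta}(\sqrt{2}))^{\frac{2}{3}}\big(1+\mathcal{O}(n^{-\frac{2}{3}})\big)$, $\zeta_2^b=\frac{2}{5}\{h_{V_\theta}'(\sqrt{2})/h_{V_\theta}(\sqrt{2})+\frac{1}{4\sqrt{2}}\}\big(1+\mathcal{O}(n^{-\frac{2}{3}})\big)$ and $n^{\frac{2}{3}}\zeta^b(\sqrt{2})=s\big(1+\mathcal{O}(n^{-\frac{2}{3}})\big)$, which carry the factors $\pi h_{V_\theta}(\sqrt{2})$ and $h_{V_\theta}'(\sqrt{2})/h_{V_\theta}(\sqrt{2})+\frac{1}{4\sqrt{2}}$ that were invisible at $\theta=0$, where $\pi h_{V_0}(\sqrt{2})=1$. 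Carrying these through the four coefficients and collecting the attendant rank-one block matrices produces \eqref{h:6}.

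Finally, \eqref{h:6} is differentiable in $z\in\gamma_{\pm}$ because its principal part is a finite Laurent polynomial in $(z-\sqrt{2})^{-1}$ with $z$-independent coefficients, while the remainder $\mathcal{O}(n^{-\frac{4}{3}})$ may be differentiated under the integral sign over $\Sigma_b$, which stays a fixed distance from $\gamma_{\pm}$; uniformity in $(s,\alpha,\beta)$ on compact sets is inherited from that in \eqref{a:25}, the continuity of the Painlev\'e-XXXIV data in RHP \ref{YattRHP}, and Proposition \ref{conf1}. \textbf{The main obstacle} is purely combinatorial, as already in Section \ref{sec5}: assembling $M_{-1}^1,M_{-1}^2,M_{-2}^3$ and $M_{-1}^3$ from their many constituent pieces and carrying the $\theta$-dependent conformal corrections through each of them without sign or factor errors. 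No analytic ingredient beyond those already established is required.
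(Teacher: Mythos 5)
Your proposal is correct and takes essentially the same route as the paper, whose proof of \eqref{h:6} is deliberately compressed to the single phrase that the integral "can be computed from residue theorem, and \eqref{a:25},\eqref{a:31},\eqref{g:3}"; your fill-in (clockwise residue at $w=\sqrt{2}$ yielding $\sum_{k=1}^3 n^{-k/3}\sum_{j=1}^{k}M_{-j}^k(z-\sqrt{2})^{-j}$ with $M_{-2}^2=M_{-3}^3=0$, then the Laurent coefficients from Lemma \ref{ugly}) is exactly the intended argument. Your remark about the $\theta$-dependence is the right one to flag: the $M_{-j}^k$ formul\ae\ displayed in the proof of \eqref{g:6} have been pre-simplified using $\pi h_{V_0}(\sqrt{2})=1$ (e.g. the factor $\pi h_{V_\theta}(\sqrt{2})/\tau_\theta^{4/3}=\tfrac12(\pi h_{V_\theta}(\sqrt{2}))^{-1/3}$ from $D_1$ in Lemma \ref{ugly} collapses to $\tfrac12$ at $\theta=0$), and restoring the general-$\theta$ factors is precisely what produces the extra $(\pi h_{V_\theta}(\sqrt{2}))^{-1/3}$, $(\pi h_{V_\theta}(\sqrt{2}))^{-2/3}$ and $h_{V_\theta}'(\sqrt{2})/h_{V_\theta}(\sqrt{2})+\tfrac{1}{4\sqrt{2}}$ appearing in \eqref{h:6}.
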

\begin{prop} As $n\rightarrow\infty$, with $\sigma=\chi^{\sigma_3}\e^{\im\frac{\pi}{4}\sigma_3}$ and $\varkappa=-\sqrt{2}-s_{n\theta}$,
\begin{align}
	\frac{1}{2\pi\im}&\oint_{\Sigma_b}\big(R_-(w)-I\big)\big(G_R(w)-I\big)\frac{\d w}{w-z}=\sigma\Bigg\{\frac{s^4}{32\sqrt{2}}\frac{1}{(\pi h_{V_{\theta}}(\sqrt{2}))^{\frac{2}{3}}}\begin{bmatrix}-1&1\\ 1&-1\end{bmatrix}+\frac{\alpha}{2\zeta_1^b}\big(Q_1^{12}\big)^2\nonumber\\
	&\times\begin{bmatrix}1&-1\\ 1&-1\end{bmatrix}+\frac{s^2}{8\sqrt{\zeta_1^b\sqrt{2}}}\frac{Q_1^{12}}{(\pi h_{V_{\theta}}(\sqrt{2}))^{\frac{1}{3}}}\begin{bmatrix}1-\alpha&-1+\alpha\\ -1-\alpha&1+\alpha\end{bmatrix}\Bigg\}\sigma^{-1}\frac{n^{-\frac{2}{3}}}{z-\sqrt{2}}+\sigma\textcolor{red}{\Bigg\{}-\frac{s^4}{128\sqrt{2}}\sqrt{\frac{\sqrt{2}-\varkappa}{\zeta_1^b}}\nonumber\\
	&\times\frac{Q_1^{12}}{(\pi h_{V_{\theta}}(\sqrt{2}))^{\frac{2}{3}}}\textcolor{brown}{\bigg\{}\frac{1}{\sqrt{2}-\varkappa}\begin{bmatrix}-2\alpha^2-1&2\alpha^2+4\alpha-7\\ -2\alpha^2+4\alpha+7&2\alpha^2+1\end{bmatrix}+\bigg(\zeta_2^b+\frac{4}{5}\bigg\{\frac{h_{V_{\theta}}'(\sqrt{2})}{h_{V_{\theta}}(\sqrt{2})}+\frac{1}{4\sqrt{2}}\bigg\}\bigg)\begin{bmatrix}1&-1\\ 1&-1\end{bmatrix}\textcolor{brown}{\bigg\}}\nonumber\\
	&+\frac{s^2}{16\sqrt{\zeta_1^b\sqrt{2}}}\sqrt{\frac{\sqrt{2}-\varkappa}{\zeta_1^b}}\frac{(Q_1^{12})^2}{(\pi h_{V_{\theta}}(\sqrt{2}))^{\frac{1}{3}}}\textcolor{orange}{\bigg\{}\frac{4\alpha}{\sqrt{2}-\varkappa}\begin{bmatrix}0&1\\ 1&0\end{bmatrix}+\bigg(\frac{1}{\sqrt{2}-\varkappa}+\frac{2}{5}\bigg\{\frac{h_{V_{\theta}}'(\sqrt{2})}{h_{V_{\theta}}(\sqrt{2})}+\frac{1}{4\sqrt{2}}\bigg\}\bigg)\begin{bmatrix}1&-1\\ 1&-1\end{bmatrix}\textcolor{orange}{\bigg\}}\nonumber\\
	&-\frac{\alpha^2}{2\zeta_1^b}\sqrt{\frac{\sqrt{2}-\varkappa}{\zeta_1^b}}\frac{(Q_1^{12})^3}{\sqrt{2}-\varkappa}\begin{bmatrix}1&-1\\ 1&-1\end{bmatrix}+\frac{s^2}{8\zeta_1^b}\sqrt{1-\frac{\varkappa}{\sqrt{2}}}\frac{Q_1^{11}}{(\pi h_{V_{\theta}}(\sqrt{2}))^{\frac{1}{3}}}\textcolor{magenta}{\bigg\{}\frac{1}{\sqrt{2}-\varkappa}\begin{bmatrix}-2\alpha^2-1 & 2\alpha^2-1\\ -2\alpha^2+1&2\alpha^2+1\end{bmatrix}\nonumber\\
	&+\bigg(\zeta_2^b+\frac{1}{2(\sqrt{2}-\varkappa)}-\frac{1}{5}\bigg\{\frac{h_{V_{\theta}}'(\sqrt{2})}{h_{V_{\theta}}(\sqrt{2})}+\frac{1}{4\sqrt{2}}\bigg\}\bigg)\begin{bmatrix}1&-1\\ 1&-1\end{bmatrix}\textcolor{magenta}{\bigg\}}-\frac{1}{4\zeta_1^b}\sqrt{\frac{\sqrt{2}-\varkappa}{\zeta_1^b}}Q_1^{11}Q_1^{12}\textcolor{olive}{\bigg\{}\frac{1}{\sqrt{2}-\varkappa}\nonumber
\end{align}
\begin{align}
	&\times\begin{bmatrix}-6\alpha^2+1 & 6\alpha^2+4\alpha-1\\ -6\alpha^2+4\alpha+1&6\alpha^2-1\end{bmatrix}+\zeta_2^b\begin{bmatrix}1&-1\\ 1&-1\end{bmatrix}\textcolor{olive}{\bigg\}}+\frac{s^6}{256\sqrt{2}}\sqrt{1-\frac{\varkappa}{\sqrt{2}}}\frac{1}{\pi h_{V_{\theta}}(\sqrt{2})}\textcolor{blue}{\bigg\{}\frac{1}{2}\bigg(\frac{1}{\sqrt{2}-\varkappa}\nonumber\\
	&+\frac{2}{5}\bigg\{\frac{h_{V_{\theta}}'(\sqrt{2})}{h_{V_{\theta}}(\sqrt{2})}+\frac{1}{4\sqrt{2}}\bigg\}\bigg)\begin{bmatrix}1&-1\\ 1&-1\end{bmatrix}-\frac{1}{\sqrt{2}-\varkappa}\begin{bmatrix}1&1\\ -1&-1\end{bmatrix}\textcolor{blue}{\bigg\}}\textcolor{red}{\Bigg\}}\sigma^{-1}\frac{n^{-1}}{z-\sqrt{2}}+\mathcal{O}\big(n^{-\frac{4}{3}}\big),\label{h:7}
\end{align}
uniformly in $z\in\gamma_{\pm}$ and in $(s,\alpha,\beta)\in\mathbb{R}\times(-1,\infty)\times(\mathbb{C}\setminus(-\infty,0))$ on compact sets. Expansion \eqref{h:7} is differentiable with respect to $z\in\gamma_{\pm}$. Throughout, $Q_k^{ij}=Q_k^{ij}(n^{\frac{2}{3}}\zeta^b(\sqrt{2}),\alpha,\beta)$ as in RHP \ref{YattRHP}.
\end{prop}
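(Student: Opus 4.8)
The plan is to deduce \eqref{h:7} by the same residue computation that produced \eqref{g:8}, now carried out for general $\theta\in[0,1]$ and with the weight $(w-z)^{-1}$ in place of the polynomial $(w/\sqrt2)^{k-1}$. First I would insert the expansion \eqref{g:3} for $R_-(w)-I$ on $\Sigma_b=\partial\mathbb{D}_{\epsilon}(\sqrt2)$ — which is valid for \emph{all} $\theta$ — together with the matching \eqref{a:25}, i.e. $G_R(w)-I=M(w)P(w)^{-1}-I=\sum_{k=1}^3M_k(w)n^{-k/3}+\mathcal{O}(n^{-4/3})$ with $M_k$ as in condition $(4)$ of RHP \ref{localb}, into $\frac{1}{2\pi\im}\oint_{\Sigma_b}(R_-(w)-I)(G_R(w)-I)\frac{\d w}{w-z}$. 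Since $z\in\gamma_{\pm}$ lies in the unbounded component of $\mathbb{C}\setminus\Sigma_R$, in particular at a fixed positive distance from $\sqrt2$, the factor $(w-z)^{-1}$ is holomorphic throughout the disk enclosed by the clockwise contour $\Sigma_b$, so the integral equals (up to the orientation sign) minus the residue of the integrand at $w=\sqrt2$. Collecting in powers of $n^{-1/3}$, the $n^{-1/3}$-contribution vanishes (it is a product of two $\mathcal{O}(n^{-1/3})$ quantities), and at orders $n^{-2/3}$ and $n^{-1}$ only the \emph{simple}-pole part of $(R_-(w)-I)(G_R(w)-I)$ enters, because the double pole of $M_3$ would have to be paired with the identically vanishing $n^{0}$-part of $R_-(w)-I$; this is precisely why \eqref{h:7} carries a single factor $(z-\sqrt2)^{-1}$ at every order, coming from the value $\tfrac{1}{\sqrt2-z}$ of the Taylor expansion of $(w-z)^{-1}$ at $w=\sqrt2$.

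The Laurent coefficients that are needed, $M_j(z)=\sum_{k\ge-j}M_k^j(z-\sqrt2)^k$, are exactly those assembled in the proof of \eqref{g:8} from \eqref{a:23}, the explicit formulae of RHP \ref{localb}, Lemma \ref{ugly}, and the facts $M_{-2}^2=0$ and $L=M_{-1}^2-M_0^1M_{-1}^1$ recorded after \eqref{g:3}. Up to the routine sign bookkeeping of the two conventions $\tfrac{1}{2\pi\im}=-\tfrac{\im}{2\pi}$ and of the weight value $\tfrac1{\sqrt2-z}=-\tfrac1{z-\sqrt2}$, the residue produces $\tfrac1{z-\sqrt2}$ times the very same matrix combinations $M_0^1M_{-1}^1$ (at order $n^{-2/3}$) and $(M_0^1)^2M_{-1}^1+M_1^1(M_{-1}^1)^2-M_0^2M_{-1}^1-M_0^1M_{-1}^2$ (at order $n^{-1}$) that appear in \eqref{g:9}. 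The only genuinely new feature compared with Section \ref{sec5} is that $\theta$ is kept general: one substitutes the $\theta$-dependent data of Proposition \ref{conf1} with $\zeta_1^b,\zeta_2^b$ retaining their $h_{V_{\theta}}(\sqrt2),h_{V_{\theta}}'(\sqrt2)$ dependence, $\varkappa=-\sqrt2-s_{n\theta}$, and $n^{2/3}s_{n\theta}=s\big((\pi h_{V_{\theta}}(\sqrt2))^{2/3}\sqrt2\big)^{-1}$, which is what generates the prefactors $(\pi h_{V_{\theta}}(\sqrt2))^{-1/3},(\pi h_{V_{\theta}}(\sqrt2))^{-2/3},(\pi h_{V_{\theta}}(\sqrt2))^{-1}$ and the curvature terms $\tfrac{h_{V_{\theta}}'(\sqrt2)}{h_{V_{\theta}}(\sqrt2)}+\tfrac1{4\sqrt2}$ displayed in \eqref{h:7}. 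Conjugating everything by $\sigma=\chi^{\sigma_3}\e^{\im\frac{\pi}{4}\sigma_3}$ and reorganizing then yields \eqref{h:7}.

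Uniformity in $z\in\gamma_{\pm}$ and in $(s,\alpha,\beta)$ on compacts is inherited from the corresponding uniformities in \eqref{g:3}, \eqref{a:25} and \eqref{a:35}, since $\gamma_{\pm}$ stays a fixed positive distance away from $\Sigma_b$; differentiability in $z\in\gamma_{\pm}$ is immediate because the leading terms are rational in $z$ and differentiating the remainder $\frac{1}{2\pi\im}\oint_{\Sigma_b}(\cdots)\frac{\d w}{w-z}$ under the integral sign merely replaces $(w-z)^{-1}$ by $(w-z)^{-2}$, which is still uniformly bounded on $\Sigma_b$ and preserves the $\mathcal{O}(n^{-4/3})$ estimate. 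I expect the main obstacle to be not conceptual but the sheer volume of bookkeeping: one must expand the matrix products $M_0^1M_{-1}^1$, $(M_0^1)^2M_{-1}^1$, $M_0^2M_{-1}^1$ and $M_0^1M_{-1}^2$ — each already a lengthy $2\times2$ expression — while tracking all $\theta$-dependence through $\zeta_1^b,\zeta_2^b,\varkappa$ and $Q_k^{ij}(n^{2/3}\zeta^b(\sqrt2),\alpha,\beta)$ and verifying that each piece carries the correct power of $n^{-1/3}$. Since these exact products were already computed at $\theta=0$ in the proof of \eqref{g:8}, the only points requiring genuine care are restoring the $h_{V_{\theta}}$-dependent prefactors and carrying the residue-theorem weight $(w-z)^{-1}$ correctly through the calculation.
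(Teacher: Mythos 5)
Your proposal is correct and matches the paper's (implicit) route: the text states this Proposition without a written proof, pointing only to "residue theorem, and \eqref{a:25},\eqref{a:31},\eqref{g:3}", which is exactly the residue/Laurent-coefficient computation you describe. The two sign flips you invoke — $\tfrac{1}{2\pi\im}=-\tfrac{\im}{2\pi}$ and $\tfrac1{\sqrt2-z}=-\tfrac1{z-\sqrt2}$ — do cancel, so the residues of $(R_--I)(G_R-I)$ at $w=\sqrt2$ are indeed the same matrix combinations $-M_0^1M_{-1}^1$ and $(M_0^1)^2M_{-1}^1+M_1^1(M_{-1}^1)^2-M_0^2M_{-1}^1-M_0^1M_{-1}^2$ as in \eqref{g:9}, now multiplied by $(z-\sqrt2)^{-1}$. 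One small point worth spelling out a little more than you do: the reason the $n^{-1}$ coefficient carries a single factor $(z-\sqrt2)^{-1}$ is not only that $M_3$ does not enter at this order (it first appears at $n^{-4/3}$ via $(R_--I)_1M_3$), but also that $(R_--I)_1$ and $(R_--I)_2$ are \emph{analytic} at $z=\sqrt2$ — the apparent poles in formula \eqref{g:3} cancel, as they must since $R$ is analytic inside $\mathbb{D}_{\epsilon}(\sqrt2)$ — so the products $(R_--I)_1M_2$ and $(R_--I)_2M_1$ genuinely have only a simple pole. With that added remark the argument is complete, and the $\theta$-dependent prefactors $(\pi h_{V_\theta}(\sqrt2))^{-k/3}$ and the curvature terms involving $h_{V_\theta}'(\sqrt2)/h_{V_\theta}(\sqrt2)+1/(4\sqrt2)$ enter exactly as you say, through $\zeta_1^b,\zeta_2^b,\varkappa$ and the $D_k$ of Lemma \ref{ugly}.
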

\begin{prop} As $n\rightarrow\infty$, with $\sigma=\chi^{\sigma_3}\e^{\im\frac{\pi}{4}\sigma_3}$ and $\varkappa=-\sqrt{2}-s_{n\theta}$,
\begin{align}
	\frac{1}{2\pi\im}&\oint_{\Sigma_a}\big(G_R(w)-I\big)\frac{\d w}{w-z}=\sigma\textcolor{red}{\Bigg\{}\frac{5}{96\sqrt{2}}\frac{1}{\pi h_{V_{\theta}}(-\sqrt{2})}\sqrt{1-\frac{\varkappa}{\sqrt{2}}}\begin{bmatrix}-1&-1\\ 1&1\end{bmatrix}\frac{1}{z-\varkappa}+\frac{5}{96\sqrt{2}}\sqrt{1-\frac{\varkappa}{\sqrt{2}}}\nonumber\\
	&\times\frac{1}{\pi h_{V_{\theta}}(-\sqrt{2})}\textcolor{orange}{\bigg\{}\frac{1}{2(\varkappa-\sqrt{2})}\begin{bmatrix}-(2\alpha^2+1)&-2\alpha^2-4\alpha-1\\ 2\alpha^2-4\alpha+1 & 2\alpha^2+1\end{bmatrix}-\frac{3}{5}\bigg\{\frac{h_{V_{\theta}}'(-\sqrt{2})}{h_{V_{\theta}}(-\sqrt{2})}-\frac{1}{4\sqrt{2}}\bigg\}\begin{bmatrix}-1&-1\\ 1&1\end{bmatrix}\textcolor{orange}{\bigg\}}\nonumber\\
	&+\frac{7}{192}\bigg(1-\frac{\varkappa}{\sqrt{2}}\bigg)^{-\frac{1}{2}}\frac{1}{\pi h_{V_{\theta}}(-\sqrt{2})}\begin{bmatrix}\alpha^2-1&(\alpha+1)^2\\ -(\alpha-1)^2&1-\alpha^2\end{bmatrix}\textcolor{red}{\Bigg\}}\sigma^{-1}\frac{n^{-1}}{z-\varkappa}+\mathcal{O}\big(n^{-\frac{5}{3}}\big)\label{h:8}
\end{align}
uniformly in $z\in\gamma_{\pm}$ and in $(s,\alpha,\beta)\in\mathbb{R}\times(-1,\infty)\times(\mathbb{C}\setminus(-\infty,0))$ on compact sets. Expansion \eqref{h:8} is differentiable with respect to $z\in\gamma_{\pm}$.
\end{prop}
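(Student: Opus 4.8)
The plan is to reduce \eqref{h:8} to a residue computation that mirrors, with the full $\theta$-dependence restored, the $\Sigma_a$-part of the proof of \eqref{g:5}. On the clockwise circle $\Sigma_a=\partial\mathbb{D}_{\epsilon}(\varkappa)$, $\varkappa=-\sqrt{2}-s_{n\theta}$, the jump matrix of RHP \ref{ratioArno} is $G_R(z)=N(z)P(z)^{-1}$, so the matching estimate \eqref{a:31} of RHP \ref{locala} gives $G_R(z)-I=N_1(z)n^{-1}+\mathcal{O}(n^{-5/3})$ on $\Sigma_a$, uniformly in $\theta\in[0,1]$ and in $(s,\alpha,\beta)$ on compact sets, with $N_1(z)$ as in condition $(4)$ of RHP \ref{locala}. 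First I would record the local Laurent expansion $N_1(z)=\sum_{k\geq-2}N_k^1(z-\varkappa)^k$, valid for $0<|z-\varkappa|<\epsilon$: since $\zeta^a$ has a simple zero at $z=\varkappa$ by Proposition \ref{conf2} while $z\mapsto E^a(z)$ is analytic and non-vanishing there by \eqref{a:30}, the summand of $N_1$ carrying $(\zeta^a(z))^{-2}$ produces the double pole $N_{-2}^1(z-\varkappa)^{-2}$, and the summand carrying $(\zeta^a(z))^{-1}$ contributes, together with the $(z-\varkappa)$-term in the prefactor of the $(\zeta^a)^{-2}$ summand, the simple pole $N_{-1}^1(z-\varkappa)^{-1}$.

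Next, since every $z\in\gamma_{\pm}$ lies in the unbounded component of $\mathbb{C}\setminus\Sigma_R$, the map $w\mapsto(w-z)^{-1}$ is analytic inside $\mathbb{D}_{\epsilon}(\varkappa)$, so the residue theorem — with the clockwise orientation of $\Sigma_a$ accounting for an overall sign — yields
\[
	\frac{1}{2\pi\im}\oint_{\Sigma_a}\big(G_R(w)-I\big)\frac{\d w}{w-z}=\bigg[\frac{N_{-2}^1}{(z-\varkappa)^2}+\frac{N_{-1}^1}{z-\varkappa}\bigg]\frac{1}{n}+\mathcal{O}\big(n^{-\frac{5}{3}}\big),
\]
uniformly in $z\in\gamma_{\pm}$ and in $(s,\alpha,\beta)$ on compact sets; the $\mathcal{O}(n^{-5/3})$ controllability is immediate because $\Sigma_a$ has fixed length and $z$ stays at fixed distance from $\Sigma_a$. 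This is exactly the shape of \eqref{h:8} once $n^{-1}/(z-\varkappa)$ is factored out. It then remains to substitute the explicit coefficients $N_{-2}^1,N_{-1}^1$. I would obtain these from the Puiseux expansion of $P(z)(\sqrt{2}-z)^{\frac{\alpha}{2}\sigma_3}$ displayed in the proof of \eqref{a:30}, from the Taylor coefficients $E_0^a,E_1^a$ in \eqref{a:30}, and from the expansion of $\zeta^a$ in Proposition \ref{conf2}: after the $n^{\mp1/6}$ factors in $E_{\ast}^a=E^a n^{-\sigma_3/6}$ cancel against the $n^{2/3}$'s inside $E_0^a$, the conjugations $E_{\ast}^a\bigl[\begin{smallmatrix}0&0\\ 1&0\end{smallmatrix}\bigr](E_{\ast}^a)^{-1}$ and $E_{\ast}^a\bigl[\begin{smallmatrix}0&1\\ 0&0\end{smallmatrix}\bigr](E_{\ast}^a)^{-1}$ collapse to rank-one matrices whose $\theta$-dependence is carried entirely by $h_{V_{\theta}}(-\sqrt{2})$ and $h_{V_{\theta}}'(-\sqrt{2})$. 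This reproduces the coefficients written in \eqref{h:8}; specialized to $\theta=0$, where $h_{V_0}(-\sqrt{2})=1/\pi$, they collapse to the $\Sigma_a$-contribution already recorded in the proof of \eqref{g:5}.

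Finally, for differentiability of \eqref{h:8} in $z\in\gamma_{\pm}$ I would note that the remainder equals $\frac{1}{2\pi\im}\oint_{\Sigma_a}\big(G_R(w)-I-N_1(w)n^{-1}\big)(w-z)^{-1}\d w$, which is analytic in $z$ on a fixed neighbourhood of $\gamma_{\pm}$ disjoint from $\Sigma_a$ and stays $\mathcal{O}(n^{-5/3})$ there; hence by Cauchy's estimate its $z$-derivative is $\mathcal{O}(n^{-5/3})$ as well, and \eqref{h:8} may be differentiated term by term. I expect the main obstacle to be bookkeeping rather than anything conceptual: one must keep the normalized variable $\omega=(\varkappa-z)/(\sqrt{2}-\varkappa)$ used in the Puiseux expansion consistent with the raw variable $z-\varkappa$ in the Laurent series (a factor $\sqrt{2}-\varkappa$ per power), and carry the corrections proportional to $h_{V_{\theta}}'(-\sqrt{2})/h_{V_{\theta}}(-\sqrt{2})-\frac{1}{4\sqrt{2}}$ — coming from $E_1^a$ and from the $\omega^2$-term of $\zeta^a$ in Proposition \ref{conf2} — through the subleading coefficient $N_{-1}^1$ without error.
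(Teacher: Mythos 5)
Your proposal is correct and follows essentially the same route the paper takes (implicitly, since the paper gives no written proof for this Proposition but the argument is the obvious $\theta$-dependent analogue of the proof of \eqref{g:5}): one invokes the matching \eqref{a:31} to reduce $G_R-I$ on $\Sigma_a$ to $N_1\,n^{-1}$ up to $\mathcal{O}(n^{-5/3})$, Laurent-expands $N_1$ at $z=\varkappa$ via Proposition~\ref{conf2} and \eqref{a:30}, and applies the residue theorem on the clockwise circle with $z$ outside, obtaining $\bigl[N_{-2}^1(z-\varkappa)^{-2}+N_{-1}^1(z-\varkappa)^{-1}\bigr]n^{-1}+\mathcal{O}(n^{-5/3})$ before substituting the explicit coefficients. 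Your sign bookkeeping for the clockwise contour, your account of where the $h_{V_\theta}'(-\sqrt{2})/h_{V_\theta}(-\sqrt{2})-\tfrac{1}{4\sqrt{2}}$ corrections originate, and your Cauchy-estimate argument for differentiability in $z$ are all in order.
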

We now proceed with the evaluation of the third integral in \eqref{h:1}, using that by Neumann series and the structure displayed in \eqref{h:6}, as $n\rightarrow\infty$, $R(z)^{-1} R'(z)=\sum_{j=1}^3R_j'(z)/\sqrt[3]{n^j}+\mathcal{O}(n^{-\frac{4}{3}})$,
uniformly in $z\in\gamma_{\pm}$, where with the shorthand $Q_k^{ij}=Q_k^{ij}(n^{\frac{2}{3}}\zeta^b(\sqrt{2}),\alpha,\beta)$ throughout,
\begin{align*}
	R_1(z):=&\,\sigma\begin{bmatrix}-1&1\\ -1&1\end{bmatrix}\sigma^{-1}\textcolor{orange}{\Bigg\{}\frac{1}{2}\sqrt{\frac{\sqrt{2}-\varkappa}{\zeta_1^b}}\,Q_1^{12}-\frac{s^2}{8}\sqrt{1-\frac{\varkappa}{\sqrt{2}}}\frac{1}{(\pi h_{V_{\theta}}(\sqrt{2}))^{\frac{1}{3}}}\textcolor{orange}{\Bigg\}}\frac{1}{z-\sqrt{2}},\\
	R_2(z):=&\,\sigma\textcolor{blue}{\Bigg\{}\frac{Q_1^{11}}{\zeta_1^b}\begin{bmatrix}-\alpha&1+\alpha\\ 1-\alpha&\alpha\end{bmatrix}-\frac{s^2}{4\sqrt{\zeta_1^b\sqrt{2}}}\frac{Q_1^{12}}{(\pi h_{V_{\theta}}(\sqrt{2}))^{\frac{1}{3}}}\begin{bmatrix}0&1\\ 1&0\end{bmatrix}+\frac{\alpha}{2\zeta_1^b}\big(Q_1^{12}\big)^2\begin{bmatrix}1&-1\\ 1&-1\end{bmatrix}\\
	&\hspace{2cm}+\frac{s^4}{32\sqrt{2}}\frac{1}{(\pi h_{V_{\theta}}(\sqrt{2}))^{\frac{2}{3}}}\begin{bmatrix}0&1\\ 1&0\end{bmatrix}\textcolor{blue}{\Bigg\}}\sigma^{-1}\frac{1}{z-\sqrt{2}}.
\end{align*}
and the lengthy
\begin{align*}
	&R_3(z):=\sigma\textcolor{red}{\Bigg\{}\begin{bmatrix}-1&1\\ -1&1\end{bmatrix}\Bigg(\frac{1}{2\zeta_1^b}\sqrt{\frac{\sqrt{2}-\varkappa}{\zeta_1^b}}Q_2^{12}+\frac{s^3}{48b}\sqrt{1-\frac{\varkappa}{\sqrt{2}}}\bigg(1-\frac{s^3}{16}\bigg)\frac{1}{\pi h_{V_{\theta}}(\sqrt{2})}+\frac{s^4}{64\sqrt{2}}\sqrt{\frac{\sqrt{2}-\varkappa}{\zeta_1^b}}\\
	&\times\frac{Q_1^{12}}{(\pi h_{V_{\theta}}(\sqrt{2}))^{\frac{2}{3}}}-\frac{s^2}{8\zeta_1^b}\sqrt{1-\frac{\varkappa}{\sqrt{2}}}\frac{Q_1^{11}}{(\pi h_{V_{\theta}}(\sqrt{2}))^{\frac{1}{3}}}\Bigg)\frac{1}{z-\sqrt{2}}+\frac{1}{2(\zeta_1^b)^2}\sqrt{\frac{\zeta_1^b}{\sqrt{2}-\varkappa}}Q_1^{21}\begin{bmatrix}\alpha^2-1&-(\alpha+1)^2\\ (\alpha-1)^2&1-\alpha^2\end{bmatrix}\\
	&-\frac{Q_2^{12}}{4\zeta_1^b}\sqrt{\frac{\sqrt{2}-\varkappa}{\zeta_1^b}}\textcolor{olive}{\bigg(}\frac{1}{\sqrt{2}-\varkappa}\begin{bmatrix}2\alpha^2+1 & -2\alpha^2-4\alpha-1\\ 2\alpha^2-4\alpha+1 & -2\alpha^2-1\end{bmatrix}+3\zeta_2^b\begin{bmatrix}-1&1\\ -1&1\end{bmatrix}\textcolor{olive}{\bigg)}-\frac{s^4}{16\sqrt{2}}\sqrt{\frac{\sqrt{2}-\varkappa}{\zeta_1^b}}\\
	&\times\frac{Q_1^{12}}{(\pi h_{V_{\theta}}(\sqrt{2}))^{\frac{2}{3}}}\frac{1}{\sqrt{2}-\varkappa}\begin{bmatrix}0&-1\\ 1&0\end{bmatrix}+\frac{s^2}{8\zeta_1^b}\sqrt{1-\frac{\varkappa}{\sqrt{2}}}\frac{Q_1^{11}}{(\pi h_{V_{\theta}}(\sqrt{2}))^{\frac{1}{3}}}\textcolor{magenta}{\bigg(}\frac{1}{\sqrt{2}-\varkappa}\begin{bmatrix}-2&-4\alpha-2\\ -4\alpha+2&2\end{bmatrix}\\
	&+\bigg(\frac{1}{\sqrt{2}-\varkappa}-\frac{2}{5}\bigg\{\frac{h_{V_{\theta}}'(\sqrt{2})}{h_{V_{\theta}}(\sqrt{2})}+\frac{1}{4\sqrt{2}}\bigg\}\bigg)\begin{bmatrix}1&-1\\ 1&-1\end{bmatrix}\textcolor{magenta}{\bigg)}-\frac{\alpha^2}{2\zeta_1^b}\sqrt{\frac{\sqrt{2}-\varkappa}{\zeta_1^b}}\frac{(Q_1^{12})^3}{\sqrt{2}-\varkappa}\begin{bmatrix}1&-1\\ 1&-1\end{bmatrix}+\frac{s^2}{16\sqrt{\zeta_1^b\sqrt{2}}}
\end{align*}
\begin{align*}
	&\times\sqrt{\frac{\sqrt{2}-\varkappa}{\zeta_1^b}}\frac{(Q_1^{12})^2}{(\pi h_{V_{\theta}}(\sqrt{2}))^{\frac{1}{3}}}\textcolor{brown}{\bigg(}\frac{4\alpha}{\sqrt{2}-\varkappa}\begin{bmatrix}0&1\\ 1&0\end{bmatrix}+\bigg(\frac{1}{\sqrt{2}-\varkappa}+\frac{2}{5}\bigg\{\frac{h_{V_{\theta}}'(\sqrt{2})}{h_{V_{\theta}}(\sqrt{2})}+\frac{1}{4\sqrt{2}}\bigg\}\bigg)\begin{bmatrix}1&-1\\ 1&-1\end{bmatrix}\textcolor{brown}{\bigg)}\\
	&-\frac{s^3}{96\sqrt{2}}\sqrt{1-\frac{\varkappa}{\sqrt{2}}}\frac{1}{\pi h_{V_{\theta}}(\sqrt{2})}\textcolor{pink}{\bigg(}\bigg(1-\frac{s^3}{16}\bigg)\frac{1}{\sqrt{2}-\varkappa}\begin{bmatrix}3&1\\ -1&-3\end{bmatrix}-6\bigg\{\frac{h_{V_{\theta}}'(\sqrt{2})}{h_{V_{\theta}}(\sqrt{2})}+\frac{1}{4\sqrt{2}}\bigg\}\begin{bmatrix}1&-1\\ 1&-1\end{bmatrix}\textcolor{pink}{\bigg)}\\
	&+\frac{s^6}{512\sqrt{2}}\sqrt{1-\frac{\varkappa}{\sqrt{2}}}\frac{1}{\pi h_{V_{\theta}}(\sqrt{2})}\frac{1}{\sqrt{2}-\varkappa}\begin{bmatrix}-1&-3\\ 3&1\end{bmatrix}-\frac{1}{4\zeta_1^b}\sqrt{\frac{\sqrt{2}-\varkappa}{\zeta_1^b}}\\
	&\times Q_1^{11}Q_1^{12}\textcolor{brown}{\bigg(}\frac{1}{\sqrt{2}-\varkappa}\begin{bmatrix}-6\alpha^2+1&6\alpha^2+4\alpha-1\\ -6\alpha^2+4\alpha+1 & 6\alpha^2-1\end{bmatrix}+\zeta_2^b\begin{bmatrix}1&-1\\ 1&-1\end{bmatrix}\textcolor{brown}{\bigg)}
	\textcolor{red}{\Bigg\}}\sigma^{-1}\frac{1}{z-\sqrt{2}}\\
	&\hspace{1cm}+\sigma\textcolor{red}{\Bigg\{}\frac{5}{96\sqrt{2}}\frac{1}{\pi h_{V_{\theta}}(-\sqrt{2})}\sqrt{1-\frac{\varkappa}{\sqrt{2}}}\begin{bmatrix}-1&-1\\ 1&1\end{bmatrix}\frac{1}{z-\varkappa}+\frac{5}{96\sqrt{2}}\sqrt{1-\frac{\varkappa}{\sqrt{2}}}\frac{1}{\pi h_{V_{\theta}}(-\sqrt{2})}\textcolor{orange}{\bigg(}\frac{1}{2(\varkappa-\sqrt{2})}\\
	&\times\begin{bmatrix}-(2\alpha^2+1)&-2\alpha^2-4\alpha-1\\ 2\alpha^2-4\alpha+1 & 2\alpha^2+1\end{bmatrix}-\frac{3}{5}\bigg\{\frac{h_{V_{\theta}}'(-\sqrt{2})}{h_{V_{\theta}}(-\sqrt{2})}-\frac{1}{4\sqrt{2}}\bigg\}\begin{bmatrix}-1&-1\\ 1&1\end{bmatrix}\textcolor{orange}{\bigg)}+\frac{7}{192}\bigg(1-\frac{\varkappa}{\sqrt{2}}\bigg)^{-\frac{1}{2}}\\
	&\times\frac{1}{\pi h_{V_{\theta}}(-\sqrt{2})}\begin{bmatrix}\alpha^2-1&(\alpha+1)^2\\ -(\alpha-1)^2&1-\alpha^2\end{bmatrix}\textcolor{red}{\Bigg\}}\sigma^{-1}\frac{1}{z-\varkappa}.
\end{align*}
The aforementioned structure yields $0=R_1(z)R_1(z)=R_1(z)\frac{\d R_1}{\d z}(z)=R_1(z)\frac{\d R_2}{\d z}(z)+R_2(z)\frac{\d R_1}{\d z}(z)$ and thus the compact identity for $R(z)^{-1}\frac{\d R}{\d z}(z)$. Consequently, by \eqref{a:19}, uniformly in $z\in\gamma_{\pm}$ as $n\rightarrow\infty$,
\begin{align}
	\bigg\{&\big(R(z)P(z)\big)^{-1}\frac{\d R}{\d z}(z)P(z)\bigg\}^{11}=\frac{\nu^2(z)}{(z-\sqrt{2})^2}\textcolor{orange}{\Bigg\{}\frac{1}{2}\sqrt{\frac{\sqrt{2}-\varkappa}{\zeta_1^b}}\,Q_1^{12}-\frac{s^2}{8}\sqrt{1-\frac{\varkappa}{\sqrt{2}}}\frac{1}{(\pi h_{V_{\theta}}(\sqrt{2}))^{\frac{1}{3}}}\textcolor{orange}{\Bigg\}}\frac{1}{\sqrt[3]{n}}\nonumber\\
	&+\frac{\alpha\nu^2(z)}{\zeta_1^b(z-\sqrt{2})^2}\textcolor{blue}{\bigg\{}Q_1^{11}-\frac{1}{2}\big(Q_1^{12}\big)^2\textcolor{blue}{\bigg\}}\frac{1}{\sqrt[3]{n^2}}+\textcolor{red}{\Bigg\{}\Bigg(\frac{1}{2\zeta_1^b}\sqrt{\frac{\sqrt{2}-\varkappa}{\zeta_1^b}}Q_2^{12}+\frac{s^3}{48\sqrt{2}}\sqrt{1-\frac{\varkappa}{\sqrt{2}}}\bigg(1-\frac{s^3}{16}\bigg)\frac{1}{\pi h_{V_{\theta}}(\sqrt{2})}\nonumber\\
	&+\frac{s^4}{64\sqrt{2}}\sqrt{\frac{\sqrt{2}-\varkappa}{\zeta_1^b}}\frac{Q_1^{12}}{(\pi h_{V_{\theta}}(\sqrt{2}))^{\frac{2}{3}}}-\frac{s^2}{8\zeta_1^b}\sqrt{1-\frac{\varkappa}{\sqrt{2}}}\frac{Q_1^{11}}{(\pi h_{V_{\theta}}(\sqrt{2}))^{\frac{1}{3}}}\Bigg)\frac{2\nu^2(z)}{(z-\sqrt{2})^3}+\frac{1}{2(\zeta_1^b)^2}\sqrt{\frac{\zeta_1^b}{\sqrt{2}-\varkappa}}Q_1^{21}\nonumber\\
	&\times\frac{\nu^{-2}(z)-\alpha^2\nu^2(z)}{(z-\sqrt{2})^2}-\frac{Q_2^{12}}{4\zeta_1^b}\sqrt{\frac{\sqrt{2}-\varkappa}{\zeta_1^b}}\bigg(3\zeta_2^b-\frac{2\alpha^2+1}{\sqrt{2}-\varkappa}\bigg)\frac{\nu^2(z)}{(z-\sqrt{2})^2}+\frac{s^4}{32\sqrt{2}}\sqrt{\frac{\sqrt{2}-\varkappa}{\zeta_1^b}}\frac{Q_1^{12}}{(\pi h_{V_{\theta}}(\sqrt{2}))^{\frac{2}{3}}}\nonumber\\
	&\times\frac{\nu^2(z)-\nu^{-2}(z)}{(\sqrt{2}-\varkappa)(z-\sqrt{2})^2}+\frac{s^2}{8\zeta_1^b}\sqrt{1-\frac{\varkappa}{\sqrt{2}}}\frac{Q_1^{11}}{(\pi h_{V_{\theta}}(\sqrt{2}))^{\frac{1}{3}}}\bigg(\frac{2\nu^{-2}(z)}{(\sqrt{2}-\varkappa)(z-\sqrt{2})^2}-\bigg(\frac{1}{\sqrt{2}-\varkappa}-\frac{2}{5}\bigg\{\frac{h_{V_{\theta}}'(\sqrt{2})}{h_{V_{\theta}}(\sqrt{2})}\nonumber\\
	&+\frac{1}{4\sqrt{2}}\bigg\}\bigg)\frac{\nu^2(z)}{(z-\sqrt{2})^2}\bigg)+\frac{\alpha^2}{2\zeta_1^b}\sqrt{\frac{b-a}{\zeta_1^b}}\frac{(Q_1^{12})^3}{\sqrt{2}-\varkappa}\frac{\nu^2(z)}{(z-\sqrt{2})^2}-\frac{s^2}{16\sqrt{\zeta_1^b\sqrt{2}}}\sqrt{\frac{\sqrt{2}-\varkappa}{\zeta_1^b}}\frac{(Q_1^{12})^2}{(\pi h_{V_{\theta}}(\sqrt{2}))^{\frac{1}{3}}}\nonumber\\
	&\times\bigg(\frac{1}{\sqrt{2}-\varkappa}+\frac{2}{5}\bigg\{\frac{h_{V_{\theta}}'(\sqrt{2})}{h_{V_{\theta}}(\sqrt{2})}+\frac{1}{4\sqrt{2}}\bigg\}\bigg)\frac{\nu^2(z)}{(z-\sqrt{2})^2}+\frac{s^3}{96\sqrt{2}}\sqrt{1-\frac{\varkappa}{\sqrt{2}}}\frac{1}{\pi h_{V_{\theta}}(\sqrt{2})}\Bigg(\bigg(1-\frac{s^3}{16}\bigg)\nonumber\\
	&\times\frac{2\nu^{-2}(z)+\nu^2(z)}{(\sqrt{2}-\varkappa)(z-\sqrt{2})^2}-6\bigg\{\frac{h_{V_{\theta}}'(\sqrt{2})}{h_{V_{\theta}}(\sqrt{2})}+\frac{1}{4\sqrt{2}}\bigg\}\frac{\nu^2(z)}{(z-\sqrt{2})^2}\Bigg)+\frac{s^6}{512\sqrt{2}}\sqrt{1-\frac{\varkappa}{\sqrt{2}}}\frac{1}{\pi h_{V_{\theta}}(\sqrt{2})}\nonumber\\
	&\times\frac{2\nu^{-2}(z)-\nu^2(z)}{(\sqrt{2}-\varkappa)(z-\sqrt{2})^2}+\frac{5}{48\sqrt{2}}\frac{1}{\pi h_{V_{\theta}}(-\sqrt{2})}\sqrt{1-\frac{\varkappa}{\sqrt{2}}}\frac{\nu^{-2}(z)}{(z-\varkappa)^3}+\frac{5}{96\sqrt{2}}\sqrt{1-\frac{\varkappa}{\sqrt{2}}}\frac{1}{\pi h_{V_{\theta}}(-\sqrt{2})}\bigg(\frac{2\alpha^2+1}{2(\varkappa-\sqrt{2})}\nonumber\\
	&-\frac{3}{5}\bigg\{\frac{h_{V_{\theta}}'(-\sqrt{2})}{h_{V_{\theta}}(-\sqrt{2})}-\frac{1}{4\sqrt{2}}\bigg\}\bigg)\frac{\nu^{-2}(z)}{(z-\varkappa)^2}+\frac{7}{192}\bigg(1-\frac{\varkappa}{\sqrt{2}}\bigg)^{-\frac{1}{2}}\frac{1}{\pi h_{V_{\theta}}(-\sqrt{2})}\frac{\nu^2(z)-\alpha^2\nu^{-2}(z)}{(z-\varkappa)^2}\nonumber
\end{align}
\begin{align}
	&+\frac{1}{4\zeta_1^b}\sqrt{\frac{\sqrt{2}-\varkappa}{\zeta_1^b}}Q_1^{11}Q_1^{12}\bigg(\frac{1-6\alpha^2}{\sqrt{2}-\varkappa}+\zeta_2^b\bigg)\frac{\nu^2(z)}{(z-\sqrt{2})^2}\textcolor{red}{\Bigg\}}\frac{1}{n}+\mathcal{O}\big(n^{-\frac{4}{3}}\big),\label{h:9}
\end{align}
and so we are naturally led to the evaluation of the model integrals ($\gamma=\gamma_+-\gamma_-$)
\begin{equation}\label{h:10}
	\int_{\gamma}f(z)V(z+s_{n\theta})\d z,\ \ \ f\in\bigg\{\frac{\nu^2(z)}{(z-\sqrt{2})^3},\frac{\nu^2(z)}{(z-\sqrt{2})^2},\frac{\nu^{-2}(z)}{(z-\sqrt{2})^2},\frac{\nu^{-2}(z)}{(z-\varkappa)^2},\frac{\nu^{-2}(z)}{(z-\varkappa)^3},\frac{\nu^2(z)}{(z-\varkappa)^2}\bigg\},
\end{equation}
with one-cut regular $V$ and $\nu(z)=((z-\sqrt{2})/(z-\varkappa))^{1/4}$. We draw inspiration from \cite[page $150-156$]{BWW}:
\begin{lem} Suppose $V$ is as in Assumption \ref{1-cut}, with $h_V$ as in \eqref{e:11}, then for any $x\in(-\sqrt{2},\sqrt{2})$,
\begin{equation}\label{h:11}
	\textnormal{pv}\int_E\frac{\sqrt{2-\lambda^2}}{\lambda-x}V'(\lambda)\,\d\lambda=-2\pi+2\pi^2h_V(x)(2-x^2).
\end{equation}
\end{lem}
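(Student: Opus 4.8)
The plan is to recognise the left-hand side of \eqref{h:11} as the symmetric principal-value boundary value of a Cauchy transform and then to evaluate that Cauchy transform in closed form by inserting the one-cut structure of the equilibrium measure. Write $F(z):=(z^2-2)^{1/2}$ for the branch analytic on $\mathbb C\setminus E$ with $F(z)\sim z$ as $z\to\infty$, so that $F_\pm(x)=\pm\im\sqrt{2-x^2}$ on $(-\sqrt2,\sqrt2)$, and $G(z):=\int_E\rho_V(y)(z-y)^{-1}\,\d y$ for the resolvent of $\mu_V$, which is analytic on $\mathbb C\setminus E$ with $G(z)=z^{-1}+\mathcal O(z^{-2})$ because $\mu_V$ is a probability measure. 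Put $W(z):=\int_E\sqrt{2-\lambda^2}\,V'(\lambda)(\lambda-z)^{-1}\,\d\lambda$; by Sokhotski--Plemelj the left-hand side of \eqref{h:11} equals $\tfrac12\big(W_+(x)+W_-(x)\big)$, so it suffices to find $W$ explicitly.

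\textbf{Step 1: the one-cut formula for $V'$.} Differentiating the Euler--Lagrange equality on $E$ (cf.\ \eqref{e:10}) and using Sokhotski--Plemelj gives $G_+(x)+G_-(x)=V'(x)$ and $G_+(x)-G_-(x)=-2\pi\im\,\rho_V(x)$ on $(-\sqrt2,\sqrt2)$. Consequently $q(z):=\big(G(z)-\tfrac12V'(z)\big)/F(z)$ has no jump across $\mathrm{int}(E)$, and since $\rho_V$ vanishes like a square root at $\pm\sqrt2$ by Assumption \ref{1-cut}(ii), the numerator vanishes there of the same order as $F$, so $q$ stays bounded at $\pm\sqrt2$ and hence extends analytically to a complex neighbourhood $U$ of $E$. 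Matching the jump of $G$ with $\rho_V=h_V\sqrt{2-x^2}$ identifies $q\equiv-\pi\tilde h$ on $U$, $\tilde h$ being the analytic continuation of the real-analytic function $h_V$. This yields the standard one-cut identity $V'(z)=2G(z)+2\pi\tilde h(z)F(z)$ for $z\in U$.

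\textbf{Step 2: evaluating $W$ and taking boundary values.} Using $\sqrt{2-\lambda^2}=-\tfrac{\im}{2}\big(F_+-F_-\big)(\lambda)$ and contracting the $E$-integral onto a small positively oriented loop $\Gamma\subset U$ around $E$ (allowed because $V'$ is analytic on $U$), one has $W(z)=\tfrac{\im}{2}\oint_\Gamma F(s)V'(s)(s-z)^{-1}\,\d s$ for $z$ exterior to $\Gamma$. Substituting the identity from Step 1, the summand $2\pi\tilde h(s)F(s)^2=2\pi\tilde h(s)(s^2-2)$ is analytic inside $\Gamma$ and contributes nothing, leaving $W(z)=\im\oint_\Gamma F(s)G(s)(s-z)^{-1}\,\d s$; inflating $\Gamma$ to a large circle, picking up the residue at $s=z$, and using $F(s)G(s)=1+\mathcal O(s^{-1})$ at infinity, gives $W(z)=-2\pi+2\pi F(z)G(z)$ on $\mathbb C\setminus E$. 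Taking boundary values on $(-\sqrt2,\sqrt2)$ with $F_-=-F_+$, $F_+(x)=\im\sqrt{2-x^2}$ and $G_+(x)-G_-(x)=-2\pi\im\,h_V(x)\sqrt{2-x^2}$ produces $W_+(x)+W_-(x)=-4\pi+2\pi F_+(x)\big(G_+(x)-G_-(x)\big)=-4\pi+4\pi^2h_V(x)(2-x^2)$, and halving gives \eqref{h:11}. The $\theta$-version quoted in the proof of \eqref{h:5} follows verbatim, as $V_\theta$ is one-cut regular with density $h_{V_\theta}\sqrt{2-x^2}$ by Lemma \ref{convex}; letting $x\to\sqrt2^-$ in \eqref{h:11} and using $\sqrt{2-\lambda^2}/(\lambda-\sqrt2)=-\sqrt{(\sqrt2+\lambda)/(\sqrt2-\lambda)}$ then yields the moment identity $\int_E\sqrt{(\sqrt2+x)/(\sqrt2-x)}\,V'(x)\,\d x=2\pi$ invoked there.

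\textbf{Main obstacle.} The computation is short, and the only genuinely delicate point is the analytic continuation of $q$ across the endpoints $\pm\sqrt2$: this rests on the square-root vanishing of $\rho_V$ from Assumption \ref{1-cut}(ii) together with continuity of the Euler--Lagrange equality up to the endpoints (which forces $G(\pm\sqrt2)=\tfrac12V'(\pm\sqrt2)$, so that $G-\tfrac12V'$ and $F$ vanish there to matching order). Beyond that one must only keep the Sokhotski--Plemelj sign conventions and the orientation of $\Gamma$ consistent; the contour-inflation step is routine once $F(z)G(z)=z\cdot z^{-1}+\mathcal O(z^{-1})=1+\mathcal O(z^{-1})$ is noted, which is precisely where the normalisation $\int_E\rho_V=1$ enters.
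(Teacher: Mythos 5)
Your proof is correct and is essentially the same Plemelj--Sokhotski plus Liouville argument that the paper uses (following \cite[Lemma 5.8]{BWW}). The paper directly forms the combination $W(z)-2\pi F(z)G(z)$, shows it is entire using the jump relations and the differentiated Euler--Lagrange equality, identifies it with the constant $-2\pi$ by its value at infinity, and reads off \eqref{h:11} as the sum of limiting values; you instead first package the same information into the one-cut identity $V'=2G+2\pi\tilde h F$ and then recover $W(z)=-2\pi+2\pi F(z)G(z)$ by contour deformation, which is an equivalent presentation of the same computation, just slightly more explicit about where each ingredient (square-root vanishing at the endpoints, normalisation of $\mu_V$, Plemelj jump) enters.
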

\begin{proof} Exactly as in \cite[Lemma $5.8$]{BWW}: one considers the function
\begin{equation*}
	\mathbb{C}\setminus E\ni z\mapsto 2\pi(z^2-2)^{\frac{1}{2}}\int_E\frac{\rho_V(x)}{x-z}\d x+\int_E\frac{\sqrt{2-x^2}}{x-z}V'(x)\,\d x
\end{equation*}
that is defined with principal branches. What results from the Plemelj-Sokhostki identity and equality above \eqref{e:10} is analyticity everywhere of the same and thus it can be computed by Liouville's theorem. Identity \eqref{h:11} is simply the sum of the limiting values of the same function in the interior of $E$.
\end{proof}
\begin{prop} Suppose $V$ is as in Assumption \ref{1-cut}. Then as $n\rightarrow\infty$,
\begin{equation}\label{h:12}
	\bigg[\int_{\gamma_+}-\int_{\gamma_-}\bigg]\frac{\nu^2(z)}{(z-\sqrt{2})^3}V(z+s_{n\theta})\,\d z=\frac{2\pi\im}{3}\Big(1-4\pi h_V(\sqrt{2})+\mathcal{O}\big(n^{-\frac{2}{3}}\big)\Big)
\end{equation}
with $s_{n\theta}=s/(\pi h_{V_{\theta}}(\sqrt{2})2^{\frac{3}{4}}n)^{\frac{2}{3}}$, see \eqref{a:3},\eqref{a:12}.
\end{prop}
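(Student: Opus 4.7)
The plan is to combine the two line integrals into a single closed clockwise contour integral $\oint_\gamma$ around the region enclosed by $\gamma_+ \cup (-\gamma_-)$ (which contains the branch cut $[\varkappa,\sqrt{2}]$ of $\nu^2$), then reduce the non-integrable $(z-\sqrt{2})^{-3}$ singularity to integrable kernels by two successive integrations by parts, and finally evaluate the leading-order asymptotic via residue calculus, using the resolvent identity for $V'$ to isolate the contribution proportional to $h_V(\sqrt{2})$.

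The double IBP rests on the elementary differentiation identities (verifiable off the cut)
\begin{align*}
\frac{\nu^2(z)}{(z-\sqrt{2})^3} &= -\frac{2}{3}\frac{\d}{\d z}\!\left[\frac{\nu^2(z)}{(z-\sqrt{2})^2}\right] - \frac{1}{3(z-\sqrt{2})^{3/2}(z-\varkappa)^{3/2}},\\
\frac{\nu^2(z)}{(z-\sqrt{2})^2} &= -\frac{2}{\sqrt{2}-\varkappa}\frac{\d}{\d z}\nu^{-2}(z),\\
\frac{1}{(z-\sqrt{2})^{3/2}(z-\varkappa)^{3/2}} &= -\frac{4}{(\sqrt{2}-\varkappa)^2}\frac{\d}{\d z}\!\left[\frac{z-(\sqrt{2}+\varkappa)/2}{\sqrt{(z-\sqrt{2})(z-\varkappa)}}\right].
\end{align*}
Since $\gamma$ is closed and the integrands are analytic on it, IBP produces no boundary terms; two applications give
\begin{equation*}
I = \frac{4}{3(\sqrt{2}-\varkappa)}\oint_\gamma V''(z+s_{n\theta})\,\nu^{-2}(z)\,\d z - \frac{4}{3(\sqrt{2}-\varkappa)^2}\oint_\gamma V'(z+s_{n\theta})\frac{z-(\sqrt{2}+\varkappa)/2}{\sqrt{(z-\sqrt{2})(z-\varkappa)}}\,\d z.
\end{equation*}
Shifting $w = z + s_{n\theta}$ turns $\gamma$ into a closed clockwise loop around $[-\sqrt{2},\sqrt{2}+s_{n\theta}]$ which, since $V$ is analytic in a neighbourhood of $E$, can be deformed to a fixed contour $\tilde\gamma_0$ around $E = [-\sqrt{2},\sqrt{2}]$. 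Taylor-expanding the integrands in $s_{n\theta} = \mathcal{O}(n^{-2/3})$ yields
\begin{equation*}
I = \frac{\sqrt{2}}{3}\oint_{\tilde\gamma_0}\frac{(w+\sqrt{2})V''(w)}{R(w)}\,\d w - \frac{1}{6}\oint_{\tilde\gamma_0}\frac{wV'(w)}{R(w)}\,\d w + \mathcal{O}\big(n^{-2/3}\big),
\end{equation*}
where $R(w) = \sqrt{(w-\sqrt{2})(w+\sqrt{2})}$ has cut on $E$ and $R(w)\sim w$ as $w\to\infty$.

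The remaining integrals are computed using the resolvent identity $V'(w) = 2G(w) + 2\pi h_V(w)R(w)$, where $G(z):=\int_E \rho_V(t)/(z-t)\,\d t$; this is the analytic continuation of $G_{\pm}(x) = V'(x)/2 \mp \pi\im\,\rho_V(x)$ off $E$, a direct consequence of the Euler--Lagrange equation \eqref{e:10}. Differentiation gives $V''(w)/R(w) = 2G'(w)/R(w) + 2\pi h_V'(w) + 2\pi h_V(w)w/(w^2-2)$, and the last summand combined with the prefactor $(w+\sqrt{2})$ collapses to $2\pi w\,h_V(w)/(w-\sqrt{2})$, contributing a simple pole at $w=\sqrt{2}$ with residue $2\sqrt{2}\pi h_V(\sqrt{2})$; the $G'$- and $h_V'$-contributions have no residue at infinity (check via the Laurent expansions $G(w) = w^{-1}+\mathcal{O}(w^{-2})$, $1/R(w) = w^{-1}+\mathcal{O}(w^{-3})$). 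Hence the clockwise contour integral equals $-4\sqrt{2}\pi^2\im\,h_V(\sqrt{2})$. Similarly $V'/R = 2G/R + 2\pi h_V$, and the same Laurent data yield residue $-1$ at infinity for $wG(w)/R(w)$, so $\oint_{\tilde\gamma_0} wV'(w)/R(w)\,\d w = -4\pi\im$. Assembling, $I = -\tfrac{8\pi^2\im h_V(\sqrt{2})}{3} + \tfrac{2\pi\im}{3} + \mathcal{O}(n^{-2/3}) = \tfrac{2\pi\im}{3}\bigl(1 - 4\pi h_V(\sqrt{2})\bigr) + \mathcal{O}(n^{-2/3})$.

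The main technical points are the careful bookkeeping of the contour orientation (the loop $\gamma_+ - \gamma_-$ is clockwise around the cut, which explains the signs in the residue evaluations) and the verification that the $\mathcal{O}(n^{-2/3})$ error is uniform in $\theta\in[0,1]$; the latter reduces to smoothness of the integrands in $s_{n\theta}$ on the fixed contour $\tilde\gamma_0$, which by Lemma \ref{convex} lies in a common domain of analyticity for all $V_\theta$.
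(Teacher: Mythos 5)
Your proposal is correct and arrives at the asymptotic \eqref{h:12} by a valid route, but it differs structurally from the paper's. The paper exploits the single closed-form antiderivative
\begin{equation*}
	\frac{\nu^2(z)}{(z-\sqrt{2})^3}=\frac{4}{3(\sqrt{2}-\varkappa)^2}\frac{\d}{\d z}\bigg[\frac{(z-\tfrac{3\sqrt{2}-\varkappa}{2})((z-\varkappa)(z-\sqrt{2}))^{\frac{1}{2}}}{(z-\sqrt{2})^2}\bigg],
\end{equation*}
integrates by parts once to land on $V'$, collapses the $\gamma$-integral onto the cut, and invokes the real-variable principal-value identity \eqref{h:11} together with $\int_E\sqrt{\tfrac{\sqrt{2}+x}{\sqrt{2}-x}}V'(x)\,\d x=2\pi$. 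You instead split the kernel into two pieces that are each a total derivative of a simpler object, integrate by parts twice so one piece reaches $V''$, and evaluate everything with residues after replacing $V'$ by the off-cut resolvent identity $V'(w)=2G(w)+2\pi h_V(w)R(w)$. The two ingredients are ultimately equivalent --- the paper's Lemma~6.3 (identity \eqref{h:11}) is itself proved by the same analytic-continuation-of-$G$ argument --- so neither route is more robust. What you gain is that the residue calculus bypasses the need to establish the two real-integral identities separately; what the paper gains is that \eqref{h:11} is reused directly for the companion Propositions giving \eqref{h:13}, \eqref{h:17}, \eqref{h:18}--\eqref{h:20}, so its single lemma amortizes well over the section.

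One small imprecision worth flagging. You group the $G'$- and $h_V'$-contributions together under ``no residue at infinity, check via Laurent expansions of $G$ and $1/R$.'' That argument is legitimate only for the $\tfrac{2(w+\sqrt{2})G'(w)}{R(w)}$ piece, which is analytic on all of $\mathbb{C}\setminus E$ and so can be pushed to a large circle. The piece $2\pi(w+\sqrt{2})h_V'(w)$ is analytic merely in a neighbourhood of $E$ (the one-cut assumption gives $h_V$ real-analytic, not entire), so ``residue at $\infty$'' is not even well posed for it; the correct statement is that it is analytic \emph{inside} $\tilde\gamma_0$, whence the closed contour integral vanishes outright. Similarly for $wh_V(w)$ in the second integral. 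You clearly understand the mechanics --- the $h_V(w)w/(w-\sqrt{2})$ residue at $\sqrt{2}$ is handled correctly as an interior pole --- but the stated justification for the $h_V'$ term should be replaced by a local analyticity argument, not a decay-at-$\infty$ one. This does not affect the final constant $\frac{2\pi\im}{3}(1-4\pi h_V(\sqrt{2}))$.
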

\begin{proof} We note that
\begin{equation*}
	\frac{\nu^2(z)}{(z-\sqrt{2})^3}=\frac{4}{3(\sqrt{2}-\varkappa)^2}\frac{\d}{\d z}\bigg[\frac{(z-\frac{3\sqrt{2}-\varkappa}{2})((z-\varkappa)(z-\sqrt{2}))^{\frac{1}{2}}}{(z-\sqrt{2})^2}\bigg],\ \ \ z\in\mathbb{C}\setminus[\varkappa,\sqrt{2}],
\end{equation*}
and so integration by parts and Cauchy's theorem yield, with the shorthand $\gamma:=\gamma_+-\gamma_-$,
\begin{align*}
	&-\frac{3}{4}(\sqrt{2}-\varkappa)^2\,\times\textnormal{LHS in}\ \eqref{h:12}=\int_{\gamma}\frac{(z-\frac{3\sqrt{2}-\varkappa}{2})((z-\varkappa)(z-\sqrt{2}))^{\frac{1}{2}}}{(z-\sqrt{2})^2}V'(z+s_{n\theta})\,\d z\\
	=&\int_{\gamma}\bigg(\frac{z-\varkappa}{z-\sqrt{2}}\bigg)^{\frac{1}{2}}V'(z+s_{n\theta})\,\d z-\frac{\sqrt{2}-\varkappa}{2}\int_{\gamma}\frac{((z-\varkappa)(z-\sqrt{2}))^{\frac{1}{2}}}{(z-\sqrt{2})^2}V'(z+s_{n\theta})\,\d z\\
	=&-2\im\int_{\varkappa}^{\sqrt{2}}\sqrt{\frac{x-\varkappa}{\sqrt{2}-x}}V'(x+s_{n\theta})\,\d x-\frac{1}{2}(\sqrt{2}-\varkappa)\frac{\d}{\d x}\int_{\gamma}\frac{((z-\varkappa)(z-\sqrt{2}))^{\frac{1}{2}}}{z-x}V'(z+s_{n\theta})\,\d z\bigg|_{x=\sqrt{2}}\\
	=&\,-2\im\int_E\sqrt{\frac{x+\sqrt{2}}{\sqrt{2}-x}}V'(x)\,\d x-2\im\sqrt{2}\frac{\d}{\d x}\textnormal{pv}\int_E\frac{\sqrt{2-\lambda^2}}{\lambda-x}V'(\lambda)\d\lambda\,\bigg|_{x=\sqrt{2}}+\mathcal{O}\big(n^{-\frac{2}{3}}\big)\\
	=&\,-4\pi\im\Big(1-4\pi h_V(\sqrt{2})+\mathcal{O}\big(n^{-\frac{2}{3}}\big)\Big),
\end{align*}
where we have used \eqref{h:11} in the last equality. The proof is complete.
\end{proof}
\begin{prop} Suppose $V$ is as in Assumption \ref{1-cut}. Then as $n\rightarrow\infty$,
\begin{equation}\label{h:13}
	\bigg[\int_{\gamma_+}-\int_{\gamma_-}\bigg]\frac{\nu^2(z)}{(z-\sqrt{2})^2}V(z+s_{n\theta})\,\d z=-\frac{4\pi\im}{\sqrt{2}}\Big(1+s_{n\theta}\bigg(\pi\sqrt{2}h_V(\sqrt{2})-\frac{1}{2\sqrt{2}}\bigg)+\mathcal{O}\big(n^{-\frac{4}{3}}\big)\Big),
\end{equation}
with $s_{n\theta}=s/(\pi h_{V_{\theta}}(\sqrt{2})2^{\frac{3}{4}}n)^{\frac{2}{3}}$, see \eqref{a:3},\eqref{a:12}.
\end{prop}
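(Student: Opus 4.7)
The plan is to mirror the proof of \eqref{h:12} just given, but starting from a different integration-by-parts identity adapted to the single power of $(z-\sqrt{2})^{-2}$ rather than $(z-\sqrt{2})^{-3}$. First I would observe the exact identity
\begin{equation*}
\frac{\nu^2(z)}{(z-\sqrt{2})^2}=-\frac{2}{\sqrt{2}-\varkappa}\,\frac{d}{dz}\nu^{-2}(z),
\end{equation*}
verified by direct differentiation of $\nu^{\pm 2}(z)=((z-\sqrt{2})/(z-\varkappa))^{\pm 1/2}$. Integrating by parts along $\gamma=\gamma_+-\gamma_-$, whose boundary contributions at the shared endpoints $\tau_\pm\in\mathbb{R}$ cancel, gives
\begin{equation*}
\textnormal{LHS of \eqref{h:13}}=\frac{2}{\sqrt{2}-\varkappa}\bigg[\int_{\gamma_+}-\int_{\gamma_-}\bigg]\nu^{-2}(z)\,V'(z+s_{n\theta})\,dz.
\end{equation*}

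Next I would collapse the contour onto the cut $[\varkappa,\sqrt{2}]$, exactly as done in the proof of \eqref{h:12}, obtaining $-2\im\int_{\varkappa}^{\sqrt{2}}\sqrt{(x-\varkappa)/(\sqrt{2}-x)}\,V'(x+s_{n\theta})\,dx$. Shifting $y=x+s_{n\theta}$ and then making the substitution $y=\sqrt{2}+s_{n\theta}-(2\sqrt{2}+s_{n\theta})w^2$ with $w\in[0,1]$ is the key trick: the Jacobian contributes a factor $2(2\sqrt{2}+s_{n\theta})w$ that simultaneously cancels the endpoint singularity $\sqrt{1/(\sqrt{2}+s_{n\theta}-y)}$ and the prefactor $\sqrt{2}-\varkappa=2\sqrt{2}+s_{n\theta}$. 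The result is the clean representation
\begin{equation*}
\textnormal{LHS of \eqref{h:13}}=-8\im\int_0^1\sqrt{1-w^2}\,V'\!\left(\sqrt{2}+s_{n\theta}-(2\sqrt{2}+s_{n\theta})w^2\right)dw=:-8\im\,J(s_{n\theta}),
\end{equation*}
where the integrand is now a smooth function of $s_{n\theta}$.

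Then I would Taylor expand $J$ in $s_{n\theta}=\mathcal{O}(n^{-2/3})$. Reverting via $y=\sqrt{2}(1-2w^2)$ at $s_{n\theta}=0$ converts $J(0)$ to $(4\sqrt{2})^{-1}\int_E\sqrt{(\sqrt{2}+y)/(\sqrt{2}-y)}\,V'(y)\,dy=\pi/(2\sqrt{2})$ by the normalization identity used also in \eqref{h:5}, producing the leading term $-4\pi\im/\sqrt{2}$. The same change of variables turns $J'(0)=\int_0^1(1-w^2)^{3/2}V''(\sqrt{2}(1-2w^2))\,dw$ into $\frac{1}{16}\int_E(\sqrt{2}+y)^2(2-y^2)^{-1/2}V''(y)\,dy$, and the remainder $\mathcal{O}(s_{n\theta}^2)=\mathcal{O}(n^{-4/3})$ comes from the Taylor error.

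The main obstacle is evaluating this last moment integral. The plan is to integrate by parts once in $y$, taking $(\sqrt{2}+y)^2/\sqrt{2-y^2}$ as $u$ and $V''(y)\,dy$ as $dv$: the divergence of the boundary term $[uV'(y)]_{\sqrt{2}-\epsilon}$ as $\epsilon\downarrow 0$ cancels exactly against the leading divergence of $\int V'(y)u'(y)\,dy$ at the same endpoint, and one is left with
\begin{equation*}
\int_E\frac{(\sqrt{2}+y)^2}{\sqrt{2-y^2}}V''(y)\,dy=-\textnormal{FP}\!\int_E V'(y)\frac{(2\sqrt{2}-y)(\sqrt{2}+y)^{1/2}}{(\sqrt{2}-y)^{3/2}}\,dy.
\end{equation*}
Splitting $2\sqrt{2}-y=\sqrt{2}+(\sqrt{2}-y)$ separates this finite part into a genuinely singular piece, which coincides with $\sqrt{2}\cdot\textnormal{FP}\!\int_E\sqrt{2-\lambda^2}V'(\lambda)/(\lambda-\sqrt{2})^2\,d\lambda$, plus a manifestly integrable piece equal to $\int_E\sqrt{(\sqrt{2}+y)/(\sqrt{2}-y)}\,V'(y)\,dy=2\pi$. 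Differentiating \eqref{h:11} in $x$ and evaluating at $x=\sqrt{2}$ furnishes the singular piece as $-4\sqrt{2}\pi^2 h_V(\sqrt{2})$, so that the moment integral equals $8\pi^2 h_V(\sqrt{2})-2\pi$. Plugging into $-8\im\,s_{n\theta}J'(0)$ and dividing out the leading prefactor $-4\pi\im/\sqrt{2}$ yields the claimed coefficient $\pi\sqrt{2}\,h_V(\sqrt{2})-1/(2\sqrt{2})$, completing the proof of \eqref{h:13}.
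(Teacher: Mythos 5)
Your proof is correct. The opening integration by parts is identical to the paper's, and both arguments ultimately reach for the derivative of the key identity \eqref{h:11} evaluated at $x=\sqrt 2$, but your route through the middle is genuinely different and arguably slicker. The paper stays with the contour integral over $\gamma$ and Taylor-expands the integrand in $s_{n\theta}$ (separately in the $\varkappa$-dependence of $\nu^{-2}$, the prefactor $(\sqrt2-\varkappa)^{-1}$, and the argument of $V'$), producing three contour integrals at order $s_{n\theta}$ that it evaluates one by one; the $V''$-integral then needs a further chain of integrations by parts and partial fractions to reduce to \eqref{h:11}. You instead collapse onto the cut immediately and use the substitution $y=\sqrt2+s_{n\theta}-(2\sqrt2+s_{n\theta})w^2$, which simultaneously cancels the endpoint singularity and the $(\sqrt2-\varkappa)^{-1}$ prefactor and leaves a single smooth scalar function $J(s_{n\theta})$ to be expanded. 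Your $J'(0)$ moment integral $\frac1{16}\int_E(\sqrt2+y)^2(2-y^2)^{-1/2}V''(y)\,\mathrm{d}y$ is precisely the combination that the paper computes as three separate pieces; I checked $8\pi^2h_V(\sqrt2)-2\pi$ against the paper's values and they agree, giving the claimed coefficient. The one place that deserves a sentence more justification is the interchange of $x\to\sqrt2$ with the Hadamard finite part when you differentiate \eqref{h:11}: this is legitimate here because the contour-integral avatar $\frac{\d}{\d x}\int_\gamma((z-\varkappa)(z-\sqrt2))^{1/2}V'(z)(z-x)^{-1}\,\mathrm{d}z$ is manifestly analytic in $x$ in a neighbourhood of $\sqrt2$ and agrees with your endpoint finite part by deformation, as the paper's own calculation of $D_2(z)$ via the same limit $x\to\sqrt2$ in the proof of \eqref{h:12} implicitly shows.
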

\begin{proof} Use
\begin{equation*}
	\frac{\nu^2(z)}{(z-\sqrt{2})^2}=-\frac{2}{\sqrt{2}-\varkappa}\frac{\d}{\d z}\bigg(\frac{z-\varkappa}{z-\sqrt{2}}\bigg)^{\frac{1}{2}},\ \ \ z\in\mathbb{C}\setminus [\varkappa,\sqrt{2}],
\end{equation*}
and integrate by parts, afterwards expand for large $n$. What results is, with the previous $\gamma=\gamma_+-\gamma_-$,
\begin{align*}
	&\textnormal{LHS in}\ \eqref{h:13}=\frac{2}{\sqrt{2}-\varkappa}\int_{\gamma}\bigg(\frac{z-\varkappa}{z-\sqrt{2}}\bigg)^{\frac{1}{2}}V'(z+s_{n\theta})\,\d z=\frac{1}{\sqrt{2}}\Bigg[\int_{\gamma}\bigg(\frac{z+\sqrt{2}}{z-\sqrt{2}}\bigg)^{\frac{1}{2}}V'(z)\,\d z+\mathcal{O}\big(n^{-\frac{4}{3}}\big)\\
	&+s_{n\theta}\bigg\{\int_{\gamma}\bigg(\frac{z+\sqrt{2}}{z-\sqrt{2}}\bigg)^{\frac{1}{2}}V''(z)\,\d z-\frac{1}{2\sqrt{2}}\int_{\gamma}\bigg(\frac{z+\sqrt{2}}{z-\sqrt{2}}\bigg)^{\frac{1}{2}}V'(z)\,\d z+\frac{1}{2}\int_{\gamma}\frac{V'(z)}{((z-\sqrt{2})(z+\sqrt{2}))^{\frac{1}{2}}}\,\d z\bigg\}\Bigg],
\end{align*}
where by \eqref{h:12} and Cauchy's theorem,
\begin{equation}\label{h:14}
	\int_{\gamma}\bigg(\frac{z+\sqrt{2}}{z-\sqrt{2}}\bigg)^{\frac{1}{2}}V'(z)\,\d z=-2\im\int_E\sqrt{\frac{\sqrt{2}+x}{\sqrt{2}-x}}V'(x)\,\d x=-4\pi\im.
\end{equation}
On the other hand, by Cauchy's theorem and the properties of the $g$-function, compare also \cite[$(6.143)$]{D},
\begin{equation*}
	\int_{\gamma}\frac{V'(z)}{((z-\sqrt{2})(z+\sqrt{2}))^{\frac{1}{2}}}\,\d z=-2\im\int_E\frac{V'(x)}{\sqrt{2-x^2}}\,\d x=0;\hspace{1.5cm}
	\frac{1}{2\pi}\int_E\sqrt{2-x^2}\,V''(x)\,\d x=1.
\end{equation*}
Consequently
\begin{align}
	\int_{\gamma}\bigg(\frac{z+\sqrt{2}}{z-\sqrt{2}}\bigg)^{\frac{1}{2}}V''(z)\,\d z=&\,-2\im\int_E\sqrt{\frac{\sqrt{2}+x}{\sqrt{2}-x}}V''(x)\,\d x=-\im\sqrt{2}\Bigg[2\pi+\int_E\sqrt{\frac{\sqrt{2}+x}{\sqrt{2}-x}}\,xV''(x)\,\d x\Bigg]\nonumber\\
	=&\,-\im\sqrt{2}\Bigg[2\pi+\frac{\im}{2}\int_{\gamma}\bigg(\frac{z+\sqrt{2}}{z-\sqrt{2}}\bigg)^{\frac{1}{2}}\,zV''(z)\,\d z\Bigg].\label{h:15}
\end{align}
At this point we integrate by parts,
\begin{align}\label{h:16}
	\frac{\im}{2}\int_{\gamma}\bigg(\frac{z+\sqrt{2}}{z-\sqrt{2}}\bigg)^{\frac{1}{2}}\,zV''(z)\,\d z=-\frac{\im}{2}\int_{\gamma}\bigg(\frac{z+\sqrt{2}}{z-\sqrt{2}}\bigg)^{\frac{1}{2}}\,V'(z)\,\d z+\frac{\im}{\sqrt{2}}\int_{\gamma}\bigg(\frac{z-\sqrt{2}}{z+\sqrt{2}}\bigg)^{\frac{1}{2}}\frac{zV'(z)}{(z-\sqrt{2})^2}\,\d z
\end{align}
and realize that the first integral in \eqref{h:16} is known from \eqref{h:14}. For the second integral we use partial fractions,
\begin{align*}
	\int_{\gamma}&\bigg(\frac{z-\sqrt{2}}{z+\sqrt{2}}\bigg)^{\frac{1}{2}}\frac{zV'(z)}{(z-\sqrt{2})^2}\,\d z=\frac{\d}{\d x}\int_{\gamma}\frac{((z-\sqrt{2})(z+\sqrt{2}))^{\frac{1}{2}}}{z+\sqrt{2}}\frac{zV'(z)}{z-x}\,\d z\,\bigg|_{x=\sqrt{2}}\\
	&=\frac{\d}{\d x}\Bigg[\int_{\gamma}\big((z-\sqrt{2})(z+\sqrt{2})\big)^{\frac{1}{2}}\frac{V'(z)}{z-x}\,\d z-\sqrt{2}\int_{\gamma}\big((z-\sqrt{2})(z+\sqrt{2})\big)^{\frac{1}{2}}\frac{V'(z)}{(z+\sqrt{2})(z-x)}\Bigg]\Bigg|_{x=\sqrt{2}}\\
	&=\frac{\d}{\d x}\Bigg[\frac{x}{x+\sqrt{2}}\int_{\gamma}\big((z-\sqrt{2})(z+\sqrt{2})\big)^{\frac{1}{2}}\frac{V'(z)}{z-x}\,\d z+\frac{\sqrt{2}}{x+\sqrt{2}}\int_{\gamma}\bigg(\frac{z-\sqrt{2}}{z+\sqrt{2}}\bigg)^{\frac{1}{2}}V'(z)\,\d z\Bigg]\Bigg|_{x=\sqrt{2}}\\
	&=\frac{\d}{\d x}\Bigg[\frac{2\im x}{x+\sqrt{2}}\,\textnormal{pv}\int_E\frac{\sqrt{2-\lambda^2}}{\lambda-x}V'(\lambda)\,\d\lambda-\frac{4\sqrt{2}\,\pi\im}{x+\sqrt{2}}\Bigg]\Bigg|_{x=\sqrt{2}}\stackrel{\eqref{h:11}}{=}-4\pi^2\im\sqrt{2}\,h_V(\sqrt{2}).
\end{align*}
Consequently back in \eqref{h:15},
\begin{equation*}
	\int_{\gamma}\bigg(\frac{z+\sqrt{2}}{z-\sqrt{2}}\bigg)^{\frac{1}{2}}V''(z)\,\d z=-4\pi^2\im\sqrt{2}\,h_V(\sqrt{2}),
\end{equation*}
and this completes the proof of \eqref{h:13}.
\end{proof}
\begin{prop} Suppose $V$ is as in Assumption \ref{1-cut}. Then as $n\rightarrow\infty$,
\begin{equation}\label{h:17}
	\bigg[\int_{\gamma_+}-\int_{\gamma_-}\bigg]\frac{\nu^{-2}(z)}{(z-\sqrt{2})^2}V(z+s_{n\theta})\,\d z=-\frac{4\pi\im}{3\sqrt{2}}\Big(1+8\pi h_V(\sqrt{2})+\mathcal{O}\big(n^{-\frac{2}{3}}\big)\Big).
\end{equation}
with $s_{n\theta}=s/(\pi h_{V_{\theta}}(\sqrt{2})2^{\frac{3}{4}}n)^{\frac{2}{3}}$, see \eqref{a:3},\eqref{a:12}.
\end{prop}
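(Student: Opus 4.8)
The plan is to follow the template of the proofs of \eqref{h:12} and \eqref{h:13}: exhibit an explicit primitive for the integrand, integrate by parts along the closed contour $\gamma:=\gamma_+-\gamma_-$ so that the boundary contributions at $\tau_{\pm}$ drop out, collapse $\gamma$ onto the slit $[\varkappa,\sqrt{2}]$, expand in $s_{n\theta}=\mathcal{O}(n^{-\frac{2}{3}})$, and close with identity \eqref{h:11}. First I would record the elementary identity
\begin{equation*}
	\frac{\nu^{-2}(z)}{(z-\sqrt{2})^2}=-\frac{2}{3(\sqrt{2}-\varkappa)}\frac{\d}{\d z}\bigg[\bigg(\frac{z-\varkappa}{z-\sqrt{2}}\bigg)^{\frac{3}{2}}\bigg]=-\frac{2}{3(\sqrt{2}-\varkappa)}\frac{\d}{\d z}\nu^{-6}(z),\qquad z\in\mathbb{C}\setminus[\varkappa,\sqrt{2}],
\end{equation*}
checked by differentiation. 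Since $\gamma$ is closed and $z\mapsto\nu^{-6}(z)V(z+s_{n\theta})$ is single-valued and analytic in a neighbourhood of $\gamma$, integration by parts gives
\begin{equation*}
	\textnormal{LHS of }\eqref{h:17}=\frac{2}{3(\sqrt{2}-\varkappa)}\bigg[\int_{\gamma_+}-\int_{\gamma_-}\bigg]\bigg(\frac{z-\varkappa}{z-\sqrt{2}}\bigg)^{\frac{3}{2}}V'(z+s_{n\theta})\,\d z .
\end{equation*}

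The integrand is now only $\mathcal{O}((z-\sqrt{2})^{-\frac{3}{2}})$ at $z=\sqrt{2}$, and I would split, exactly as in the proof of \eqref{h:12},
\begin{equation*}
	\bigg(\frac{z-\varkappa}{z-\sqrt{2}}\bigg)^{\frac{3}{2}}=\bigg(\frac{z-\varkappa}{z-\sqrt{2}}\bigg)^{\frac{1}{2}}+(\sqrt{2}-\varkappa)\frac{((z-\varkappa)(z-\sqrt{2}))^{\frac{1}{2}}}{(z-\sqrt{2})^2},
\end{equation*}
into a weakly and a strongly singular piece. The first contour integral is collapsed directly onto $[\varkappa,\sqrt{2}]$, where the integrand stays integrable at $\sqrt{2}$; it equals $-2\im\int_{\varkappa}^{\sqrt{2}}\sqrt{(x-\varkappa)/(\sqrt{2}-x)}\,V'(x+s_{n\theta})\,\d x$, and an $s_{n\theta}$-expansion together with the identity $\int_E\sqrt{(x+\sqrt{2})/(\sqrt{2}-x)}\,V'(x)\,\d x=2\pi$ used in the proofs of \eqref{h:5} and \eqref{h:14} reduces it to $-4\pi\im+\mathcal{O}(n^{-\frac{2}{3}})$. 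The second contour integral is precisely the model integral already handled in the proof of \eqref{h:12}: writing $(z-\sqrt{2})^{-2}=\frac{\partial}{\partial x}\frac{1}{z-x}\big|_{x=\sqrt{2}}$, collapsing onto the slit for $x\in(\varkappa,\sqrt{2})$ produces a principal-value integral, and differentiating in $x$ before letting $x\to\sqrt{2}$ and $n\to\infty$ and invoking \eqref{h:11} (so that $\textnormal{pv}\int_E\frac{\sqrt{2-\lambda^2}}{\lambda-x}V'(\lambda)\,\d\lambda=-2\pi+2\pi^2h_V(x)(2-x^2)$ has $x$-derivative $-4\sqrt{2}\,\pi^2h_V(\sqrt{2})$ at $x=\sqrt{2}$) yields the value $-8\sqrt{2}\,\pi^2\im\,h_V(\sqrt{2})+\mathcal{O}(n^{-\frac{2}{3}})$ for that integral.

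Collecting the two pieces and using $\sqrt{2}-\varkappa=2\sqrt{2}+s_{n\theta}=2\sqrt{2}\,(1+\mathcal{O}(n^{-\frac{2}{3}}))$, see \eqref{a:12}, gives
\begin{equation*}
	\textnormal{LHS of }\eqref{h:17}=\frac{2}{3(\sqrt{2}-\varkappa)}\Big(-4\pi\im-32\pi^2\im\,h_V(\sqrt{2})+\mathcal{O}(n^{-\frac{2}{3}})\Big)=-\frac{4\pi\im}{3\sqrt{2}}\Big(1+8\pi h_V(\sqrt{2})+\mathcal{O}(n^{-\frac{2}{3}})\Big),
\end{equation*}
which is \eqref{h:17}. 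The one genuinely delicate point is the strongly singular term: after collapsing the contour the integrand fails to be integrable at the slit endpoint $z=\sqrt{2}$, so one must regularize through the auxiliary parameter $x$ and then justify that differentiating the principal-value integral in $x$, sending $x\to\sqrt{2}$, and expanding $\varkappa=-\sqrt{2}-s_{n\theta}$ can all be carried out with an error no worse than $\mathcal{O}(s_{n\theta})=\mathcal{O}(n^{-\frac{2}{3}})$. This is exactly the estimate performed in the proof of \eqref{h:12}, so I would quote it rather than reproduce it; everything else is a routine computation.
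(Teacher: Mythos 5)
Your argument is correct and reaches the stated asymptotic by the same essential route as the paper: integrate by parts along the closed loop $\gamma=\gamma_+-\gamma_-$, split the resulting integrand into a weakly singular and a strongly singular piece, collapse each onto the slit, and close with the identity $\int_E\sqrt{(\sqrt{2}+x)/(\sqrt{2}-x)}\,V'(x)\,\d x=2\pi$ and with \eqref{h:11}. The only variation is in the order of operations. The paper decomposes \emph{before} integrating by parts, writing $\nu^{-2}(z)/(z-\sqrt{2})^2$ as $-\tfrac23\tfrac{\d}{\d z}\big[((z-\varkappa)(z-\sqrt{2}))^{1/2}(z-\sqrt{2})^{-2}\big]+\tfrac13\nu^2(z)/(z-\sqrt{2})^2$, and then disposes of the remainder by citing \eqref{h:13}. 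You instead exhibit the integrand as a \emph{pure} derivative of $\nu^{-6}$, integrate by parts first, and then split the resulting $3/2$-power algebraically, re-deriving the weakly singular contribution (which is, up to the prefactor, exactly what the proof of \eqref{h:13} computes) from scratch. Both routes land on the identical strongly-singular model integral $\tfrac23\int_\gamma((z-\varkappa)(z-\sqrt{2}))^{1/2}(z-\sqrt{2})^{-2}V'(z+s_{n\theta})\,\d z$, which you both regularize via $\tfrac{\d}{\d x}|_{x=\sqrt{2}}$ and evaluate with \eqref{h:11}; so the two proofs differ only in bookkeeping, with the paper's version being marginally shorter because it can reuse \eqref{h:13} verbatim.
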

\begin{proof} We use
\begin{equation*}
	\frac{\nu^{-2}(z)}{(z-\sqrt{2})^2}=-\frac{2}{3}\frac{\d}{\d z}\bigg[\frac{((z-\varkappa)(z-\sqrt{2}))^{\frac{1}{2}}}{(z-\sqrt{2})^2}\bigg]+\frac{1}{3}\frac{\nu^2(z)}{(z-\sqrt{2})^2},\ \ \ z\in\mathbb{C}\setminus [\varkappa,\sqrt{2}],
\end{equation*}
and obtain from integration by parts and \eqref{h:13},
\begin{align*}
	\textnormal{LHS in}\ \eqref{h:17}=&\,\frac{2}{3}\frac{\d}{\d x}\int_{\gamma}\frac{((z+\sqrt{2})(z-\sqrt{2}))^{\frac{1}{2}}}{z-x}V'(z)\,\d z\,\Bigg|_{x=\sqrt{2}}-\frac{4\pi\im}{3\sqrt{2}}+\mathcal{O}\big(n^{-\frac{2}{3}}\big)\\
	=&\,\frac{4\im}{3}\frac{\d}{\d x}\,\textnormal{pv}\int_E\frac{\sqrt{2-\lambda^2}}{\lambda-x}V'(\lambda)\,\d\lambda\,\Bigg|_{x=\sqrt{2}}-\frac{4\pi\im}{3\sqrt{2}}+\mathcal{O}\big(n^{-\frac{2}{3}}\big).
\end{align*}
This implies \eqref{h:17} by another application of \eqref{h:11} and completes the proof.
\end{proof}
\begin{prop} Suppose $V$ is as in Assumption \ref{1-cut}. Then as $n\rightarrow\infty$,
\begin{align}
	\bigg[\int_{\gamma_+}-\int_{\gamma_-}\bigg]&\frac{\nu^{-2}(z)}{(z-\varkappa)^2}V(z+s_{n\theta})\,\d z=\frac{4\pi\im}{\sqrt{2}}+\mathcal{O}\big(n^{-\frac{2}{3}}\big),\label{h:18}\\
	\bigg[\int_{\gamma_+}-\int_{\gamma_-}\bigg]&\frac{\nu^{2}(z)}{(z-\varkappa)^2}V(z+s_{n\theta})\,\d z=\frac{4\pi\im}{3\sqrt{2}}\Big(1+8\pi h_V(-\sqrt{2})+\mathcal{O}\big(n^{-\frac{2}{3}}\big)\Big),\label{h:19}\\
	\bigg[\int_{\gamma_+}-\int_{\gamma_-}\bigg]&\frac{\nu^{-2}(z)}{(z-\varkappa)^3}V(z+s_{n\theta})\,\d z=\frac{2\pi\im}{3}\Big(1-4\pi h_V(-\sqrt{2})+\mathcal{O}\big(n^{-\frac{2}{3}}\big)\Big),\label{h:20}
\end{align}
with $s_{n\theta}=s/(\pi h_{V_{\theta}}(\sqrt{2})2^{\frac{3}{4}}n)^{\frac{2}{3}}$, see \eqref{a:3},\eqref{a:12}.
\end{prop}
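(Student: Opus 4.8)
\textbf{Proof plan for \eqref{h:18}--\eqref{h:20}.} These three integrals are the left-endpoint counterparts of \eqref{h:12}, \eqref{h:13} and \eqref{h:17}, and the plan is to dispatch them by the same three-step recipe used there. First, integrate by parts to replace $V(z+s_{n\theta})$ by $V'(z+s_{n\theta})$, using that each kernel in question, centered now at $\varkappa=-\sqrt{2}-s_{n\theta}$, can be written as $\frac{\d}{\d z}$ of an elementary antiderivative, possibly plus a lower-order scalar multiple of a kernel already evaluated in \eqref{h:18}; the endpoint terms at the terminal points $\tau_\pm$ of $\gamma_\pm$ cancel in the difference $\gamma:=\gamma_+-\gamma_-$ since these antiderivatives are single-valued on $\mathbb{R}\setminus[\varkappa,\sqrt{2}]$. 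Second, deform $\gamma$ onto the branch cut $[\varkappa,\sqrt{2}]$ of $\nu$ and of $((z-\varkappa)(z-\sqrt{2}))^{\frac{1}{2}}$, collapsing it to an integral over $[\varkappa,\sqrt{2}]$ against the jump of the relevant square root across the cut. Third, Taylor-expand in $s_{n\theta}=s(\pi h_{V_\theta}(\sqrt{2})2^{\frac{3}{4}}n)^{-\frac{2}{3}}=\mathcal{O}(n^{-\frac{2}{3}})$, see \eqref{a:3},\eqref{a:12}, so that the integral over $[\varkappa,\sqrt{2}]$ becomes the corresponding integral over $E=[-\sqrt{2},\sqrt{2}]$ up to $\mathcal{O}(n^{-\frac{2}{3}})$; since these three kernels enter \eqref{h:9} only at order $n^{-1}$, leading order suffices and the stated relative $\mathcal{O}(n^{-\frac{2}{3}})$ error is harmless, so no iterated integration by parts against $V''$ (as was needed for \eqref{h:13}) is required here.

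Concretely, I would use $\frac{\nu^{-2}(z)}{(z-\varkappa)^2}=\frac{2}{\sqrt{2}-\varkappa}\frac{\d}{\d z}\nu^2(z)$ for \eqref{h:18}, and the analogues of the decompositions displayed in the proofs of \eqref{h:17} and \eqref{h:12} with the roles of the endpoints $\sqrt{2}$ and $\varkappa$ (hence of $\nu$ and $\nu^{-1}$) interchanged for \eqref{h:19} and \eqref{h:20}. After integration by parts and collapse, \eqref{h:18} reduces to a plain integral of $\sqrt{(\sqrt{2}-x)/(\sqrt{2}+x)}\,V'(x)$ over $E$, which is evaluated by the same Liouville-function argument behind \eqref{h:14} together with $\int_EV'(x)/\sqrt{2-x^2}\,\d x=0$ and $\int_ExV'(x)/\sqrt{2-x^2}\,\d x=2\pi$ (both consequences of identities already used in Section \ref{sec6}); this yields $-2\pi$ and hence $4\pi\im/\sqrt{2}$ after dividing by $\sqrt{2}-\varkappa=2\sqrt{2}+\mathcal{O}(n^{-\frac{2}{3}})$. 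For \eqref{h:19} and \eqref{h:20} the same procedure produces, in addition to such a plain integral, a term of the shape $\frac{\d}{\d x}\,\textnormal{pv}\int_E\frac{\sqrt{2-\lambda^2}}{\lambda-x}V'(\lambda)\,\d\lambda$ evaluated at $x=\varkappa$, which is legitimate because $h_V$ extends real-analytically past $-\sqrt{2}$ by Assumption \ref{1-cut}; replacing $\varkappa$ by $-\sqrt{2}$ up to $\mathcal{O}(n^{-\frac{2}{3}})$ and invoking \eqref{h:11} gives $\frac{\d}{\d x}\big[-2\pi+2\pi^2h_V(x)(2-x^2)\big]\big|_{x=-\sqrt{2}}=4\sqrt{2}\,\pi^2h_V(-\sqrt{2})$, which is precisely the source of the $h_V(-\sqrt{2})$ contributions in the claimed formulae. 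Matching the numerical prefactors exactly as in the derivations of \eqref{h:12} and \eqref{h:17} then delivers \eqref{h:19} and \eqref{h:20}.

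The only genuinely delicate point is bookkeeping rather than anything conceptual. One must fix consistently the orientation of $\gamma=\gamma_+-\gamma_-$ around $[\varkappa,\sqrt{2}]$, the boundary values of $\nu^{\pm2}(z)$ and of $((z-\varkappa)(z-\sqrt{2}))^{\frac{1}{2}}$ on the two lips of the cut — recall that $\nu(z)$ vanishes at $\sqrt{2}$ and blows up at $\varkappa$, reversing the pattern of the right-endpoint computations in Section \ref{sec6} — and one must verify the handful of endpoint variants of the trace identities. I expect getting these signs and branch choices right to be where essentially all the effort lies; once it is done, \eqref{h:18}--\eqref{h:20} follow verbatim along the lines of the three preceding propositions.
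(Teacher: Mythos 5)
Your plan is correct and takes essentially the same route as the paper, whose one-sentence proof notes that the exchange $\varkappa\leftrightarrow\sqrt{2}$ (equivalently $\nu\leftrightarrow\nu^{-1}$) maps the three kernels $\frac{\nu^{-2}(z)}{(z-\varkappa)^2},\frac{\nu^{2}(z)}{(z-\varkappa)^2},\frac{\nu^{-2}(z)}{(z-\varkappa)^3}$ onto $\frac{\nu^{2}(z)}{(z-\sqrt{2})^2},\frac{\nu^{-2}(z)}{(z-\sqrt{2})^2},\frac{\nu^{2}(z)}{(z-\sqrt{2})^3}$ and so reduces \eqref{h:18},\eqref{h:19},\eqref{h:20} to the workings behind \eqref{h:13},\eqref{h:17},\eqref{h:12}. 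You spell out what the paper leaves implicit (integration by parts, collapse onto the cut, evaluation via \eqref{h:11} at $x=-\sqrt{2}$ to produce the $h_V(-\sqrt{2})$ term, and the observation that leading order suffices since these kernels enter \eqref{h:9} only at $\mathcal{O}(n^{-1})$), and you correctly flag the remaining sign/prefactor bookkeeping as the delicate point — indeed your numerical shorthand for \eqref{h:18} (``$-2\pi$, then divide by $\sqrt{2}-\varkappa$'') elides the $-\frac{2}{\sqrt{2}-\varkappa}$ from integration by parts and the $2\im$ jump factor, whose product with $-2\pi$ gives the stated $\frac{4\pi\im}{\sqrt{2}}$.
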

\begin{proof} Noting that 
\begin{equation*}
	\frac{\nu^{-2}(z)}{(z-\varkappa)^2}\ \  \textnormal{and}\ \ \frac{\nu^2(z)}{(z-\sqrt{2})^2};\ \ \ \ \frac{\nu^{2}(z)}{(z-\varkappa)^2}\ \ \textnormal{and}\ \ \frac{\nu^{-2}(z)}{(z-\sqrt{2})^2};\ \ \ \ \frac{\nu^{-2}(z)}{(z-\varkappa)^3}\ \ \textnormal{and}\ \ \frac{\nu^{2}(z)}{(z-\sqrt{2})^3},
\end{equation*}
are the same up to the exchange $\varkappa\leftrightarrow\sqrt{2}$, \eqref{h:18},\eqref{h:19} and \eqref{h:20} follows from the workings leading to \eqref{h:13},\eqref{h:17} and \eqref{h:12}.
\end{proof}

At this point of our calculation we return to expansion \eqref{h:9} and carry out the relevant $z$-integration along $\gamma$ with the help of \eqref{h:12},\eqref{h:13},\eqref{h:17},\eqref{h:18},\eqref{h:19},\eqref{h:20}. Note that our potential is $V_{\theta}(x)=(1-\theta)x^2+\theta V(x)$, as convex combination of two one-cut regular potentials, see \eqref{a:1}. Consequently, $\frac{\partial}{\partial\theta}V_{\theta}(x)=V(x)-x^2$ in the upcoming integral \eqref{h:21} is the \textit{difference} of two one-cut regular potentials.

\begin{prop} As $n\rightarrow\infty$, with $\gamma=\gamma_+-\gamma_-$ and $a=a(s,\alpha,\beta),Q_i^{jk}=Q_i^{jk}(s,\alpha,\beta)$ as in Lemma \ref{ArnoLax},
\begin{align}\label{h:21}
	\frac{n}{2\pi\im}\int_{\gamma}&\,\bigg\{\big(R(z)P(z)\big)^{-1}\Big(\frac{\d}{\d z}R(z)\Big)P(z)\bigg\}^{11}\frac{\partial V_{\theta}}{\partial\theta}(z+s_{n\theta})\,\d z\nonumber\\
	&\,=\bigg[\frac{s^3}{6}-sa-\frac{2}{3}\big(Q_1^{21}+2Q_2^{12}\big)\bigg]\frac{\partial}{\partial\theta}\ln\big(h_{V_{\theta}}(\sqrt{2})\big)-\frac{1}{24}\frac{\partial}{\partial\theta}\ln\big(h_{V_{\theta}}(-\sqrt{2})\big)+\mathcal{O}\big(n^{-\frac{1}{3}}\big)
\end{align}
uniformly in $\theta\in[0,1]$ and in $(s,\alpha,\beta)\in\mathbb{R}\times(-1,\infty)\times(\mathbb{C}\setminus(-\infty,0))$ when chosen on compact sets. 
\end{prop}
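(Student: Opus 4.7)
The plan is to substitute the asymptotic expansion \eqref{h:9} for the bracketed integrand into the left-hand side of \eqref{h:21} and to evaluate the resulting $z$-integrals against $\tfrac{\partial V_\theta}{\partial\theta}(z+s_{n\theta}) = V(z+s_{n\theta})-(z+s_{n\theta})^2$ term by term. Since the quadratic $V_0(x):=x^2$ is itself one-cut regular with $h_{V_0}(x)\equiv\tfrac{1}{\pi}$, the six model integrals \eqref{h:12}--\eqref{h:20} apply simultaneously to $V$ and to $V_0$, and by linearity I would work with the difference of their respective right-hand sides. The key observation is that the leading, $h_V$-independent \emph{constant} contributions in \eqref{h:12}--\eqref{h:20} cancel between $V$ and $V_0$, so that the differences are proportional either to $h_V(\pm\sqrt{2})-\tfrac{1}{\pi}=\tfrac{\partial}{\partial\theta}h_{V_\theta}(\pm\sqrt{2})$, or via \eqref{h:13} to $s_{n\theta}\,(h_V(\sqrt{2})-\tfrac{1}{\pi})$.

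Next I would carry out the power counting on the three surviving orders in \eqref{h:9}. The $n^{-1/3}$-order coefficient involves only $\nu^2(z)/(z-\sqrt{2})^2$; by the previous paragraph its paired integral is $O(s_{n\theta})=O(n^{-2/3})$, contributing $n\cdot n^{-1/3}\cdot n^{-2/3}=O(1)$ after the $\tfrac{n}{2\pi\im}$ prefactor. The $n^{-2/3}$-order coefficient also couples only to $\nu^2(z)/(z-\sqrt{2})^2$, giving $O(n^{-1/3})$ that is absorbed in the error. The $n^{-1}$-order coefficient couples to all six kernels in \eqref{h:12},\eqref{h:17}--\eqref{h:20} and yields the bulk of the $O(1)$ answer directly. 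Throughout, I would expand $\zeta_1^b,\zeta_2^b,\sqrt{\sqrt{2}-\varkappa},(\pi h_{V_\theta}(\sqrt{2}))^{-k/3}$ and $n^{2/3}\zeta^b(\sqrt{2})=s+O(n^{-2/3})$ by means of Proposition \ref{conf1}, and replace $Q_k^{ij}(n^{2/3}\zeta^b(\sqrt{2}),\alpha,\beta)$ by $Q_k^{ij}(s,\alpha,\beta)$ using Lemma \ref{ArnoLax}, with the $O(n^{-2/3})$ residual corrections entering the error.

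The final step is the regrouping. All contributions localized at $z=\sqrt{2}$ carry prefactors of the form $(\pi h_{V_\theta}(\sqrt{2}))^{-j}$ that combine with the factor $h_V(\sqrt{2})-\tfrac{1}{\pi}$ to produce $\tfrac{\partial}{\partial\theta}\ln h_{V_\theta}(\sqrt{2})$; the $\Sigma_a$-residue block in $R_3(z)$ paired with \eqref{h:18}--\eqref{h:20} similarly assembles into $\tfrac{\partial}{\partial\theta}\ln h_{V_\theta}(-\sqrt{2})$. By design, the coefficient of $\tfrac{\partial}{\partial\theta}\ln h_{V_\theta}(\sqrt{2})$ should collapse to the Painlev\'e-XXXIV quantity $\tfrac{s^3}{6}-sa-\tfrac{2}{3}(Q_1^{21}+2Q_2^{12})$, while that of $\tfrac{\partial}{\partial\theta}\ln h_{V_\theta}(-\sqrt{2})$ is the universal constant $-\tfrac{1}{24}$, reflecting that no $\alpha,\beta,s$-dependence enters the left-endpoint Airy parametrix at this order.

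The main obstacle is the algebraic bookkeeping. The $n^{-1}$ coefficient in \eqref{h:9} consists of nine distinct structural pieces, each of which must be paired against the correct model integral and simplified using $\det Q=1$ of Lemma \ref{ArnoLax} to yield combinations such as $Q_1^{11}-\tfrac{1}{2}(Q_1^{12})^2$. One must also verify that the mixing of $\nu^{\pm2}(z)/(z-\sqrt{2})^k$ kernels with the quadratic-in-$s_{n\theta}$ corrections in \eqref{h:13} produces no spurious $n^{2/3}$ or $n^{1/3}$ contributions; this cancellation is guaranteed in principle by the analytic consistency of RHP \ref{ratioArno}, but verifying it explicitly by hand constitutes the bulk of the effort.
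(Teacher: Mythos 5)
Your proposal is correct and matches the paper's own proof: substitute the large-$n$ expansion \eqref{h:9} into the left-hand side, exploit the difference structure $\partial_\theta V_\theta = V - x^2$ so that the $V$-independent constants in the six model integrals cancel, carry out the power counting so that only the $n^{-1/3}$ and $n^{-1}$ orders of \eqref{h:9} survive at $O(1)$, and regroup the residual $h_{V_\theta}(\pm\sqrt{2})$-dependence into the two logarithmic derivatives. One small sharpening of your final paragraph: the absence of spurious $n^{2/3}$ or $n^{1/3}$ contributions is not an abstract consequence of the solvability of RHP~\ref{ratioArno} but is visible directly in the right-hand sides of \eqref{h:13} and \eqref{h:18} --- their leading constants $\mp 4\pi\im/\sqrt{2}$ are $V$-independent (fixed by the normalization $\int_E\rho_V=1$ via \eqref{h:14}) and hence drop out of the $V-x^2$ difference without further argument, leaving those two pairings $O(s_{n\theta})$ and $O(n^{-2/3})$ respectively, so that at the $n^{-1}$ order only \eqref{h:12}, \eqref{h:17}, \eqref{h:19}, \eqref{h:20} contribute.
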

\begin{proof} We proceed according to \eqref{h:9}, using the same colour coding, and thus begin with
\begin{align*}
	\frac{n}{2\pi\im}\int_{\gamma}\frac{\nu^2(z)}{(z-\sqrt{2})^2}\big(V(z+s_{n\theta})-(z+s_{n\theta})^2\big)\,\frac{\d z}{\sqrt[3]{n}}\textcolor{orange}{\bigg\{}\ldots\textcolor{orange}{\bigg\}}\stackrel{\eqref{h:13}}{=}-s\bigg(a-\frac{s^2}{4}\bigg)\bigg\{\frac{\partial}{\partial\theta}\ln\big(h_{V_{\theta}}(\sqrt{2})\big)+\mathcal{O}\big(n^{-\frac{2}{3}}\big)\bigg\},
\end{align*}
followed by
\begin{equation*}
	\frac{n}{2\pi\im}\int_{\gamma}\frac{\alpha\nu^2(z)}{\zeta_1^b(z-\sqrt{2})^2}\big(V(z+s_{n\theta})-(z+s_{n\theta})^2\big)\frac{\d z}{\sqrt[3]{n^2}}\textcolor{blue}{\bigg\{}\ldots\textcolor{blue}{\bigg\}}\stackrel{\eqref{h:13}}{=}\mathcal{O}\big(n^{-\frac{1}{3}}\big).
\end{equation*}
The remaining term, indicated in $\textcolor{red}{\{}\ldots\textcolor{red}{\}}$ in \eqref{h:9}, is more tedious and requires all six of \eqref{h:12},\eqref{h:13},\eqref{h:17},\eqref{h:18}, \eqref{h:19} and \eqref{h:20}. Still, since $\textcolor{red}{\{}\ldots\textcolor{red}{\}}$ is already of order $\frac{1}{n}$ we only need to focus on those terms that use \eqref{h:12},\eqref{h:17},\eqref{h:19} and \eqref{h:20}, because of the difference structure in $\frac{\partial}{\partial\theta}V_{\theta}(x)$. What results is
\begin{align*}
	\frac{n}{2\pi\im}\int_{\gamma}\textcolor{red}{\bigg\{}\ldots\textcolor{red}{\bigg\}}\big(V(z+s_{n\theta})-(z+s_{n\theta})^2\big)\frac{\d z}{n}=-\frac{2}{3}\bigg(2Q_2^{12}&+Q_1^{21}+\frac{s^3}{8}\bigg)\frac{\partial}{\partial\theta}\ln\big(h_{V_{\theta}}(\sqrt{2})\big)\\
	&-\frac{1}{24}\frac{\partial}{\partial\theta}\ln\big(h_{V_{\theta}}(-\sqrt{2})\big)+\mathcal{O}\big(n^{-\frac{2}{3}}\big),
\end{align*}
and combining the above we obtain \eqref{h:20}.
\end{proof}
Lastly, the combination of \eqref{h:1},\eqref{h:5} and \eqref{h:20} yields
\begin{cor} As $n\rightarrow\infty$,
\begin{align}
	&\frac{\partial}{\partial\theta}\ln D_n\big(s,\alpha;V_{\theta}(x)\big)=\frac{\partial}{\partial\theta}\Bigg[-n^2\int_E\frac{1}{2}\bigg(\theta^2 h_V(x)-(1-\theta)^2\frac{1}{\pi}\bigg)\sqrt{2-x^2}\,\big(V(x)-x^2\big)\,\d x+\frac{n\alpha}{2}V_{\theta}(\sqrt{2})\nonumber\\
	&-\frac{n\alpha}{2\pi}\int_E\frac{V_{\theta}(x)}{\sqrt{2-x^2}}\,\d x+\frac{3\alpha s}{2\sqrt{2}}\sqrt[3]{n}\frac{(\pi h_{V_{\theta}}(\sqrt{2}))^{\frac{1}{3}}}{\pi h_{V}(\sqrt{2})-1}+\bigg(\frac{s^3}{6}-sa-\frac{2}{3}\big(Q_1^{21}+2Q_2^{12}\big)\bigg)\ln h_{V_{\theta}}(\sqrt{2})\nonumber\\
	&-\frac{1}{24}\ln h_{V_{\theta}}(-\sqrt{2})\Bigg]+\mathcal{O}\big(n^{-\frac{1}{3}}\big),\label{h:22}
\end{align}
uniformly in $\theta\in[0,1]$ and $(s,\alpha)\in\mathbb{R}\times(-1,\infty)$ when chosen on compact sets.
\end{cor}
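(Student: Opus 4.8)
The three ingredients \eqref{h:1}, \eqref{h:5} and \eqref{h:21} already carry all the analytic weight; the remaining task is to feed them into the differential identity \eqref{a:11} and to reshuffle the result into a single $\theta$-derivative. The first thing I would note is that Theorem \ref{Arnotheo1} gives solvability of RHP \ref{FIKbeast} for all $n\ge n_0$ uniformly in $\theta\in[0,1]$, so \eqref{a:11} is legitimate for every $\theta$ and $\theta\mapsto\ln D_n(s,\alpha;V_\theta(x))$ is well defined. Substituting \eqref{h:5} and \eqref{h:21} into \eqref{h:1} then expresses $\tfrac{\partial}{\partial\theta}\ln D_n$, up to an error $\mathcal O(n^{-1/3})$ uniform in $\theta$ and in $(s,\alpha)$ on compact sets, as a sum of five terms: the energy integral $-n^2\int_E\rho_{V_\theta}(x)\tfrac{\partial V_\theta}{\partial\theta}(x)\,\d x$; the boundary term $\tfrac{n\alpha}{2}\tfrac{\partial V_\theta}{\partial\theta}(\sqrt2+s_{n\theta})$ coming from the residue of $\mathcal D'/\mathcal D$ at $z=\sqrt2$; minus $\tfrac{\partial}{\partial\theta}$ of $\tfrac{n\alpha}{2\pi}\int_E V_\theta(x)(2-x^2)^{-1/2}\,\d x$; and the two logarithmic terms $\big[\tfrac{s^3}{6}-sa-\tfrac23(Q_1^{21}+2Q_2^{12})\big]\tfrac{\partial}{\partial\theta}\ln h_{V_\theta}(\sqrt2)$ and $-\tfrac1{24}\tfrac{\partial}{\partial\theta}\ln h_{V_\theta}(-\sqrt2)$, where $a=a(s,\alpha,\beta)$, $Q_1^{21}=Q_1^{21}(s,\alpha,\beta)$, $Q_2^{12}=Q_2^{12}(s,\alpha,\beta)$ are the $\theta$-independent Painlev\'e-XXXIV quantities of Lemma \ref{ArnoLax} and RHP \ref{YattRHP} (the discrepancy $Q_k^{ij}(n^{2/3}\zeta^b(\sqrt2),\alpha,\beta)=Q_k^{ij}(s,\alpha,\beta)+\mathcal O(n^{-2/3})$ being absorbed into the error).

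Next I would recognise each of these five pieces as a $\theta$-derivative. Because $\tfrac{\partial}{\partial\theta}V_\theta=W$ with $W(x)=V(x)-x^2$ and $\rho_{V_\theta}(x)=\big((1-\theta)\tfrac1\pi+\theta h_V(x)\big)\sqrt{2-x^2}$ by \eqref{a:1},\eqref{a:2}, the elementary identity $\tfrac{\partial}{\partial\theta}\big[\tfrac12\big(\theta^2h_V(x)-(1-\theta)^2\tfrac1\pi\big)\big]=\theta h_V(x)+(1-\theta)\tfrac1\pi$ turns the energy integral into $\tfrac{\partial}{\partial\theta}$ of $-n^2\int_E\tfrac12\big(\theta^2h_V-(1-\theta)^2\tfrac1\pi\big)\sqrt{2-x^2}\,W\,\d x$. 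For the boundary term I would Taylor expand in $s_{n\theta}=s/(n\tau_\theta)^{2/3}=\mathcal O(n^{-2/3})$: the zeroth order gives $\tfrac{n\alpha}{2}W(\sqrt2)=\tfrac{\partial}{\partial\theta}\big[\tfrac{n\alpha}{2}V_\theta(\sqrt2)\big]$, the first order gives $\tfrac{n\alpha}{2}W'(\sqrt2)s_{n\theta}=\tfrac{\alpha s}{2\sqrt2}\big(\pi h_{V_\theta}(\sqrt2)\big)^{-2/3}W'(\sqrt2)\sqrt[3]{n}$ using $\tau_\theta=\pi h_{V_\theta}(\sqrt2)2^{3/4}$ from \eqref{a:3}, and the remainder is $\mathcal O\big(n\,s_{n\theta}^2\big)=\mathcal O(n^{-1/3})$; since $\pi h_{V_\theta}(\sqrt2)=1+\theta\big(\pi h_V(\sqrt2)-1\big)$ is affine in $\theta$ one has $\tfrac{\partial}{\partial\theta}\big(\pi h_{V_\theta}(\sqrt2)\big)^{1/3}=\tfrac13\big(\pi h_{V_\theta}(\sqrt2)\big)^{-2/3}\big(\pi h_V(\sqrt2)-1\big)$, so the first-order contribution equals $\tfrac{\partial}{\partial\theta}$ of $\tfrac{3\alpha s\sqrt[3]{n}}{2\sqrt2}\tfrac{(\pi h_{V_\theta}(\sqrt2))^{1/3}}{\pi h_V(\sqrt2)-1}W'(\sqrt2)$. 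The third term is already $-\tfrac{\partial}{\partial\theta}$ of $\tfrac{n\alpha}{2\pi}\int_E V_\theta(x)(2-x^2)^{-1/2}\,\d x$, and since $a,Q_1^{21},Q_2^{12}$ do not depend on $\theta$ the last two terms are $\tfrac{\partial}{\partial\theta}$ of $\big(\tfrac{s^3}{6}-sa-\tfrac23(Q_1^{21}+2Q_2^{12})\big)\ln h_{V_\theta}(\sqrt2)$ and $-\tfrac1{24}\ln h_{V_\theta}(-\sqrt2)$. Collecting everything under one bracket is \eqref{h:22}, and uniformity in $\theta\in[0,1]$ and in $(s,\alpha)$ on compact sets is inherited from the inputs.

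The genuinely hard work is all upstream, in the decomposition \eqref{h:1} and above all in the lengthy residue computation behind \eqref{h:21}, so the corollary itself is essentially organisational. The one point that requires care is that the $\mathcal O(n)$ and $\mathcal O(\sqrt[3]{n})$ contributions, which here originate from two different sources (the Szeg\H{o} residue on the one hand and the $n^2$ energy term on the other), must recombine into an \emph{exact} $\theta$-derivative; this works only because of the $\theta$-independence of the Painlev\'e-XXXIV data together with the affine law $\pi h_{V_\theta}(\sqrt2)=1+\theta\big(\pi h_V(\sqrt2)-1\big)$. I expect the most error-prone step to be the $\sqrt[3]{n}$ term: one must expand $W(\sqrt2+s_{n\theta})$ to first order, convert $\tfrac{n\alpha}{2}W'(\sqrt2)s_{n\theta}$ into a $\theta$-derivative via $\tau_\theta=\pi h_{V_\theta}(\sqrt2)2^{3/4}$, and be sure the next Taylor coefficient, of size $\mathcal O(n^{-1/3})$, is left inside the error rather than accidentally retained.
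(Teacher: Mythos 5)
Your proposal is correct and follows exactly the route the paper intends: the corollary is nothing more than the combination of \eqref{h:1}, \eqref{h:5} and \eqref{h:21}, and you supply the short bookkeeping that the paper leaves implicit — namely that each of the resulting pieces is an exact $\theta$-derivative, via the quadratic primitive for the affine coefficient $\theta h_V+(1-\theta)\pi^{-1}$ appearing in $\rho_{V_\theta}$ and via the affine law $\pi h_{V_\theta}(\sqrt2)=1+\theta(\pi h_V(\sqrt2)-1)$ together with the chain rule applied to $(\pi h_{V_\theta}(\sqrt2))^{1/3}$. The Taylor expansion of $\frac{n\alpha}{2}W(\sqrt2+s_{n\theta})$ to first order, with the quadratic remainder $\mathcal O(n\,s_{n\theta}^2)=\mathcal O(n^{-1/3})$ absorbed into the error, is handled precisely as required.

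One useful remark that falls out of your derivation: you correctly carry the factor $W'(\sqrt2)$ in the $\sqrt[3]{n}$-term, so that after $\theta$-integration one recovers the fourth term of Theorem \ref{BS:3}, $\frac{3\alpha s\sqrt[3]{n}}{2\sqrt2}\frac{(\pi h_V(\sqrt2))^{1/3}-1}{\pi h_V(\sqrt2)-1}W'(\sqrt2)$. As printed, the displayed equation \eqref{h:22} has apparently dropped this $W'(\sqrt2)$; your version is the correct one and is the one consistent with \eqref{e:26}.
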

\begin{proof}[Proof of Theorem \ref{BS:3}] All that's left to do is integrate \eqref{h:22} from $\theta=0$ to $\theta=1$, exploiting uniformity of the error term on the way. What results is \eqref{e:26} and the special value of $\Theta_0(s,1)$ follows at once from Remark \ref{speccon}.
\end{proof}
\begin{appendix}
\section{The matching condition \eqref{a:25}}\label{tedious} Using \eqref{a:21}, condition $(4)$ in RHP \ref{YattRHP} and \eqref{a:22} one finds, as $n\rightarrow\infty$ with $s\in\mathbb{R},\alpha>-1,\beta\notin(-\infty,0)$,
\begin{align}
	M(z)=E^b(z)&\,\left\{I+\sum_{k=1}^2Q_k\Big(n^{\frac{2}{3}}\zeta^b(\sqrt{2}),\alpha\Big)\Big(n^{\frac{2}{3}}\big(\zeta^b(z)-\zeta^b(\sqrt{2})\big)\Big)^{-k}+\mathcal{O}\big(n^{-2}\big)\right\}E^b(z)^{-1}\nonumber\\
	&\hspace{1cm}\times P(z)\exp\left[-\varpi\Big(n^{\frac{2}{3}}\big(\zeta^b(z)-\zeta^b(\sqrt{2})\big),n^{\frac{2}{3}}\zeta^b(\sqrt{2})\Big)\sigma_3+\frac{2n}{3}\big(\zeta^b(z)\big)^{\frac{3}{2}}\sigma_3\right],\label{nasty1}
\end{align}
uniformly in $0<r_1\leq|z-\sqrt{2}|\leq r_2<\frac{\epsilon}{2}$ and $\theta\in[0,1]$, with fixed $r_1,r_2$. Observing the exponential factor in \eqref{nasty1} we begin with the following observation.
\begin{lem} Let $s\in\mathbb{R},\theta\in[0,1]$ and $n\geq n_0$ so $\sqrt{2}-s_{n\theta}\in\mathbb{D}_{\epsilon}(\sqrt{2})$ for $\epsilon>0$ small. Set
\begin{equation*}
	\vartheta(z)=\vartheta(z;s_{n\theta}):=\frac{2}{3}\Big\{\big(\zeta^b(z)-\zeta^b(\sqrt{2})\big)^{\frac{3}{2}}-\big(\zeta^b(z)\big)^{\frac{3}{2}}\Big\}+\zeta^b(\sqrt{2})\big(\zeta^b(z)-\zeta^b(\sqrt{2})\big)^{\frac{1}{2}},
\end{equation*}
then $z\mapsto\vartheta(z)$ is analytic for $z\in\mathbb{D}_{\epsilon}(\sqrt{2})\setminus(-\infty,\max\{\sqrt{2},\sqrt{2}-s_{n\theta}\}]$ and satisfies, as $n\rightarrow\infty$,
\begin{equation}\label{nasty2}
	\vartheta(z)=\sum_{k=2}^3\vartheta_k(z;\theta)s^kn^{-\frac{2k}{3}}+\mathcal{O}\big(n^{-\frac{8}{3}}\big)
\end{equation}
uniformly in $0<r_1\leq|z-\sqrt{2}|\leq r_2<\frac{\epsilon}{2},\theta\in[0,1]$ and pointwise in $s\in\mathbb{R}$. The coefficients $\vartheta_k(z;\theta)$ in \eqref{nasty2} are $n$-independent and equal
\begin{align*}
	\vartheta_2(z;\theta)=&\,-\frac{1}{4}\big(\zeta^b(z)\big)^{-\frac{1}{2}}\bigg|_{s=0},\\
	\vartheta_3(z;\theta)=&\,-\frac{1}{12}\big(\zeta^b(z)\big)^{-\frac{3}{2}}\bigg|_{s=0}\Bigg\{1-\frac{3}{2}\frac{\pi h_{V_{\theta}}(z)}{\tau_{\theta}^{2/3}}\bigg(\frac{z^2-2}{\zeta^b(z)}\bigg)^{\frac{1}{2}}\Bigg|_{s=0}\Bigg\}
	-\frac{1}{5}\big(\zeta^b(z)\big)^{-\frac{1}{2}}\bigg|_{s=0}\bigg\{\frac{h_{V_{\theta}}'(b)}{h_{V_{\theta}}(b)}+\frac{1}{4\sqrt{2}}\bigg\}\tau_{\theta}^{-\frac{2}{3}}.
\end{align*}
\end{lem}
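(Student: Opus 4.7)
Analyticity is the easy part. By Proposition \ref{conf1}, $z\mapsto\zeta^b(z)$ is conformal in $\mathbb{D}_{\epsilon}(\sqrt{2})$ with $\zeta^b(\sqrt{2}-s_{n\theta})=0$ and $\zeta^b(\sqrt{2})>0$, so the only possible branch points of the three fractional powers appearing in $\vartheta$ occur at $z=\sqrt{2}-s_{n\theta}$ (where $\zeta^b(z)=0$) and at $z=\sqrt{2}$ (where $\zeta^b(z)-\zeta^b(\sqrt{2})=0$). The principal-branch convention in the integral defining $\zeta^b$ (see \eqref{a:20}) forces the branch cut of $(\zeta^b(z))^{\frac{3}{2}}$ to lie along the ray emanating from $z=\sqrt{2}-s_{n\theta}$ into $(-\infty,\sqrt{2}-s_{n\theta}]$, and similarly the branch cut of $(\zeta^b(z)-\zeta^b(\sqrt{2}))^{\frac{3}{2}}$ and $(\zeta^b(z)-\zeta^b(\sqrt{2}))^{\frac{1}{2}}$ must lie on $(-\infty,\sqrt{2}]$. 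Hence $\vartheta$ is analytic in $\mathbb{D}_{\epsilon}(\sqrt{2})\setminus(-\infty,\max\{\sqrt{2},\sqrt{2}-s_{n\theta}\}]$.

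For the expansion \eqref{nasty2} I would proceed in two stages. First, set $\zeta=\zeta^b(z)$ and $\zeta_0=\zeta^b(\sqrt{2})$, both of which are $\theta$- and $n$-dependent, and use the binomial theorem in the small parameter $\zeta_0/\zeta$, which is uniformly $\mathcal{O}(n^{-\frac{2}{3}})$ on the annulus $0<r_1\le|z-\sqrt{2}|\le r_2<\epsilon/2$ thanks to Proposition \ref{conf1}. A short calculation based on
\begin{equation*}
(1-x)^{\frac{3}{2}}=1-\tfrac{3}{2}x+\tfrac{3}{8}x^2+\tfrac{1}{16}x^3+\mathcal{O}(x^4),\qquad (1-x)^{\frac{1}{2}}=1-\tfrac{1}{2}x-\tfrac{1}{8}x^2+\mathcal{O}(x^3),
\end{equation*}
shows that the linear-in-$\zeta_0$ terms inside $\frac{2}{3}\{(\zeta-\zeta_0)^{\frac{3}{2}}-\zeta^{\frac{3}{2}}\}$ and inside $\zeta_0(\zeta-\zeta_0)^{\frac{1}{2}}$ cancel exactly, which is precisely why the third summand $\zeta^b(\sqrt{2})(\zeta^b(z)-\zeta^b(\sqrt{2}))^{\frac{1}{2}}$ is built into $\vartheta$. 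One is left with
\begin{equation*}
\vartheta(z)=-\tfrac{1}{4}\zeta_0^2\,\zeta^{-\frac{1}{2}}-\tfrac{1}{12}\zeta_0^3\,\zeta^{-\frac{3}{2}}+\mathcal{O}(\zeta_0^4),
\end{equation*}
uniformly on the same annulus.

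The second stage is to re-expand $\zeta_0$ and $\zeta$ themselves in $s_{n\theta}=s(n\tau_{\theta})^{-\frac{2}{3}}$. For $\zeta_0$ I would read off from the displayed formula for $\zeta_1^b$ in Proposition \ref{conf1} (evaluated at $\omega=s_{n\theta}$) that
\begin{equation*}
\zeta_0=\frac{s}{n^{\frac{2}{3}}}\bigg[1+\frac{2s}{5\,n^{\frac{2}{3}}\tau_{\theta}^{\frac{2}{3}}}\bigg\{\frac{h_{V_{\theta}}'(\sqrt{2})}{h_{V_{\theta}}(\sqrt{2})}+\frac{1}{4\sqrt{2}}\bigg\}+\mathcal{O}(n^{-\frac{4}{3}})\bigg].
\end{equation*}
For $\zeta=\zeta^b(z;s_{n\theta})$ I would differentiate \eqref{a:20} once in $s_{n\theta}$ using the chain rule, obtaining
$\partial_{s_{n\theta}}\zeta^b=\pi h_{V_{\theta}}(z+s_{n\theta})\big((z+s_{n\theta})^2-2\big)^{\frac{1}{2}}(\zeta^b)^{-\frac{1}{2}}$, so that
\begin{equation*}
\zeta=\zeta^b(z)\big|_{s=0}\bigg[1+\frac{s\,\pi h_{V_{\theta}}(z)(z^2-2)^{\frac{1}{2}}}{\tau_{\theta}^{\frac{2}{3}}n^{\frac{2}{3}}}\big(\zeta^b(z)\big|_{s=0}\big)^{-\frac{3}{2}}+\mathcal{O}(n^{-\frac{4}{3}})\bigg].
\end{equation*}
Plugging both expansions into $-\tfrac{1}{4}\zeta_0^2\zeta^{-\frac{1}{2}}-\tfrac{1}{12}\zeta_0^3\zeta^{-\frac{3}{2}}$ and keeping only the contributions of order $s^2n^{-\frac{4}{3}}$ and $s^3n^{-2}$ yields the stated $\vartheta_2,\vartheta_3$ after collecting like terms, with all other orders absorbed into the $\mathcal{O}(n^{-\frac{8}{3}})$ remainder.

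The only genuine obstacle is arithmetic rather than conceptual: one must verify that the $s^3n^{-2}$ contribution from the binomial factor in $\zeta_0^2$ combines with the Taylor correction to $\zeta^{-\frac{1}{2}}$ and with the leading piece $-\frac{1}{12}\zeta_0^3\zeta^{-\frac{3}{2}}$ to produce exactly the two bracketed pieces of $\vartheta_3$ — the $\tau_{\theta}^{-\frac{2}{3}}\pi h_{V_{\theta}}(z)(z^2-2)^{\frac{1}{2}}(\zeta^b(z))^{-2}$ contribution and the $\tau_{\theta}^{-\frac{2}{3}}\{h_{V_{\theta}}'(\sqrt{2})/h_{V_{\theta}}(\sqrt{2})+(4\sqrt{2})^{-1}\}(\zeta^b(z))^{-\frac{1}{2}}$ contribution. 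Uniformity in $\theta\in[0,1]$ and pointwise in $s\in\mathbb{R}$ is automatic because Proposition \ref{conf1}'s expansions are themselves uniform in $\theta$, and the binomial remainders only involve $|\zeta_0/\zeta|$ which stays bounded by a $\theta$-independent constant times $n^{-\frac{2}{3}}$ on the prescribed annulus.
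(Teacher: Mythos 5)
Your proof is correct and takes essentially the same route as the paper: the paper's one-line justification is precisely a Taylor expansion in $s$ at $s=0$ exploiting $\zeta^b(\sqrt{2})\big|_{s=0}=0$, which is exactly your $\zeta_0$. Your two-stage organisation (binomial expansion in $\zeta_0/\zeta$ first, then re-expansion of $\zeta_0$ and $\zeta$ in $s_{n\theta}$) just makes the linear cancellation at order $\zeta_0$ and the resulting $-\tfrac14\zeta_0^2\zeta^{-1/2}-\tfrac1{12}\zeta_0^3\zeta^{-3/2}$ structure explicit, and the final bookkeeping recovering $\vartheta_2,\vartheta_3$ (including the factor $\tfrac{1}{8}=\tfrac{3}{2}\cdot\tfrac{1}{12}$ and the $-\tfrac{1}{5}\tau_\theta^{-2/3}$ piece) is verified correct.
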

\begin{proof} We note that $\mathbb{R}\ni s\mapsto\vartheta(z;s_{n\theta})$ is real analytic so \eqref{nasty2} follows from straightforward, albeit somewhat tedious, Taylor expansion at $s=0$, using $\zeta^b(\sqrt{2})|_{s=0}=0$.
\end{proof}
The exponential factor in \eqref{nasty1} is to be conjugated with $P(z)$, so we are in need of the following result.
\begin{lem} Let $s\in\mathbb{R},\theta\in[0,1]$ and $n\geq n_0$ so $\sqrt{2}-s_{n\theta}\in\mathbb{D}_{\epsilon}(\sqrt{2})$ for $\epsilon>0$ small. Then, as $n\rightarrow\infty$,
\begin{equation}\label{nasty3}
	P(z)\e^{-n\vartheta(z)\sigma_3}P(z)^{-1}=I+\sum_{k=1}^3D_k(z;s,n,\theta)n^{-\frac{k}{3}}+\mathcal{O}\big(n^{-\frac{4}{3}}\big)
\end{equation}
uniformly in $0<r_1\leq|z-\sqrt{2}|\leq r_2<\frac{\epsilon}{2},\theta\in[0,1]$ and pointwise in $(s,\alpha)\in\mathbb{R}\times(-1,\infty)$.  The coefficients $D_k(z)=D_k(z;s,n,\theta),k\in\{1,2,3\}$ in \eqref{nasty3} are bounded in $n$ and they equal
\begin{align*}
	D_1(z)=&\,-\frac{s^2}{2}\vartheta_2(z;\theta)\chi^{\sigma_3}\begin{bmatrix}\nu(z)^2+\nu(z)^{-2} & \im\big(\nu(z)^2-\nu(z)^{-2}\big)\smallskip\\
	\im\big(\nu(z)^2-\nu(z)^{-2}\big) & -\nu(z)^2-\nu(z)^{-2}\end{bmatrix}\chi^{-\sigma_3},\\
	D_2(z)=&\,\frac{s^4}{2}\vartheta_2^2(z;\theta)\begin{bmatrix}1&0\\ 0&1\end{bmatrix},\\
	D_3(z)=&\,-\frac{s^3}{12}\Big(6\vartheta_3(z;\theta)+\vartheta_2^3(z;\theta)s^3\Big)\chi^{\sigma_3}\begin{bmatrix}\nu(z)^2+\nu(z)^{-2} & \im\big(\nu(z)^2-\nu(z)^{-2}\big)\smallskip\\
	\im\big(\nu(z)^2-\nu(z)^{-2}\big) & -\nu(z)^2-\nu(z)^{-2}\end{bmatrix}\chi^{-\sigma_3}.
\end{align*}
\end{lem}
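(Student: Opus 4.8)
The plan is to combine the expansion \eqref{nasty2} for $\vartheta(z)$ with the explicit form of $P(z)$ in \eqref{a:19}. First I would feed \eqref{nasty2} into $-n\vartheta(z)\sigma_3$, which yields
\begin{equation*}
	n\vartheta(z)=\vartheta_2(z;\theta)s^2n^{-\frac{1}{3}}+\vartheta_3(z;\theta)s^3n^{-1}+\mathcal{O}\big(n^{-\frac{5}{3}}\big),
\end{equation*}
uniformly on $0<r_1\leq|z-\sqrt{2}|\leq r_2<\frac{\epsilon}{2}$ and in $\theta\in[0,1]$. Because $\sigma_3^2=I$, the matrix exponential collapses to $\e^{-n\vartheta(z)\sigma_3}=\cosh\big(n\vartheta(z)\big)I-\sinh\big(n\vartheta(z)\big)\sigma_3$, and Taylor-expanding the hyperbolic functions in the small quantity $n\vartheta(z)=\mathcal{O}(n^{-1/3})$ I get
\begin{equation*}
	\e^{-n\vartheta(z)\sigma_3}=\Big(1+\tfrac{s^4}{2}\vartheta_2^2\,n^{-\frac{2}{3}}\Big)I-\Big(s^2\vartheta_2\,n^{-\frac{1}{3}}+\big(s^3\vartheta_3+\tfrac{1}{6}s^6\vartheta_2^3\big)n^{-1}\Big)\sigma_3+\mathcal{O}\big(n^{-\frac{4}{3}}\big),
\end{equation*}
where the $n^{-2/3}$ term is the leading contribution of $\cosh$, the $n^{-1/3}$ term the leading contribution of $\sinh$, and the $n^{-1}$ term collects $\vartheta_3$ from $\sinh$ together with the cube $\tfrac16(n\vartheta)^3$; all omitted terms of the series, as well as the error in \eqref{nasty2}, are $\mathcal{O}(n^{-4/3})$ with the stated uniformity.

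Second, I would conjugate this expansion by $P(z)$. Conjugation $X\mapsto P(z)XP(z)^{-1}$ fixes multiples of $I$, so only the $\sigma_3$-part is affected and becomes $P(z)\sigma_3P(z)^{-1}$. In \eqref{a:19} the rightmost factor $\mathcal{D}(z)^{-\sigma_3}$ is diagonal, hence it commutes with $\sigma_3$ up to cancellation in the conjugation, and conjugation is insensitive to the scalar prefactor $\tfrac12$; therefore
\begin{equation*}
	P(z)\sigma_3P(z)^{-1}=\chi^{\sigma_3}\,\mathcal{N}(z)\,\sigma_3\,\mathcal{N}(z)^{-1}\,\chi^{-\sigma_3},\qquad \mathcal{N}(z):=\begin{bmatrix}\nu(z)+\nu(z)^{-1}&-\im\big(\nu(z)-\nu(z)^{-1}\big)\smallskip\\ \im\big(\nu(z)-\nu(z)^{-1}\big)&\nu(z)+\nu(z)^{-1}\end{bmatrix}.
\end{equation*}
A direct $2\times2$ computation, using $\det\mathcal{N}(z)=4$ together with $(\nu+\nu^{-1})^2+(\nu-\nu^{-1})^2=2(\nu^2+\nu^{-2})$ and $(\nu+\nu^{-1})(\nu-\nu^{-1})=\nu^2-\nu^{-2}$, yields
\begin{equation*}
	P(z)\sigma_3P(z)^{-1}=\tfrac12\,\chi^{\sigma_3}\begin{bmatrix}\nu(z)^2+\nu(z)^{-2}&\im\big(\nu(z)^2-\nu(z)^{-2}\big)\smallskip\\ \im\big(\nu(z)^2-\nu(z)^{-2}\big)&-\nu(z)^2-\nu(z)^{-2}\end{bmatrix}\chi^{-\sigma_3}.
\end{equation*}

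Finally, applying $P(z)(\cdot)P(z)^{-1}$ to the displayed expansion of $\e^{-n\vartheta(z)\sigma_3}$ and reading off the coefficients of $n^{-1/3},n^{-2/3},n^{-1}$ gives $D_1(z)=-s^2\vartheta_2\,P(z)\sigma_3P(z)^{-1}$, $D_2(z)=\tfrac{s^4}{2}\vartheta_2^2\,I$, and $D_3(z)=-\big(s^3\vartheta_3+\tfrac16 s^6\vartheta_2^3\big)P(z)\sigma_3P(z)^{-1}$; substituting the formula above for $P(z)\sigma_3P(z)^{-1}$ and using $-\tfrac{s^3}{12}\big(6\vartheta_3+\vartheta_2^3s^3\big)=-\tfrac{s^3}{2}\vartheta_3-\tfrac{s^6}{12}\vartheta_2^3$ reproduces exactly the three matrices in the statement. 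Boundedness of the $D_k(z)$ in $n$ is immediate, since $\nu(z),\mathcal{D}(z),\chi$ depend on $n$ only through $s_{n\theta}\to 0$ and converge to their $s=0$ limits uniformly on the annulus and in $\theta$, while the $\vartheta_k(z;\theta)$ are bounded there by the previous Lemma. No step is genuinely deep; the only delicate point is the bookkeeping of which terms of the exponential series feed into each order $n^{-k/3}$ with $k\le 3$, together with the verification that the tail of that series and the $\mathcal{O}(n^{-8/3})$ remainder in \eqref{nasty2} assemble into a genuinely uniform $\mathcal{O}(n^{-4/3})$ error, which follows from $n\vartheta(z)=\mathcal{O}(n^{-1/3})$ uniformly on the annulus.
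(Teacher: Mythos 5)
Your proof is correct and is essentially the paper's intended argument (the paper just says ``substitute \eqref{nasty2} and keep track of the terms''); you supply the bookkeeping. In particular your reduction $\e^{-n\vartheta\sigma_3}=\cosh(n\vartheta)I-\sinh(n\vartheta)\sigma_3$ with $n\vartheta=\vartheta_2s^2n^{-1/3}+\vartheta_3s^3n^{-1}+\mathcal{O}(n^{-5/3})$, the observation that $\mathcal{D}^{-\sigma_3}$ commutes through $\sigma_3$ so that $P\sigma_3P^{-1}=\tfrac12\chi^{\sigma_3}\bigl[\begin{smallmatrix}\nu^2+\nu^{-2}&\im(\nu^2-\nu^{-2})\\\im(\nu^2-\nu^{-2})&-(\nu^2+\nu^{-2})\end{smallmatrix}\bigr]\chi^{-\sigma_3}$ via $\det\mathcal{N}=4$, and the final identification $-\tfrac12\bigl(s^3\vartheta_3+\tfrac16 s^6\vartheta_2^3\bigr)=-\tfrac{s^3}{12}(6\vartheta_3+\vartheta_2^3s^3)$ all check out, as does the uniform $\mathcal{O}(n^{-4/3})$ error (dominated by the $u^2/2$ and $u^4/24$ contributions to $\cosh$).
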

\begin{proof} One substitutes \eqref{nasty2} into the left hand side of \eqref{nasty3} and keeps carefully track of the terms.
\end{proof}
Additionally, the below Laurent series expansion of $D_k(z)$ in \eqref{nasty3} turn out to be useful.
\begin{lem}\label{ugly} The coefficients $D_k(z)$ are meromorphic near $\omega=z-\sqrt{2}=0$ with Laurent expansions
\begin{align*}
	D_1&(z)=\frac{s^2}{4}\sqrt{1-\frac{\varkappa}{\sqrt{2}}}\frac{\pi h_{V_{\theta}}(\sqrt{2})}{\omega\tau_{\theta}^{4/3}}\sigma\Bigg\{\begin{bmatrix}1 & -1\\ 1 & -1\end{bmatrix}+\omega\textcolor{orange}{\Bigg(}\Bigg(\frac{1}{2(\sqrt{2}-\varkappa)}-\frac{1}{5}\bigg\{\frac{h_{V_{\theta}}'(\sqrt{2})}{h_{V_{\theta}}(\sqrt{2})}+\frac{1}{4\sqrt{2}}\bigg\}\Bigg)\begin{bmatrix}1 & -1\\ 1 & -1\end{bmatrix}\\
	&\hspace{1.4cm}+\frac{1}{\sqrt{2}-\varkappa}\,\begin{bmatrix}1 & 1\\ -1 & -1\end{bmatrix}\textcolor{orange}{\Bigg)}+\omega^2\textcolor{blue}{\Bigg(}-\frac{1}{8(\sqrt{2}-\varkappa)^2}\begin{bmatrix}1&-1\\ 1&-1\end{bmatrix}-\frac{1}{2(\sqrt{2}-\varkappa)^2}\begin{bmatrix}1&1\\ -1&-1\end{bmatrix}\\
	&\hspace{1.4cm}+\Bigg(-\frac{1}{7}\bigg\{\frac{h_{V_{\theta}}''(\sqrt{2})}{2h_{V_{\theta}}(\sqrt{2})}+\frac{1}{4\sqrt{2}}\frac{h_{V_{\theta}}'(\sqrt{2})}{h_{V_{\theta}}(\sqrt{2})}-\frac{1}{64}\bigg\}+\frac{2}{25}\bigg\{\frac{h_{V_{\theta}}'(\sqrt{2})}{h_{V_{\theta}}(\sqrt{2})}+\frac{1}{4\sqrt{2}}\bigg\}^2\Bigg)\begin{bmatrix}1&-1\\ 1&-1\end{bmatrix}\\
	&\hspace{1.4cm}-\frac{1}{5}\bigg\{\frac{h_{V_{\theta}}'(\sqrt{2})}{h_{V_{\theta}}(\sqrt{2})}+\frac{1}{4\sqrt{2}}\bigg\}\Bigg(\frac{1}{2(\sqrt{2}-\varkappa)}\begin{bmatrix}1&-1\\ 1&-1\end{bmatrix}+\frac{1}{\sqrt{2}-\varkappa}\begin{bmatrix}1&1\\ -1&-1\end{bmatrix}\Bigg)\textcolor{blue}{\Bigg)}+\mathcal{O}\big(\omega^3\big)\Bigg\}\sigma^{-1};\\
	D_2&(z)=\frac{s^4}{8\sqrt{2}}\frac{(\pi h_{V_{\theta}}(\sqrt{2}))^2}{\omega\tau_{\theta}^{8/3}}\Bigg\{\begin{bmatrix}1 & 0\\ 0 & 1\end{bmatrix}-\frac{2\omega}{5}\bigg\{\frac{h_{V_{\theta}}'(\sqrt{2})}{h_{V_{\theta}}(\sqrt{2})}+\frac{1}{4\sqrt{2}}\bigg\}\begin{bmatrix}1 & 0\\ 0 & 1\end{bmatrix}-\omega^2\Bigg(\frac{2}{7}\bigg\{\frac{h_{V_{\theta}}''(\sqrt{2})}{2h_{V_{\theta}}(\sqrt{2})}\\
	&\hspace{1.4cm}+\frac{1}{4\sqrt{2}}\frac{h_{V_{\theta}}'(\sqrt{2})}{h_{V_{\theta}}(\sqrt{2})}-\frac{1}{64}\bigg\}-\frac{1}{5}\bigg\{\frac{h_{V_{\theta}}'(\sqrt{2})}{h_{V_{\theta}}(\sqrt{2})}+\frac{1}{4\sqrt{2}}\bigg\}^2\Bigg)\begin{bmatrix}1&0\\0&1\end{bmatrix}+\mathcal{O}\big(\omega^3\big)\Bigg\};\\
	D_3&(z)=-\frac{s^3}{24}\sqrt{1-\frac{\varkappa}{\sqrt{2}}}\frac{\pi h_{V_{\theta}}(\sqrt{2})}{(\omega\tau_{\theta})^2}\textcolor{magenta}{\Bigg\{}1-3\bigg\{\frac{h_{V_{\theta}}'(\sqrt{2})}{h_{V_{\theta}}(\sqrt{2})}+\frac{1}{4\sqrt{2}}\bigg\}\omega-\frac{s^3}{16}\textcolor{blue}{\Bigg(}1-\frac{3}{5}\bigg\{\frac{h_{V_{\theta}}'(\sqrt{2})}{h_{V_{\theta}}(\sqrt{2})}+\frac{1}{4\sqrt{2}}\bigg\}\omega\textcolor{blue}{\Bigg)}\\
	&+\mathcal{O}\big(\omega^2\big)\textcolor{magenta}{\Bigg\}}\sigma\Bigg\{\begin{bmatrix}1&-1\\ 1&-1\end{bmatrix}+\omega\textcolor{orange}{\Bigg(}\frac{1}{2(\sqrt{2}-\varkappa)}\begin{bmatrix}1&-1\\ 1&-1\end{bmatrix}+\frac{1}{\sqrt{2}-\varkappa}\begin{bmatrix}1&1\\ -1&-1\end{bmatrix}\textcolor{orange}{\Bigg)}+\mathcal{O}\big(\omega^2\big)\Bigg\}\sigma^{-1},
\end{align*}
that hold for $0<|\omega|<\epsilon$ with $\epsilon>0$ sufficiently small. Here, $\sigma=\chi^{\sigma_3}\e^{\im\frac{\pi}{4}\sigma_3}$ and $\varkappa=-\sqrt{2}-s_{n\theta}$.

\end{lem}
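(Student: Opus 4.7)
The plan is purely computational: substitute the closed-form expressions for $D_1(z),D_2(z),D_3(z)$ from equation~\eqref{nasty3} into Taylor/Laurent series at $\omega:=z-\sqrt{2}=0$ and match powers. The only inputs needed are (a) the local expansions of $\nu(z)^{\pm 2}$ about $z=\sqrt{2}$, (b) the local expansions of $\vartheta_2(z;\theta)$ and $\vartheta_3(z;\theta)$, which reduce via the definitions right after \eqref{nasty2} to local expansions of $\zeta^b(z)|_{s=0}$, and (c) the algebraic identity $\sigma=\chi^{\sigma_3}\e^{\im\pi/4\sigma_3}$ to convert the $\chi^{\sigma_3}$-conjugated matrices appearing in $D_k$ into the $\sigma$-conjugated basis used in the statement.

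For (a), I would use $\nu(z)^{\pm 2}=(\omega/(\omega+\sqrt{2}-\varkappa))^{\pm 1/2}$, so
\[
	\nu(z)^{\pm 2}=\left(\frac{\omega}{\sqrt{2}-\varkappa}\right)^{\pm\frac{1}{2}}\left(1+\frac{\omega}{\sqrt{2}-\varkappa}\right)^{\mp\frac{1}{2}},
\]
and the second factor is analytic at $\omega=0$; binomial expansion yields the series up to any desired order. The combinations $\nu^2\pm\nu^{-2}$ combine terms of like signed half-powers, which after a short check reproduce exactly the two matrix skeletons $\bigl[\begin{smallmatrix}1&-1\\1&-1\end{smallmatrix}\bigr]$ and $\bigl[\begin{smallmatrix}1&1\\-1&-1\end{smallmatrix}\bigr]$ (modulo the $\chi^{\sigma_3}\e^{\im\pi/4\sigma_3}=\sigma$ conjugation already visible in $D_1$). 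For (b), Proposition~\ref{conf1} specialized to $s=0$ gives
\[
	\zeta^b(z)\big|_{s=0}=\sqrt{2}\big(\pi h_{V_\theta}(\sqrt{2})\big)^{\frac{2}{3}}\omega\bigg[1+\zeta_2^b\omega+\zeta_3^b\omega^2+\mathcal{O}(\omega^3)\bigg]\bigg|_{s=0},
\]
so $(\zeta^b(z))^{-k/2}|_{s=0}$ factors as $\omega^{-k/2}$ times an analytic series, and inverse binomial expansion produces the required expansions for $\vartheta_2=-\tfrac14(\zeta^b)^{-1/2}$ and for the two pieces of $\vartheta_3$.

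Step (c) is the actual accounting: multiply the Laurent pieces of $\vartheta_j$ (carrying $\omega^{-j/2}$) against those of $\nu^{\pm 2}$ (carrying $\omega^{\pm 1/2}$), collect into integer powers of $\omega$, and read off the coefficients. Note that $D_1$ and $D_3$ have the same $\chi^{\sigma_3}\{\nu^2+\nu^{-2},\,\im(\nu^2-\nu^{-2})\}\chi^{-\sigma_3}$ skeleton, so their matrix structures come from the same algebra; only the prefactors $\vartheta_2$ versus $-\tfrac{1}{12}(6\vartheta_3+s^3\vartheta_2^3)$ distinguish them. For $D_2$ one has $\vartheta_2^2\propto\omega^{-1}\cdot(\text{analytic})$, and the $\omega^{-1}$ coefficient is $(16\sqrt{2})^{-1}(\pi h_{V_\theta}(\sqrt{2}))^{-4/3}$ with higher-order corrections inherited from the $\zeta_2^b,\zeta_3^b$ pieces above; this directly matches the claimed $D_2$ expansion in the Lemma.

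The only genuine obstacle is bookkeeping: the expansions of $D_1$ and $D_3$ need to be pushed to $\mathcal{O}(\omega^2)$ and $\mathcal{O}(\omega)$ respectively, which requires $\zeta^b$ through $\zeta_3^b$ and three orders of the $\nu^{\pm 2}$ Taylor tails. Careful tracking of which $\omega$-power in $\vartheta_j$ pairs with which half-power in $\nu^{\pm 2}$ produces, in each entry, contributions proportional to $(\sqrt{2}-\varkappa)^{-k}$ and to $\{h_{V_\theta}'(\sqrt{2})/h_{V_\theta}(\sqrt{2})+1/(4\sqrt{2})\}$ or $\{h_{V_\theta}''(\sqrt{2})/(2h_{V_\theta}(\sqrt{2}))+\cdots\}$, matching term by term the right-hand sides in the Lemma. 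Since the entire calculation is an explicit expansion of elementary functions—no further RHP analysis is needed—once the ordering of contributions is fixed the verification is routine; I would organize it in a small table indexed by $(\vartheta_j\text{-coefficient},\nu^{\pm 2}\text{-coefficient})$ to avoid sign and power errors.
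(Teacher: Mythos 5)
Your approach is exactly the one the paper implicitly takes: Lemma~\ref{ugly} has no separate proof in the text because it is a straight Laurent expansion of the closed-form $D_k(z)$ from \eqref{nasty3} using the binomial series for $\nu(z)^{\pm 2}=(\omega/(\omega+\sqrt{2}-\varkappa))^{\pm 1/2}$ and the Puiseux expansion of $\zeta^b(z)|_{s=0}$ from Proposition~\ref{conf1}, followed by the conjugation algebra $\sigma\bigl[\begin{smallmatrix}1&-1\\1&-1\end{smallmatrix}\bigr]\sigma^{-1}=\chi^{\sigma_3}\bigl[\begin{smallmatrix}1&-\im\\-\im&-1\end{smallmatrix}\bigr]\chi^{-\sigma_3}$, and your proposal identifies all three ingredients and the correct bookkeeping strategy. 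One small slip worth correcting: since $\vartheta_2=-\tfrac14(\zeta^b)^{-1/2}|_{s=0}$ and $\zeta^b|_{s=0}\sim\sqrt{2}(\pi h_{V_\theta}(\sqrt{2}))^{2/3}\omega$, the $\omega^{-1}$ coefficient of $\vartheta_2^2$ is $(16\sqrt{2})^{-1}(\pi h_{V_\theta}(\sqrt{2}))^{-2/3}$, not $(\pi h_{V_\theta}(\sqrt{2}))^{-4/3}$; with $\tau_\theta^{8/3}=4(\pi h_{V_\theta}(\sqrt{2}))^{8/3}$ this does match the stated $D_2$ leading coefficient $\frac{s^4}{8\sqrt{2}}(\pi h_{V_\theta}(\sqrt{2}))^2\tau_\theta^{-8/3}=\frac{s^4}{32\sqrt{2}}(\pi h_{V_\theta}(\sqrt{2}))^{-2/3}$.
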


\section{The Painlev\'e-XXXIV model problem}\label{appPain}
The following RHP appeared seemingly first in \cite[Section $1.2$]{IKO}, up to a trivial scaling transformation. Further occurrences of similar model problems are in \cite[page $77$]{XY}, in \cite[Section $2$]{WXZ} and in \cite[Section $4.3$]{Y}. 

\begin{problem}\label{YattRHP} Let $x\in\mathbb{R},\alpha>-1$ and $\beta\in\mathbb{C}\setminus(-\infty,0)$. Find $Q(\zeta)=Q(\zeta;x,\alpha,\beta)\in\mathbb{C}^{2\times 2}$ with the following properties:
\begin{enumerate}
	\item[(1)] $\zeta\mapsto Q(\zeta)$ is analytic for $\zeta\in\mathbb{C}\setminus\Sigma_Q$ where $\Sigma_Q=\bigcup_{j=1}^4\Sigma_j$ with
	\begin{equation*}
		\Sigma_1:=(0,\infty)\subset\mathbb{R},\ \ \ \ \ \Sigma_2:=\e^{-\im\frac{\pi}{3}}(-\infty,0),\ \ \ \ \ \Sigma_3:=(-\infty,0)\subset\mathbb{R},\ \ \ \ \ \Sigma_4:=\e^{\im\frac{\pi}{3}}(-\infty,0),
	\end{equation*}
	is the oriented contour shown in Figure \ref{fig2}. On $\Sigma_Q\setminus\{0\}$, $Q(\zeta)$ admits continuous limiting values as we approach $\Sigma_Q\setminus\{0\}$ from either side of $\mathbb{C}\setminus\Sigma_Q$.
		\begin{figure}[tbh]
	\begin{tikzpicture}[xscale=0.9,yscale=0.9]
	\draw [thick, color=red, decoration={markings, mark=at position 0.25 with {\arrow{>}}}, decoration={markings, mark=at position 0.75 with {\arrow{>}}}, postaction={decorate}] (-3,0) -- (3,0);
	\draw [thick, color=red, decoration={markings, mark=at position 0.5 with {\arrow{>}}}, postaction={decorate}] (-1.5,2.598076212) -- (0,0);
	\draw [thick, color=red, decoration={markings, mark=at position 0.5 with {\arrow{>}}}, postaction={decorate}] (-1.5,-2.598076212) -- (0,0);
	\node [below] at (2.8,-0.15) {{\footnotesize $\Sigma_1$}};
	\node [below] at (-2.8,-0.15) {{\footnotesize $\Sigma_3$}};
	\node [right] at (-1.2,2.4) {{\footnotesize $\Sigma_2$}};
	\node [right] at (-1.2,-2.4) {{\footnotesize $\Sigma_4$}};
	\node [right] at (0.8,1.4) {{\footnotesize $\Omega_1$}};
	\node [right] at (0.8,-1.4) {{\footnotesize $\Omega_4$}};
	\node [right] at (-2.2,1.1) {{\footnotesize $\Omega_2$}};
	\node [right] at (-2.2,-1.1) {{\footnotesize $\Omega_3$}};
	\draw [fill, color=black] (-0.02,0) circle [radius=0.06];
\end{tikzpicture}
\caption{The oriented jump contour $\Sigma_Q$, shown in red, for the model function $Q(\zeta)$ in the complex $\zeta$-plane.}
\label{fig2}
\end{figure}
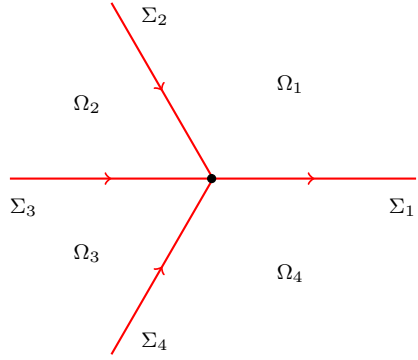
	\item[(2)] The limiting values $Q_{\pm}(\zeta)$ on $\Sigma_Q\setminus\{0\}$ satisfy $Q_+(\zeta)=Q_-(\zeta)G_Q(\zeta)$ with $G_Q(\zeta)=G_Q(\zeta;\alpha,\beta)$ given by
	\begin{equation*}
		G_Q(\zeta)=\begin{bmatrix}1&\beta\\ 0&1\end{bmatrix},\ \ \zeta\in\Sigma_1;\ \ \ \ G_Q(\zeta)=\begin{bmatrix}0&1\\ -1&0\end{bmatrix},\ \ \zeta\in\Sigma_3
	\end{equation*}
	\begin{equation*}
		G_Q(\zeta)=\begin{bmatrix}1&0\\ \e^{\im\pi\alpha}&1\end{bmatrix},\ \ \zeta\in\Sigma_2;\ \ \ \ G_Q(\zeta)=\begin{bmatrix}1&0\\ \e^{-\im\pi\alpha}&1\end{bmatrix},\ \ \zeta\in\Sigma_4
	\end{equation*}
	\item[(3)] Near $\zeta=0$, $\zeta\mapsto Q(\zeta)$ is weakly singular in that, with some $\zeta\mapsto\widehat{Q}(\zeta)=\widehat{Q}(\zeta;x,\alpha,\beta)$ analytic at $\zeta=0$ and non-vanishing, $Q(\zeta)$ is of the form
	\begin{equation*}
		Q(\zeta)=\widehat{Q}(\zeta)S(\zeta)\mathcal{M}_j,\ \ \ \ \zeta\in\big(\Omega_j\cap\mathbb{D}_{\epsilon}(0)\big)\setminus\{0\}.
	\end{equation*}
	Here, in case $\alpha\notin\mathbb{Z}_{>-1}$, with principal branches cut on $(-\infty,0]\subset\mathbb{R}$ throughout,
	\begin{equation}\label{B:0}
		S(\zeta)=\zeta^{\frac{\alpha}{2}\sigma_3},\ \ \ \ \ \mathcal{M}_2=\begin{bmatrix}\displaystyle\frac{1}{2\cos(\frac{\pi}{2}\alpha)}\frac{1-\beta\e^{\im\pi\alpha}}{1-\e^{\im\pi\alpha}}& \displaystyle\frac{1}{2\cos(\frac{\pi\alpha}{2})}\frac{\beta-\e^{\im\pi\alpha}}{1-\e^{\im\pi\alpha}} \smallskip\\ \displaystyle -\e^{\im\frac{\pi}{2}\alpha} & \e^{-\im\frac{\pi}{2}\alpha}\end{bmatrix},
	\end{equation}
	followed by
	\begin{equation}\label{B:1}
		\mathcal{M}_1=\mathcal{M}_2\begin{bmatrix}1&0\\ \e^{\im\pi\alpha}&1\end{bmatrix},\hspace{1cm} \mathcal{M}_4=\mathcal{M}_1\begin{bmatrix}1&-\beta\\ 0&1\end{bmatrix},\hspace{1cm} \mathcal{M}_3=\mathcal{M}_4\begin{bmatrix}1&0\\ \e^{-\im\pi\alpha}&1\end{bmatrix}.
	\end{equation}
	If $\alpha\in\mathbb{Z}_{>-1}$, then \eqref{B:1} remain formally unchanged however \eqref{B:0} changes to
	\begin{equation*}
		S(\zeta)=\zeta^{\frac{\alpha}{2}\sigma_3}\begin{bmatrix}1&\frac{1-\beta}{2\pi\im}\ln\zeta\\ 0 & 1\end{bmatrix},\ \ \ \ \ \mathcal{M}_2=\begin{bmatrix}1/2&1/2\smallskip\\ -1&1\end{bmatrix}\e^{\im\frac{\pi}{2}\alpha\sigma_3}\ \ \ \textnormal{if}\ \alpha\equiv 0\mod 2;
	\end{equation*}
	\begin{equation*}
		S(\zeta)=\zeta^{\frac{\alpha}{2}\sigma_3}\begin{bmatrix}1&\frac{1+\beta}{2\pi\im}\ln\zeta\\ 0&1\end{bmatrix},\ \ \ \ \ \mathcal{M}_2=\begin{bmatrix}0&1\smallskip\\  -1& 1\end{bmatrix}\e^{\im\frac{\pi}{2}\alpha\sigma_3}\ \ \ \textnormal{if}\ \alpha\equiv 1\mod 2.
	\end{equation*}
	\item[(4)] As $\zeta\rightarrow\infty$ and $\zeta\notin\Sigma_Q$, $Q(\zeta)$ is normalized as follows,
	\begin{equation*}
		Q(\zeta)=\Bigg\{I+\sum_{k=1}^2Q_k(x,\alpha,\beta)\zeta^{-k}+\mathcal{O}\big(\zeta^{-3}\big)\Bigg\}\zeta^{-\frac{1}{4}\sigma_3}\frac{1}{\sqrt{2}}\begin{bmatrix}1&1\\ -1&1\end{bmatrix}\e^{-\im\frac{\pi}{4}\sigma_3}\e^{-\varpi(\zeta;x)\sigma_3},
	\end{equation*}
	with $Q_k(x,\alpha,\beta)$ independent of $\zeta$, the function $\varpi(\zeta;x):=\frac{2}{3}\zeta^{\frac{3}{2}}+x\zeta^{\frac{1}{2}}$ defined on $\mathbb{C}\setminus(-\infty,0]$, and principal branches used throughout.
\end{enumerate}
\end{problem}
The model problem \ref{YattRHP}, for the chosen parameters $(x,\alpha,\beta)$, is uniquely solvable, cf. \cite[Theorem $4.1$]{Y} and we will be interested in the asymptotic behavior of its solution $Q(\zeta;x,\alpha,\beta)$ as $x\rightarrow+\infty$ and in its connection to Painlev\'e special function theory. Both topics have been studied before and we will be brief in our presentation below:
\subsection{Connection to Painlev\'e special function theory} Seeing that $G_Q(\zeta)$ in condition $(2)$ of RHP \ref{YattRHP} is piecewise constant and independent of $x$, an application of Liouville's theorem yields the below Lax system, see also \cite[Proposition $2$]{WXZ}.

\begin{lem}\label{ArnoLax} Suppose $Q(\zeta)=Q(\zeta;x,\alpha,\beta)\in\mathbb{C}^{2\times 2}$ solves RHP \ref{YattRHP} for $x\in\mathbb{R},\alpha>-1,\beta\notin(-\infty,0)$. Then
\begin{align*}
	\frac{\partial Q}{\partial\zeta}(\zeta)=&\,\,\left\{\zeta\begin{bmatrix}0 & 0\\ 1 & 0\end{bmatrix}+\begin{bmatrix}a & 1\\ \frac{x}{2}+b & -a\end{bmatrix}+\frac{1}{\zeta}\begin{bmatrix}p & q\\ r & -p\end{bmatrix}\right\}Q(\zeta)\equiv A(\zeta)Q(\zeta),\\
	\frac{\partial Q}{\partial x}(\zeta)=&\,\,\left\{\zeta\begin{bmatrix}0 & 0\\ 1 & 0\end{bmatrix}+\begin{bmatrix}a & 1\\ b & -a\end{bmatrix}\right\}Q(\zeta)\equiv B(\zeta)Q(\zeta),
\end{align*}
where $a,b,p,q,r$ are functions of $(x,\alpha,\beta)$ given in terms of the data in RHP \ref{YattRHP},
\begin{align*}
	a:=Q_1^{12},\ \ \ b:=&\,\,Q_1^{22}-Q_1^{11},\ \ \ p:=\frac{\alpha}{2}\Big(\widehat{Q}^{11}(0)\widehat{Q}^{22}(0)+\widehat{Q}^{12}(0)\widehat{Q}^{21}(0)\Big),\\
	q:=&\,-\alpha\widehat{Q}^{11}(0)\widehat{Q}^{12}(0),\ \ \ \ \ \ \ \,r:=\alpha\widehat{Q}^{21}(0)\widehat{Q}^{22}(0).
\end{align*}
Those functions satisfy the constraints
\begin{equation*}
	q=\frac{x}{2}-b-a^2,\ \ \ \ \ \frac{\d a}{\d x}=a^2+b,\ \ \ \ \ 4\big(p^2+rq\big)=\alpha^2,
\end{equation*}
and the compatibility of $\frac{\partial Q}{\partial\zeta}=AQ,\frac{\partial Q}{\partial x}=BQ$ yields the coupled ODE system
\begin{equation*}
	\frac{\d a}{\d x}=\frac{x}{2}-q,\ \ \ \frac{\d b}{\d x}=\frac{1}{2}-xa+2p,\ \ \ \frac{\d p}{\d x}=r-qb,\ \ \ \frac{\d q}{\d x}=2(aq-p),\ \ \ \frac{\d r}{\d x}=2(bp-ar).
\end{equation*}
In particular, $q$ solves the Painlev\'e-XXXIV equation
\begin{equation}\label{B:2}
	\frac{\d^2q}{\d x^2}=\frac{1}{2q}\bigg(\frac{\d q}{\d x}\bigg)^2+2xq-4q^2-\frac{\alpha^2}{2q},\ \ \ \ x\in\mathbb{R},
\end{equation}
and $\sigma=\sigma(x,\alpha,\beta)$ defined as $\sigma:=\frac{1}{4}x^2-a$ solves the Jimbo-Miwa-Okamoto sigma-Painlev\'e-II equation
\begin{equation}\label{B:3}
	\bigg(\frac{\d^2\sigma}{\d x^2}\bigg)^2+4\frac{\d\sigma}{\d x}\Bigg(\bigg(\frac{\d\sigma}{\d x}\bigg)^2-x\frac{\d\sigma}{\d x}+\sigma\Bigg)=\alpha^2,\ \ \ \ x\in\mathbb{R}.
\end{equation}
\end{lem}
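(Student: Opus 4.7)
The derivation follows the classical isomonodromy template. The decisive feature of RHP \ref{YattRHP} is that its jump matrix $G_Q(\zeta;\alpha,\beta)$ in condition $(2)$ is piecewise constant and independent of both $\zeta$ and $x$. Consequently the logarithmic derivatives
\begin{equation*}
A(\zeta) := \frac{\partial Q}{\partial\zeta}(\zeta)\,Q(\zeta)^{-1},\qquad B(\zeta) := \frac{\partial Q}{\partial x}(\zeta)\,Q(\zeta)^{-1}
\end{equation*}
have no jumps across $\Sigma_Q\setminus\{0\}$ and therefore extend to meromorphic functions on $\mathbb{C}$ whose only possible singularities lie at $\zeta\in\{0,\infty\}$. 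The analytic problem is thereby reduced to reading off Laurent coefficients from the two available local descriptions of $Q$.

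Second, I would carry out the local analysis at both singularities. Writing condition $(4)$ in the compact form $Q(\zeta)=(I+Q_1\zeta^{-1}+Q_2\zeta^{-2}+O(\zeta^{-3}))\hat R(\zeta)\e^{-\varpi(\zeta;x)\sigma_3}$ with the constant-rotation factor $\hat R(\zeta):=\zeta^{-\sigma_3/4}\frac{1}{\sqrt 2}\bigl[\begin{smallmatrix}1&1\\-1&1\end{smallmatrix}\bigr]\e^{-\im\pi\sigma_3/4}$ and $\varpi'(\zeta;x)=\zeta^{1/2}+\tfrac{x}{2}\zeta^{-1/2}$, a direct computation shows $A(\zeta)=\zeta E+C+D\zeta^{-1}+O(\zeta^{-2})$ with $E=\bigl[\begin{smallmatrix}0&0\\1&0\end{smallmatrix}\bigr]$, while $B(\zeta)=\zeta E+\widetilde C$ is polynomial of degree one (the $\zeta^{-1}$-piece vanishes because $\partial_x$ annihilates $\hat R$ and the $\zeta$-powers). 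Matching $\zeta^0$ and $\zeta^{-1}$ coefficients identifies $a=Q_1^{12}$ and $b=Q_1^{22}-Q_1^{11}$ and forces the off-diagonal entries of $C$ to be $1$ and $x/2+b$; the $\zeta^{-1}$-matching in $A$ gives the constraint $q=\tfrac{x}{2}-b-a^2$. At $\zeta=0$, condition $(3)$ gives $Q=\hat Q(\zeta)S(\zeta)\mathcal M_j$ with $\hat Q$ analytic and non-vanishing, $\mathcal M_j$ independent of $(x,\zeta)$, and $S(\zeta)\sim \zeta^{\alpha\sigma_3/2}$, so $A$ has a simple pole with residue $\hat Q(0)\tfrac{\alpha}{2}\sigma_3\hat Q(0)^{-1}$, which reads off the formulas for $p,q,r$ in terms of $\widehat Q(0)$ and, via $(\tfrac{\alpha}{2}\sigma_3)^2=\tfrac{\alpha^2}{4}I$, yields $4(p^2+qr)=\alpha^2$. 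Comparing the $(1,1)$-entry of $B$ at order $\zeta^0$ with $\partial_x Q_1^{11}$ delivers $\tfrac{da}{dx}=a^2+b$. The logarithmic case $\alpha\in\mathbb Z_{\geq 0}$ is handled separately; one checks that the $\ln\zeta$-contribution cancels in $A$ because $S(\zeta)$ enters $Q$ and $Q^{-1}$ on opposite sides.

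Third, the zero-curvature condition $A_x-B_\zeta+[A,B]=0$, which is automatic from $Q_{\zeta x}=Q_{x\zeta}$, is a polynomial identity in $\zeta$. Equating coefficients of $\zeta^1,\zeta^0,\zeta^{-1}$ produces exactly the five ODEs for $(a,b,p,q,r)$ listed in the lemma. To extract Painlev\'e-XXXIV \eqref{B:2}, I would use $p=aq-\tfrac{q'}{2}$ (from $q'=2(aq-p)$) and $p^2=\tfrac{\alpha^2}{4}-qr$ to eliminate $p,r$ from the expression for $q''$ obtained by differentiating $q'=2aq-2p$ with the system; the residual $a,b$-dependence then cancels via $q=\tfrac{x}{2}-b-a^2$ and $\tfrac{da}{dx}=a^2+b$, and \eqref{B:2} emerges. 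For the Jimbo--Miwa--Okamoto $\sigma$-form \eqref{B:3}, the key observation is
\begin{equation*}
\sigma'(x)=\frac{x}{2}-\frac{da}{dx}=\frac{x}{2}-(a^2+b)=q(x),
\end{equation*}
so that sigma-PII reduces to $(q')^2=\alpha^2-4q^3+4xq^2-4q\sigma$. Combining this with \eqref{B:2} in the equivalent form $(q')^2=2qq''-4xq^2+8q^3+\alpha^2$ shows that sigma-PII is equivalent to the second-order identity $q''=4xq-6q^2-2\sigma$, which via $\sigma=\tfrac{x^2}{4}-a$ is in turn equivalent to the algebraic relation
\begin{equation*}
r=-\frac{xq}{2}-qb-2ap+\frac{x^2}{4}-a.
\end{equation*}
This last identity is verified either by matching the residue of $A$ at $\zeta=0$ one order deeper using $\widehat Q'(0)$, or more cleanly by observing that both sides of the proposed $r$-identity satisfy the same first-order linear ODE $Y'+2aY=2bp$ in $x$ (a short direct check using the system) and agree at $x\to+\infty$ via the standard Airy-type asymptotics of RHP \ref{YattRHP} (with Remark \ref{speccon} providing a sanity check at $(\alpha,\beta)=(0,1)$). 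The principal obstacle is the algebraic bookkeeping of this last elimination step --- no single manipulation is difficult, but the chain of substitutions from five ODEs plus three algebraic constraints down to the two scalar Painlev\'e equations is unavoidably long and must be tracked carefully.
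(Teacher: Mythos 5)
Your approach is the right one, and it matches what the paper itself does: the authors prove this Lemma by the one-line appeal to Liouville's theorem plus reference to \cite[Proposition~2]{WXZ}, which is exactly the template you lay out (jump independence of $(\zeta,x)$, Liouville, local matching at $0$ and $\infty$, zero curvature, elimination). Two remarks worth making.

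First, your treatment of the integer-$\alpha$ case is slightly too cavalier. You correctly note that the $\ln\zeta$ contribution cancels in $A=Q_\zeta Q^{-1}$, but the residue at $\zeta=0$ is not simply $\widehat Q(0)\tfrac{\alpha}{2}\sigma_3\widehat Q(0)^{-1}$ when $\alpha\in\mathbb{Z}_{\geq 0}$: writing $S(\zeta)=\zeta^{\alpha\sigma_3/2}T(\zeta)$ with $T$ unipotent, one has
\begin{equation*}
S'S^{-1}=\frac{\alpha}{2\zeta}\sigma_3+\begin{bmatrix}0 & c\,\zeta^{\alpha-1}\\ 0 & 0\end{bmatrix},
\end{equation*}
so at $\alpha=0$ the second piece contributes a nilpotent residue while the first piece vanishes. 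The formulas for $p,q,r$ in the Lemma then only make literal sense for non-integer $\alpha$ and extend by continuity; your derivation should flag this (the constraint $4(p^2+rq)=\alpha^2$ does survive at $\alpha=0$ because the nilpotent matrix has $\det=0$, but the individual factorizations $q=-\alpha\widehat Q^{11}(0)\widehat Q^{12}(0)$, etc., do not).

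Second, the closing step for sigma-PII is more circuitous than it needs to be. You do not need the auxiliary $r$-identity nor the ODE-plus-asymptotics verification (which in any case would require a decay argument at $x\to+\infty$ that you have not supplied). From the system one reads off directly $p=aq-\tfrac{1}{2}q'$ and $r=qb+\tfrac{x}{2}q-q^2+aq'-\tfrac{1}{2}q''$; inserting these into $4(p^2+rq)=\alpha^2$ and simplifying with $a'=a^2+b=\tfrac{x}{2}-q$ gives \eqref{B:2} immediately, and then rewriting $4(p^2+rq)=\alpha^2$ as $(q')^2=\alpha^2-4a^2q^2+4aqq'-4rq$, eliminating $a^2$ via $a^2=\tfrac{x}{2}-q-b$ and $r$ as above, lands directly on $(\sigma'')^2+4\sigma'((\sigma')^2-x\sigma'+\sigma)=\alpha^2$ once one uses $\sigma'=q$. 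This makes the last paragraph considerably shorter and removes the need for the sanity check at $(\alpha,\beta)=(0,1)$.
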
	
\subsection{Large $x$-asymptotics} The same asymptotics have been partially worked out in \cite[Theorem $1.2$]{IKO2}, with further improvements found in \cite[Theorem $1$]{WXZ} and in \cite[$(4.2),(4.3)$]{Y}. None of those three references precisely match our needs for RHP \ref{YattRHP}, still the overall nonlinear steepest descent methodology for deriving large $x$-asymptotics in the same problem is well-known and we only summarize results:
\begin{lem}\label{ArnoAsy} Suppose $Q(\zeta)=Q(\zeta;x,\alpha,\beta)\in\mathbb{C}^{2\times 2}$ solves RHP \ref{YattRHP} for $x\in\mathbb{R},\alpha>-1,\beta\notin(-\infty,0)$. Then, as $x\rightarrow+\infty$, uniformly in $(a,\beta)\in(-1,\infty)\times(\mathbb{C}\setminus(-\infty,0))$ on compact sets,
\begin{align}
	Q_1^{12}(x,\alpha,\beta)=\frac{x^2}{4}+\alpha\sqrt{x}\,\,+&\,\sum_{k=1}^ma_kx^{-\frac{1}{2}(3k-1)}+\mathcal{O}\big(x^{-\frac{1}{2}(3m+2)}\big)\nonumber\\
	-&\,\,\big(\e^{\im\pi\alpha}-\beta\big)\frac{\Gamma(1+\alpha)}{2^{3(1+\alpha)}\pi}x^{-\frac{1}{2}(2+3\alpha)}\e^{-\frac{4}{3}x^{\frac{3}{2}}}\Big(1+\mathcal{O}\big(x^{-\frac{1}{4}}\big)\Big)\label{B:4}
\end{align}
with $m\in\mathbb{N}$ and coefficients 
\begin{equation*}
	a_1=\frac{\alpha^2}{4},\ \ \ a_2=-\frac{\alpha}{32}(1+4\alpha^2),\ \ \ a_3=\frac{\alpha^2}{64}(7+8\alpha^2).
\end{equation*}
The asymptotic \eqref{B:4} is differentiable with respect to $x$.
\end{lem}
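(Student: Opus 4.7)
\medskip

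The plan is to carry out a nonlinear steepest descent analysis of RHP~\ref{YattRHP} in the limit $x\to+\infty$, in the same spirit as the analyses of \cite{IKO,IKO2,WXZ,Y} but adapted to the current normalization and pushed to the order at which the exponentially small correction in \eqref{B:4} becomes visible. Since condition~$(4)$ already exhibits $\varpi(\zeta;x)=\tfrac{2}{3}\zeta^{3/2}+x\zeta^{1/2}$, this function plays the role of the $g$-function: its saddle point lies at $\zeta_{\ast}=-x/2$, which is the analogue of the soft edge in the Airy universality analysis, while $\zeta=0$ carries a Fisher--Hartwig type singularity of strength $\alpha$. I expect the polynomial tail $\tfrac{x^2}{4}+\alpha\sqrt{x}+\sum a_k x^{-(3k-1)/2}$ to arise from the outer parametrix of the deformed problem, and the exponentially small correction together with the $(\mathrm{e}^{\mathrm{i}\pi\alpha}-\beta)$ combination to emerge from a local parametrix at $\zeta=0$ of confluent hypergeometric type.

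First I would perform the normalization $\widetilde{Q}(\zeta):=Q(\zeta)\mathrm{e}^{\varpi(\zeta;x)\sigma_{3}}\zeta^{\sigma_{3}/4}\cdot C^{-1}$ (with the constant diagonalizing matrix $C$ read off from condition~$(4)$), so that $\widetilde{Q}$ is normalized to $I$ at infinity and has explicit $x$-dependence confined to the jump matrices. I would then open lenses along the Stokes curves of $\Re\varpi$ in the left half-plane, deforming $\Sigma_{2},\Sigma_{4}$ into regions where the triangular jumps become exponentially close to $I$ uniformly away from $\zeta\in\{0,-x/2\}$; on $(0,\infty)$ the jump acquires the factor $\beta\mathrm{e}^{-2\varpi(\zeta)}$ which is exponentially suppressed away from $\zeta=0$. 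Next, one builds three parametrices: an \emph{outer} parametrix using Szeg\H{o} factors involving $\zeta^{\alpha\sigma_{3}/2}$ and a branch cut at $(-\infty,-x/2]$, a local \emph{Airy parametrix} at the saddle $\zeta_{\ast}=-x/2$ (rescaled variable $x^{2/3}(\zeta-\zeta_{\ast})$ of size $x^{-2/3}$), and a local \emph{Kummer/confluent hypergeometric parametrix} at $\zeta=0$ in the rescaled variable $\eta=x^{2}\zeta$, which simultaneously handles the Fisher--Hartwig singularity and the residual $\Sigma_{1}$ jump. The final ratio $R(\zeta):=\widetilde{Q}(\zeta)P(\zeta)^{-1}$ satisfies a small-norm RHP with $\|G_{R}-I\|_{L^{2}\cap L^{\infty}}=\mathcal{O}(x^{-3/2})$ on the mismatch circles, solvable by a Neumann iteration of its singular integral equation; expanding $R$ and the outer parametrix at $\zeta=\infty$ and reading off the $(1,2)$-entry of the first coefficient $Q_{1}$ then produces the polynomial part of \eqref{B:4}, with the coefficients $a_{k}$ obtained by iterating the Cauchy integral against $(G_R-I)$ and organizing the resulting rational functions of $(\zeta+x/2)$.

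The main obstacle will be capturing the exponentially small correction with its precise constant $(\mathrm{e}^{\mathrm{i}\pi\alpha}-\beta)\tfrac{\Gamma(1+\alpha)}{2^{3(1+\alpha)}\pi}x^{-(2+3\alpha)/2}\mathrm{e}^{-4 x^{3/2}/3}$, since standard small-norm analysis around an Airy parametrix only delivers polynomial error control. The exponentially small term lives beyond all orders of the small-norm expansion and must be extracted from the Kummer parametrix at $\zeta=0$: this parametrix carries the $\beta$-dependence of the $(0,\infty)$ jump and the $\mathrm{e}^{\pm\mathrm{i}\pi\alpha}$-dependence of $\Sigma_{2,4}$, and their interplay produces exactly the combination $(\mathrm{e}^{\mathrm{i}\pi\alpha}-\beta)$. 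One computes the matching of this parametrix to the outer one by evaluating a Kummer--$U$ function against the outer solution and extracting the subleading term via its known integral representation
\[
	\int_{0}^{\infty}t^{\alpha}\mathrm{e}^{-2\varpi(t;x)}\,\mathrm{d}t,
\]
whose stationary analysis (the $t\to\infty$ tail of $-2\varpi(t;x)$ combined with the $t^{\alpha}$ density at the origin) yields the $\Gamma(1+\alpha)$ prefactor and the $x^{-(2+3\alpha)/2}$ algebraic rate, while the "tunneling" exponent $\mathrm{e}^{-4 x^{3/2}/3}$ is the squared Airy action $2\cdot\tfrac{2}{3}x^{3/2}$ reflecting the connection with $u_{\alpha}^{2}$ mentioned in the Introduction. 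Differentiability of the expansion in $x$ follows by differentiating the small-norm identity and reusing the same uniform estimates; as a sanity check one can verify compatibility with the ODEs of Lemma~\ref{ArnoLax} order by order, which rigidly determines all the coefficients $a_{k}$ and is independent of $\beta$, matching the $\beta$-freedom of the polynomial tail in \eqref{B:4}.
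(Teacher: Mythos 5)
The paper does not actually prove Lemma~\ref{ArnoAsy} at all: it states it as a "well-known" consequence of the nonlinear steepest descent machinery of \cite{IKO2,WXZ,Y}, remarking only that "none of those three references precisely match our needs" and that the methodology is standard. So there is no in-paper argument to compare against line by line; your proposal is a from-scratch outline where the paper instead points to the literature. Structurally your outline is the correct methodology (normalization by $\e^{\varpi\sigma_3}\zeta^{\sigma_3/4}C^{-1}$, lens opening, outer parametrix with $\zeta^{\alpha\sigma_3/2}$ Szeg\H{o} factor, Airy parametrix at the turning point $\zeta_\ast=-x/2$, a local parametrix at $\zeta=0$ capturing both the Fisher--Hartwig singularity and the $\beta$-jump on $\Sigma_1$, small-norm iteration), and your final remark that compatibility with the Lax system of Lemma~\ref{ArnoLax} rigidly determines the $\beta$-independent tail coefficients $a_k$ is exactly the right sanity check.

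However, the specific mechanism you propose for the exponentially small correction is wrong. You claim that
\[
	\int_0^\infty t^\alpha\,\mathrm{e}^{-2\varpi(t;x)}\,\mathrm{d}t
\]
yields, via "stationary analysis (the $t\to\infty$ tail of $-2\varpi(t;x)$ combined with the $t^\alpha$ density at the origin)", the $\Gamma(1+\alpha)$ prefactor and $x^{-(2+3\alpha)/2}$ rate. This is not what that integral does. Since $-2\varpi(t;x)=-\tfrac{4}{3}t^{3/2}-2xt^{1/2}$ is strictly decreasing on $(0,\infty)$ and maximized at $t=0$, the Laplace asymptotic is governed by $t\to 0^+$, \emph{not} by any $t\to\infty$ tail; substituting $u=t^{1/2}$ one finds
\[
	\int_0^\infty t^\alpha\,\mathrm{e}^{-2\varpi(t;x)}\,\mathrm{d}t\;=\;2\int_0^\infty u^{2\alpha+1}\,\mathrm{e}^{-\frac{4}{3}u^3-2xu}\,\mathrm{d}u\;\sim\;\frac{\Gamma(2\alpha+2)}{2^{2\alpha+1}\,x^{\,2\alpha+2}},\qquad x\to+\infty,
\]
with a purely algebraic rate $x^{-2\alpha-2}$ (not $x^{-(2+3\alpha)/2}$), the wrong Gamma factor ($\Gamma(2\alpha+2)$, not $\Gamma(1+\alpha)$), and, crucially, \emph{no} exponential suppression whatsoever. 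The factor $\mathrm{e}^{-\frac{4}{3}x^{3/2}}$ in \eqref{B:4} is the squared Airy decay coming from the turning point at $\zeta_\ast=-x/2$ (equivalently, after the substitution $\zeta=w^2$ the phase becomes the Painlev\'e-II phase $\tfrac{2}{3}w^3+xw$ with imaginary saddles at $w=\pm\mathrm{i}\sqrt{x/2}$), and the precise constant $(\mathrm{e}^{\mathrm{i}\pi\alpha}-\beta)\tfrac{\Gamma(1+\alpha)}{2^{3(1+\alpha)}\pi}$ requires a genuine "beyond-all-orders" matching between the Airy parametrix and the origin parametrix, as carried out in \cite{IKO2,WXZ}. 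The Gamma factor then enters through the small-argument expansion of the Bessel/confluent-hypergeometric model function, not through a Laplace integral of $t^\alpha\mathrm{e}^{-2\varpi}$. Your outline should replace the heuristic integral with a reference to, or an adaptation of, the matched-asymptotics computation in \cite[\S 5]{IKO2} and \cite[\S 4]{WXZ}, which is precisely the part of the proof the paper itself defers to the literature.
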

The combination of Lemma \ref{ArnoLax} and \ref{ArnoAsy} yields the leading order large $x$-asymptotics for $a,q$ and $\sigma$.
\begin{cor} As $x\rightarrow+\infty$, uniformly in $(a,\beta)\in(-1,\infty)\times(\mathbb{C}\setminus(-\infty,0))$ on compact sets,
\begin{equation*}
	a(x,\alpha,\beta)=\frac{x^2}{4}+\alpha\sqrt{x}+\frac{\alpha^2}{4x}+\mathcal{O}\big(x^{-\frac{5}{2}}\big)-\big(\e^{\im\pi\alpha}-\beta\big)\frac{\Gamma(1+\alpha)}{2^{3(1+\alpha)}\pi}x^{-1-\frac{3}{2}\alpha}\e^{-\frac{4}{3}x^{\frac{3}{2}}}\Big(1+\mathcal{O}\big(x^{-\frac{1}{4}}\big)\Big),
\end{equation*}
as well as
\begin{equation*}
	q(x,\alpha,\beta)=-\frac{\alpha}{2\sqrt{x}}+\frac{\alpha^2}{4x^2}+\mathcal{O}\big(x^{-\frac{7}{2}}\big)-\big(\e^{\im\pi\alpha}-\beta\big)\frac{\Gamma(1+\alpha)}{2^{2+3\alpha}\pi}x^{-\frac{1}{2}(1+3\alpha)}\e^{-\frac{4}{3}x^{\frac{3}{2}}}\Big(1+\mathcal{O}\big(x^{-\frac{3}{2}}\big)\Big),
\end{equation*}
and lastly
\begin{equation*}
	\sigma(x,\alpha,\beta)=-\alpha\sqrt{x}-\frac{\alpha^2}{4x}+\mathcal{O}\big(x^{-\frac{5}{2}}\big)+\big(\e^{\im\pi\alpha}-\beta\big)\frac{\Gamma(1+\alpha)}{2^{3(1+\alpha)}\pi}x^{-1-\frac{3}{2}\alpha}\e^{-\frac{4}{3}x^{\frac{3}{2}}}\Big(1+\mathcal{O}\big(x^{-\frac{1}{4}}\big)\Big).
\end{equation*}
\end{cor}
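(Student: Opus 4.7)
The plan is to read off all three asymptotics directly from Lemma \ref{ArnoLax} and Lemma \ref{ArnoAsy}. The key observation is that the three functions $a$, $\sigma$ and $q$ are linked to the single master expansion \eqref{B:4} of $Q_1^{12}(x,\alpha,\beta)$ by trivial algebraic identities, so no further Riemann-Hilbert analysis is required.

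First, since Lemma \ref{ArnoLax} defines $a(x,\alpha,\beta)=Q_1^{12}(x,\alpha,\beta)$ and $\sigma(x,\alpha,\beta)=\tfrac{1}{4}x^2-a(x,\alpha,\beta)$, the first and third expansions in the corollary are immediate consequences of \eqref{B:4} taken at $m=1$, using $a_1=\alpha^2/4$. The $\tfrac{x^2}{4}$ term cancels in $\sigma$, and the sign of the Stokes-type exponentially small contribution flips accordingly, so both listed expansions follow without further input.

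For $q$ the plan is to exploit the compatibility relations in Lemma \ref{ArnoLax}: combining the constraint $q=\tfrac{x}{2}-b-a^2$ with the first coupled ODE $\tfrac{\d a}{\d x}=a^2+b$ produces the clean identity
\begin{equation*}
	q(x,\alpha,\beta)=\frac{x}{2}-\frac{\d a}{\d x}(x,\alpha,\beta),
\end{equation*}
which is also consistent with the relation $\sigma_\alpha'=q_\alpha$ recalled in Section~1. Since \eqref{B:4} is stated to be differentiable in $x$, one simply differentiates the expansion of $a$ term by term. The algebraic corrections produce the claimed tail $-\alpha/(2\sqrt{x})+\alpha^2/(4x^2)+\mathcal{O}(x^{-7/2})$, while the dominant contribution to the derivative of the exponentially small piece arises from the factor $\e^{-\frac{4}{3}x^{3/2}}$, whose derivative brings down the multiplier $-2\sqrt{x}$; combining with the prefactor $x^{-1-3\alpha/2}$ and absorbing the factor $2$ into the $2^{3(1+\alpha)}$ in the denominator reproduces the stated constant $\Gamma(1+\alpha)/(2^{2+3\alpha}\pi)$ together with its overall minus sign.

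The only genuinely delicate point is the sharpening of the multiplicative error from $(1+\mathcal{O}(x^{-1/4}))$ in \eqref{B:4} to $(1+\mathcal{O}(x^{-3/2}))$ in the $q$-expansion; this reflects the standard fact that the Airy-type corrections to the exponential tail of $Q_1^{12}$ are in fact organised in descending powers of $x^{-3/2}$, so that the relative error structure is preserved under differentiation. Apart from this mild bookkeeping refinement, no further analysis is needed, and uniformity in compact subsets of $(\alpha,\beta)\in(-1,\infty)\times(\mathbb{C}\setminus(-\infty,0))$ is inherited directly from the uniformity clauses in Lemma \ref{ArnoAsy}.
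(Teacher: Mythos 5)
Your route is the same as the paper's: identify $a=Q_1^{12}$ from Lemma \ref{ArnoLax}, substitute the expansion \eqref{B:4}, obtain $\sigma=\tfrac{x^2}{4}-a$ by definition, and obtain $q$ via the identity $q=\tfrac{x}{2}-b-a^2=\tfrac{x}{2}-\tfrac{\d a}{\d x}$ (equivalently $q=\sigma'$) followed by term-by-term differentiation, which Lemma \ref{ArnoAsy} explicitly licenses. The bookkeeping of the algebraic tails ($-\alpha/(2\sqrt{x})+\alpha^2/(4x^2)+\mathcal{O}(x^{-7/2})$), the factor $-2\sqrt{x}$ from differentiating $\e^{-\frac{4}{3}x^{3/2}}$, the resulting power $x^{-\frac{1}{2}(1+3\alpha)}$, the replacement $2^{3(1+\alpha)}\mapsto 2^{2+3\alpha}$, and all signs are correct.

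The one point you flag \emph{is} a genuine gap, and your proposed fix does not close it as stated. You claim the sharpening from $(1+\mathcal{O}(x^{-1/4}))$ for $a$ to $(1+\mathcal{O}(x^{-3/2}))$ for $q$ follows because the exponentially small tail of $Q_1^{12}$ is ``in fact organised in descending powers of $x^{-3/2}$.'' But if that were the case, the \emph{same} multiplicative structure would already give $(1+\mathcal{O}(x^{-3/2}))$ for $\sigma=\tfrac{x^2}{4}-a$, whereas the corollary itself (consistently with \eqref{e:17} and with Lemma \ref{ArnoAsy}) keeps $(1+\mathcal{O}(x^{-1/4}))$ for $\sigma$. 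So your explanation is internally inconsistent with the very statement being proved. Differentiating a multiplicative error $(1+r(x))$ with $r=\mathcal{O}(x^{-1/4})$ contributes a term proportional to $r'$, over which Lemma \ref{ArnoAsy} gives you no control beyond what the generic ``differentiable asymptotic'' convention affords, namely another $\mathcal{O}(x^{-1/4})$ relative error for $q$. To legitimately obtain the stated $\mathcal{O}(x^{-3/2})$ you need additional input not contained in Lemmas \ref{ArnoLax}--\ref{ArnoAsy} alone: either a refinement of Lemma \ref{ArnoAsy} showing the subleading tail of $Q_1^{12}$ has no term of relative order $x^{-1/4}$, or an independent argument through the Painlev\'e-XXXIV equation \eqref{B:2} that constrains the exponentially small corrections to $q$ directly. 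This is a point the paper itself glosses over in its one-line justification, so it is a shared issue rather than a defect unique to your write-up, but you should not present the appeal to a ``standard fact'' as a proof.
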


\section{The factorized form \eqref{e:8}}\label{facderiv}
The formula for the joint density of the eigenvalues in the invariant model \eqref{e:1}, cf. \cite[$(4.28)$]{PS}, yields
\begin{equation}\label{rev1}
	E_n[\phi;\lambda,\alpha,\beta]=\frac{1}{Q_n}\int_{\mathbb{R}^n}|\Delta(x_1,\ldots,x_n)|^2\prod_{j=1}^n\big(1-\phi(x_j)\big)\omega_{\alpha\beta}(x_j-\lambda)\e^{-nV(x_j)}\d x_j,
\end{equation}
in terms of the Vandermonde determinant $\Delta(x_1,\ldots,x_n)$, and in terms of the normalization constant (the denominator in \eqref{e:8})
\begin{equation*}
	Q_n:=\int_{\mathbb{R}^n}|\Delta(x_1,\ldots,x_n)|^2\prod_{j=1}^n\e^{-nV(x_j)}\d x_j=n!\,D_n\big(\lambda,0,1;V(x)\big).
\end{equation*}
Using the orthonormal polynomials $\{\pi_{j,n}\}_{j=0}^{\infty}\subset\mathbb{C}[x]$ in \eqref{e:9}, the Vandermonde factor in \eqref{rev1} equals
\begin{equation*}
	\Delta(x_1,\ldots,x_n)=\frac{1}{\prod_{\ell=0}^{n-1}\varsigma_{\ell,n}}\det\big[\pi_{j-1,n}(x_k)\big]_{j,k=1}^n,\ \ \ \ \pi_{k,n}(x)=\varsigma_{k,n}\Big(x^n+\mathcal{O}\big(x^{n-1}\big)\Big),
\end{equation*}
and we thus obtain
\begin{equation}\label{rev2}
	|\Delta(x_1,\ldots,x_n)|^2\prod_{j=1}^n\omega_{\alpha\beta}(x_j-\lambda)\e^{-nV(x_j)}=\frac{1}{\prod_{\ell=0}^{n-1}\varsigma_{\ell,n}^2}\Big(\det\big[\psi_{j-1,n}(x_k)\big]_{j,k=1}^n\Big)^2,
\end{equation}
with $\psi_{j,n}(x):=\omega_{\alpha\beta}^{\frac{1}{2}}(x-\lambda)\e^{-\frac{n}{2}V(x)}\pi_{j,n}(x)$. Consequently, by Andr\'eief's identity,
\begin{align*}
	\eqref{rev1}=&\,\,\frac{1}{Q_n\prod_{\ell=0}^{n-1}\varsigma_{\ell,n}^2}\int_{\mathbb{R}^n}\bigg[\prod_{j=1}^n\big(1-\phi(x_j)\big)\bigg]\Big(\det\big[\psi_{j-1,n}(x_k)\big]_{j,k=1}^n\Big)^2\prod_{\ell=1}^n\d x_{\ell}\\
	=&\,\,\frac{n!}{Q_n\prod_{\ell=0}^{n-1}\varsigma_{\ell,n}^2}\det\bigg[\delta_{jk}-\int_J\phi(x)\psi_{j-1,n}(x)\psi_{k-1,n}(x)\d x\bigg]_{j,k=1}^n,
\end{align*}
where the remaining $n\times n$ determinant is precisely the Fredholm determinant $F_n[\phi;\lambda,\alpha,\beta;V(x)]$ in \eqref{e:8}, cf. \cite[page $111$]{PS}. On the other hand, 
\begin{equation*}
	n!\,D_n\big(\lambda,\alpha,\beta;V(x)\big)\stackrel{\eqref{rev2}}{=}\frac{1}{\prod_{\ell=0}^{n-1}\varsigma_{\ell,n}^2}\int_{\mathbb{R}^n}\Big(\det\big[\psi_{j-1,n}(x_k)\big]_{j,k=1}^n\Big)^2\prod_{\ell=1}^n\d x_{\ell}=\frac{n!}{\prod_{\ell=0}^{n-1}\varsigma_{\ell,n}^2}
\end{equation*}
by orthonormality and Andr\'eief again. The derivation of \eqref{e:8} is complete.
\end{appendix}

\section*{Acknowledgements}
T.B. is grateful to Peter Forrester for drawing his attention to the problem at hand. This work is dedicated, with admiration, to Arno Kuijlaars on the occasion of his $60$th birthday. We are deeply indebted to Arno for his trail-blazing contributions to the modern theory of Riemann-Hilbert problems and their asymptotic analysis over the past $25$ years.

\begin{bibsection}
\begin{biblist}


\bib{BWW}{article}{
AUTHOR = {Berestycki, Nathana\"{e}l},
author={Webb, Christian},
author={Wong, Mo Dick},
     TITLE = {Random {H}ermitian matrices and {G}aussian multiplicative
              chaos},
   JOURNAL = {Probab. Theory Related Fields},
  FJOURNAL = {Probability Theory and Related Fields},
    VOLUME = {172},
      YEAR = {2018},
    NUMBER = {1-2},
     PAGES = {103--189},
      ISSN = {0178-8051},
   MRCLASS = {60B20 (15B05 60G57)},
  MRNUMBER = {3851831},
MRREVIEWER = {Ji Oon Lee},
       DOI = {10.1007/s00440-017-0806-9},
       URL = {https://doi-org.bris.idm.oclc.org/10.1007/s00440-017-0806-9},
}

\bib{BCI}{article}{
AUTHOR = {Bogatskiy, A.},
author={Claeys, T.},
author={Its, A.},
     TITLE = {Hankel determinant and orthogonal polynomials for a {G}aussian
              weight with a discontinuity at the edge},
   JOURNAL = {Comm. Math. Phys.},
  FJOURNAL = {Communications in Mathematical Physics},
    VOLUME = {347},
      YEAR = {2016},
    NUMBER = {1},
     PAGES = {127--162},
      ISSN = {0010-3616},
   MRCLASS = {33C47 (30E25 33E17 34E05 34M55 60B20)},
  MRNUMBER = {3543180},
MRREVIEWER = {Trifce Sandev},
       DOI = {10.1007/s00220-016-2691-y},
       URL = {https://doi-org.bris.idm.oclc.org/10.1007/s00220-016-2691-y},
}

\bib{BP}{article}{
  title = {Randomly incomplete spectra and intermediate statistics},
  author = {Bohigas, O.},
  author={Pato, M. P.},
  journal = {Phys. Rev. E},
  volume = {74},
  issue = {3},
  pages = {036212},
  numpages = {6},
  year = {2006},
  month = {Sep},
  publisher = {American Physical Society},
  doi = {10.1103/PhysRevE.74.036212},
  url = {https://link.aps.org/doi/10.1103/PhysRevE.74.036212}
}

\bib{C}{article}{
AUTHOR = {Charlier, Christophe},
     TITLE = {Asymptotics of {H}ankel determinants with a one-cut regular
              potential and {F}isher-{H}artwig singularities},
   JOURNAL = {Int. Math. Res. Not. IMRN},
  FJOURNAL = {International Mathematics Research Notices. IMRN},
      YEAR = {2019},
    NUMBER = {24},
     PAGES = {7515--7576},
      ISSN = {1073-7928},
   MRCLASS = {60B20 (15A15 15B05 15B52 31C45 33C45 35Q15 42C05)},
  MRNUMBER = {4043828},
MRREVIEWER = {A. B\"{o}ttcher},
       DOI = {10.1093/imrn/rny009},
       URL = {https://doi-org.bris.idm.oclc.org/10.1093/imrn/rny009},
}

\bib{CFWW}{article}{
title={Asymptotics of Hankel determinants with a multi-cut regular potential and Fisher-Hartwig singularities}, 
      author={Christophe Charlier},
      author={Benjamin Fahs},
      author={Christian Webb},
      author={Mo Dick Wong},
   JOURNAL = {Mem. Amer. Math. Soc.},
  FJOURNAL = {Memoirs of the American Mathematical Society},
    VOLUME = {310},
      YEAR = {2025},
    NUMBER = {1567},
      ISBN = {978-1-4704-7396-9; 978-1-4704-8183-4},
       DOI = {10.1090/memo/1567},
       URL = {https://doi.org/10.1090/memo/1567},
}

\bib{D}{book}{
AUTHOR = {Deift, P. A.},
     TITLE = {Orthogonal polynomials and random matrices: a
              {R}iemann-{H}ilbert approach},
    SERIES = {Courant Lecture Notes in Mathematics},
    VOLUME = {3},
 PUBLISHER = {New York University, Courant Institute of Mathematical
              Sciences, New York; American Mathematical Society, Providence,
              RI},
      YEAR = {1999},
     PAGES = {viii+273},
      ISBN = {0-9658703-2-4; 0-8218-2695-6},
   MRCLASS = {47B80 (15A52 30E25 33D45 37K10 42C05 47B36 60F99)},
  MRNUMBER = {1677884},
MRREVIEWER = {Alexander Vladimirovich Kitaev},
}

\bib{DKMVZ}{article}{
AUTHOR = {Deift, P.},
author={Kriecherbauer, T.},
author={McLaughlin, K. T.-R.},
author={Venakides, S.},
author={Zhou, X.},
     TITLE = {Uniform asymptotics for polynomials orthogonal with respect to
              varying exponential weights and applications to universality
              questions in random matrix theory},
   JOURNAL = {Comm. Pure Appl. Math.},
  FJOURNAL = {Communications on Pure and Applied Mathematics},
    VOLUME = {52},
      YEAR = {1999},
    NUMBER = {11},
     PAGES = {1335--1425},
      ISSN = {0010-3640},
   MRCLASS = {42C05 (15A52 41A60 82B41)},
  MRNUMBER = {1702716},
MRREVIEWER = {D. S. Lubinsky},
       DOI =
              {10.1002/(SICI)1097-0312(199911)52:11$<$1335::AID-CPA1$>$3.0.CO;2-1},
       URL =
              {https://doi-org.bris.idm.oclc.org/10.1002/(SICI)1097-0312(199911)52:11<1335::AID-CPA1>3.0.CO;2-1},
}

\bib{DZ}{article}{
AUTHOR = {Deift, P.},
author={Zhou, X.},
     TITLE = {A steepest descent method for oscillatory {R}iemann-{H}ilbert
              problems. {A}symptotics for the {MK}d{V} equation},
   JOURNAL = {Ann. of Math. (2)},
  FJOURNAL = {Annals of Mathematics. Second Series},
    VOLUME = {137},
      YEAR = {1993},
    NUMBER = {2},
     PAGES = {295--368},
      ISSN = {0003-486X},
   MRCLASS = {35Q53 (34A55 34L25 35Q15 35Q55)},
  MRNUMBER = {1207209},
MRREVIEWER = {Alexey V. Samokhin},
       DOI = {10.2307/2946540},
       URL = {https://doi-org.bris.idm.oclc.org/10.2307/2946540},
}

\bib{FHK}{article}{
  title = {Freezing Transition, Characteristic Polynomials of Random Matrices, and the Riemann Zeta Function},
  author = {Fyodorov, Yan V.},
  author={Hiary, Ghaith A.},
  author={Keating, Jonathan P.},
  journal = {Phys. Rev. Lett.},
  volume = {108},
  issue = {17},
  pages = {170601},
  numpages = {5},
  year = {2012},
  month = {Apr},
  publisher = {American Physical Society},
  doi = {10.1103/PhysRevLett.108.170601},
  url = {https://link.aps.org/doi/10.1103/PhysRevLett.108.170601}
}

\bib{FIK}{article}{
AUTHOR = {Fokas, A. S.},
author={Its, A. R.},
author={Kitaev, A. V.},
     TITLE = {The isomonodromy approach to matrix models in {$2$}{D} quantum
              gravity},
   JOURNAL = {Comm. Math. Phys.},
  FJOURNAL = {Communications in Mathematical Physics},
    VOLUME = {147},
      YEAR = {1992},
    NUMBER = {2},
     PAGES = {395--430},
      ISSN = {0010-3616},
   MRCLASS = {81T40 (58F07)},
  MRNUMBER = {1174420},
MRREVIEWER = {Yolanda Lozano},
       URL = {http://projecteuclid.org.bris.idm.oclc.org/euclid.cmp/1104250643},
}

\bib{FT}{book}{
AUTHOR = {Faddeev, Ludwig D.},
author={Takhtajan, Leon A.},
     TITLE = {Hamiltonian methods in the theory of solitons},
    SERIES = {Classics in Mathematics},
   EDITION = {English},
      NOTE = {Translated from the 1986 Russian original by Alexey G. Reyman},
 PUBLISHER = {Springer, Berlin},
      YEAR = {2007},
     PAGES = {x+592},
      ISBN = {978-3-540-69843-2},
   MRCLASS = {37K10 (35P25 35Q51 35Q55 35R30 37J35 37N20 81R12)},
  MRNUMBER = {2348643},
}

\bib{FW}{article}{
AUTHOR = {Forrester, P. J.},
author={Witte, N. S.},
     TITLE = {Application of the {$\tau$}-function theory of {P}ainlev\'{e}
              equations to random matrices: {PIV}, {PII} and the {GUE}},
   JOURNAL = {Comm. Math. Phys.},
  FJOURNAL = {Communications in Mathematical Physics},
    VOLUME = {219},
      YEAR = {2001},
    NUMBER = {2},
     PAGES = {357--398},
      ISSN = {0010-3616},
   MRCLASS = {82B41 (15A52 34M55)},
  MRNUMBER = {1833807},
MRREVIEWER = {Oleksiy Khorunzhiy},
       DOI = {10.1007/s002200100422},
       URL = {https://doi-org.bris.idm.oclc.org/10.1007/s002200100422},
}

\bib{G}{article}{
AUTHOR = {Gambier, B.},
     TITLE = {Sur les \'{e}quations diff\'{e}rentielles du second ordre et du
              premier degr\'{e} dont l'int\'{e}grale g\'{e}n\'{e}rale est a points critiques
              fixes},
   JOURNAL = {Acta Math.},
  FJOURNAL = {Acta Mathematica},
    VOLUME = {33},
      YEAR = {1910},
    NUMBER = {1},
     PAGES = {1--55},
      ISSN = {0001-5962},
   MRCLASS = {DML},
  MRNUMBER = {1555055},
       DOI = {10.1007/BF02393211},
       URL = {https://doi-org.bris.idm.oclc.org/10.1007/BF02393211},
}

\bib{IIKS}{article}{
AUTHOR = {Its, A. R.},
author={Izergin, A. G.},
author={Korepin, V. E.},
author={Slavnov, N. A.},
     TITLE = {Differential equations for quantum correlation functions},
 BOOKTITLE = {Proceedings of the {C}onference on {Y}ang-{B}axter
              {E}quations, {C}onformal {I}nvariance and {I}ntegrability in
              {S}tatistical {M}echanics and {F}ield {T}heory},
   JOURNAL = {Internat. J. Modern Phys. B},
  FJOURNAL = {International Journal of Modern Physics B},
    VOLUME = {4},
      YEAR = {1990},
    NUMBER = {5},
     PAGES = {1003--1037},
      ISSN = {0217-9792},
   MRCLASS = {82B10 (35Q40 58G40 82C10)},
  MRNUMBER = {1064758},
MRREVIEWER = {Anatoliy Prykarpatsky},
       DOI = {10.1142/S0217979290000504},
       URL = {https://doi-org.bris.idm.oclc.org/10.1142/S0217979290000504},
}

\bib{IKO}{article}{
AUTHOR = {Its, A. R.},
author={Kuijlaars, A. B. J.},
author={\"{O}stensson, J.},
     TITLE = {Critical edge behavior in unitary random matrix ensembles and
              the thirty-fourth {P}ainlev\'{e} transcendent},
   JOURNAL = {Int. Math. Res. Not. IMRN},
  FJOURNAL = {International Mathematics Research Notices. IMRN},
      YEAR = {2008},
    NUMBER = {9},
     PAGES = {Art. ID rnn017, 67},
      ISSN = {1073-7928},
   MRCLASS = {82B44 (15A52 33E17 34M55 47B35 60B10)},
  MRNUMBER = {2429248},
MRREVIEWER = {Andrei A. Kapaev},
       DOI = {10.1093/imrn/rnn017},
       URL = {https://doi-org.bris.idm.oclc.org/10.1093/imrn/rnn017},
}

\bib{IKO2}{article}{
AUTHOR = {Its, A. R.},
author={Kuijlaars, A. B. J.},
author={\"{O}stensson, J.},
     TITLE = {Asymptotics for a special solution of the thirty fourth
              {P}ainlev\'{e} equation},
   JOURNAL = {Nonlinearity},
  FJOURNAL = {Nonlinearity},
    VOLUME = {22},
      YEAR = {2009},
    NUMBER = {7},
     PAGES = {1523--1558},
      ISSN = {0951-7715},
   MRCLASS = {33E17 (15B52 34E05 34M55)},
  MRNUMBER = {2519676},
MRREVIEWER = {Andrei A. Kapaev},
       DOI = {10.1088/0951-7715/22/7/002},
       URL = {https://doi-org.bris.idm.oclc.org/10.1088/0951-7715/22/7/002},
}

\bib{IK}{article}{
AUTHOR = {Its, A.},
author={Krasovsky, I.},
     TITLE = {Hankel determinant and orthogonal polynomials for the
              {G}aussian weight with a jump},
 BOOKTITLE = {Integrable systems and random matrices},
    SERIES = {Contemp. Math.},
    VOLUME = {458},
     PAGES = {215--247},
 PUBLISHER = {Amer. Math. Soc., Providence, RI},
      YEAR = {2008},
   MRCLASS = {47B35 (30E25 34E05 34M50 42C05)},
  MRNUMBER = {2411909},
MRREVIEWER = {A. B\"{o}ttcher},
       DOI = {10.1090/conm/458/08938},
       URL = {https://doi-org.bris.idm.oclc.org/10.1090/conm/458/08938},
}

\bib{Kra}{article}{
AUTHOR = {Krasovsky, I. V.},
     TITLE = {Correlations of the characteristic polynomials in the
              {G}aussian unitary ensemble or a singular {H}ankel
              determinant},
   JOURNAL = {Duke Math. J.},
  FJOURNAL = {Duke Mathematical Journal},
    VOLUME = {139},
      YEAR = {2007},
    NUMBER = {3},
     PAGES = {581--619},
      ISSN = {0012-7094},
   MRCLASS = {15A52 (41A60 47B35 82B44)},
  MRNUMBER = {2350854},
       DOI = {10.1215/S0012-7094-07-13936-X},
       URL = {https://doi-org.bris.idm.oclc.org/10.1215/S0012-7094-07-13936-X},
}

\bib{KM}{article}{
AUTHOR = {Kuijlaars, A. B. J.},
author={McLaughlin, K. T-R},
     TITLE = {Generic behavior of the density of states in random matrix
              theory and equilibrium problems in the presence of real
              analytic external fields},
   JOURNAL = {Comm. Pure Appl. Math.},
  FJOURNAL = {Communications on Pure and Applied Mathematics},
    VOLUME = {53},
      YEAR = {2000},
    NUMBER = {6},
     PAGES = {736--785},
      ISSN = {0010-3640},
   MRCLASS = {31A15 (37K10 42C05 82B10)},
  MRNUMBER = {1744002},
MRREVIEWER = {V. Totik},
       DOI = {10.1002/(SICI)1097-0312(200006)53:6<736::AID-CPA2>3.0.CO;2-5},
       URL =
              {https://doi-org.bris.idm.oclc.org/10.1002/(SICI)1097-0312(200006)53:6<736::AID-CPA2>3.0.CO;2-5},
}

\bib{Ok}{article}{
AUTHOR = {Okamoto, Kazuo},
     TITLE = {Studies on the {P}ainlev\'{e} equations. {III}. {S}econd and
              fourth {P}ainlev\'{e} equations, {$P_{{\rm II}}$} and {$P_{{\rm
              IV}}$}},
   JOURNAL = {Math. Ann.},
  FJOURNAL = {Mathematische Annalen},
    VOLUME = {275},
      YEAR = {1986},
    NUMBER = {2},
     PAGES = {221--255},
      ISSN = {0025-5831},
   MRCLASS = {58F05 (34C20 58F35)},
  MRNUMBER = {854008},
MRREVIEWER = {H\'{e}l\`ene Airault},
       DOI = {10.1007/BF01458459},
       URL = {https://doi-org.bris.idm.oclc.org/10.1007/BF01458459},
}

\bib{NIST}{book}{
TITLE = {N{IST} handbook of mathematical functions},
    EDITOR = {Olver, Frank W. J.}
    editor={Lozier, Daniel W.}
    editor={Boisvert, Ronald F.}
    editor={Clark, Charles W.},
 PUBLISHER = {U.S. Department of Commerce, National Institute of Standards
              and Technology, Washington, DC; Cambridge University Press,
              Cambridge},
      YEAR = {2010},
     PAGES = {xvi+951},
      ISBN = {978-0-521-14063-8},
   MRCLASS = {33-00 (00A20 65-00)},
  MRNUMBER = {2723248},
}

\bib{PS}{book}{
AUTHOR = {Pastur, Leonid},
author={Shcherbina, Mariya},
     TITLE = {Eigenvalue distribution of large random matrices},
    SERIES = {Mathematical Surveys and Monographs},
    VOLUME = {171},
 PUBLISHER = {American Mathematical Society, Providence, RI},
      YEAR = {2011},
     PAGES = {xiv+632},
      ISBN = {978-0-8218-5285-9},
   MRCLASS = {60B20 (15A18 15B52 60F05 62H10 62H99)},
  MRNUMBER = {2808038},
MRREVIEWER = {Terence Tao},
       DOI = {10.1090/surv/171},
       URL = {https://doi-org.bris.idm.oclc.org/10.1090/surv/171},
}

\bib{ST}{book}{
AUTHOR = {Saff, Edward B.},
author={Totik, Vilmos},
     TITLE = {Logarithmic potentials with external fields},
    SERIES = {Grundlehren der mathematischen Wissenschaften [Fundamental
              Principles of Mathematical Sciences]},
    VOLUME = {316},
      NOTE = {Appendix B by Thomas Bloom},
 PUBLISHER = {Springer-Verlag, Berlin},
      YEAR = {1997},
     PAGES = {xvi+505},
      ISBN = {3-540-57078-0},
   MRCLASS = {31-02 (30C10 41A10 42C05)},
  MRNUMBER = {1485778},
MRREVIEWER = {D. S. Lubinsky},
       DOI = {10.1007/978-3-662-03329-6},
       URL = {https://doi-org.bris.idm.oclc.org/10.1007/978-3-662-03329-6},
}

\bib{TW}{article}{
AUTHOR = {Tracy, Craig A.},
author={Widom, Harold},
     TITLE = {Level-spacing distributions and the {A}iry kernel},
   JOURNAL = {Comm. Math. Phys.},
  FJOURNAL = {Communications in Mathematical Physics},
    VOLUME = {159},
      YEAR = {1994},
    NUMBER = {1},
     PAGES = {151--174},
      ISSN = {0010-3616},
   MRCLASS = {82B05 (33C90 47A75 47G10 47N55 82B10)},
  MRNUMBER = {1257246},
MRREVIEWER = {Estelle L. Basor},
       URL = {http://projecteuclid.org.bris.idm.oclc.org/euclid.cmp/1104254495},
}

\bib{V}{article}{
AUTHOR = {Vanlessen, M.},
     TITLE = {Strong asymptotics of {L}aguerre-type orthogonal polynomials
              and applications in random matrix theory},
   JOURNAL = {Constr. Approx.},
  FJOURNAL = {Constructive Approximation. An International Journal for
              Approximations and Expansions},
    VOLUME = {25},
      YEAR = {2007},
    NUMBER = {2},
     PAGES = {125--175},
      ISSN = {0176-4276},
   MRCLASS = {15A52 (33C10 33C45 82B44)},
  MRNUMBER = {2283495},
MRREVIEWER = {Guangming Pan},
       DOI = {10.1007/s00365-005-0611-z},
       URL = {https://doi.org/10.1007/s00365-005-0611-z},
}

\bib{W}{article}{
AUTHOR = {Widom, Harold},
     TITLE = {The strong {S}zeg\H{o} limit theorem for circular arcs},
   JOURNAL = {Indiana Univ. Math. J.},
  FJOURNAL = {Indiana University Mathematics Journal},
    VOLUME = {21},
      YEAR = {1971/72},
     PAGES = {277--283},
      ISSN = {0022-2518},
   MRCLASS = {42.10},
  MRNUMBER = {288495},
MRREVIEWER = {I. I. Hirschman, Jr.},
       DOI = {10.1512/iumj.1971.21.21022},
       URL = {https://doi-org.bris.idm.oclc.org/10.1512/iumj.1971.21.21022},
}

\bib{WXZ}{article}{
AUTHOR = {Wu, Xiao-Bo},
author={Xu, Shuai-Xia},
author={Zhao, Yu-Qiu},
     TITLE = {Gaussian unitary ensemble with boundary spectrum singularity
              and {$\sigma$}-form of the {P}ainlev\'{e} {II} equation},
   JOURNAL = {Stud. Appl. Math.},
  FJOURNAL = {Studies in Applied Mathematics},
    VOLUME = {140},
      YEAR = {2018},
    NUMBER = {2},
     PAGES = {221--251},
      ISSN = {0022-2526},
   MRCLASS = {60B20 (30E25 34M55)},
  MRNUMBER = {3763734},
       DOI = {10.1111/sapm.12197},
       URL = {https://doi-org.bris.idm.oclc.org/10.1111/sapm.12197},
}

\bib{XY}{article}{
AUTHOR = {Xu, Shuai-Xia},
author={Zhao, Yu-Qiu},
     TITLE = {Painlev\'{e} {XXXIV} asymptotics of orthogonal polynomials for the
              {G}aussian weight with a jump at the edge},
   JOURNAL = {Stud. Appl. Math.},
  FJOURNAL = {Studies in Applied Mathematics},
    VOLUME = {127},
      YEAR = {2011},
    NUMBER = {1},
     PAGES = {67--105},
      ISSN = {0022-2526},
   MRCLASS = {42C05 (33E17)},
  MRNUMBER = {2849293},
MRREVIEWER = {Mar\'{\i}a Luisa Rezola},
       DOI = {10.1111/j.1467-9590.2010.00512.x},
       URL = {https://doi-org.bris.idm.oclc.org/10.1111/j.1467-9590.2010.00512.x},
}

\bib{Y}{article}{
AUTHOR = {Yattselev, Maxim L.},
     TITLE = {Strong asymptotics of {H}ermite-{P}ad\'{e} approximants for
              {A}ngelesco systems},
   JOURNAL = {Canad. J. Math.},
  FJOURNAL = {Canadian Journal of Mathematics. Journal Canadien de
              Math\'{e}matiques},
    VOLUME = {68},
      YEAR = {2016},
    NUMBER = {5},
     PAGES = {1159--1200},
      ISSN = {0008-414X},
   MRCLASS = {42C05 (41A20 41A21)},
  MRNUMBER = {3536931},
MRREVIEWER = {Ralitza Kovacheva},
       DOI = {10.4153/CJM-2015-043-3},
       URL = {https://doi-org.bris.idm.oclc.org/10.4153/CJM-2015-043-3},
}

\end{biblist}
\end{bibsection}
\end{document}